\DeclareMathOperator{\diag}{diag}
\DeclareMathOperator{\Op}{Op}
\DeclareMathOperator{\Tr}{Tr}
\DeclareMathOperator{\tr}{tr}
\DeclareMathOperator{\sgn}{sgn}
\DeclareMathOperator{\supp}{supp}
\newcommand{\norm}[1]{\left\lVert#1\right\rVert}
\newcommand{\vertiii}[1]{\left\lvert\kern-0.25ex\left\lvert\kern-0.25ex\left\lvert #1 \right\rvert\kern-0.25ex\right\rvert\kern-0.25ex\right\rvert}
\newtheorem{proposition}{Proposition}
\newtheorem{lemma}[proposition]{Lemma}
\newtheorem{corollary}[proposition]{Corollary}
\newtheorem{theorem}[proposition]{Theorem}
\numberwithin{proposition}{section}
\newcommand{\s}{\text{span}}
\newcommand{\ssubset}{\subset\joinrel\subset}
\newcommand{\kone}{k_1}
\newcommand{\ktwo}{k_2}
\newcommand{\Hmu}{H^{(\mu)}}
\newcommand{\Hepsnot}{H^{(\eps)}}
\newcommand{\Hmueps}{H^{(\mu,\eps)}}
\newcommand{\Hone}{H^{(1)}}
\newcommand{\anot}{\alpha_0}
\newcommand{\sigmaeps}{a^{(\eps)}}
\newcommand{\sigmamuw}{a^{(\mu,w)}}
\newcommand{\sbz}{a_{\pm,z}}
\newcommand{\Thetaeps}{\Theta^{(\eps)}}
\newcommand{\Thetaepsnot}{\Theta^{(\eps)}_0}
\newcommand{\invariantplus}{I_+}
\newcommand{\invariantminus}{I_-}
\newcommand{\invariantpm}{I_\pm}
\newcommand{\teone}{E_1}
\newcommand{\tetwo}{E_2}
\newcommand{\hnot}{{\rm \bf (H0)}}
\newcommand{\hone}{{\rm \bf (H1)}}
\newcommand{\ofn}{\mathfrak{m}}
\newcommand{\Rfour}{S}
\newcommand{\sbd}{\tilde{C}}
\newcommand{\shp}{T}
\newcommand{\sm}{S^m}
\newcommand{\smeh}{ES^m}
\newcommand{\eps}{\varepsilon}
\newcommand{\Rm}{\mathbb{R}}
\newcommand{\Cm}{\mathbb{C}}
\newcommand{\Zm}{\mathbb{Z}}
\newcommand{\Nm}{\mathbb{N}}
\newcommand{\fm}{\mathfrak m}
\newcommand{\aver}[1]{\langle #1 \rangle}
\newcommand{\fs}{\mathfrak{S}}
\newcommand{\mH}{{\mathcal{H}}}
\newcommand{\mP}{{\mathcal{P}}}
\newcommand{\ess}{\text{ess}}
\newcommand{\cg}{c} 
\newcommand{\cp}{b} 
\newcommand{\kp}{\kappa} 
\newcommand{\sg}{\alpha} 
\newcommand{\sym}{a} 
\newcommand{\ul}{\Lambda} 
\newcommand{\zp}{{z'}} 
\newcommand{\gb}[1]{{{\color{blue}{GB: #1}}}}
\newcommand{\tcc}[1]{{{\color{red}{#1}}}}
\newcommand{\tcbn}[1]{{{\color{black}{#1}}}}
\title{Approximations of interface topological invariants}
\author{Solomon Quinn and Guillaume Bal}
\begin{document}

\maketitle

\begin{abstract}
This paper concerns the asymmetric transport observed along interfaces separating two-dimensional bulk topological insulators modeled by (continuous) differential Hamiltonians and how such asymmetry persists after numerical discretization. 
We first demonstrate that a relevant edge current observable is quantized and robust to perturbations for a large class of elliptic Hamiltonians. We then establish a bulk edge correspondence stating that the observable equals an integer-valued bulk difference invariant depending solely on the bulk phases.
We next show how to extend such results to periodized Hamiltonians amenable to standard numerical discretizations. A form of no-go theorem implies that the asymmetric transport of periodized Hamiltonians necessarily vanishes. We introduce a filtered version of the edge current observable and show that it is approximately stable against perturbations and converges to its quantized limit as the size of the computational domain increases.
To illustrate the theoretical results, we finally present numerical simulations that approximate the infinite domain edge current with high accuracy, and show that it is approximately quantized even in the presence of perturbations. 
\end{abstract}
\numberwithin{equation}{section}
\section{Introduction} \label{sectionIntro}

Robust asymmetric transport along interfaces separating two-dimensional insulating materials has been observed or predicted in many fields of applied science including solid state physics, photonics, and 
geophysics \cite{BH, delplace, sato, Volovik, Witten}.  This robustness of the asymmetric transport to arbitrarily large amounts of perturbation affords a topological interpretation and offers a surprising topological obstruction to the Anderson localization \cite{fouque2007wave} one expects in the standard setting of topologically trivial materials \cite{PS,1,bal2023mathbb}. The main objectives of this paper are: (i) to analyze topological invariants associated to such asymmetric transport for a large class of (continuous) Hamiltonian models; and (ii) to show that such invariants may be accurately approximated by computations on increasingly larger periodic domains.

\medskip
We consider the following framework.
A physical system is modeled by a (single-particle) Hamiltonian $H$ acting on vector- (or spinor-) valued functions of the Euclidean plane that belong to the Hilbert space $\mathcal{H} := L^2 (\mathbb{R}^2) \otimes \mathbb{C}^n$ for $n\geq1$. We label the spatial coordinates $(x,y) \in \mathbb{R}^2$.
We consider $H$ as a self-adjoint differential (or more generally pseudo-differential) operator from its domain of definition $\mathcal{D} (H) \subset \mathcal{H}$ to $\mathcal{H}$.

We consider the situation where a line, say the $x-$axis $\{y=0\}$, separates two different insulators located in the upper and lower half spaces. More precisely, we assume that $H$ is equal to an insulating Hamiltonian $H_+$ when $y\geq y_0\geq0$ and to another insulating Hamiltonian $H_-$ when $y<-y_0$. Here, by equality, we mean equality of the {\em symbols} of the Hamiltonians that will be described more precisely below. The Hamiltonian $H$ therefore describes a transition between {\em bulk} insulating phases. By insulators, we mean Hamiltonians $H_\pm$ whose spectra do not intersect a specific energy interval $[E_1,E_2]$. Physical states generated in this energy interval therefore cannot propagate into the bulks and are confined to the vicinity of the interface $\{y=0\}$. Surprisingly, transport for such states is possible along the interface and in fact necessary when the two bulks are {\em topologically} distinct. Moreover, this asymmetric transport remains robust to arbitrary perturbations of the system that do not change the invariant, such as, in our setting, arbitrary compactly supported (zero-th order) perturbations.

\medskip

In order to describe the asymmetric transport along the interface, we introduce the following {\em interface (or edge) current observable}
\begin{align}\label{eq:sigmaI}
    \sigma_I (H,P,\varphi) := \Tr i [H,P] \varphi ' (H).
\end{align}
The details of this physical observable sometimes referred to as an interface conductivity in analogy with its electronic context, are as follows. The Hamiltonian $H$ is an interface Hamiltonian as above. The function $\varphi'(h)$ is a smooth, non-negative, compactly supported function in the interval $[E_1,E_2]$ that integrates to $1$ (more precisely, $\varphi(E_1)=0$ while $\varphi(E_2)=1$). This function selects states in the energy window $[E_1,E_2]$ where the bulk Hamiltonians $H_\pm$ are gapped. We will refer to it as the {\em density of states}. 

The operator $i[H,P]$ is the {\em current} operator. Here, $P=P(x)$ is a smooth real-valued function equal to $0$ for $x<x_1$ and equal to $1$ for $x>x_2>x_1$. Intuitively, for $x_1\approx x_2$, the observable $P$ indicates the amount of signal located in the half space $x>x_2$. In the Heisenberg picture, the rate of change of such an observable, i.e., the amount of signal transporting from the half space $x<x_2$ to the half space $x>x_2$ per unit time, is precisely given by the current operator $i[H,P]$.

With this in mind, $\sigma_I$ defined in \eqref{eq:sigmaI} is thus the expected value of the current operator $i[H,P]$ for the density of states given by $\varphi'(H)$. We will refer to this quantitative estimate of asymmetric transport as the edge current observable. The main objective of this paper is to analyze this object, and in particular to show that it is quantized for a large class of continuous Hamiltonians and that it may be approximated using Hamiltonians defined on compact domains and therefore amenable to numerical simulations. 

\medskip

The explicit computation of $\sigma_I$ directly from \eqref{eq:sigmaI} remains challenging as it visibly depends on detailed spectral data associated to the operator $H$. A tremendous simplification occurs if we can relate $\sigma_I$ to properties of the bulk Hamiltonians $H_\pm$, whose analysis is often much simpler. This is the role of a general and important principle, the {\em bulk-edge correspondence}, which heuristically states that
\begin{align}\label{eq:BEC}
    2\pi \sigma_I (H,P,\varphi) = I(H_+) - I(H_-) \in \Zm
\end{align}
where $I(H_\pm)$ are invariants associated to the bulk Hamiltonians $H_\pm$. Such a result shows that $2\pi\sigma_I$ is quantized and integer-valued and that when $I(H_+)\not= I(H_-)$, then current flows along the interface irrespective of the details of the transition from $H_-$ to $H_+$ modeled by $H$. 

\medskip

The bulk-edge correspondence (BEC) may intuitively be interpreted as an imbalance between the bulk insulators that is exactly compensated by another imbalance along the interface generating anomalous asymmetric transport \cite{BH,EG,halperin1982quantized,hatsugai}. In spite of its central role in the understanding of topological phases of matter, its derivation remains difficult and often heuristic. 

The analysis of \eqref{eq:sigmaI} is the starting point of mathematical derivations of the BEC by several authors. It was introduced in \cite{SB-2000} following \cite{halperin1982quantized} in the setting of discrete Hamiltonians on the infinite two-dimensional lattice $\Zm^2$. In that setting, $H$ is an edge Hamiltonian defined on $\Zm\times \Nm$ and $\sigma_I(H)$ is then related to a Hall conductivity $\sigma_B(H_+)$ for $H_+$ a bulk Hamiltonian using $K-$theoretic techniques; see \cite{avila2013topological,BKR,PS} for details on such an approach in the discrete and continuous settings algebraically relating invariants of bulk algebras to invariants of edge (half-space) algebras.  The edge current observable is also central in the derivation of the BEC in \cite{Elbau,elgart2005equality,EG} for general discrete Hamiltonians  and based on generalizations of the index of pairs of projections introduced in \cite{ASS90,ASS94}. 

The BEC for general continuous Hamiltonians with periodic coefficients is analyzed in \cite{Drouot}. There, $H_\pm$ model insulating materials and $I(H_\pm)$ are the Chern numbers associated to appropriate dispersion surfaces of their Bloch-Floquet transformations.

\medskip

Our analysis of \eqref{eq:sigmaI} follows the method developed in \cite{3} for first-order {\em elliptic} operators, which we generalize to the much larger class of (essentially arbitrary, smooth) elliptic operators. As in \cite{3}, and unlike \cite{SB-2000,Elbau,Drouot}, the `invariants' $I(H_\pm)$, while defined, are not necessarily stable topological invariants. It is only their difference that is well defined as an integer-valued {\em bulk-difference} invariant \cite{3}. Physically, this reflects the fact that one may more generally defined {\em relative} topological phases rather than {\em absolute} topological phases. This behavior reflects the unbounded nature of the `Brillouin zone' (the whole plane $\Rm^2$ for continuous elliptic operators) associated to unbounded differential operators. 

As in \cite{3}, our approach to the BEC is based on expressing $\sigma_I$ as an integral involving the Schwartz kernel of the resolvent operator $(z-H)^{-1}$ for $z\in\Cm\backslash\Rm$. A number of continuous deformations of the data defining the edge current observable allow us to relate $\sigma_I$ to a generalized winding number formula we will refer to as a Fedosov-H\"ormander (FH) formula. This formula appears in \cite{fedosov1970direct} as an explicit Atiyah-Singer-type expression for the index of Fredholm operator in terms of integrals of its known symbol and was extended to a significantly larger class of operators in \cite[Chapter 19]{Hormander}. 

The same formula also appears in \cite{Volovik,EG} as the winding number associated to a Green's function with imaginary frequency.  Our approach shares a number of techniques used in the derivation of the BEC in \cite{EG} as well as in the aforementioned results of \cite{fedosov1970direct,Hormander}: operators written in a Weyl formalism, semiclassical expansions, and continuous deformations. These techniques are also central in the BEC derivation of \cite{Drouot}. 

\medskip

The second main contribution of this paper is an analysis of the edge current for compactly supported approximations of the Hamiltonians so that they may be amenable to numerical simulations. In doing so, we face a number of obstructions. 

Any compact approximation of $\Rm^2$, such as for instance approximating $H$ by $H_L$ on $(-\pi L,\pi L)^2$ with, say, periodic boundary conditions, results in a necessarily trivial topological phase. There are several reasons for this phenomenon: (i) any transition from $H_-$ to $H_+$ along $y\approx0$ will be augmented by a transition from $H_+$ to $H_-$ along $y\approx L$ with opposite topology; (ii) any observable $P$ modeling transport along the hyperplane $x=x_1$ will be augmented with another transition with opposite current; (iii) finally, the edge current observable $\sigma_I$ is intimately connected to the absolutely continuous spectrum of $H$ and related spectral flows (see, e.g., \cite{2,3}), which is not preserved by $H_L$, whose spectrum is purely discrete.

The best we can hope for is therefore to define a {\em modified} and {\em approximate} notion of filtered edge current $\tilde\sigma_I$ on a domain $(-L,L)$ and expect convergence of $\tilde\sigma_I$ to $\sigma_I$ as $L\to\infty$. This is precisely what we obtain for a large class of differential operators. We will show approximate stability properties for $\tilde\sigma_I$ generalizing the exact ones obtained for $\sigma_I$. This lack of exact stability requires us to use detailed spectral information on $H$ and its approximation $H_L$ with multiple appeals to the Courant-Fischer min-max theorem. This will allow us to obtain a faster-than-algebraic convergence as $L\to\infty$.

\medskip

We should mention that a number of works address the computation of bulk topological indices based on the notion of `spectral localizers' \cite{Loring,LS19, LS20,Prodan,Schulz-Baldes}. Under appropriate assumption that bulk Hamiltonians on compact domain are sufficiently similar to bulk Hamiltonians on unbounded domains, the bulk invariant is characterized by the signature of an appropriate finite dimensional matrix. Eigenvalues and edge states for discrete and continuous models are also computed numerically in \cite{TWL} using a method that avoids artificial Dirichlet boundary conditions by appropriately utilizing the resolvent of the Hamiltonian.

Our work on edge current and filtered edge current applies to large classes of Hamiltonians whose detailed spectral decompositions may not be known explicitly. When enough spectral information on an unperturbed operator is available, an interface scattering theory may be implemented and solved numerically using high accuracy integral formulations; see \cite{bal2022integral,bal2023asymmetric,bal2023mathbb,chen2023scattering} for recent results on this problem.

The edge current \eqref{eq:sigmaI} and its associated asymmetric transport are purely spectral properties of the Hamiltonian $H$. For an analysis of the propagation of wavepackets along curved interfaces generalizing the flat case $\{y=0\}$, see the recent works \cite{bal2024semiclassical,bal2023magnetic,bal2023edge,D22}.

\medskip

Finally we stress that the results of this paper apply to {\em elliptic} operators (see next section for a precise definition). This class includes Dirac-type operators (see also \cite{2} for a direct analysis), BdG Hamiltonians as they appear in the modeling of topological superconductors \cite{sato,Volovik}, but also models of Floquet topological insulators \cite{BM21} and bilayer graphene \cite{bal2023mathematical}, where the BEC we derive in this paper is applied to compute the edge current.

The method does not apply when our notion of ellipticity fails, for instance for the magnetic Schr\"odinger equations that model the integer quantum Hall effect \cite{ASS90,ASS94,bellissard1994noncommutative}, or to partial-differential models with micro-structures \cite{Drouot,FLW-ES-2015}. An interesting border-line case is a Hamiltonian modeling atmospheric mass transport along the equator \cite{delplace,souslov}, in which the BEC (barely) fails \cite{3,GJT}. We will see that ellipticity and hence the BEC may be restored after appropriate regularization of this model.

\medskip

The rest of the paper is structured as follows. Our main results on the analysis of $\sigma_I$ are presented in section \ref{sec:stab}. After some preliminaries necessary to define elliptic Hamiltonians, we show in section \ref{sec:stability} that $i[H,P]\varphi'(H)$ is indeed a trace-class operator ensuring that $\sigma_I$ in \eqref{eq:sigmaI} is well-defined and that it is stable against continuous deformations of $H$, $P$, and $\varphi(h)$. A similar stability result for $\sigma_I$ is obtained in \cite{3,bal2023topological} by relating the invariant to the index of a Fredholm operator, which is known to be invariant under continuous deformations \cite{Hormander}. In this paper, we instead prove its stability directly, as is done for instance in \cite{Drouot, Elbau, GP,PS}. The main reason for this choice is its generalization to the discrete case, where exact stability no longer holds and the identification to the index of a Fredholm operator is no longer available. The bulk-edge correspondence based on establishing a Fedosov-H\"ormander formula for $\sigma_I$ is presented in section \ref{sec:FHformula}. 
Section \ref{sec:integral} provides a number of cases where the computation of the invariant simplifies.
A number of typical examples of application of the theory are treated in section \ref{sec:applications}.

The analysis of operators $H_L$ posed on bounded domains is taken up in section \ref{sec:periodic}. After a presentation of the class of differential operators $H_L$ we consider, we define an appropriate notion $\tilde\sigma_I$ of filtered periodic edge current, show that it is a spectrally accurate approximation of $\sigma_I$ as $L\to\infty$, and finally generalize the stability results of section \ref{sec:stab} to the periodic setting. 

A number of numerical simulations illustrating the computation of $\sigma_I$ via $\tilde\sigma_I$ when $H$ is a Dirac operator with massive domain wall or a model of a $p-$wave superconductor are presented in section \ref{sec:numerics}. 

The technical derivations of the results of section \ref{sec:stab} are given in section \ref{sec:pfbec} while those of the results of section \ref{sec:periodic} are given in section \ref{sec:pfs_periodic}. Relevant material on pseudo-differential operators and semiclassical calculus as well as some proofs are postponed to the appendix. 

\section{Main results on current observable, stability, and BEC} 
\label{sec:stab}
%


This section presents our main results on the current edge observable $\sigma_I$ for continuous Hamiltonians on the Euclidean plane.
\subsection{Elliptic Hamiltonians and Weyl symbols}
We first describe the class of Hamiltonians $H$ considered in this paper and then analyze the stability properties of $\sigma_I$ defined in \eqref{eq:sigmaI} with respect to perturbations of $H$, $P$, and $\varphi$. A convenient formalism to describe functions of Hamiltonians such as $\varphi'(H)$ and obtain criteria ensuring that operators such as $i[H,P]\varphi'(H)$ are trace-class is the pseudo-differential framework and calculus. We mostly follow the presentation in \cite{DS}, to which we refer for additional information and context.

Rather than analyzing $H$ directly, we write its {\em Weyl quantization} $H=\Op(a)$ as
\begin{align}\label{eq:weylquant}
    \Op (a) \psi (x) :=
    \frac{1}{(2\pi)^d} \int_{\mathbb{R}^{2d}}
    e^{i(x-x')\cdot \xi}
    a(\frac{x+x'}{2}, \xi) \psi (x') dx' d\xi,
    \qquad
    \psi \in \mathcal{S} (\mathbb{R}^d) \otimes \mathbb{C}^n.
\end{align}
Here, $a(x,\xi)$ is the (Weyl) {\em symbol} of $H$ and $\mathcal{S} (\mathbb{R}^d)$ is the space of Schwartz functions on $\Rm^d$. While $d=2$ in this paper, we use physical variables $x$ and $x'$ and dual variable $\xi$ in $\Rm^d$ in this short summary of the calculus we need. Any linear operator admits a distribution-valued Schwartz kernel and upon taking appropriate Fourier transforms, which are allowed for a large class of kernels, may be written as above in terms of a Weyl symbol \cite{DS}. In this paper, the symbol $a(x,\xi)$ will always be Hermitian-valued in ${\mathcal M}_n(\Cm)$ and hence diagonalizable for each $(x,\xi)\in\Rm^{2d}$. The advantage of introducing $a(x,\xi)$ is that a calculus exists to make sense of operators such as $\varphi'(H)$ and $i[H,P]\varphi'(H)$. This calculus requires us to introduce a number of spaces of symbols. We need two notions of symbols to state our main results. The first one is the standard space $S^m(\Rm^{d}\times\Rm^d)$ of smooth symbols $a(x,\xi)$ such that 
\begin{align}\label{eq:Sm}
 |\partial^\alpha_x\partial^\beta_\xi a(x,\xi)| \leq C_{\alpha,\beta} |\xi|^{m-|\beta|},\quad \forall (x,\xi)\in \Rm^d\times\Rm^d,
\end{align}
for any multi-indices $\alpha\in \Nm^d$ and $\beta\in\Nm^d$ with $|\beta|=\beta_1+\ldots+\beta_d$.
Differential operators of order at most $m$ have symbols that belong to $S^m$. A number of illustrating examples are presented in section \ref{sec:applications} below.

The second notion of spaces of symbol is denoted by $a\in S(\fm)$ where $\fm$ is an {\em order} function, i.e. a function such that for all $X,Y$ in $\Rm^{2d}$, there are $C>0$ and $N>0$ such that $\fm(X)\leq C\aver{X-Y}^N \fm(Y)$. Throughout the paper, we define the Japanese bracket as $\aver{X}=\sqrt{1+|X|^2}$. We then say that $a\in S(\fm)$ when for each multi-index $\alpha\in \Nm^{2d}$, we have
\begin{align}\label{eq:Sfm}
 |\partial^\alpha a(X)| \leq C_{\alpha} \fm(X),\quad \forall X\in \Rm^{2d}.
\end{align}
For $X'\in\Rm^p$ with $0\leq p\leq 2d$ any subset of the $2d$ variables representing $X$, we observe that $\aver{X'}^s$ is an order function for any $s\in\Rm$. These are the main order functions we will be using in particular to prove that operators are trace-class. We also use the standard notation $\Op a\in \Op S(\fm)$ to mean that the operator $\Op a$ has symbol $a\in S(\fm)$.

The last important ingredient we need to introduce is that of {\em ellipticity}. We say that $a\in S^m$ is {\em elliptic}, and then use the notation $a\in ES^m$, when $\sym$ is Hermitian-valued and 
satisfies the growth condition
\begin{align}\label{eq:ellip}
  a_{\rm min}(x,\xi) \geq C \aver{\xi}^m-1
\end{align}
for some positive constant $C>0$ independent of $(x,\xi)\in\Rm^{2d}$, where $a_{\rm min}(x,\xi)$ denotes the smallest singular value of $a(x,\xi)$. Note that this imposes a constraint on $a(x,\xi)$ for all $|\xi|$ sufficiently large. 

The ellipticity condition \eqref{eq:ellip} implies in particular that $H$ is a {\em self-adjoint} operator with domain of definition the standard Sobolev space $\mathcal{H}^m$ of square-integrable derivatives up to order $m$ defined in \eqref{eq:Hm} below and that the resolvent operator $(i+H)^{-1}$ is a bounded operator from $\mH$ to $\mH^m$; see \cite{Bony,Hormander} as well as Proposition \ref{prop:resh} below. By the functional version of the spectral theorem, operators such as $\phi(H)$ for $\phi$ smooth, real-valued, and compactly supported are therefore defined as bounded operators on $\mH$.


%
\subsection{Ellipticity condition and phase transition}\label{sec:H1}
For the rest of the paper, we assume that $d=2$ with spatial coordinates $x$ replaced by $(x,y) \in \mathbb{R}^2$ and corresponding dual variables $\xi$ replaced by $(\xi, \zeta) \in \mathbb{R}^2$.
We consider operators $H=\Op a$ with $a=a(x,y,\xi,\zeta)\in ES^m=ES^m(\Rm^4)$ thus satisfying \eqref{eq:ellip} with $m>0$.

In order for $H$ to model a transition between bulk phases, we assume that the two bulk Hamiltonians $H_\pm=\Op \sym_\pm$ have constant coefficients $\sym_\pm=\sym_\pm(\xi,\zeta)$ and are gapped in the interval $[E_1,E_2]\subset\Rm$ for $E_1<E_2$. Since the symbols of $H_\pm$ are assumed independent of $(x,y)$, the interval is a spectral gap of $H_\pm$ if and only if it is a spectral gap of the Hermitian matrices $\sym_\pm(\xi,\zeta)$ for all $(\xi,\zeta)\in\Rm^2$.
\\[2mm]
{\bf (H1) }
Let $H = \Op (a)$ while $H_\pm=\Op (a_\pm)$ for $a,a_\pm \in \smeh$ and $\sym_\pm$ independent of position $(x,y)$. We assume that the intersection of the spectrum of $H_\pm$ with the interval $[E_1,E_2]$ is empty and that there exists $y_0>0$ such that $a(x,y,\xi,\zeta)=a_+(\xi,\zeta)$ when $y>y_0$ while $a(x,y,\xi,\zeta)=a_-(\xi,\zeta)$ when $y<-y_0$.
\\[2mm]
\subsection{Stability of the edge current observable}\label{sec:stability}

We are now in a position to state several properties shared by the edge current observable \eqref{eq:sigmaI}.
While all results in this section hold under {\bf (H1)} defined in section \ref{sec:H1}, some results hold under the more general assumption:
\\[2mm]
{\bf (H0) }
Let $H = \Op (a)$ with $a\in ES^m$ and $[E_1,E_2]$ be fixed. We assume that for all $\Phi \in \mathcal{C}^\infty_c (E_1, E_2)$, 
then $\Phi(H)\in \Op S(\aver{y,\xi,\zeta}^{-\infty})$.
\\[2mm]
Here $S(\fm^{-\infty})=\cap_{p\geq0} S(\fm^{-p})$ is the space of symbols that decays faster than algebraically in $\fm$. We prove in Proposition \ref{trclass} that {\bf (H1)} implies {\bf (H0)}.

\medskip

To define the remaining elements in the edge current observable $\sigma_I$ in \eqref{eq:sigmaI}, it is convenient to define the notion of {\em switch function}.  We denote by $\fs(c_1,c_2;\lambda_1,\lambda_2)$ the set of smooth real-valued functions $f$ on $\mathbb{R}$ for which there exists $\delta>0$ such that $f(x)=c_1$ for all $x\leq\lambda_1+\delta$ and $f(x)=c_2$ for all $x\geq\lambda_2-\delta$. The union over $\lambda_1<\lambda_2$ is denoted by $\fs(c_1,c_2)$.

Our main assumptions on $P$ and $\varphi$ are that $P\in \fs(0,1)$ and that $\varphi\in \fs(0,1;E_1,E_2)$ for $E_1<E_2$. The interval $[E_1,E_2]$ represents the gapped  energy range for both bulk Hamiltonians $H_\pm$. The function $\varphi'(h)$ is thus supported in the interval $[E_1,E_2]$ and integrates to $1$. If $\varphi'\geq0$, then we may interpret it as a density of states selecting modes that are not allowed to propagate into the bulk phases and hence may only transport along the interface $y\approx0$.

With this notation in place, we state our main results on the definition and stability of $\sigma_I$ in \eqref{eq:sigmaI}.
\begin{lemma}\label{lemma:tc}
    Suppose $H$ satisfies \hnot. Let $P(x) = P \in \fs (0,1)$ and $\varphi \in \fs(0,1;E_1,E_2)$. Then $[H,P] \varphi'(H)$ is trace-class.
\end{lemma}
With the above lemma in hand, we are ready to state the main result of this section, which establishes the stability of the interface current observable with respect to a large class of perturbations.
\begin{theorem}\label{thm:stabilityall}
    Suppose $H$ satisfies \hnot\ and that $P(x) = P \in \fs (0,1)$ and $\varphi \in \fs (0,1;E_1,E_2)$. Then
    \begin{enumerate}
        \item The edge observable $\sigma_I (H,P,\varphi)$ is independent of $P(x) = P \in \fs (0,1)$ and $\varphi \in \fs (0,1;E_1,E_2)$.   \label{it:P}
        \item If $W$ is symmetric with $W \in \Op (S^m \cap S (\aver{\xi, \zeta}^{m-\delta} \aver{x,y}^{-\delta}))$ for $\delta > 0$, then $\sigma_I (H+W) = \sigma_I (H)$. \label{it:compact}
        \item If $W$ is symmetric with $W \in \Op (S^m)$, then $\sigma_I (H+\mu W) = \sigma_I (H)$ for all $\mu>0$ sufficiently small.\label{it:bounded}
    \end{enumerate}
    Suppose further that $H$ satisfies \hone. Then:
    \begin{itemize}
        \item[4.] Define $H_h := \Op_h (\sym)$, then $\sigma_I (H_h)$ is independent of $h \in (0,1]$. \label{it:h}
        \item[5.] Fix $x_0 \in \mathbb{R}$ and define $a_{x_0} (x,y,\xi,\zeta) := a (x_0,y,\xi,\zeta)$ and $H_{x_0} := \Op (a_{x_0})$. Then $\sigma_I (H_{x_0}, P, \varphi) = \sigma_I (H, P, \varphi)$. \label{it:x}
    \end{itemize}
\end{theorem}
Since under hypothesis \hnot, $\sigma_I$ is independent of $P$ and $\varphi$, we will henceforth write $\sigma_I (H) = \sigma_I (H,P,\varphi)$, leaving the (in)dependence on $P$ and $\varphi$ implicit. 

The second and third results prove the {\em stability} of $\sigma_I$ under relatively compact and small perturbations, respectively. The second result is in some sense surprising. As we indicated in the introduction, this means for most models that an arbitrarily large but finite slab of random perturbations along the interfaces $\{y=0\}$ does not change the edge current. In topologically trivial materials, transmission is exponentially suppressed by the random perturbations \cite{fouque2007wave}. The above stability result shows that this is no longer the case for topologically nontrivial materials; see \cite{PS,1,bal2023mathbb} as well.

The fourth result requires additional definitions. Here $\Op_h$ means semi-classical quantization, generalizing the Weyl quantization in \eqref{eq:weylquant}. We refer to \eqref{eq:weylquanth} in the appendix for its explicit expression. 
Our proof of the fourth result requires the full hypothesis \hone. This independence with respect to semiclassical rescaling as $0<h\to 0$ is an essential ingredient in our subsequent proof of the bulk-edge correspondence. The reason for its importance is that while objects such as $\varphi'(H)$ may be defined, their computation remains difficult. It turns out that its expression is amenable to a useful series in powers of $h$ in the semiclassical limit. Such rescaling is an essential ingredient in the index formulas in \cite{fedosov1970direct,Hormander} as well as in the BEC derivations in \cite{3,Drouot,EG}. Note that the fourth result cannot possibly hold for magnetic Schr\"odinger or magnetic Dirac equations since rescaling in $h$ inevitably involves crossings of Landau levels and hence cannot lead to stable edge current.

The fifth and final result shows that the spatial dependence in $x$ of the symbol $a$ is irrelevant in the computation of the edge current. 

The proofs of Lemma \ref{lemma:tc} and Theorem \ref{thm:stabilityall} are given in section \ref{sec:pfbec}.

While these result show the stability of $\sigma_I$ with respect to a number of (continuous) deformations in $H$, $P$, and $\varphi$, they do not show that $\sigma_I$ takes quantized values or provide a means to compute it explicitly. This is the role of the BEC we now turn to.

\subsection{Fedosov-H\"ormander formula and bulk-edge correspondence} \label{sec:FHformula}
%

Our next objective is to recast the edge current observable $\sigma_I(H)$ as an integral involving only the symbol $\sym$ of $H=\Op \sym$. Such integrals are significantly easier to estimate than the defining trace formula in \eqref{eq:sigmaI}. We will also see that this integral may be modified to yield a {\em bulk-edge correspondence}, in the sense that the integral involves the symbol $\sym(x,y,\xi,\zeta)$ restricted to $y\geq y_0$ and $y<-y_0$, i.e., only the symbols $\sym_\pm$. This integral may also be recast as a Fedosov-H\"ormander formula involving a complex-frequency Green's function as introduced in \cite{Volovik,EG}.

Our main result is the following:
\begin{theorem} \label{thm:main}
Suppose $H = \Op (\tilde{\sym})$ satisfies \hone,
$P(x)=P \in \fs(0,1)$ and $\varphi \in \fs(0,1;E_1,E_2)$.
Let $\alpha \in (E_1, E_2)$, and let
$R \subset \mathbb{R}^3$ be 
bounded with 
a piecewise smooth boundary $\partial R$.
Fix $x_0 \in \mathbb{R}$, 
and define $\sym(y,\xi,\zeta) := \tilde{\sym} (x_0, y, \xi, \zeta)$.
Assume $R$ contains
all points $(y, \xi, \zeta)$ where 
$\sym(y,\xi,\zeta)$ has an eigenvalue of $\alpha$.
Then
defining $z := \alpha + i\omega$ 
and $\sym_z := z-\sym$, we have that
\begin{align}\label{eq:sigmaI1}
    \sigma_{I}(H,P,\varphi) = \frac{i}{16\pi^3} \int_{\partial R} \int_{-\infty}^{+\infty} \Theta_z d\omega d\Sigma, \qquad \Theta_z := \tr \eps_{ijk}\sym_z^{-1} \partial_i \sym_z \sym_z^{-1} \partial_j \sym_z \sym_z^{-1}\nu_k
\end{align}
where $\nu$ is the unit vector (outwardly) normal to $\partial R$, $d\Sigma= d\Sigma(y,\xi, \zeta)$ is the Euclidean measure on $\partial R$, and $\eps_{ijk}$ is the anti-symmetric tensor with $\eps_{123}=1$ and the variables identified by $(1,2,3)=(y,\xi,\zeta)$.
\end{theorem}
The domain $R$ exists thanks to hypothesis \hone\ on $\sym$. Indeed, $\alpha\in (E_1,E_2)$ cannot be an eigenvalue of $\sym(y,\xi,\zeta)$ for $|y|$ large while all singular values of $\sym$ tend to infinity as $|(\xi,\zeta)|$ tends to infinity.

The proof of the Theorem is postponed to section \ref{subsec:main}. It involves three main steps: (i) write the trace in \eqref{eq:sigmaI} as an integral involving the kernel of the resolvent operator $(z-H)^{-1}$; (ii) use the invariance of $\sigma_I$ with respect to semiclassical rescaling to write it using the resolvent operator $(z-H_h)^{-1}$ and use semiclassical expansions to identify $O(h^0)$ term; and finally (iii) use analytic properties and the Stokes formula to recast the traces as the right-hand side in \eqref{eq:sigmaI1}.

As we just mentioned, the right-hand side of \eqref{eq:sigmaI1} is stable against a number of transformations. Two additional transformations allow us to prove a bulk-edge correspondence and derive a Fedosov-H\"ormander formula. We state them in the following two corollaries.
\begin{corollary}[Bulk-Edge Correspondence]\label{cor:bic}
Suppose $H = \Op (\sym)$ satisfies \hone,
    $P(x)=P \in \fs(0,1)$ and $\varphi \in \fs(0,1;E_1,E_2)$. 
    Let $\alpha \in (E_1,E_2)$ and define $\sbz := z-\sym_{\pm}$ with $z=\alpha+i\omega$.
    Then
    \begin{align}\label{eq:BEC2}
        2\pi \sigma_I (H,P,\varphi) = \invariantplus - \invariantminus,\qquad \invariantpm = \frac{i}{8\pi^2} \int_{\mathbb{R}^3} \tr [\sbz^{-1} \partial_\xi \sbz, \sbz^{-1} \partial_\zeta \sbz] \sbz^{-1} d\omega d\xi d\zeta.
    \end{align}
\end{corollary}
We thus observe that $2\pi\sigma_I$ may be written as a difference of integrals involving only the bulk phase symbols $a_\pm$. It turns out that $I_+-I_-$ is always integral-valued as shown in \cite{3}. Indeed, the above right-hand-side may be transformed into a {\em bulk-difference} Chern number after integration in the variable $\omega$. The integrals $I_\pm$ separately are {\em not} integral-valued (and take the value $\pm\frac12$ for Dirac operators for instance). As we already noted, this shows that phase differences are more generally defined than absolute phases in the context of elliptic (pseudo-)differential Hamiltonians.

The second corollary is:
\begin{corollary}[Fedosov-H\"ormander formula]\label{cor:FH}
    Suppose $H = \Op (\tilde{\sym})$ satisfies \hone, 
    $P(x)=P \in \fs(0,1)$ and $\varphi \in \fs(0,1;E_1,E_2)$. 
Let $\alpha \in (E_1, E_2)$, and let $\Rfour \subset \mathbb{R}^4$ be bounded with piecewise smooth boundary $\partial \Rfour$.
Fix $x_0 \in \mathbb{R}$ and define $\sym(y,\xi,\zeta) := \tilde{\sym} (x_0, y, \xi, \zeta)$.
Assume $\Rfour$ contains all points $(0, y, \xi, \zeta)$ where 
$\sym(y,\xi,\zeta)$ has an eigenvalue of $\alpha$.
Then
for $z := \alpha + i\omega$ 
and $\sym_z := z-\sym$, we have that
\begin{align} \label{FH}
    2 \pi \sigma_I (H,P,\varphi) = 
    \frac{1}{24\pi^2} \int_{\partial \Rfour}
    u \cdot \nu d\Sigma_3, \qquad u_l := \tr \epsilon_{ijkl} \partial_i \sym_z \sym_z^{-1} \partial_j \sym_z \sym_z^{-1} \partial_k \sym_z \sym_z^{-1},
\end{align}
where
$\nu$ is the outward unit normal 
to $\partial \Rfour$, $d\Sigma_3= d\Sigma_3 (\omega, y,\xi, \zeta)$ is the Euclidean measure on $\partial \Rfour$, and
$\epsilon_{ijkl}$ the anti-symmetric tensor with $\epsilon_{1234} =1$ and 
the variables identified by $(1,2,3,4) = (\omega,y, \xi, \zeta)$. 
\end{corollary}
The above integral may be recast in a more geometric form as it appears in \cite{EG,Hormander,3}:
\begin{align}\label{eq:FH2}
    \int_{\partial \Rfour}
    u \cdot \nu d\Sigma_3 = \int_{\partial \Rfour}(\sym_z^{-1}d\sym_z)^{\wedge 3}.
\end{align}

Since $\sigma_I$ does not depend on the $x-$dependence of $\sym(x,y,\xi,\zeta)$, we decided to write the integral in the last three results in terms of the reduced symbol $a(x_0,y,\xi,\zeta)$. In fact, we could have defined more generally $a_z(\omega,y,\xi,\zeta)=\alpha+i\omega -a(\omega,y,\xi,\zeta)$ and \eqref{eq:sigmaI1}-\eqref{eq:BEC2}-\eqref{FH} would still hold. This reflects the fact that the right-hand side of \eqref{eq:sigmaI1} is a homotopy invariant \cite{bal2023topological,Hormander} (invariant with respect to continuous deformations of $a$ that remain elliptic). This also shows that $2\pi\sigma_I\in\Zm$, something that is not easily apparent in any of the above integrals. 

The derivations of the corollaries from the main theorem are similar to those in \cite{3}. For concreteness and since the frameworks are different, we now provide brief proofs.
\begin{proof}[Proof of Corollary \ref{cor:bic}]
Fix $y_0>0$ sufficiently large, and define $R_M := (-y_0, y_0) \times (-M,M)^2$ for all $M>0$.
Recalling the definition of $\Theta_z$ in \eqref{eq:sigmaI1},
observe that $|\Theta_z| \le C \aver{\omega}^{-3}$ and $|\Theta_z| \le C \aver{\xi,\zeta}^{-m-2}$ uniformly in $(y,\xi,\zeta) \in \partial R_M$ and $M>0$ sufficiently large, hence $|\Theta_z| \le C \aver{\omega}^{-3/2} \aver{\xi,\zeta}^{-\frac{m+2}{2}}$
by interpolation. It follows that $\int_{-\infty}^{+\infty} |\Theta_z|d\omega \le C\aver{\xi,\zeta}^{-\frac{m+2}{2}}$.
Therefore, taking $R:=R_M$ in Theorem \ref{thm:main} and
sending $M \rightarrow \infty$, we see that
    the contributions to $\sigma_{I}$ in \eqref{eq:sigmaI1} from the sides of $\partial R_M$ with normal vector in the $\xi$ and $\zeta$ directions vanish. Indeed, the area of these surfaces is proportional to $M$, with the maximum of the integrand bounded by $CM^{-1-m/2}$.
    Thus we are left with integrals over the sides corresponding to $y = \pm y_0$, over which $\sym(y,\xi,\zeta) = \sym_\pm (\xi, \zeta)$.
    As a consequence, $\sigma_{I} = \frac{1}{2\pi} (\invariantplus-\invariantminus)$, and the proof is complete.
\end{proof}

\begin{proof}[Proof of Corollary \ref{cor:FH}]
     By Theorem \ref{thm:main},
     \begin{align}\label{eq:sigmaM}
         \sigma_I = \frac{i}{16\pi^3} \lim_{M \rightarrow \infty} \int_{\partial R} \int_{-M}^{M} \Theta_z d\omega d\Sigma.
     \end{align}
     Since $|\Theta_z| \le C \aver{\omega}^{-3}$, it follows that if $z=\alpha \pm i M$ and 
     $i,j,k \in \{\omega, y, \xi, \zeta\}$, then 
         \begin{align}\label{eq:M0}
    \int_{R} |\tr \partial_i \sym_z \sym_{z}^{-1} \partial_j \sym_z \sym_z^{-1} \partial_k \sym_z \sym_z^{-1}| d R_3
    \longrightarrow 0
\end{align}
as $M \rightarrow \infty$. 
Since $\partial_\omega \sym= i$, we know that
\begin{align}\label{eq:thetaijk}
    \Theta_z = -i \tr \eps_{ijk}\partial_\omega \sym\sym_z^{-1} \partial_i \sym_z \sym_z^{-1} \partial_j \sym_z \sym_z^{-1}\nu_k.
\end{align}
It then follows from cyclicity of the trace that
\begin{align}\label{eq:sigmaMinfty}
         \sigma_I = \frac{1}{48\pi^3} \lim_{M \rightarrow \infty} \int_{\partial \Rfour_M} u \cdot \nu d\Sigma_3,
\end{align}
where $\Rfour_M := [-M,M] \times R$.
Indeed, the integral over $[-M,M] \times \partial R \subset \partial \Rfour_M$ is precisely the integral on the right-hand side of \eqref{eq:sigmaM}, while the integral over the rest of $\partial \Rfour_M$ vanishes in the $M \rightarrow \infty$ limit by \eqref{eq:M0}. The factor of $3$ adjustment in \eqref{eq:sigmaMinfty} compared to \eqref{eq:sigmaM} is justified by the rank-four tensor $\epsilon_{ijkl}$, which causes each term on the right-hand side of \eqref{eq:thetaijk} to appear three times in \eqref{eq:sigmaMinfty}.

Thanks to \hone, all singularities of $\sym_z^{-1}$ lie in $\Rfour_M \cap \Rfour$ when $M$ is sufficiently large.
Hence $\sym_z^{-1}$ is well-defined in 
$\Rfour_M \Delta \Rfour := (\Rfour_M \cup \Rfour) \setminus (\Rfour \cap\Rfour_M)$.
It follows that for $(\omega, y, \xi, \zeta) \in \Rfour_M \Delta \Rfour$, we have
\begin{align*}
    \nabla \cdot u &= 
    \epsilon_{ijkl} \tr
    \partial_l
    (
    \partial_i \sym_z \sym_z^{-1} \partial_j \sym_z \sym_z^{-1} \partial_k \sym_z \sym_z^{-1})\\
    &=
    -\epsilon_{ijkl} \tr
    \Big(
    \partial_i \sym_z \sym_z^{-1}
    \partial_l \sym_z \sym_z^{-1}
    \partial_j \sym_z \sym_z^{-1} \partial_k \sym_z \sym_z^{-1}
    +
    \partial_i \sym_z \sym_z^{-1}
    \partial_j \sym_z \sym_z^{-1}
    \partial_l \sym_z \sym_z^{-1} \partial_k \sym_z \sym_z^{-1}\\
    &\qquad +
    \frac{1}{2}
    \partial_i \sym_z \sym_z^{-1}
    \partial_j \sym_z \sym_z^{-1}
    \partial_k \sym_z \sym_z^{-1} \partial_l \sym_z \sym_z^{-1}
    +
    \frac{1}{2}
    \partial_l \sym_z \sym_z^{-1}
    \partial_i \sym_z \sym_z^{-1}
    \partial_j \sym_z \sym_z^{-1} \partial_k \sym_z \sym_z^{-1}\Big)
    = 0,
\end{align*}
where we have used cyclicity of the trace to justify the second equality, and the antisymmetry of $\epsilon_{ijkl}$ for the last equality (the third term cancels half the second term, the fourth term cancels half the first term, half the second term cancels half the first term).
Using the Divergence Theorem, 
we 
thus replace $\partial \Rfour_M$ in \eqref{eq:sigmaMinfty} by $\partial \Rfour$ and the proof is complete.
\end{proof}

\subsection{Simplifying the computation of the integrals} \label{sec:integral}
The integrals \eqref{eq:sigmaI1}-\eqref{eq:BEC2}-\eqref{FH}, while significantly simpler than the trace \eqref{eq:sigmaI} both conceptually and practically, still need to be evaluated. They simplify in a number of situations we now consider. Throughout the section, we use the shorthand $\sigma_I := \sigma_I (H, P, \varphi)$, where it is implied that $P(x)=P \in \fs(0,1)$ and $\varphi \in \fs(0,1;E_1,E_2)$. Recall from Theorem \ref{thm:stabilityall} (\ref{it:P}) that $\sigma_I$ is independent of $\varphi$ and $P$.

Our first result assumes  beyond \hone\ that $(\lambda_\ell (y,\xi,\zeta),\psi_\ell (y,\xi,\zeta))$ for $1\leq \ell\leq n$, the eigenvalues and corresponding eigenvectors of $\sym(x_0, y,\xi,\zeta)$ for a fixed $x_0\in\Rm$, are appropriately differentiable functions in $(y,\xi,\zeta)$. Then we have:

\begin{proposition} \label{wn}
Suppose $H=\Op (\sym)$ satisfies \hone. Fix $x_0 \in \mathbb{R}$ and let $E_1 < \alpha < E_2$ and $R$ as in Theorem \ref{thm:main}. 
Assume that $\lambda_\ell$ and $\psi_\ell$, the eigen-elements of $\sym(x_0, y,\xi,\zeta)$ are differentiable on $\partial R$ up to a set of $\Sigma$-measure zero.
Define $n_+ := \{\ell : \lambda_\ell > \alpha \text{ on } \partial R\}$ and $n_- := \{\ell : \lambda_\ell < \alpha \text{ on } \partial R\}$.
Then we have:
\begin{align}\label{wneq}
    2\pi \sigma_I = 
    \frac{i}{2\pi} \int_{\partial R} \eps_{ijk} \sum_{\ell_+ \in n_+, \ell_- \in n_-} \partial_i \psi_{\ell_+}^* \psi_{\ell_-} \psi_{\ell_-}^* \partial_j \psi_{\ell_+} \nu_k d\Sigma.
\end{align}
\end{proposition}
The main advantage of this result is to integrate out the variables $\omega$ so that the integration in \eqref{wneq} is two-dimensional. We will make use of this formula in the next section.

\medskip

Our next objective is to simplify Theorem \ref{thm:main} for a large class of $2$-dimensional models written in terms of Pauli matrices (or more generally in terms of representations of Clifford algebras \cite{3}). Many analogous result, e.g., for tight-binding models, can be found in the literature \cite{FC}.

The goal is to recast the formulas \eqref{eq:sigmaI1}-\eqref{eq:BEC2}-\eqref{FH} in terms of the topological degree of an appropriate vector field. This requires some definitions. We say that a point $a\in \mathbb{R}^d$ is a \emph{regular value} of a map $f: \mathbb{R}^d \rightarrow \mathbb{R}^d$ if
the preimage $f^{-1} (a)$ is
a finite collection of points $\{x_1,x_2, \dots, x_p\} \subset \mathbb{R}^d$ such that the Jacobian $M_j^{mn} = \partial_m f_n \vert _{x_j}$ is nonsingular for all $j$. We let $\mathcal{R} (f)$ denote the set of all regular values of $f$. By Sard's Theorem \cite{Sard}, we know that $\mathbb{R}^d \setminus \mathcal{R} (f)$ has measure zero.
\begin{proposition} \label{deg}
Let $H=\Op (\sym)$ satisfy \hone, and fix $x_0 \in \mathbb{R}$. Suppose there exist 
(smooth and real-valued) functions $f_1, f_2, f_3$ such that 
\begin{align*}
    \sym(x_0,y, \xi, \zeta) = f_1 (y,\xi, \zeta) \sigma_1 + f_2(y,\xi, \zeta) \sigma_2 + f_3(y,\xi, \zeta) \sigma_3
\end{align*}
for all $(y,\xi,\zeta) \in \mathbb{R}^3$,
where
\begin{align}\label{eq:pauli}
    \sigma_1 = \begin{pmatrix}0 &1\\1& 0 \end{pmatrix}, \qquad \sigma_2 = \begin{pmatrix}0 &-i\\i& 0 \end{pmatrix}, \qquad \sigma_3 = \begin{pmatrix}1 &0\\0& -1 \end{pmatrix}
\end{align}
are the Pauli matrices.
Thanks to \hone, there exists a
three-dimensional ball $S$ 
such that the vector field $f := (f_1, f_2, f_3)$ satisfies $|f| := \sqrt{f_1^2 + f_2^2 + f_3^2} \ge \eps_0$
in $\mathbb{R}^3 \setminus S$, for some $\eps_0 >0$.
Then the set 
$\mathcal{R}_{\eps_0} := \mathcal{R} (f) \cap \{|a| < \eps_0\}$ is nonempty.
Let $a \in \mathcal{R}_{\eps_0}$, and define $f^{-1} (a) =: \{(y_j,\xi_j, \zeta_j)\}_{j=1}^p$. Then
\begin{align} \label{sgndet}
    2 \pi \sigma_I = -\sum_{j=1}^p \sgn \det M_j,
\end{align}
where $M_j \in \mathbb{R}^{3 \times 3}$ is the Jacobian matrix defined by
$M_j^{mn} = \partial_m f_n \vert _{(y_j,\xi_j, \zeta_j)}$.
\end{proposition}
The proofs of Propositions \ref{wn} and \ref{deg} are postponed to Appendix \ref{appendix:deg}.

%
\subsection{Examples of application}
\label{sec:applications}

\noindent {\bf $2\times2$ models.} 
We now consider a number of Hamiltonians that appear in the analysis of topological insulators and superconductors. The $2 \times 2$ Dirac system \cite{2,1,3,bal2023topological,PS,Witten} is defined as
\begin{align} \label{2x2Dirac}
    H_D = D_x \sigma_1 + D_y \sigma_2 + m(y) \sigma_3,
\end{align}
with $\sigma_j$ the Pauli matrices already introduced in \eqref{eq:pauli} and $D_x=-i\partial_x$ while $D_y=-i\partial_y$ are the Hermitian differential operators.

Models for $p$-wave and $d$-wave superconductors \cite{BH,Volovik} are given by
\begin{align} \label{p}
    H_p &= \left( \frac{1}{2m} (D_x^2 + D_y^2) - \mu \right) \sigma_1 + \frac{1}{2}(c(y) D_y + D_y c(y)) \sigma_2 + c_0 D_x \sigma_3 \\
    \label{d}
    H_d &= \left( \frac{1}{2m} (D_x^2 + D_y^2) - \mu \right) \sigma_1
    + c_0 (D_y^2 - D_x^2) \sigma_2 + 
    \frac{1}{2}D_x (c(y)D_y + D_y c(y)) \sigma_3.
\end{align}

Above, $c_0$, $m$, and $\mu$ are fixed positive constants. We verify (see also Appendix \ref{appendix:deg}) that $H_D\in \Op S^1$ while $H_p\in \Op S^2$ and $H_d\in \Op S^2$.
We assume that $m(y)$ and $c(y)$ are smooth domain walls verifying
\begin{align*}
    m(y) = \begin{cases}
    m_-, &y \le -y_0 \\
    m_+, &y \ge y_0
    \end{cases}
    \qquad \text{and} \qquad
    c(y) = \begin{cases}
    c_-, &y \le -y_0 \\
    c_+, &y \ge y_0
    \end{cases}
\end{align*}
for some $y_0 > 0$, where the constants $m_\pm$ and $c_\pm$ are all nonzero. We then have the following accessible formulas for the edge current observable, whose proof is postponed to Appendix \ref{appendix:deg}. 


\begin{proposition}\label{prop:app}
    Each Hamiltonian $H \in \{H_D, H_p, H_d\}$ satisfies \hone\ with energy interval $[E_1, E_2] = [-E,E]$ and $E>0$ (depending on $H$) sufficiently small. 
    The corresponding edge current observables are 
    \begin{align*}
        2\pi \sigma_I (H_D) = \frac{1}{2} (\sgn (m_-) - \sgn (m_+)), \;
        2\pi \sigma_I (H_p) = \sgn (c_-) - \sgn (c_+), \;
        2\pi \sigma_I (H_d) = 2 (\sgn (c_-) - \sgn (c_+)).
    \end{align*}
\end{proposition}

%
%
%
%
%
\medskip \noindent {\bf Regularized model of equatorial waves.} 
%
The two-dimensional water wave model is given by a Hamiltonian $H_0={\rm Op}(\sym_0)$ with 
\begin{equation}\label{eq:a0geo}
  \sym_0(x,y,\xi,\zeta) = \begin{pmatrix} 0 & \xi & \zeta \\ \xi & 0 & -if(y) \\ \zeta& if(y) & 0\end{pmatrix}
\end{equation}
where $f(y)$ is a Coriolis force that is positive when $y>0$ (northern hemisphere of (necessarily) flat Earth) and negative when $y<0$; see \cite{3,delplace,souslov} for background on this water wave problem and in particular the observation that the bulk-interface correspondence fails for certain profiles $f(y)$ \cite{3}.

We thus consider here a regularized version given by Hamiltonian $H_\mu = \Op (\sym)$ given by $\sym= \lambda_+ \Pi_+ + \lambda_- \Pi_- + \lambda_0 \Pi_0$, where, following calculations in \cite{3,delplace}
\begin{align*}
\Pi_j = \psi_j \psi_j^*,\quad
\psi_0 =
\frac{1}{\kappa}
\left[ {\begin{array}{c}
   if\\
   \zeta\\
   -\xi\\
  \end{array} } \right],
  \quad
    \psi_\pm = 
    \frac{1}{\rho}
    \left[ {\begin{array}{c}
   if\xi \pm \kappa \zeta\\
   \xi\zeta \pm if\kappa\\
   \zeta^2+f^2\\
  \end{array} } \right],
  \quad
  \lambda_0 = \mu   \kappa^2(1+\kappa^2)^{-\frac12},
  \quad \lambda_\pm = \pm \kappa,
\end{align*}
with $f \in \fs (f_-, f_+)$, 
$\kappa = \sqrt{f^2+\xi^2+\zeta^2}$, and $\rho = \kappa \sqrt{2(f^2+\zeta^2)}$
for some nonzero constants $\mu$ and $f_\pm$. We verify that $H_\mu=H_0$ when the regularization parameter $\mu=0$.

The role of the regularization is to replace the infinitely degenerate (flat band) eigenvalue $0$ by a topologically trivial band with eigenvalues tending to $\pm\infty$ as $|(\xi,\zeta)|\to\infty$.
The choice of the regularization for $\lambda_0$ ensure that $\sym\in S^1$ and is elliptic when $\mu\not=0$.
We then have
\begin{proposition}\label{prop:3x3}
    Fix $\mu \ne 0$ and set $\mu_1 := \min\{|\mu|,1\}$ and $f_0 := \min \{|f_+|, |f_-|\} > 0$. Then
    the regularized water wave Hamiltonian $H_\mu$ satisfies \hone\ with $E_2= f_0 \min\{\mu_1, |\mu| f_0\}$ and $E_1= -E_2$. Moreover, 
    \begin{align*}
        2\pi \sigma_I (H_\mu) = \sgn (f_+) - \sgn (f_-).
    \end{align*}
\end{proposition}
We refer to Appendix \ref{appendix:deg} for the details of the derivations and proofs.




%
\section{Periodic approximations of infinite-space problems} \label{sec:periodic}

This section addresses the computation of asymmetric transport and edge currents by numerical simulations. In particular, we wish to devise a numerical method that is able to capture $\sigma_I$. The operator $H$ is initially defined on the unbounded domain $\Rm^2$ and thus first needs to be replaced by a sequence of approximations on bounded domains. A standard method to approximate spectral decompositions of operators in $\mathbb{R}^2$ is to consider infrared cutoffs defined as the restriction of the operator on a box $(-\pi L,\pi L)^2$ with periodic boundary conditions and analyze the limit $L\to\infty$. 

As we mentioned in the introduction, this approach is fraught with a number of difficulties. Because of the structure of the domain wall with $\sym(y)$ transitioning from $\sym_-$ to $\sym_+$ as $y$ crosses $0$, any periodization generates another transition from $\sym_+$ to $\sym_-$ as $y$ crosses $\pi L$. The asymmetric transport along the $x$ axis when $y$ is close to $0$ is compensated by an asymmetric transport along the $x$ axis when $y$ is close to $-\pi L\equiv \pi L$ with opposite chirality, resulting in a (globally) topologically trivial material. Furthermore, the function $P(x)$ modeling transport along a line $x\approx x_1$ as $P$ (smoothly) jumps from $0$ to $1$ across that (thickened) interface, necessarily jumps back from $1$ to $0$ somewhere else. 

Periodic systems thus no longer enjoy a non-trivial topology. This is a no-go result similar to a fermion doubling or Nelson-Ninomiya theorem \cite{Witten} ensuring that any domain wall in a mass term on a torus, no matter how large, may continuously be deformed to a constant mass term. The trace in \eqref{eq:sigmaI} therefore needs to be modified so the integral focuses on the original domain wall, which as $L\to\infty$ becomes well separated from the spurious second domain wall. This will be achieved by introducing a {\em spatial filter} localizing the computation of the trace in the vicinity of the origin $(0,0)$; see left panel in Fig. \ref{fig:PQ}.

\begin{figure}[ht!]
    \centering
    \includegraphics[width=6cm]{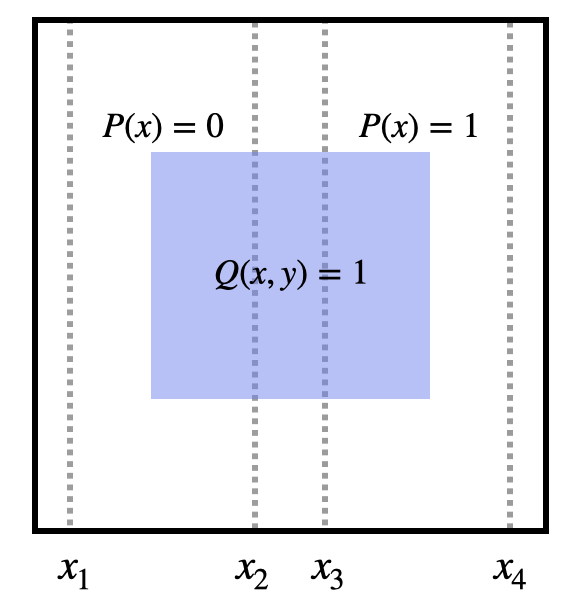} \qquad
    \includegraphics[width=8.5cm]{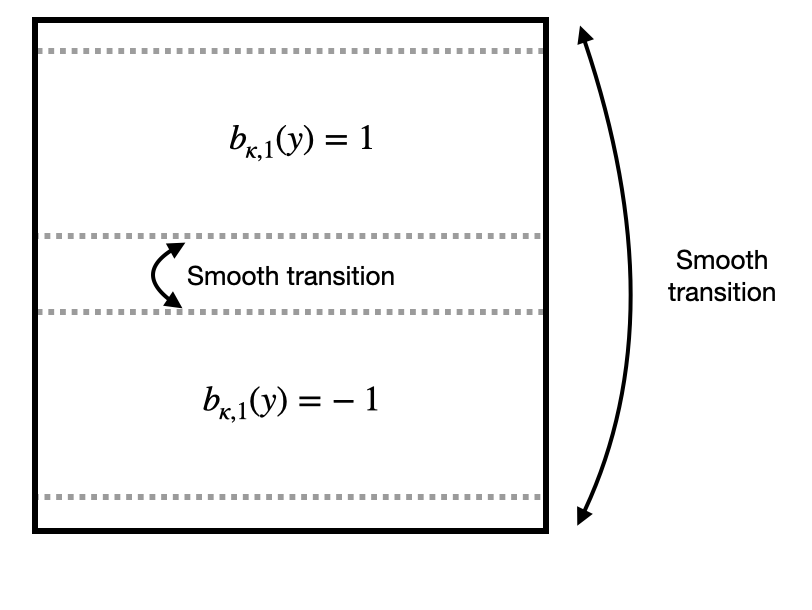}
    \caption{Left panel: illustration of the spatial filters $P$ and $Q$. The function $P$ is smooth and constant within the vertical slabs $x_1 \le x \le x_2$ and $x_3 \le x \le x_4$, while $Q$ is equal to $1$ in the shaded region and rapidly (but smoothly) vanishes outside this region. Right panel: illustration of the coefficients of the periodic operator $H_\kp$. For example, a coefficient can transition smoothly between $-1$ and $1$.}
    \label{fig:PQ}
\end{figure}



\subsection{Periodic Hamiltonians and main results}


While edge currents may be defined and computed for general pseudo-differential Hamiltonians, as we do in section \ref{sec:FHformula}, see also \cite{bal2023topological}, we restrict our analysis of periodized Hamiltonians to differential systems and avoid complications resulting from non-local effects. 
We therefore consider (unperturbed) infinite-space Hamiltonians that satisfy \hone\ and are differential operators of the form
\begin{align}\label{H2}
    H = M_0 D^m_y + M_m D^m_x + \sum_{j=0}^m \cp_j (y) D_x^j D_y^{m-j} + \sum_{i+j \le m-1} \cp_{ij} (y) D^i_x D^j_y.
\end{align}
Here, $D_x=-i\partial_x$ and $D_y=-i\partial_y$ still while $M_0$ and $M_m$ are constant nonsingular Hermitian matrices with
\begin{align}\label{anticommutation}    
    \{M_0, M_m\}\begin{cases}
    =0, & \text{$m$ odd}\\
    \ge 0, & \text{$m$ even}
    \end{cases},
    \quad
    \{M_i, \cp_j (y)\}
    \begin{cases}
    =0, & \text{$i+j$ odd}\\
     \ge 0, & \text{$i+j$ even}
    \end{cases}, i \in \{0,m\},\ j \in \{0,1, \dots, m\},\ y \in \mathbb{R}.
\end{align}
Here, $\{O_1,O_2\}:=O_1O_2+O_2O_1$. 
The $\cp_j(y)$ and $\cp_{ij} (y)$ are smooth matrix-valued functions chosen such that $H$ is symmetric;
in particular, $\cp_j(y)$ are all Hermitian-valued.
The anti-commutation relations \eqref{anticommutation} ensure that $H^2 = M_0^2 D^{2m}_y + M_m^2 D^{2m}_x + A + B$, where $A$ is non-negative and $B$ is a differential operator of order $2m-1$, so that $H^2$ remains a self-adjoint elliptic operator.  The above structure is sufficiently general to apply our theory to all models  from section \ref{sec:applications} including those of superconductors. 

We will finally consider perturbed operators $H_V=H+V$ where $V=V(x,y)$ is a multiplication operator with support that remains sufficiently close, in an appropriate sense, to the center of the torus.

To approximate $H$ by an operator on $(-\pi L, \pi L)^2$, we redefine the coefficients near $y = \pi L$ to obtain Hamiltonians with smooth coefficients. 
The resulting periodic Hamiltonian is unitarily equivalent to an operator $H_\kp$ on the 2-torus $\mathbb{T}^2=[-\pi, \pi)^2$,
with $\kp = L^{-1}$ the relevant infra-red cut-off parameter. 
The details of the routine construction of $H_\kp$ are postponed to section \ref{subsec:construction}. 

For $\varphi \in \fs (0,1;E_1,E_2)$,
we define the \emph{periodic  filtered edge current observable} by
\begin{align} \label{eq:sigmaIQ}
\tilde{\sigma}_I (H_\kp) = \Tr i Q [H_\kp,P] \varphi ' (H_\kp),
\end{align}
where
$P = P(x)$ and $Q = Q (x,y) = Q_X (x) Q_Y(y)$ are smooth point-wise multiplication operators such that
\begin{align*}
    Q_\Theta(\theta) = 
    \begin{cases}
    1, & |\theta| \le \pi/2\\
    0, & |\theta| \ge \pi/2 + \delta_Q
    \end{cases} \quad
    \text{and}
    \quad
    P(x) = 
    \begin{cases}
    0, & x_1 \le x \le x_2   \qquad -\pi < x_1 < -3\pi/4 < -\pi/2 < x_2 \\
    1, & x_3 \le x \le x_4 \qquad x_2 < x_3 < \pi/2 < 3\pi/4 < x_4 < \pi,
    \end{cases}
\end{align*}
for some fixed $0<\delta_Q < \pi/4$.
As we observed earlier, $Q$ is a spatial filter centered at the domain wall of interest where $P'>0$ and vanishing in the vicinity of the unwanted domain wall where $P'<0$.
We refer to Fig. \ref{fig:PQ} (left panel) for an illustration of $P$ and $Q$.

\medskip

Our first main result of this section is the following, where we recall the definition \eqref{eq:sigmaI} of $\sigma_I (H)$.
\begin{theorem} \label{periodicApprox}
For any $p > 0$,
there exists a constant $C_p> 0$ such that $|\sigma_I (H) - \tilde{\sigma}_I (H_\kp)| \le C_p \kp^{p}$ as $\kp \rightarrow 0$.
\end{theorem}

The above theorem establishes 
an almost-exponential rate of convergence, 
which is based on a careful approximation of the periodic eigenfunctions by truncations of the infinite eigenfunctions corresponding to energies in the bulk spectral gap in the vicinity of domain walls and on a proof that such periodic eigenfunctions have to be negligible away from the domain walls. Estimates of the periodic eigenelements by means of the Courant-Fischer min-max theorem allow us to obtain the above trace estimates. 

\medskip

We next establish the stability of the periodic filtered interface current observable against perturbations of the form
\begin{align}\label{eq:Vperiodic}
    V_\kp := \sum_{i+j\le m-1} \kp^{i+j} v_{\kp,ij} (x,y) D^i_x D^j_y
\end{align}
where the 
$v_{\kp, i j}$ are smooth matrix-valued functions that make $V_\kp$ a symmetric differential operator on $L^2 (\mathbb{T}^2) \otimes \mathbb{C}^n$.
We asssume that  $\sum_{i,j} \norm{v_{\kp, i j}}_{L^\infty} \le C$ uniformly in $\kp$ and that there exists a closed set $S_1 \subset \mathbb{T}^2$ such that
$\cup_{i,j} \supp (v_{\kp,ij}) \subset S_1$ for all $\kp \in (0, \kp_0]$
and $\supp ((1-Q)P') \cap S_1 = \emptyset$.

The condition on the sup-norm of $v_{\kp, ij}$
guarantees that the coefficients of $H_\kp + V_\kp$ are bounded uniformly in $\kp$.
Combined with the assumption that $V_\kp$ is symmetric, this ensures that $H_\kp + \mu V_\kp$ is uniformly elliptic (in the sense of Proposition \ref{periodicEllipticity}) and hence self-adjoint.
Note that $L^\infty-$bounds on derivatives of the $v_{\kp,ij}$ are not necessary since $V_\kp$ has no leading order terms.
The condition on the support of $v_{\kp, ij}$ is natural as it ensures that the perturbation is close to the domain wall of interest and well separated from the spurious domain wall introduced by the periodization.
It
allows us to express $\tilde{\sigma}_I (H_\kp + V_\kp) - \tilde{\sigma}_I (H_\kp)$ as the integral over a product of operators, two of which are differential operators with disjoint support (see the proof of Theorem \ref{stabilityPeriodic} below).
Note that the assumption $\supp ((1-Q)P') \cap S_1 = \emptyset$ could easily be replaced by $\supp (QP') \cap S_1 = \emptyset$.

Our main stability result  for the periodic filtered edge current observable is the following
\begin{theorem} \label{stabilityPeriodic}
Take $H_\kp$ and $V_\kp$ as above.
Then for all $N \in \mathbb{N}$,
\begin{align*}
    |\tilde{\sigma}_I (H_{\kp}+V_{\kp}) - \tilde{\sigma}_I (H_\kp)| \le C_N \kp^N,\quad  0<\kp\leq \kp_0.
\end{align*}
\end{theorem}
This result, combined with Theorem \ref{periodicApprox}, shows that the periodic approximation of $\sigma_I$ enjoys a spectral (almost exponential) convergence property with $\sigma_I(H)-\tilde \sigma_I(H_\kp+V_\kp)$ being of order $\kp^p$ for any $p\geq0$.

The long and technical proofs of Theorems \ref{periodicApprox} and \ref{stabilityPeriodic} are postponed to section \ref{sec:pfs_periodic}. Because the periodic edge current $\tilde\sigma_I$ does not enjoy any exact stability property, unlike $\sigma_I$ analyzed in section \ref{sec:stab}, the results of section \ref{sec:stability} and their proofs need to be combined with a direct spectral analysis of the periodic Hamiltonian $H_\kp$. The next section concerns the precise definition of $H_\kp$ while some applications are given in section \ref{sec:Lappli}.

\medskip

\subsection{Construction of periodic operator and spectral approximation} \label{subsec:construction}
We now construct 
the differential operator $H_\kp$ on the torus. We start from our original operator $H = \Op (\sym)$, assumed to be a differential operator on $\mathcal{H} =L^2 (\mathbb{R}^2) \otimes \mathbb{C}^n$ given by \eqref{H2} and \eqref{anticommutation} such that \hone\ holds.

Any operator with smooth coefficients on the torus will have a transition modeling a domain wall between $\sym_-$ and $\sym_+$ in the vicinity of $y=0$ and another domain wall from $\sym_+$ to $\sym_-$ in the vicinity of $y=\pi$. We construct a differential operator $H' = \Op (\sym')$ with such a domain wall, i.e., such that $\sym' = \sym_+$ whenever $y \le \pi - y_0$ and $\sym' = \sym_-$ whenever $y \ge \pi + y_0$. Assume that $\sym'$ is smooth and independent of $x$, and
that for every $y' \in \mathbb{R}$, there exists $y \in \mathbb{R}$ such that $\sym' (y', \xi, \zeta) = \sym(y,\xi,\zeta)$.
We may for instance choose $\sym' (y,\xi,\zeta) = \sym(\pi- y, \xi, \zeta)$.

The above operators are defined on $\Rm$ and need to be mapped to a torus and glued together.
Define the rescaled versions 
\begin{align}\label{eq:rescaledop}
    \tilde{H}_\kp = \Op (\tilde{\sym}_\kp),\quad \tilde{\sym}_\kp (y,\xi,\zeta) := \sym(\frac y{\kp}, \kp \xi, \kp\zeta);\qquad
    \tilde{H}'_\kp = \Op(\tilde{\sym}'_\kp),\quad \tilde{\sym}'_\kp (y,\xi,\zeta) := \sym' (\frac{y-\pi}{\kp} + \pi, \kp\xi, \kp\zeta).
\end{align}
Then there exists $\kp_0 \in (0,1]$ such that for all $\kp \in (0, \kp_0]$,
we can define $H_\kp$ a differential operator on \tcbn{$\mathcal{H}(\mathbb{T}^2)= L^2 (\mathbb{T}^2) \otimes \mathbb{C}^n$} with \emph{smooth coefficients} that are equal to those of $\tilde{H}_\kp$ when $-\pi/2 \le y \le \pi/2$ and $\tilde{H}'_\kp$ when $\pi/2 \le y \le 3 \pi/2$ (with the equivalence $-\pi/2 \equiv 3\pi/2$ on $\mathbb{T}$).
Thus we have
\begin{align} \label{Hlambda2D}
    H_\kp = \kp^m \Big(M_0 D^m_y + M_m D^m_x + \sum_{j=0}^m \cp_{\kp,j} (y) D^j_x D^{m-j}_y \Big)
    + \sum_{i+j \le m-1} \kp^{i+j} \cp_{\kp,ij} (y) D^i_x D^j_y,
\end{align}
where the $\cp_{\kp, j},\cp_{\kp, ij} \in \mathcal{C}^\infty_c (\mathbb{T})$ are constant whenever $\kp y_0 \le |y| \le \pi - \kp y_0$. We refer to Fig. \ref{fig:PQ} (right panel) for an illustration.
By definition, we know that for all $j \in \{0, 1, \dots, m\}$ and $y \in \mathbb{T}$, there exists $y' \in \mathbb{R}$ such that $\cp_{\kp, j} (y) = \cp_j (y')$.
Hence \eqref{anticommutation} holds in the periodic setting, with $\cp_j$ replaced by $\cp_{\kp, j}$.
Moreover, 
for any multi-index $\alpha \in \mathbb{N}^2$, we have $\kp^{|\alpha|} (\sum_{j} \norm{\partial^\alpha \cp_{\kp, j}}_{L^\infty} + \sum_{i,j} \norm{\partial^\alpha \cp_{\kp, ij}}_{L^\infty}) \le C$ uniformly in $\kp$.


\medskip

In the rest of this section, we summarize the main steps in the spectral approximations that are required to prove Theorem \ref{periodicApprox}. These results will all be made rigorous in section \ref{sec:pfs_periodic}. Since the coefficients of $H$ are independent of $x$, we can for $\xi \in \mathbb{R}$ define the operator $\hat{H} (\xi) :\mathcal{H}^m (\mathbb{R}) \to \mathcal{H} (\mathbb{R})$ by 
\begin{align}\label{eq:hatH}
    \hat{H} (\xi) := M_0 D_y^m + M_m \xi^m +
    \sum_{j=0}^m \cp_j (y) \xi^j D_y^{m-j} + \sum_{i+j \le m-1} \cp_{ij} (y) \xi^i D_y^j,
\end{align}
so that
$
    H = \mathcal{F}^{-1}_{\xi \rightarrow x} \int_{\mathbb{R}}^{\oplus} \hat{H} (\xi) d\xi \mathcal{F}_{x \rightarrow \xi},
$
with $\mathcal{F}$ the one dimensional Fourier transform in the $x$-variable.
The edge current observable can then (formally) be written as
\begin{align}\label{eq:tr_ft}
    2\pi \sigma_I = \Tr \int_{\mathbb{R}} \partial_\xi \hat{H} (\xi) \varphi' (\hat{H} (\xi)) d\xi,
\end{align}
with the above trace over $\mathcal{H} (\mathbb{R})$ and
\begin{align*}
    \partial_\xi \hat{H} (\xi) := M_0 D_y^m + m M_m \xi^{m-1} +
    \sum_{j=1}^{m} j\cp_j (y) \xi^{j-1} D_y^{m-j} + \sum_{i+j \le m-1} i \cp_{ij} (y) \xi^{i-1} D_y^j
\end{align*}
the Fourier transform of the commutator $i[H,P]$ after integration in $x$; see, e.g. \cite{3}.
Similarly, the periodic edge current observable will be shown in \eqref{eq:sigmaI_sum} to equal
\begin{align}\label{eq:tr_ft_per}
    2\pi \tilde{\sigma}_I = \kp \Tr \sum_{\xi \in \kp \mathbb{Z}} Q_Y \partial_\xi \hat{H}_\kp (\xi) \varphi' (\hat{H}_\kp (\xi)),
\end{align}
where $\hat{H}_\kp (\xi) : \mathcal{H}^m (\mathbb{T}) \rightarrow \mathcal{H} (\mathbb{R})$ and $\partial_\xi \hat{H}_\kp (\xi) : \mathcal{H}^m (\mathbb{T}) \rightarrow \mathcal{H} (\mathbb{R})$ are defined by
\begin{align*}
    \hat{H}_\kp (\xi) &:= \kp^m M_0 D_y^m + M_m \xi^m +
    \sum_{j=0}^m \kp^{m-j} \cp_{\kp,j} (y) \xi^j D_y^{m-j} + \sum_{i+j \le m-1} \kp^{j} \cp_{\kp,ij} (y) \xi^i D_y^j\\
    \partial_\xi \hat{H}_\kp (\xi) &:= \kp^m M_0 D_y^m + m M_m \xi^{m-1} +
    \sum_{j=1}^{m} j\kp^{m-j} \cp_{\kp,j} (y) \xi^{j-1} D_y^{m-j} + \sum_{i+j \le m-1} i \kp^j \cp_{\kp,ij} (y) \xi^{i-1} D_y^j.
\end{align*}
The formula \eqref{eq:tr_ft_per} resembles a Riemann sum approximation of \eqref{eq:tr_ft}, suggesting that a spectral approximation of $\hat{H} (\xi)$ by $\hat{H}_\kp (\xi)$ would suffice to produce the convergence result in Theorem \ref{periodicApprox}. In the proof of Theorem \ref{periodicApprox} (see \eqref{eq:sigmaI_sum} there), we use the identity $\Tr A = \sum_{j} \aver{\phi_j, A \phi_j}$ (with $\{\phi_j\}$ any orthonormal basis) to expand trace in \eqref{eq:tr_ft_per} in the eigenbasis of $\hat{H}_\kp (\xi)$. The spatial filter $Q_Y$ eliminates the contribution from the spurious domain wall, as eigenfunctions of $\hat{H}_\kp$ localized to the $-\pi \equiv \pi$ interface get exponentially small on the support of $Q_Y$ in the $\kp \to 0$ limit.
We show with Proposition \ref{propEvalApprox} and Lemma \ref{switchBasis} below that the eigenvalues and corresponding eigenspaces of $\hat{H}_\kp (\xi)$ and $\hat{H} (\xi)$ get exponentially close as $\kp \to 0$. The proofs use the Courant-Fischer min-max theorem (stated by Theorem \ref{maxmin}) and the exponential decay of the (generalized) eigenfunctions of $\hat{H} (\xi)$ and $\hat{H}_\kp (\xi)$ in the regions where the coefficients of these differential operators are constant (Proposition \ref{expDecay}).

Our precise result on eigenvalue approximation also involves the operator $H'$ defined in \eqref{eq:rescaledop}. Using $\hat{H}'(\xi)$ as the Fourier transform of $H'$, we prove the following: 
\begin{proposition}\label{prop:eval_approx}
    For $\xi \in \mathbb{R}$, let $\mu_1 (\xi) \le \mu_2 (\xi) \le \dots \le \mu_s (\xi)$ denote the combined eigenvalues of $\hat{H}^2 (\xi)$ and $\hat{H}'^2(\xi)$ 
    in $[0,E^2)$, and let $\mu_{\kp,1}(\xi) \le \mu_{\kp,2}(\xi) \le \dots$ denote the eigenvalues of $\hat{H}_\kp^2 (\xi)$, all counted with multiplicity. Then
    there exist positive constants $C$ and $r$ such that $$\mu_{\ell} - C e^{-r/\kp} \le \mu_{\kp,\ell}\le \mu_{\ell} + C e^{-r/\kp}$$
    uniformly in $\kp \in (0, \kp_0]$ 
    for all $\ell \in \{1,2, \dots, s\}$.
    Moreover, 
    $\mu_{\kp, s + 1} \notin [0,E^2)$
    for all $\kp$ sufficiently small.
\end{proposition}
This proposition is restated as Proposition \ref{propEvalApprox} below and proved there.

\medskip

With Lemma \ref{switchBasis} below, we will show that the eigenspaces of $\mu_{\kp,\ell}$ converge exponentially to the eigenspaces of $\mu_\ell$ as $\kp \to 0$. 
For ease of exposition, we now present the following consequence of Lemma \ref{switchBasis} (and the rapid decay of the eigenfunctions of $\hat{H}^2 (\xi)$ and $\hat{H}'^2 (\xi)$ proved in Proposition \ref{expDecay}),
where we inherit the notation of eigenvalues from Proposition \ref{prop:eval_approx} and use $(\cdot, \cdot)$ and $\aver{\cdot, \cdot}$ to denote the inner products on $L^2 (\mathbb{R}) \otimes \mathbb{C}^n$ and $L^2 (\mathbb{T}) \otimes \mathbb{C}^n$, respectively.
\begin{proposition}\label{prop:efn_approx}
    Fix $\xi \in \mathbb{R}$. For $j \in \mathbb{N}_+$, let $\psi_{j,\xi}$ denote the eigenfunction of $\hat{H}^2 (\xi)$ or $\hat{H}'^2 (\xi)$ corresponding to eigenvalue $\mu_j (\xi)$. We use the orthonormalization convention $(\psi_{i,\xi}, \psi_{j,\xi}) = \delta_{ij}$ for all $\psi_{i,\xi}, \psi_{j,\xi}$ eigenfunctions of the same operator. Let $\theta_{\kp,j,\xi}$ denote the eigenfunction of $\hat{H}_\kp^2 (\xi)$ corresponding to eigenvalue $\mu_{\kp,j} (\xi)$, again assuming that $\aver{\theta_{\kp,i,\xi}, \theta_{\kp,j,\xi}} = \delta_{ij}$. Let $j \le \ell$ be positive integers such that $\ell \le s$ and define
    $\delta := \min\{ \mu_{\kp,\ell+1} (\xi) - \mu_{\kp,\ell} (\xi), \mu_{\kp,j}(\xi) - \mu_{\kp,j-1}(\xi) \}$, where the second argument is ignored if $j=1$. Then there exist positive constants $C$ and $r$ such that
    \begin{align}\label{eq:efn_approx}
        \Big |\sum_{i=j}^\ell \Big (\aver{\theta_{\kp,i,\xi}, A_\kp (\xi) \theta_{\kp,i,\xi}} - (\psi_{i,\xi}, A (\xi) \psi_{i,\xi}) \Big) \Big| \le
        C \delta^{-1} e^{-r/\kp},
    \end{align}
    where $A_\kp (\xi) := Q_Y \partial_\xi \hat{H}_\kp (\xi)$ and $A (\xi):= \partial_\xi \hat{H} (\xi)$.
\end{proposition}

\subsection{Application to Dirac and superconductor models}
\label{sec:Lappli}
To apply the main results from sections \ref{sec:periodic} to several
Hamiltonians from section \ref{sec:applications}, we need to check the anti-commutation relations of the leading order terms \eqref{anticommutation}. Here are some examples:

\medskip
\noindent{\bf $2 \times 2$ Dirac system \eqref{2x2Dirac}.}
Since $H = D_x \sigma_1 + D_y \sigma_2 + m(y) \sigma_3$,
we can take $M_0 = \sigma_2$, $M_1 = \sigma_1$, and $\cp_{00} (y) = m(y) \sigma_3$.
Since the $\sigma_j$ are nonsingular and $\{\sigma_1, \sigma_2\} = 0$, 
\eqref{anticommutation} holds.
It follows that $|\tilde{\sigma}_I (H_\kp) - \sigma_I (H)| \le C_p \kp^p$ as $\kp \rightarrow 0$, for any $p$.
Moreover, $|\tilde{\sigma}_I (H_\kp) - \tilde{\sigma}_I (H_\kp + V_\kp)| \le C_p \kp^p$
for all smooth Hermitian-valued point-wise multiplication operators $V_\kp = V_\kp (x,y)$ such that $\norm{V_\kp}_{L^\infty} \le C$ uniformly in $\kp$,
and that satisfy $S_1 \cap \supp (1-Q) = \emptyset$ 
for some closed set $S_1$ containing $\cup_\kp \supp (V_\kp)$.

\medskip
\noindent{\bf $p$-wave superconductor model \eqref{p}.} 
We can take $M_0 = M_2 = \frac{1}{2m} \sigma_1$, which takes care of all leading order terms.
Since $\sigma_1$ is non-singular, \eqref{anticommutation} holds.
It follows that $|\tilde{\sigma}_I (H_\kp) -\sigma_I (H)| \le C_p \kp^p$ for any $p$. 
We also have $|\tilde{\sigma}_I (H_\kp) - \tilde{\sigma}_I (H_\kp + V_\kp)| \le C_p \kp^p$ 
for all Hermitian-valued first-order differential operators 
$V_\kp := \sum_{i+j\le 1} \kp^{i+j} v_{\kp,ij} (x,y) D^i_x D^j_y$ with smooth coefficients
$v_{\kp, ij}$ such that $\sum_{i,j} \norm{v_{\kp, ij}}_{L^\infty} \le C$
and $\cup_{i,j} \supp (v_{\kp, ij}) \subset S_1$ uniformly in $\kp$, for some closed set $S_1$
satisfying $S_1 \cap \supp (1-Q) = \emptyset$.

\medskip

\noindent{\bf $d$-wave superconductor model \eqref{d}.} 
Take $M_0 = M_2 = \frac{1}{2m} \sigma_1$, $\cp_0 = c_0 \sigma_2$, $\cp_1(y) = c(y) \sigma_3$, $\cp_2 = -c_0 \sigma_2$.
Since $\sigma_1$ is nonsingular and $\{\sigma_1, \sigma_j\}=0$ for $j \in \{2,3\}$, \eqref{anticommutation} holds.
The following conclusions are the same as those for the $p$-wave superconductor model above.

\section{Numerical simulations}\label{sec:numerics}

\begin{figure}
    \begin{subfigure}{.23\textwidth} 
  \centering
  \includegraphics [width=\textwidth] {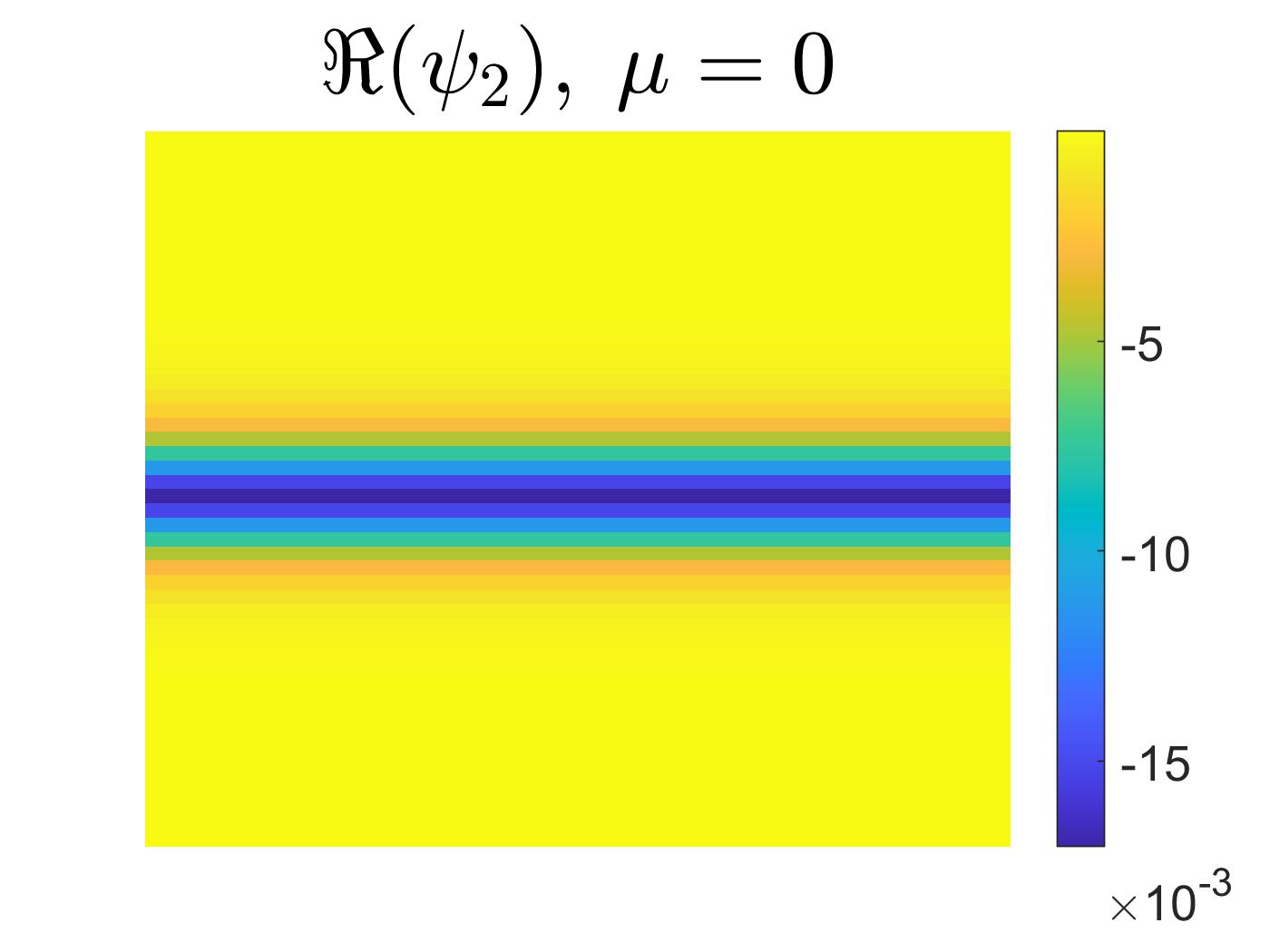}  
\end{subfigure}
\begin{subfigure}{.23\textwidth}
  \centering
  \includegraphics [width=\textwidth] {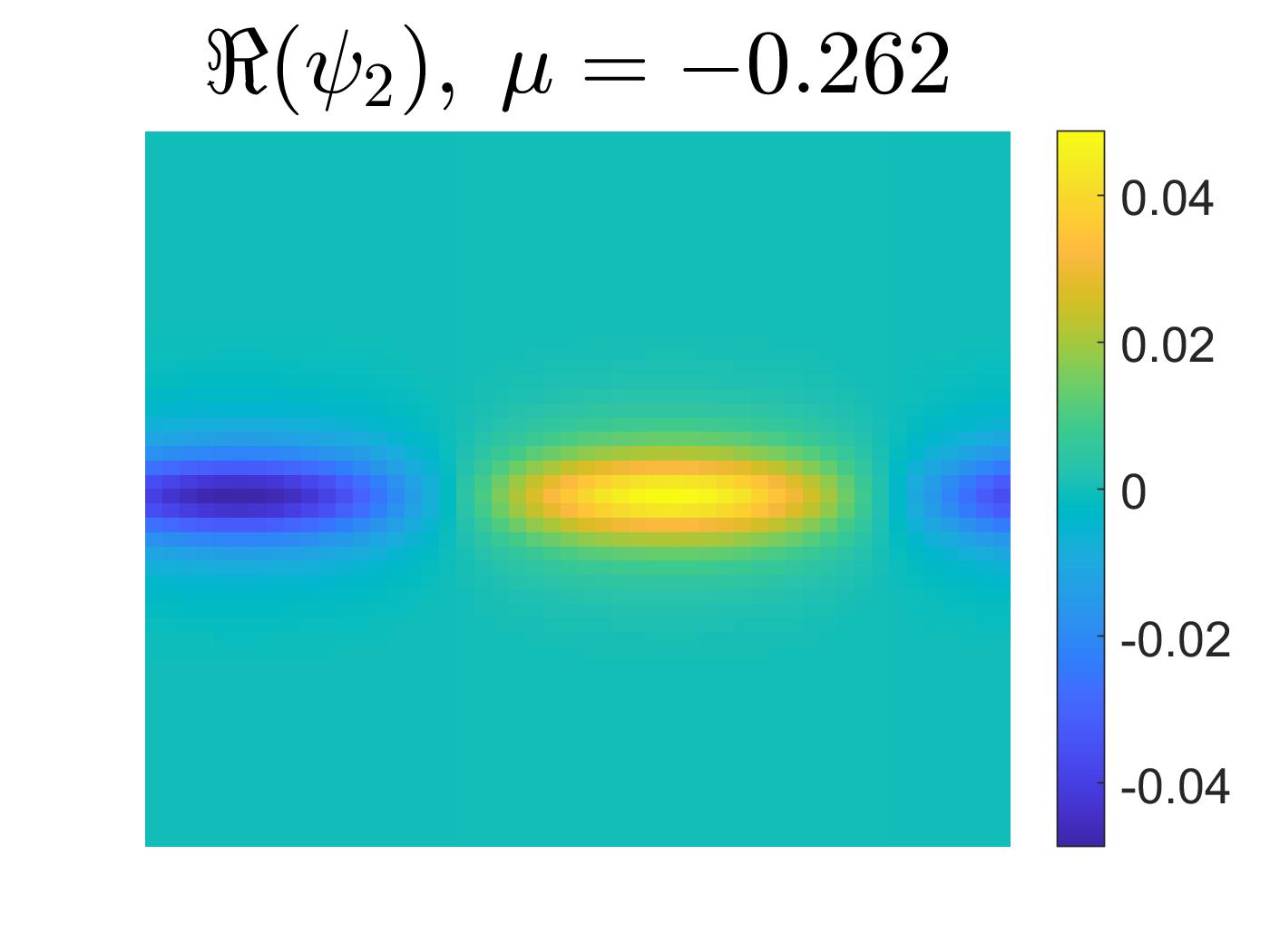}
\end{subfigure}
\begin{subfigure}{.23\textwidth}
  \centering
  \includegraphics [width=\textwidth] {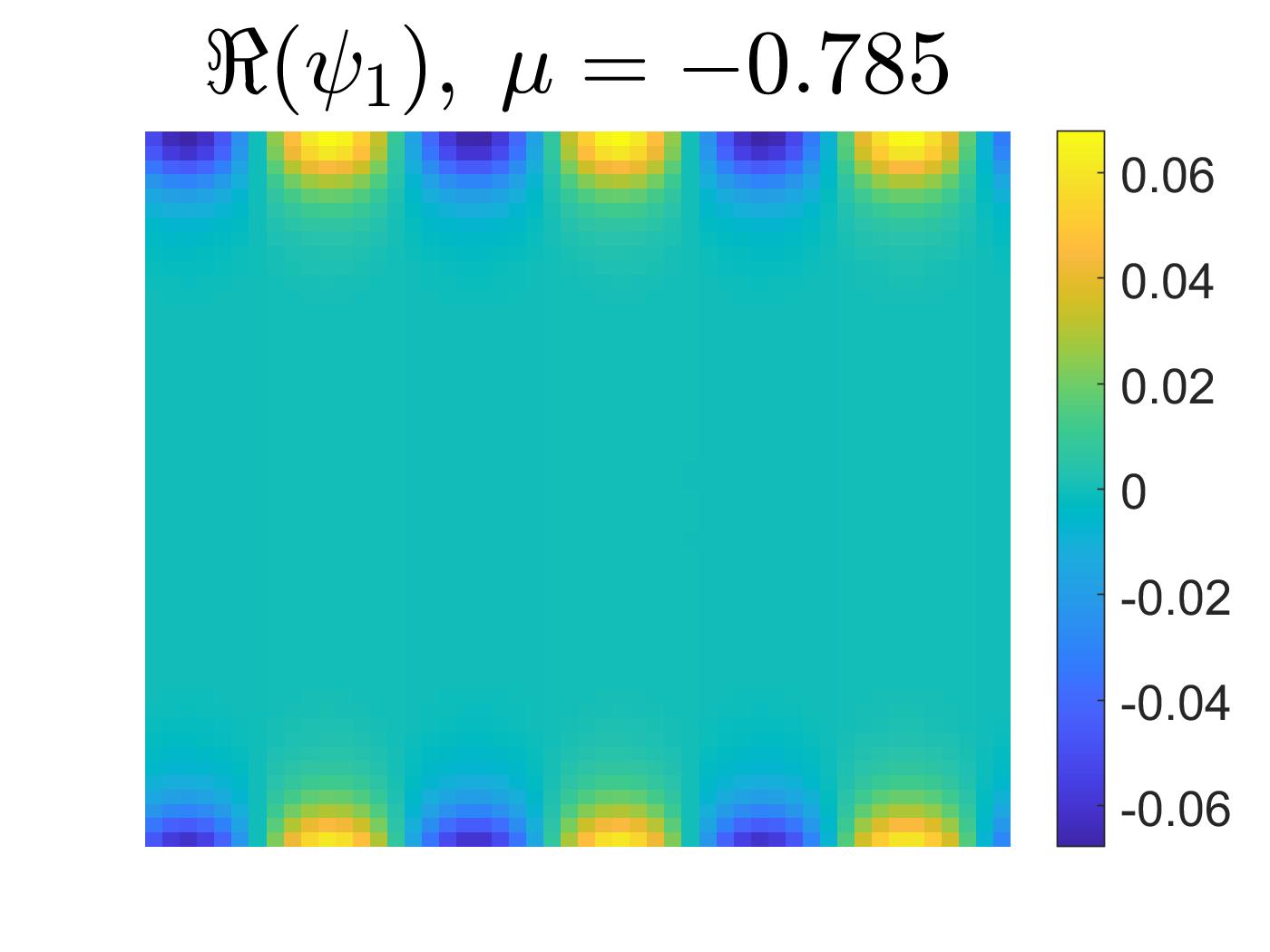}
\end{subfigure}
\begin{subfigure}{.23\textwidth}
  \centering
  \includegraphics [width=\textwidth] {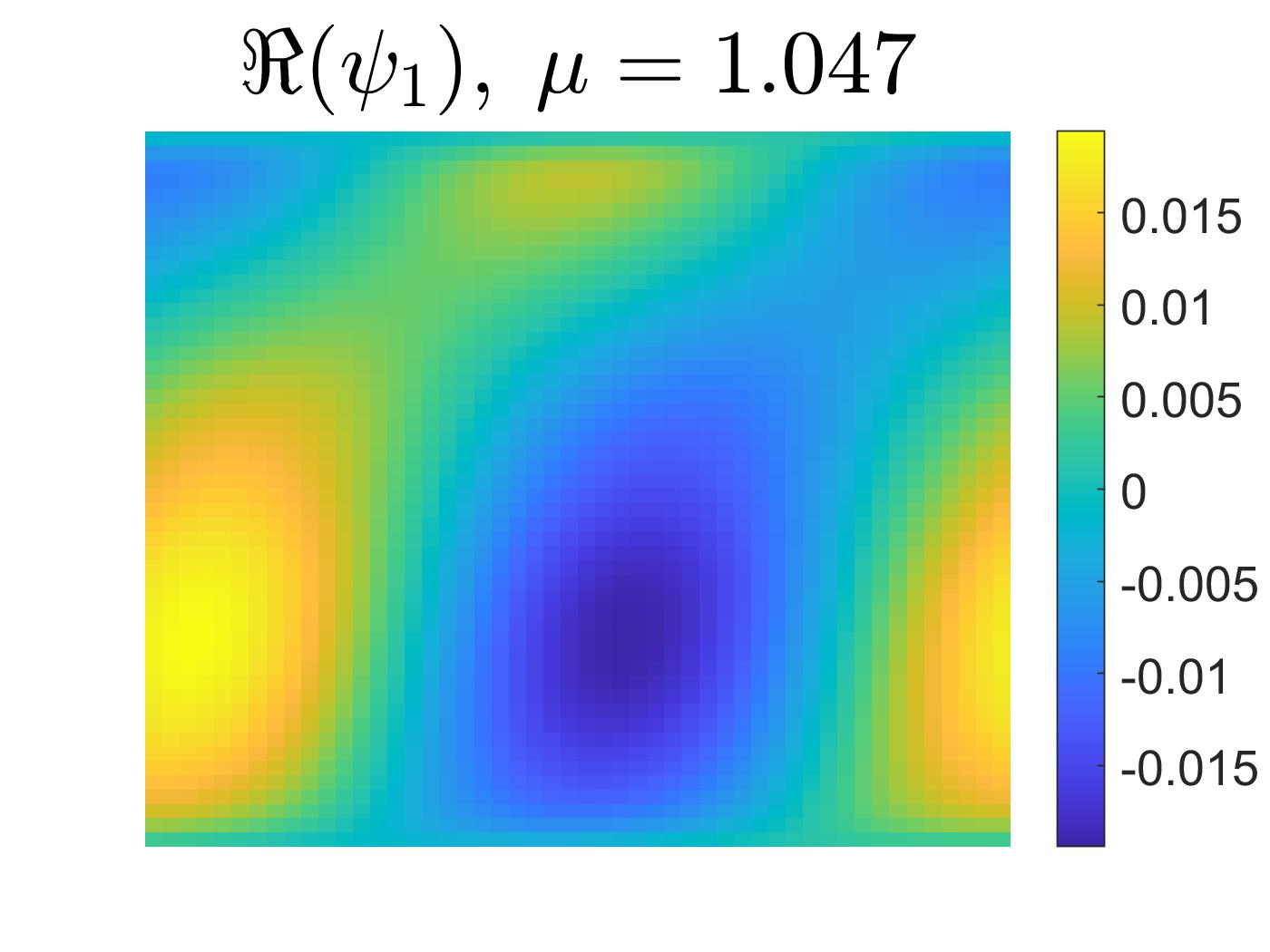}
\end{subfigure}
\newline
    \begin{subfigure}{.23\textwidth} 
  \centering
  \includegraphics [width=\textwidth] {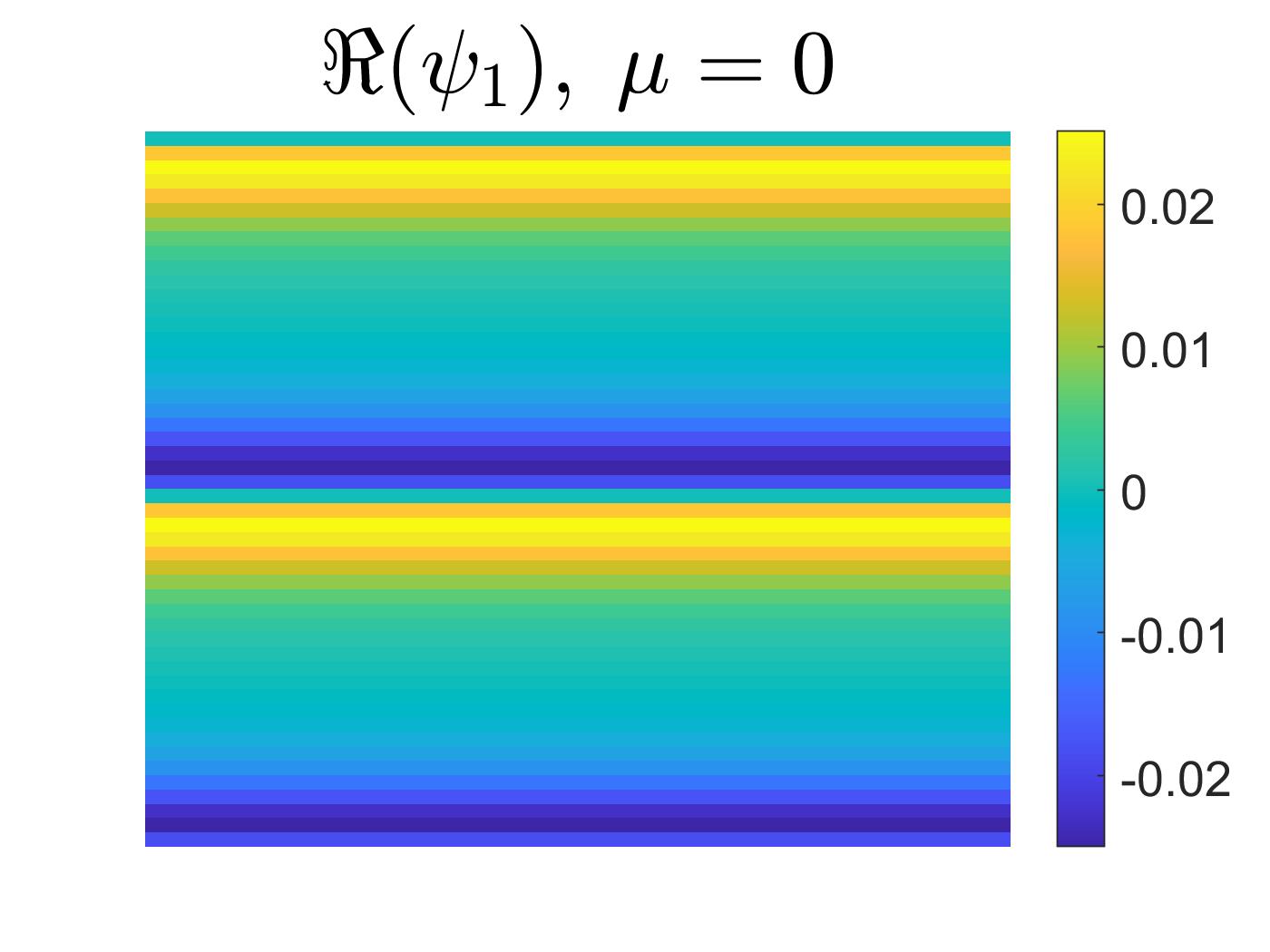}  
\end{subfigure}
\begin{subfigure}{.23\textwidth}
  \centering
  \includegraphics [width=\textwidth] {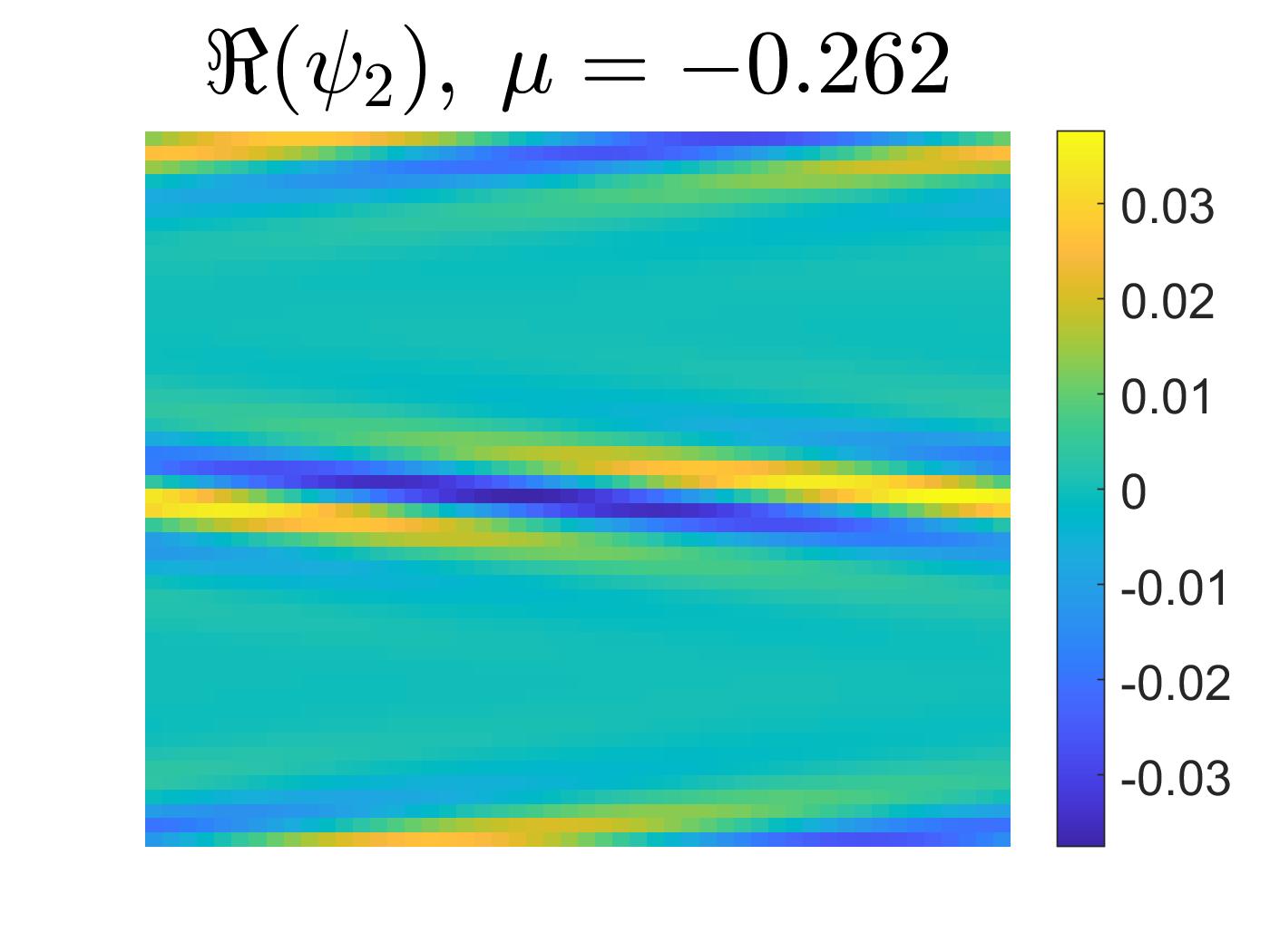}
\end{subfigure}
\begin{subfigure}{.23\textwidth}
  \centering
  \includegraphics [width=\textwidth] {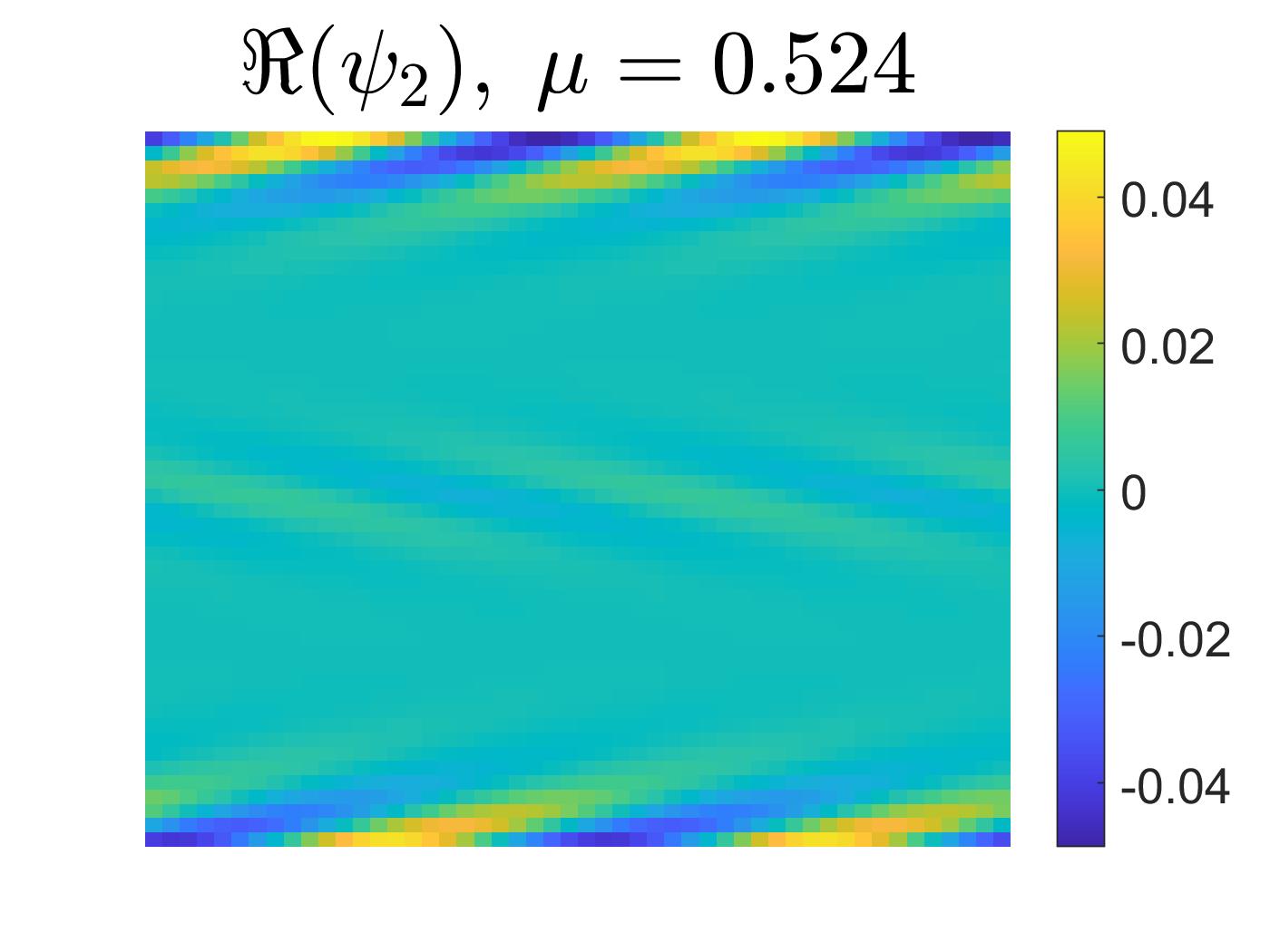}
\end{subfigure}
\begin{subfigure}{.23\textwidth}
  \centering
  \includegraphics [width=\textwidth] {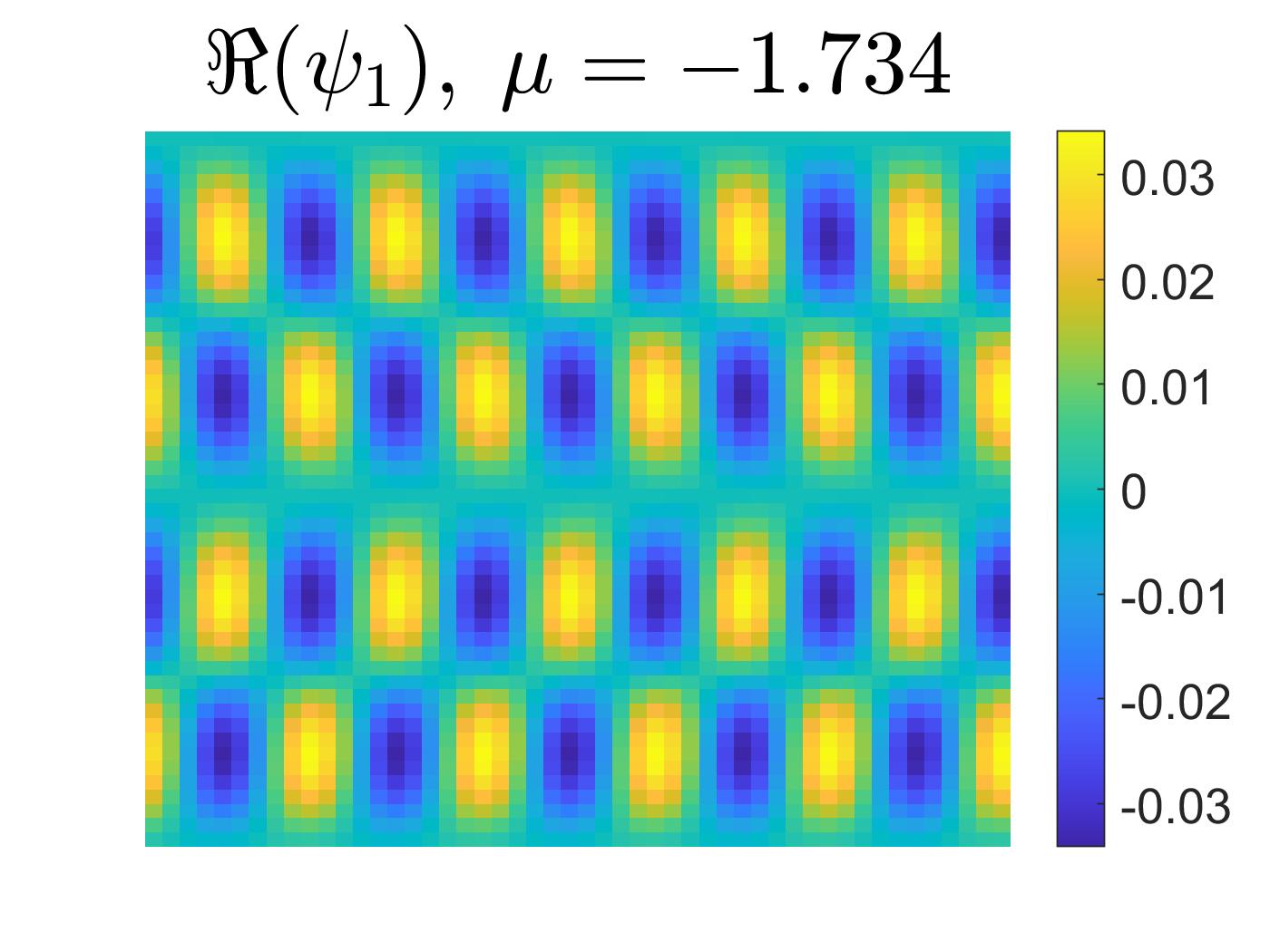}
\end{subfigure}
\caption{Eigenfunctions of the periodic $2 \times 2$ Dirac Hamiltonian (top) and $p$-wave superconductor (bottom) models. 
First three columns: edge states localized at the $y=0$ and $y = \pm L_y/2$ interfaces;
far right: bulk states that do not contribute to the edge current. The plotted vector components and corresponding eigenvalues are labeled.}
\label{efns2x2Dirac}
\end{figure}

To illustrate the stability (or not) of the interface current observable $\sigma_I$ in \eqref{eq:sigmaI}, we consider a number of numerical simulations. The periodized operator $H_\kp$ with $\kp=L^{-1}$ introduced in the preceding section is now amenable to standard discretizations, for instance spectral methods \cite{Trefethen}, which we have implemented for the $2 \times 2$ Dirac model, the $p$-wave superconductor model, and the equatorial wave model analyzed in section \ref{sec:applications}.

In all simulations, we discretize a torus of size $L_x \times L_y = 24 \times 24$ uniformly into $N_x \times N_y = 64 \times 64$ grid points with periodic boundary conditions.
Point-wise multiplication operators are represented as diagonal matrices, and derivatives $\tilde{D}_\alpha \approx \frac{1}{i} \partial_\alpha$ ($\alpha = x,y$) are computed in the Fourier basis 
$
    \tilde{D}_\alpha = F_\alpha^{-1} \Lambda_\alpha F_\alpha,
$
where $F_\alpha$ is the discrete Fourier transform with respect to direction $\alpha$,
and
\begin{align*}
\Lambda_\alpha = \frac{2 \pi}{N_\alpha L_\alpha} \diag\Big(-\frac{N_\alpha}{2}, -\frac{N_\alpha}{2} + 1, \dots, \frac{N_\alpha}{2} - 1\Big).
\end{align*}
Note that $F_\alpha$ is unitary, so that $F_\alpha^{-1} = F_\alpha^*$.
We refer the reader to \cite{Trefethen} for more details on spectral methods and their accuracy.

\begin{figure}[t] 
\begin{subfigure}{.23\textwidth}
  \centering
  \includegraphics [width=\textwidth] {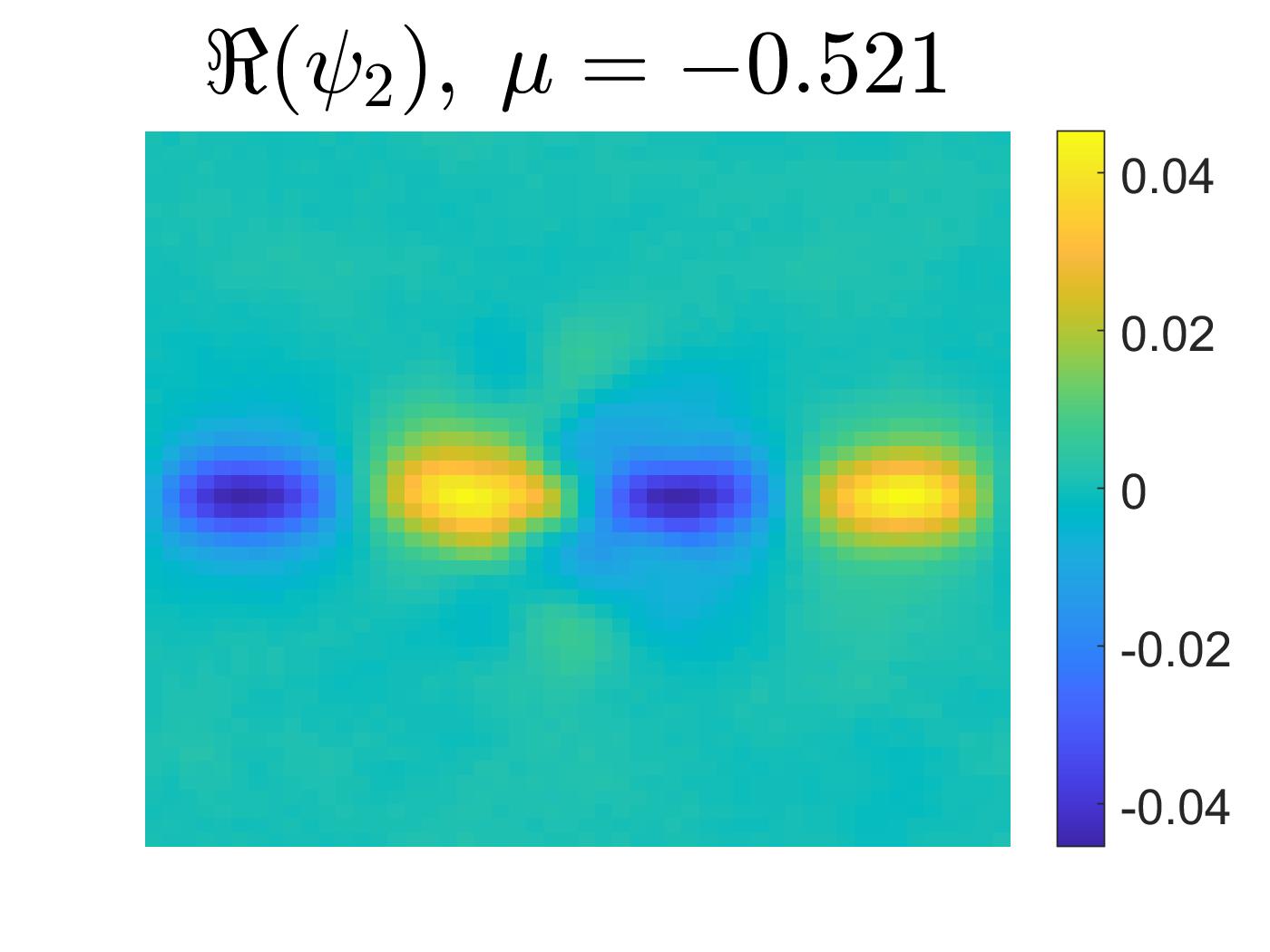}
\end{subfigure}
\begin{subfigure}{.23\textwidth} 
  \centering
  \includegraphics [width=\textwidth] {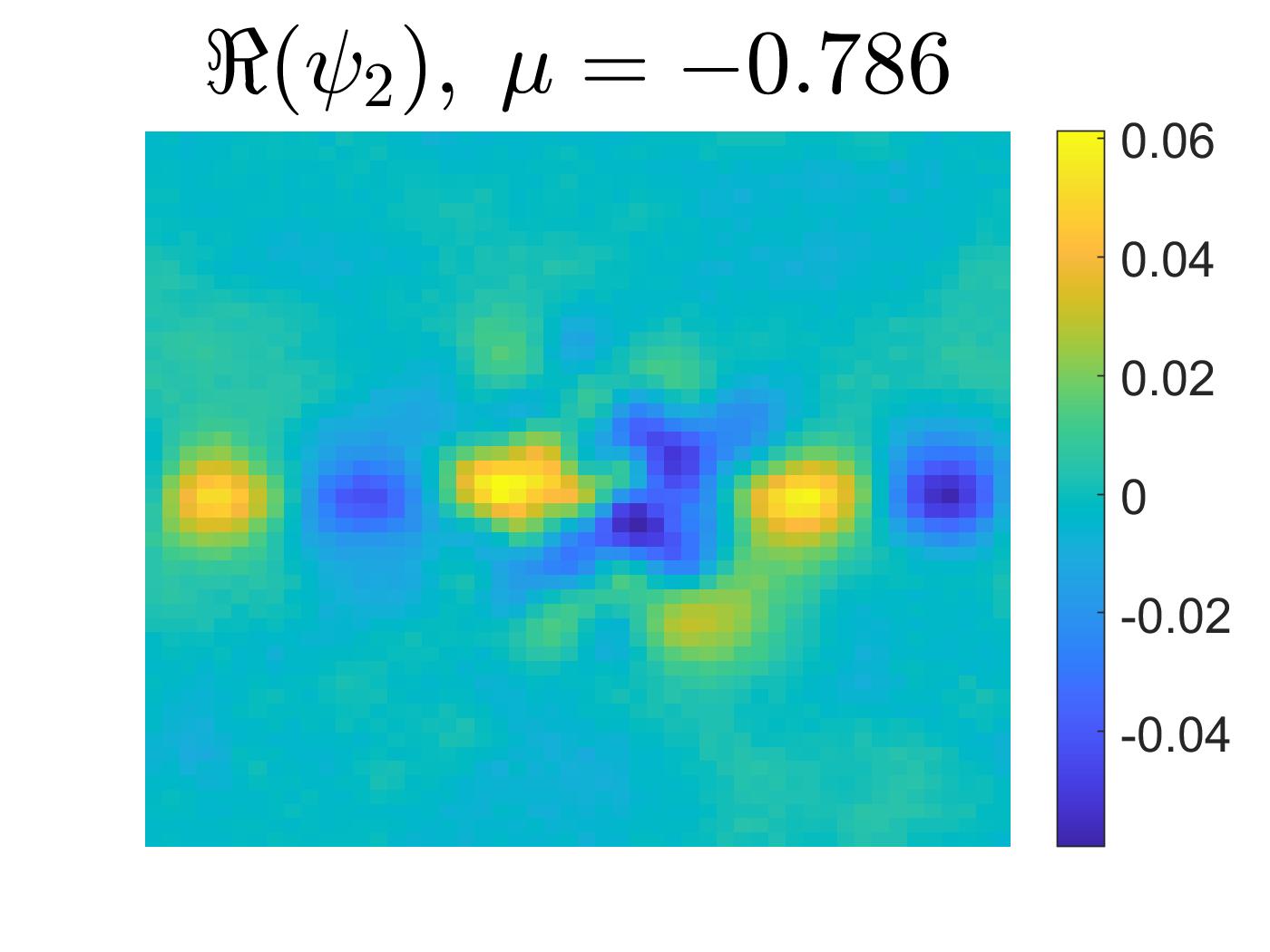}  
\end{subfigure}
\begin{subfigure}{.23\textwidth}
  \centering
  \includegraphics [width=\textwidth] {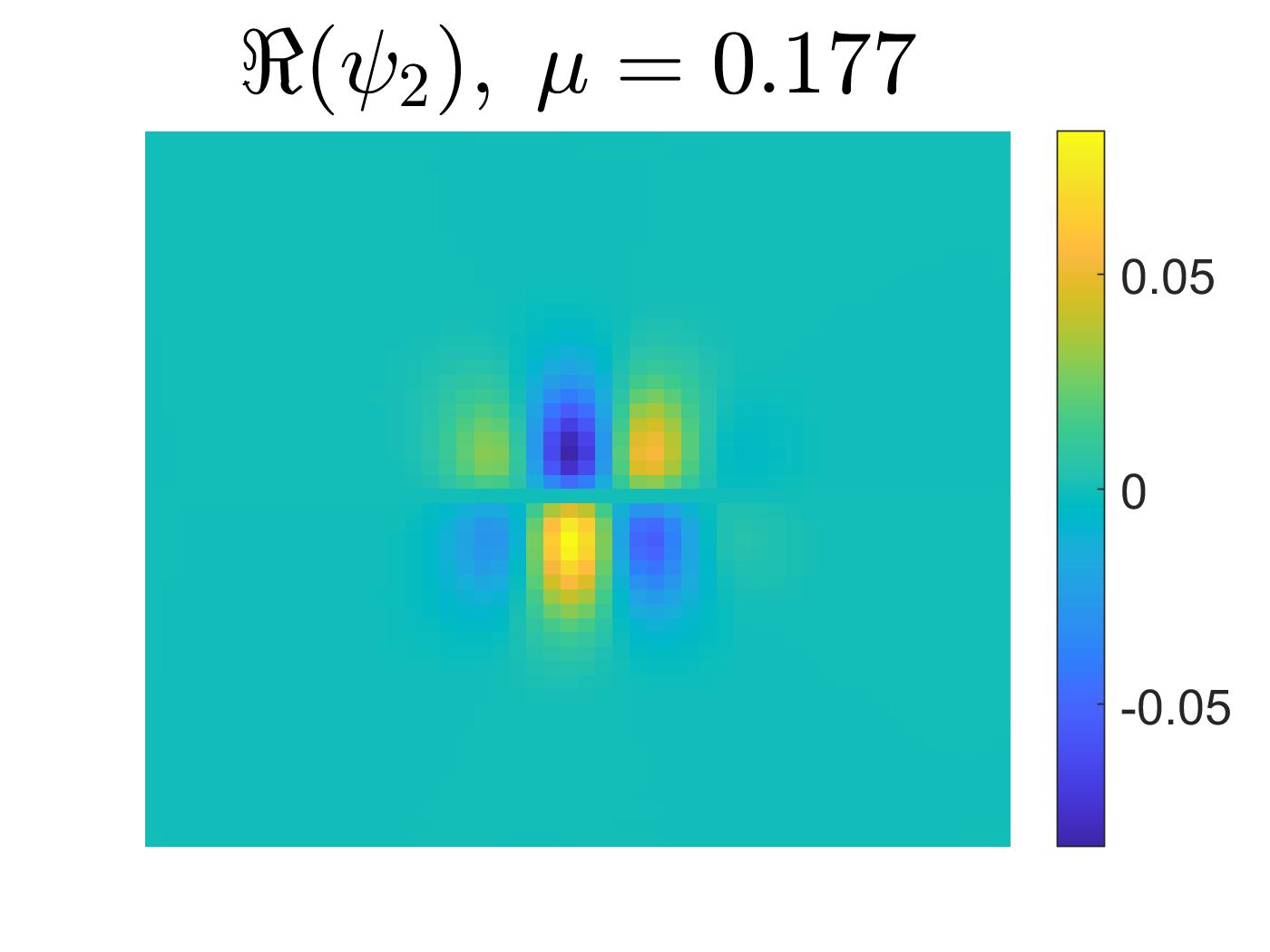}
\end{subfigure}
\begin{subfigure}{.23\textwidth}
  \centering
  \includegraphics [width=\textwidth] {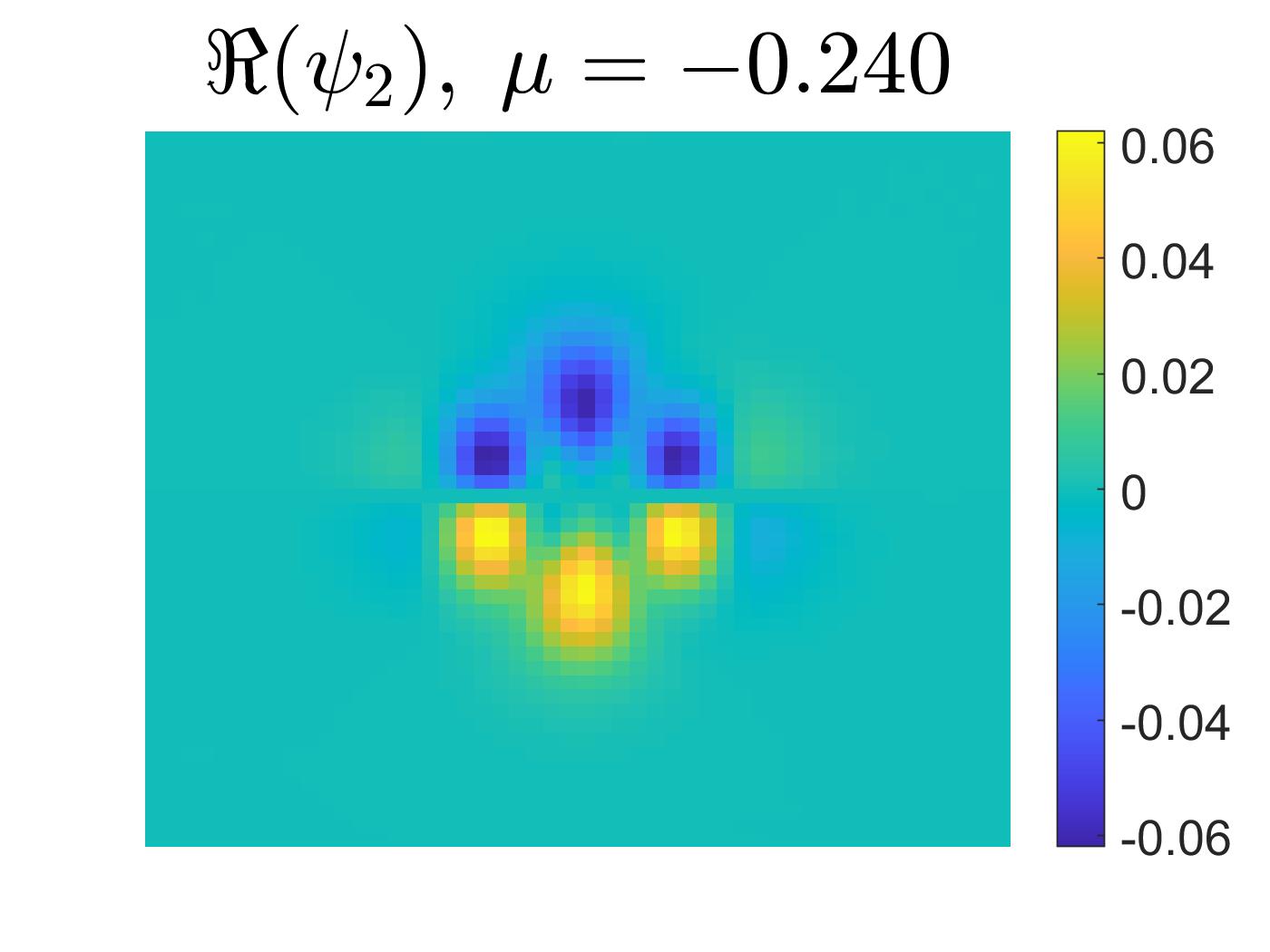}
\end{subfigure}
\newline
\begin{subfigure}{.23\textwidth}
  \centering
  \includegraphics [width=\textwidth] {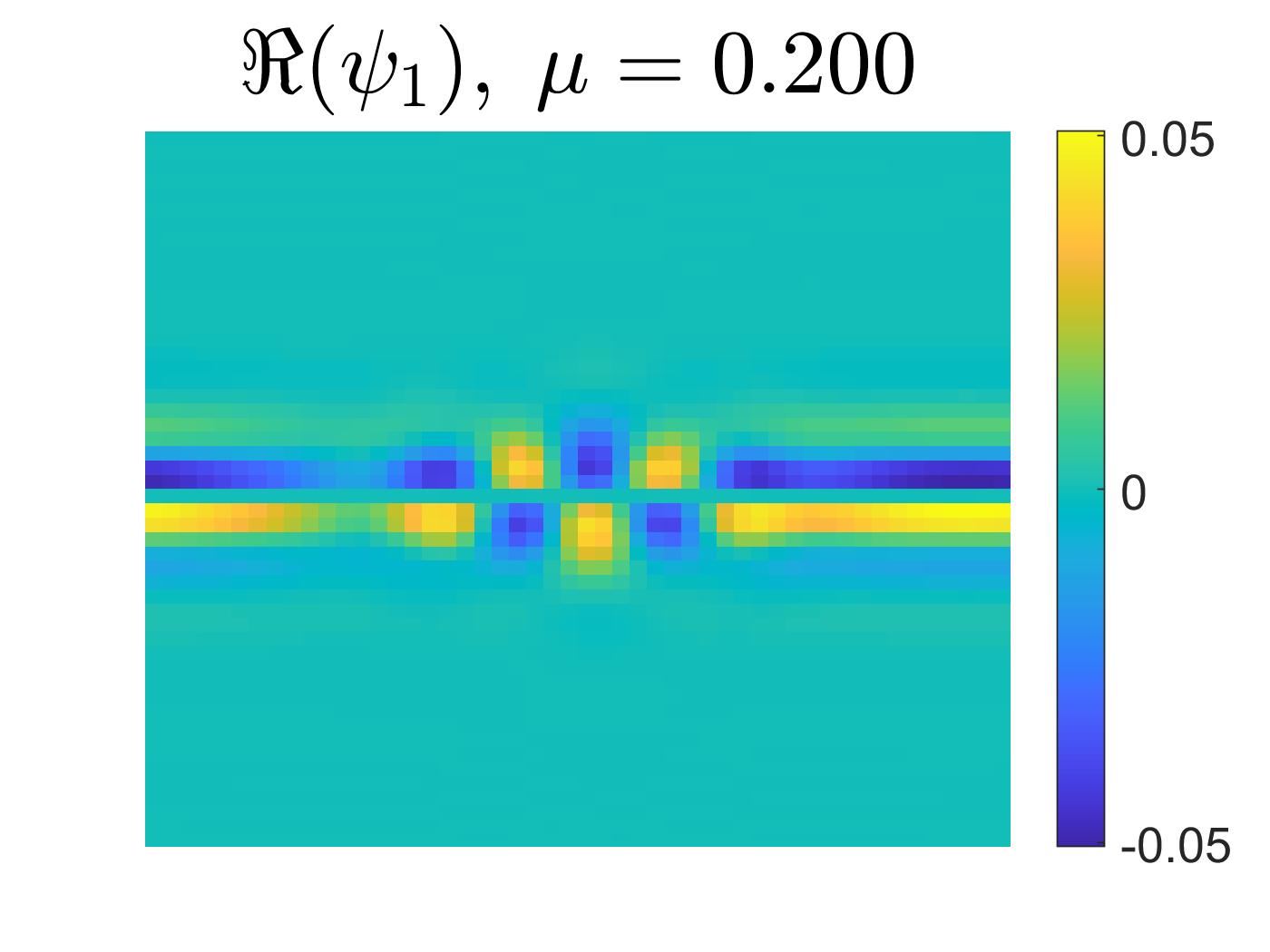}
\end{subfigure}
\begin{subfigure}{.23\textwidth}
  \centering
  \includegraphics [width=\textwidth] {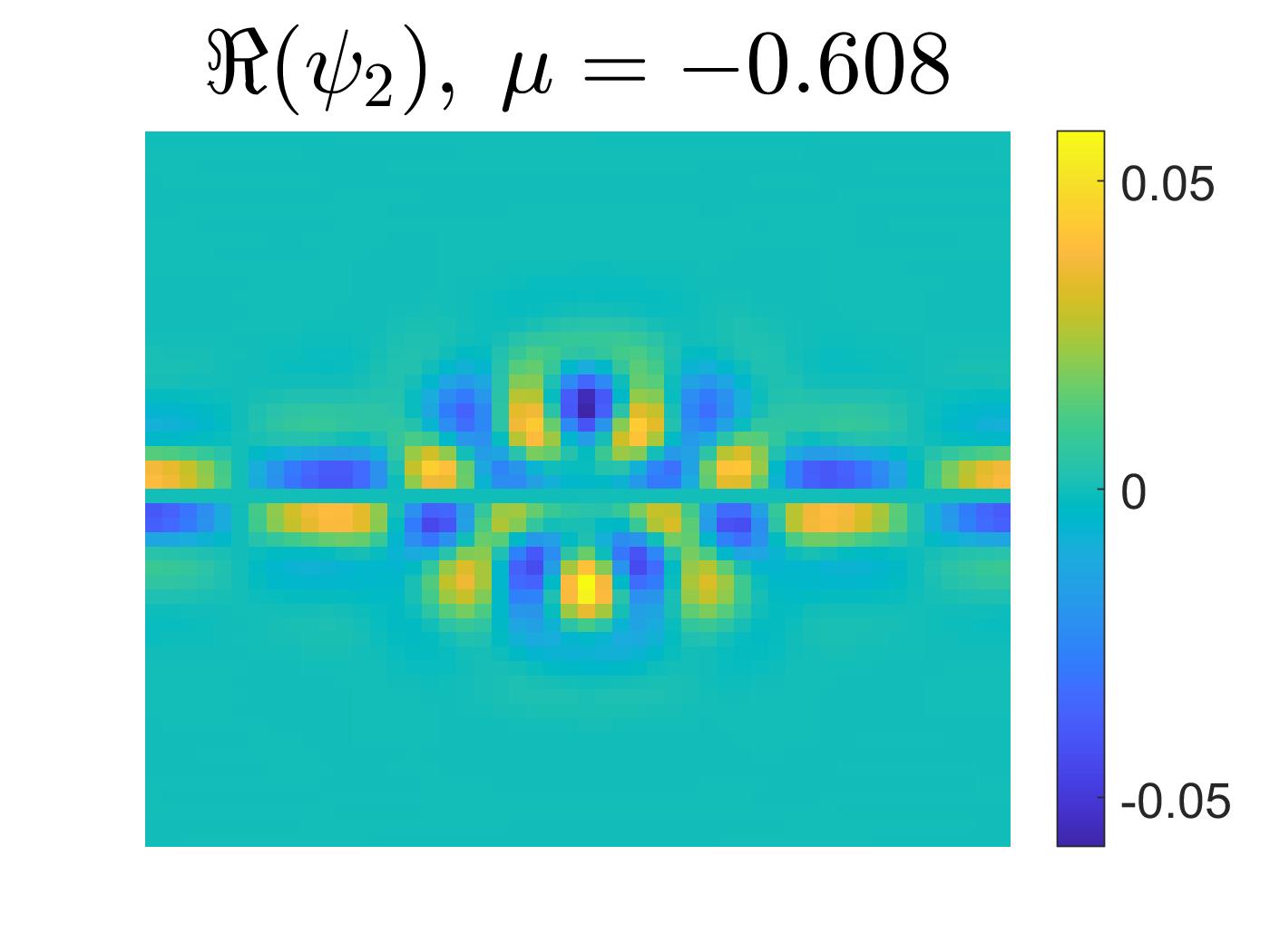}
\end{subfigure}
\begin{subfigure}{.23\textwidth}
  \centering
  \includegraphics [width=\textwidth] {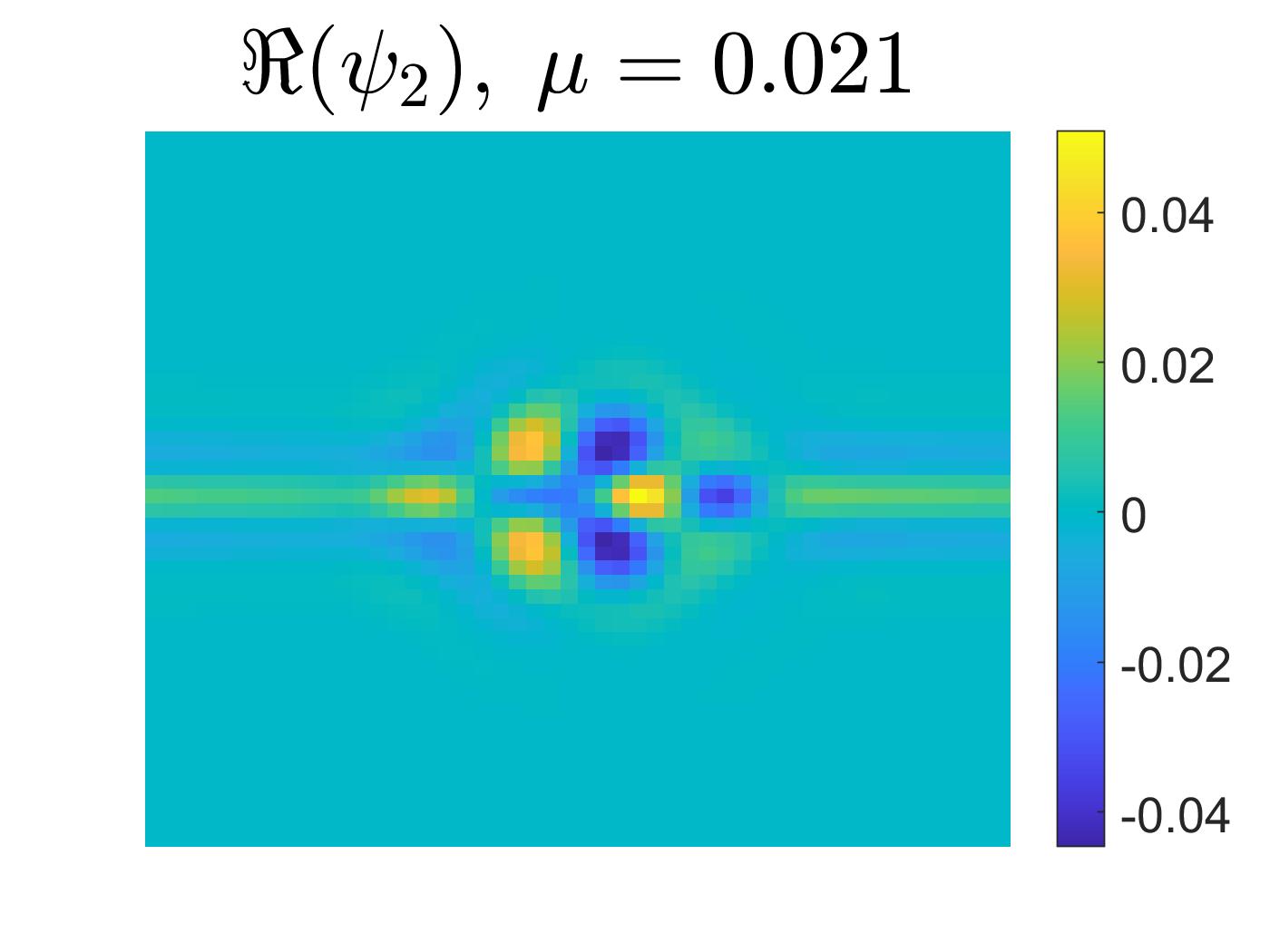}
\end{subfigure}
\begin{subfigure}{.23\textwidth} 
  \centering
  \includegraphics [width=\textwidth] {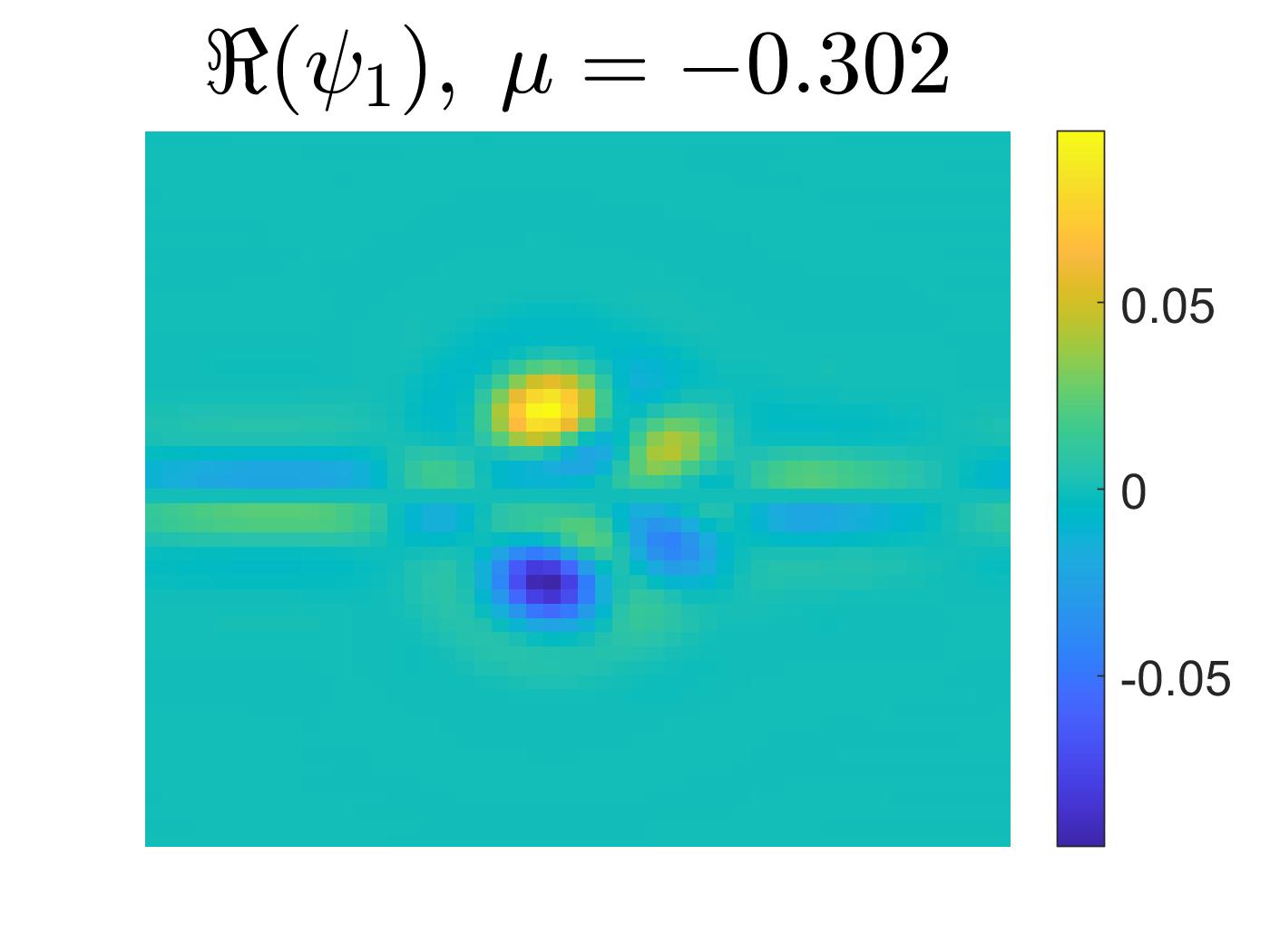}  
\end{subfigure}
\caption{Perturbed eigenfunctions of $2 \times 2$ Dirac (top) $p$-wave superconductor (bottom) models. For all plots, $r = 2$ and $a = 7.5$. 
We see a mix of propagating and evanescent modes. Despite qualitative differences between perturbed and unperturbed eigenfunctions, 
the edge current remains approximately the same.}
\label{efnsPerturbed}
\end{figure}

As in section \ref{sec:periodic}, we define the filtered edge current by $\tilde{\sigma}_I (H) := \Tr i Q [H,P] \varphi '(H)$
where $P = P(x)$ and $Q = Q(x,y) = Q_X (x) Q_Y (y)$ are point-wise multiplication operators satisfying
\begin{align*}
    P (x) = 
    \begin{cases}
    1, &\delta_x \le x \le 3 L_x/8\\
    0, &-3 L_x/8 \le x \le -\delta_x
    \end{cases}
\qquad \text{and} \qquad
    Q_\Theta (\theta) = \begin{cases}
    1, &|\theta| \le L_\theta/4\\
    0, &|\theta| \ge L_\theta/4 + \delta_\theta
    \end{cases},
\end{align*}
for some 
$0 < \delta_\theta \ll L_\theta / 4$.
The Hamiltonians are periodized so that
the $y$ dependent terms ($m(y)$ for the $2 \times 2$ Dirac system and $c(y)$ for the $p$-wave superconductor) are equal to $1$ whenever $\delta ' \le y \le L_y/2 - \delta '$ and $-1$ for $-L_y/2 + \delta ' \le y \le- \delta '$, and are smoothly connected in between. Here, $\delta '$ is a positive constant that is small relative to $L_y$, but large enough for the interval $(-\delta ', \delta ')$ to contain at least $3$ grid points.
The eigenvalues of $H$ in the support of $\varphi '$ (and the corresponding eigenvectors) are computed using the ``eigs'' command in MATLAB.
Since the number of these eigenvalues is typically much smaller than the dimension of $H$, 
we avoid the computational expense of a full spectral decomposition.

\subsection{Dirac model and superconductor $p-$wave model}
For the $2 \times 2$ models,
we consider perturbations of the form 
\begin{align*}
    V(x,y) = (\hat{n} \cdot \vec{\sigma}) v(x,y) , \qquad  v(x,y) =  r \exp(-a^2/(a^2-x^2-y^2))\mathbbm{1}_{x^2 + y^2 < a^2}
\end{align*}
where 
$\hat{n} \in \mathbb{R}^4$
is a unit normal vector and $\vec{\sigma} = (\sigma_0, \sigma_1, \sigma_2, \sigma_3)$. Here, the $\sigma_j$ are the Pauli matrices (with $\sigma_0$ the $2 \times 2$ identity matrix).
We observe that $\tilde{\sigma}_I$ is stable with respect to such perturbations.
Namely, if we fix the support of $v$ to lie well within $\supp (Q)$ while increasing the amplitude of that perturbation, the edge current does not change by much (even if the amplitude is very large).
Similarly, for a fixed amplitude, the edge current remains close to its original value until $v$ becomes highly delocalized. 
See Figure \ref{stabilityBoth} (left), where we have plotted the dependence of the edge current on the perturbation strength and localization for the $2 \times 2$ Dirac and $p$-wave superconductor models. For many other choices of parameters, the edge current remains within $2\%$ of its unperturbed value for values of $r$ and $a$ much larger than $10$.
\begin{figure}[t]
\centering
 \begin{tikzpicture}    
 \matrix (fig) [matrix of nodes]{
 \includegraphics[width=2.00in]{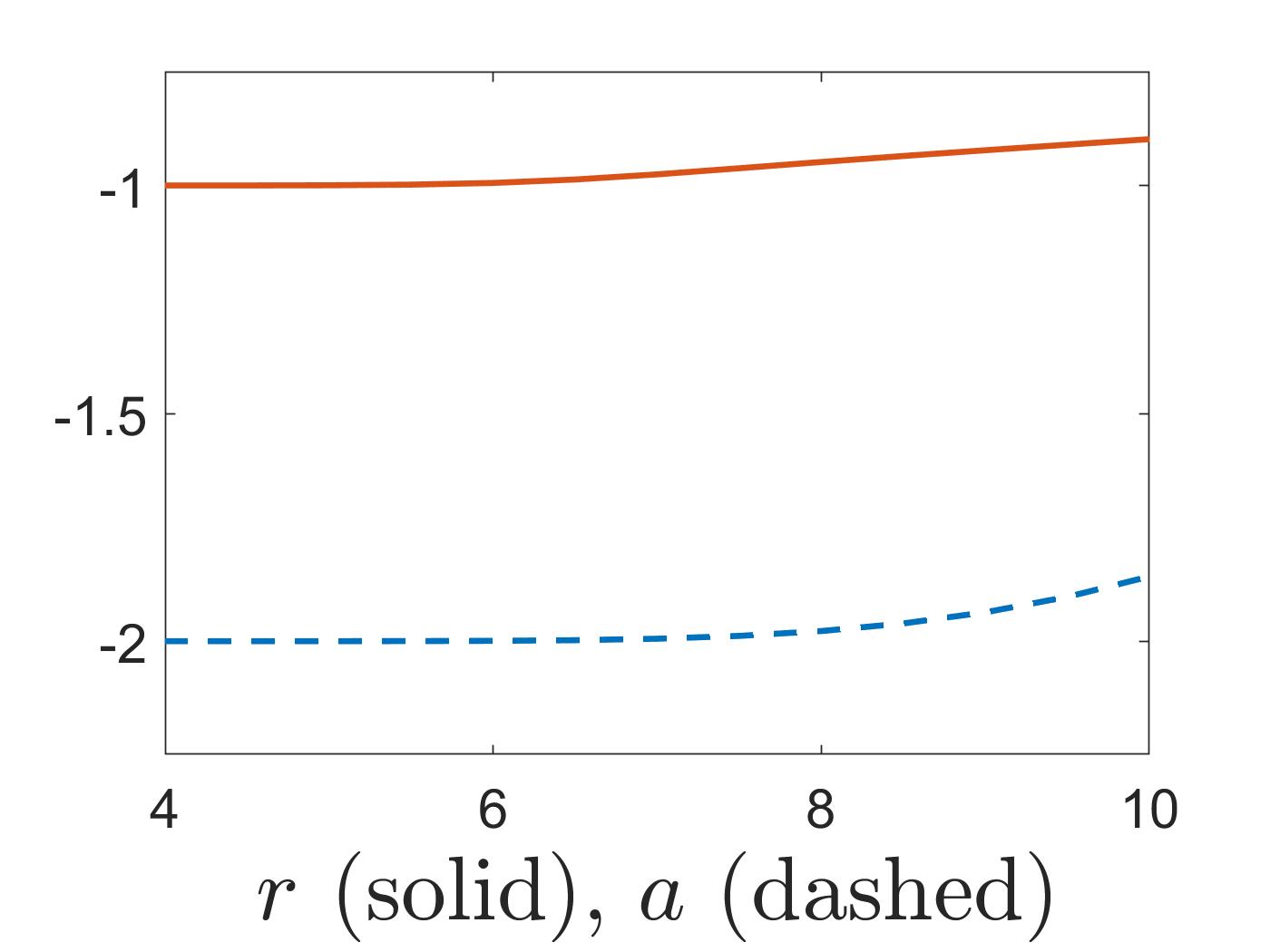}
 &
 \includegraphics[width=2.00in]{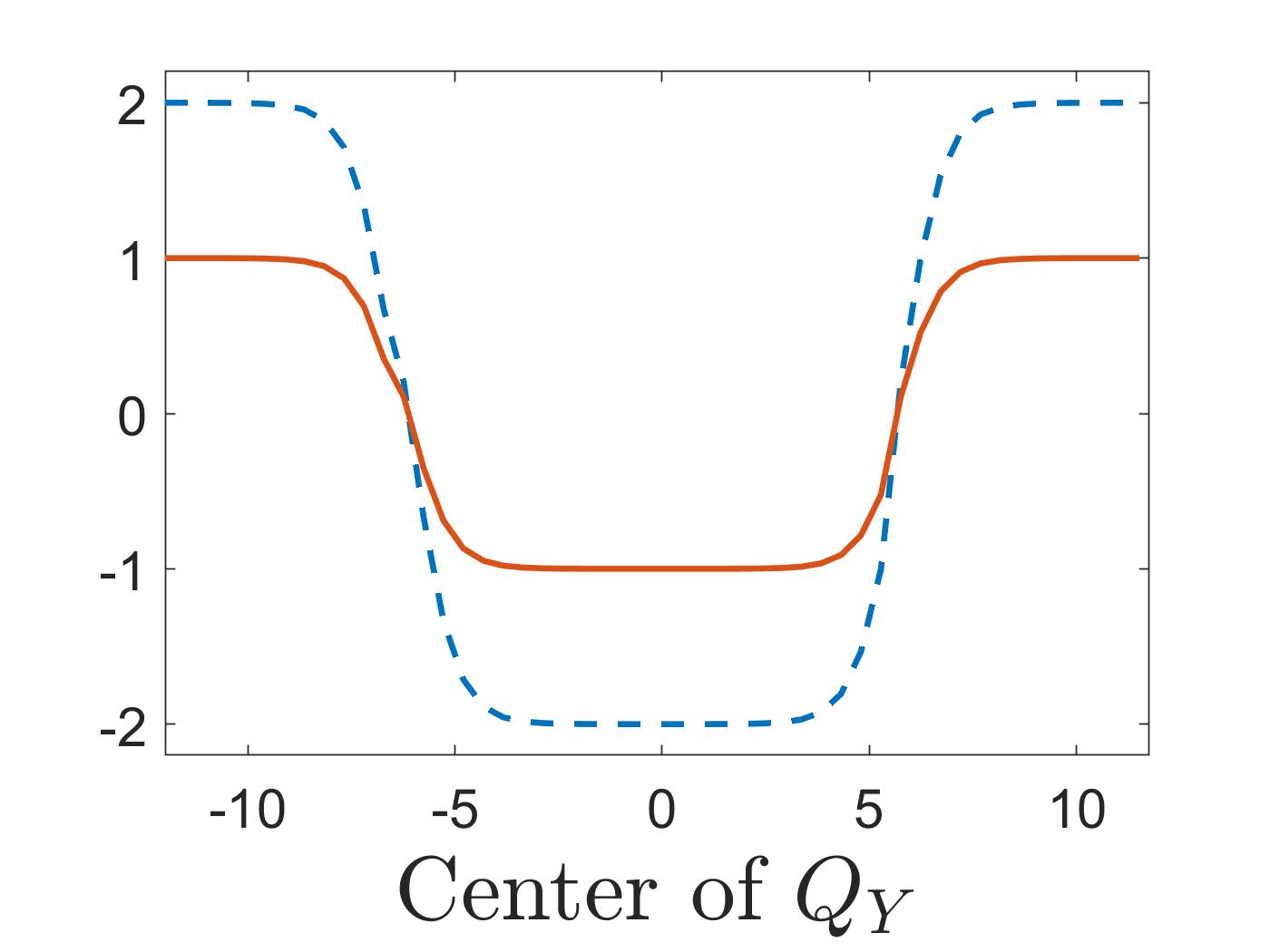}
 \includegraphics[width=2.00in]{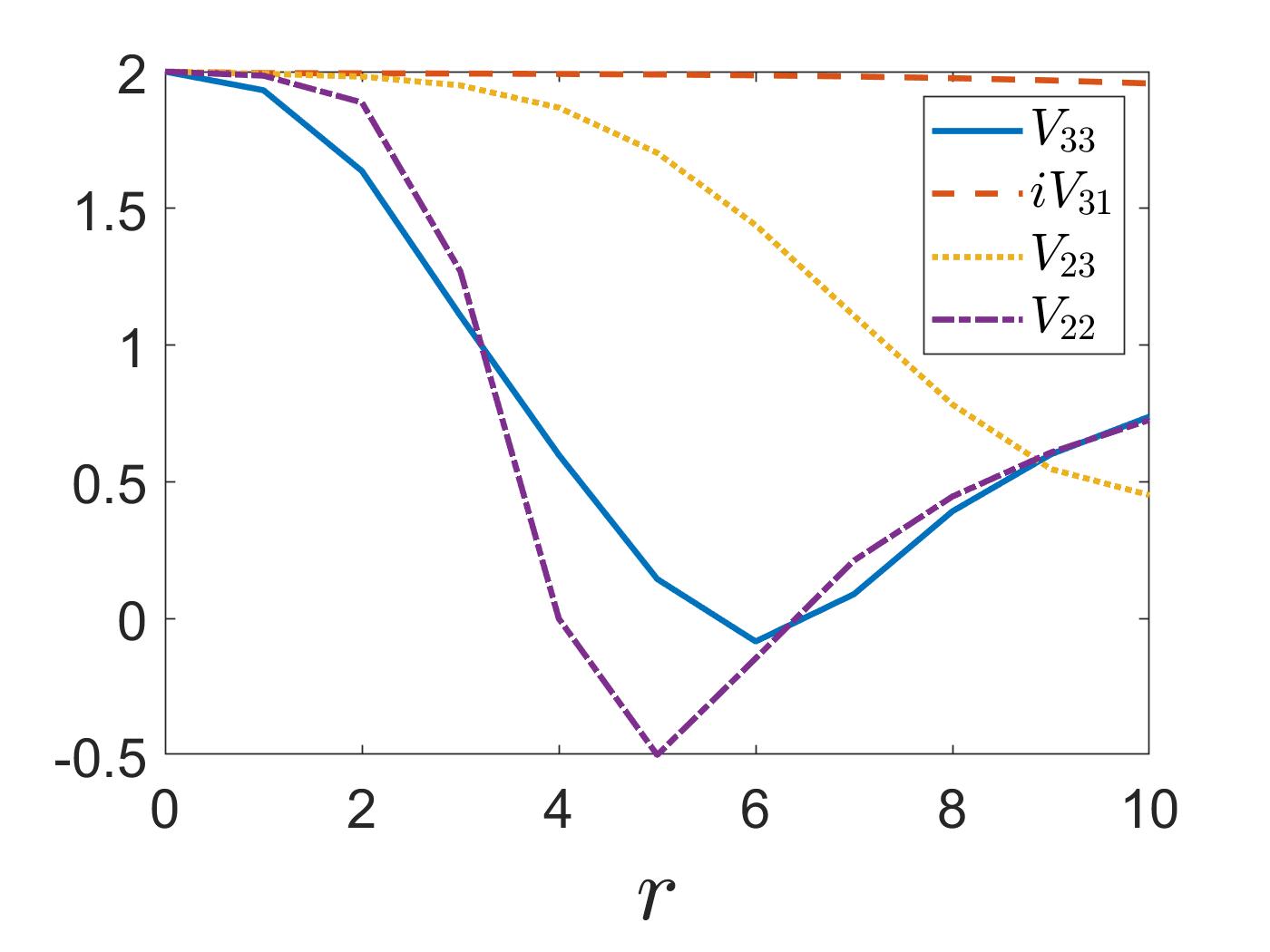}\\
  };
 \path (fig-1-1.south west)  -- (fig-1-1.north west) node[midway,above,sloped]{$2\pi \tilde{\sigma}_I$};
 \end{tikzpicture}
\caption{
The left panel demonstrates the numerical stability of $\tilde{\sigma}_I$ for the $2 \times 2$ Dirac (solid line) and $p$-wave superconductor (dashed line) models.
For the solid line, we fixed $a = 7.5$ (with $r$ varying) and the dashed line corresponds to $r = 10$ (with $a$ varying).
The center panel shows the edge current for the $2 \times 2$ Dirac (solid line) and $p$-wave superconductor (dashed line) models as a function of the center of $Q_Y$. As expected, we get $-1,-2$ when the filter selects the increasing domain wall and $1,2$ when the filter selects the decreasing domain wall, with a sharp transition in between.
The right panel plots the edge current for the $3 \times 3$ model as a function of perturbation strength, for perturbations in four distinct matrix elements.
The edge current is stable with respect to perturbations of the other five matrix elements; see text.}
\label{stabilityBoth}
\end{figure}
The empirical stability of $\tilde{\sigma}_I$ is entirely consistent with the theoretical results from earlier sections.

We present examples of computed eigenfunctions for the $2 \times 2$ Dirac and $p$-wave superconductor models in  Figures \ref{efns2x2Dirac} and \ref{efnsPerturbed}.
We observe that perturbations $V$ of the above form strongly alter the edge states. They are no longer a superposition of terms of the form $e^{i\xi x} \psi (y)$ and we even recognise localized modes (see top right of Figure \ref{efnsPerturbed}) that do not contribute to the edge current.
Thus the robustness of $\tilde{\sigma}_I$ observed in Figure \ref{stabilityBoth} occurs in spite of the instability of the eigenfunctions with respect to perturbations.


One may also consider filters $Q_{\Delta_X, \Delta_Y} (x,y) = Q_{\Delta_X} (x) Q_{\Delta_Y} (y)$ with different centers, where
\begin{align*}
    Q_{\Delta_\Theta} (\theta) = \begin{cases}
    1, &\theta \in [-\frac{L_\theta}{4} - \Delta_\Theta, \frac{L_\theta}{4} - \Delta_\Theta]\\
    0, &\theta \notin (-\frac{L_\theta}{4} - \Delta_\Theta -\delta_\theta, \frac{L_\theta}{4} - \Delta_\Theta + \delta_\theta).
    \end{cases}
\end{align*}
When $\Delta_Y = L_y/2$, $Q$ would instead test the edge current associated to the opposite, spurious, domain wall. As the center $\Delta_Y$ of the filter increases, we expect the numerical value of $\sigma_I$ to shift from the edge current of one domain wall to the that of the next domain wall. This is confirmed by the plot in Figure \ref{stabilityBoth} (center) obtained for the 
$2 \times 2$ Dirac and $p$-wave superconductor models.
Translating the center of $Q$ in the $x$-direction would have the same effect, as the sign of $P'$ on the support of $Q$ changes when $\Delta_X = L_x/2$.

\subsection{Equatorial waves}

We finally carry out some numerical simulations on the model of equatorial waves presented in section \ref{sec:applications}. The unperturbed Hamiltonian ($\mu=0$) with symbol given in \eqref{eq:a0geo} is
\begin{align} \label{3x3}
    H_0 = (D_x, D_y, -f(y)) \cdot \Gamma,
\end{align}
where $\Gamma = (\gamma_1, \gamma_4, \gamma_7)$ with the $\gamma_i$ denoting ($3 \times 3$) Gell-Mann matrices.
The Coriolis force is given by $f(y)$, which changes signs across the equator.
As shown in \cite{3}, the edge current of such a system depends on the choice of $f$ when $\mu=0$ whereas the theory presented in section \ref{sec:applications} shows that the edge current equals $2$ as soon as $\mu\not=0$. For example, $2 \pi \sigma_I = 1$ if $f = f_0 \sgn(y)$, and $2 \pi \sigma_I = 2$ if $f(y) = \beta y$, with $f_0, \beta > 0$. This {\em violates the bulk-edge correspondence}, which states that $2\pi\sigma_I$ is independent of the profile $f(y)$.

In fact, the edge current is stable under perturbations of the form $V = \diag (V_{11}, 0, 0)$, but not $V = \diag (0, V_{22}, V_{33})$ when $\mu=0$ \cite{3}.
Our objective here is to verify these results numerically for $f(y)$ a smooth periodic function satisfying the same conditions as $m(y)$ and $c(y)$ above. 
For example, we see that if
$V_0 = 5 g_{L_y/4} (y)$
(where $g_\sym(y)$ is the pdf of a Gaussian with mean zero and standard deviation $\sym$),
then $2 \pi \tilde{\sigma}_I = 1.9950$ if $V=0$, $2 \pi \tilde{\sigma}_I = 1.9941$ if $V = \diag (V_0, 0,0)$, $2 \pi \tilde{\sigma}_I = -0.5394$ if $V = \diag (0,V_0,0)$, and $2 \pi \tilde{\sigma}_I = 0.1430$ if $V = \diag (0,0,V_0)$.
More generally, we have verified numerically that $\tilde{\sigma}_I$ is stable, as predicted by theory, under perturbations of the form $V(y)$ times any of the matrices
\begin{align}\label{stableV}
    \begin{pmatrix}1 & 0 & 0 \\
   0 & 0 & 0 \\
    0 & 0 & 0 \end{pmatrix},
    \begin{pmatrix}0 & 1 & 0 \\
   1 & 0 & 0 \\
    0 & 0 & 0 \end{pmatrix},
    \begin{pmatrix}0 & 0 & 1 \\
   0 & 0 & 0 \\
    1 & 0 & 0 \end{pmatrix},
    \begin{pmatrix}0 & i & 0 \\
   -i& 0 & 0 \\
    0 & 0 & 0 \end{pmatrix},
    \begin{pmatrix}0 & 0 & i \\
   0 & 0 & 0 \\
    -i & 0 & 0 \end{pmatrix},
    \begin{pmatrix}0 & 0 & 0 \\
   0 & 0 & i \\
    0 & -i & 0 \end{pmatrix},
\end{align}
and unstable under all other Hermitian perturbations, see Figure \ref{stabilityBoth} (right).
\begin{figure}[t]
\begin{tikzpicture}    
 \matrix (fig) [matrix of nodes]{
 \includegraphics[width=1.98in]{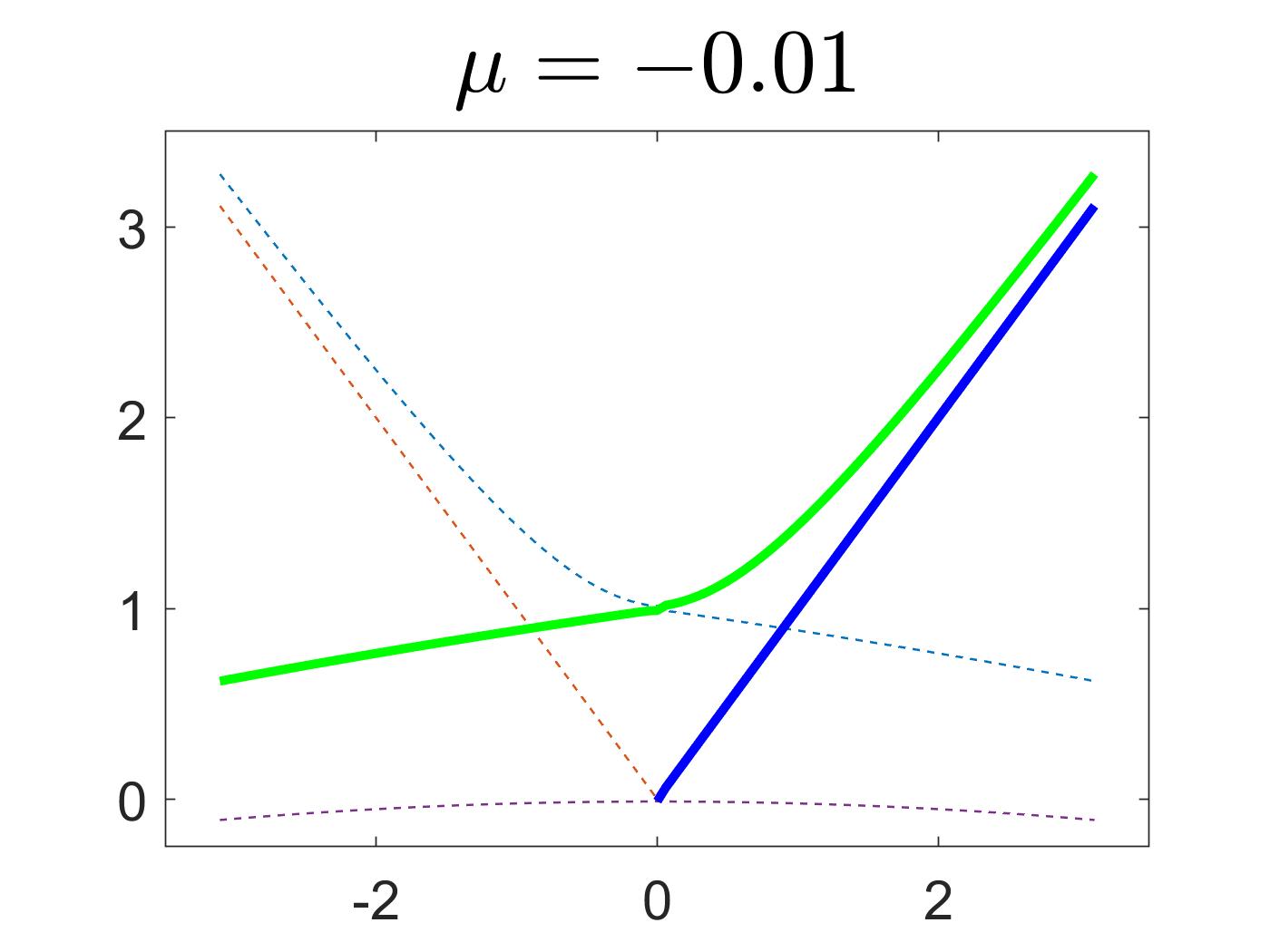}
 &
 \includegraphics[width=1.98in]{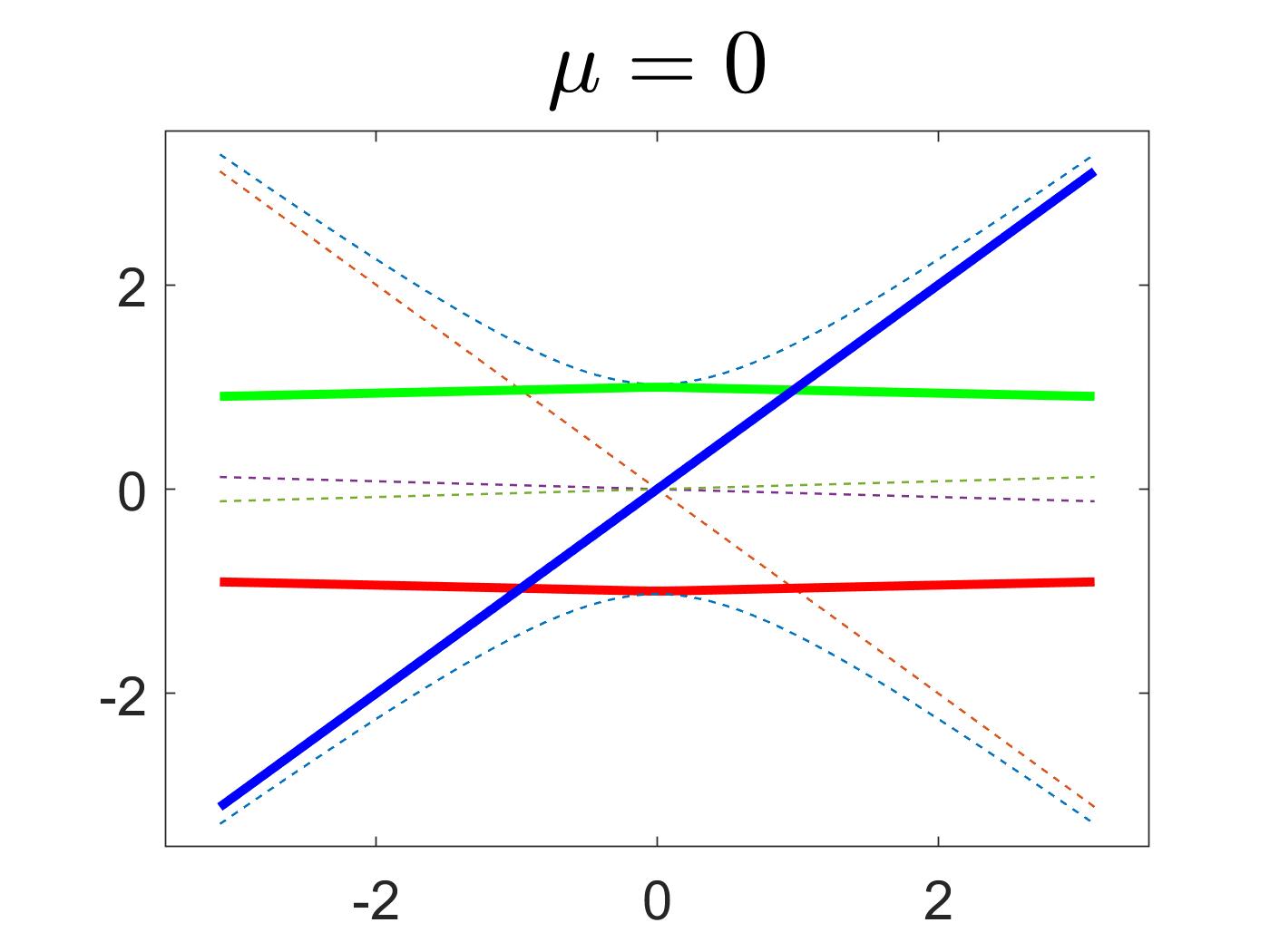}
 &
 \includegraphics[width=1.98in]{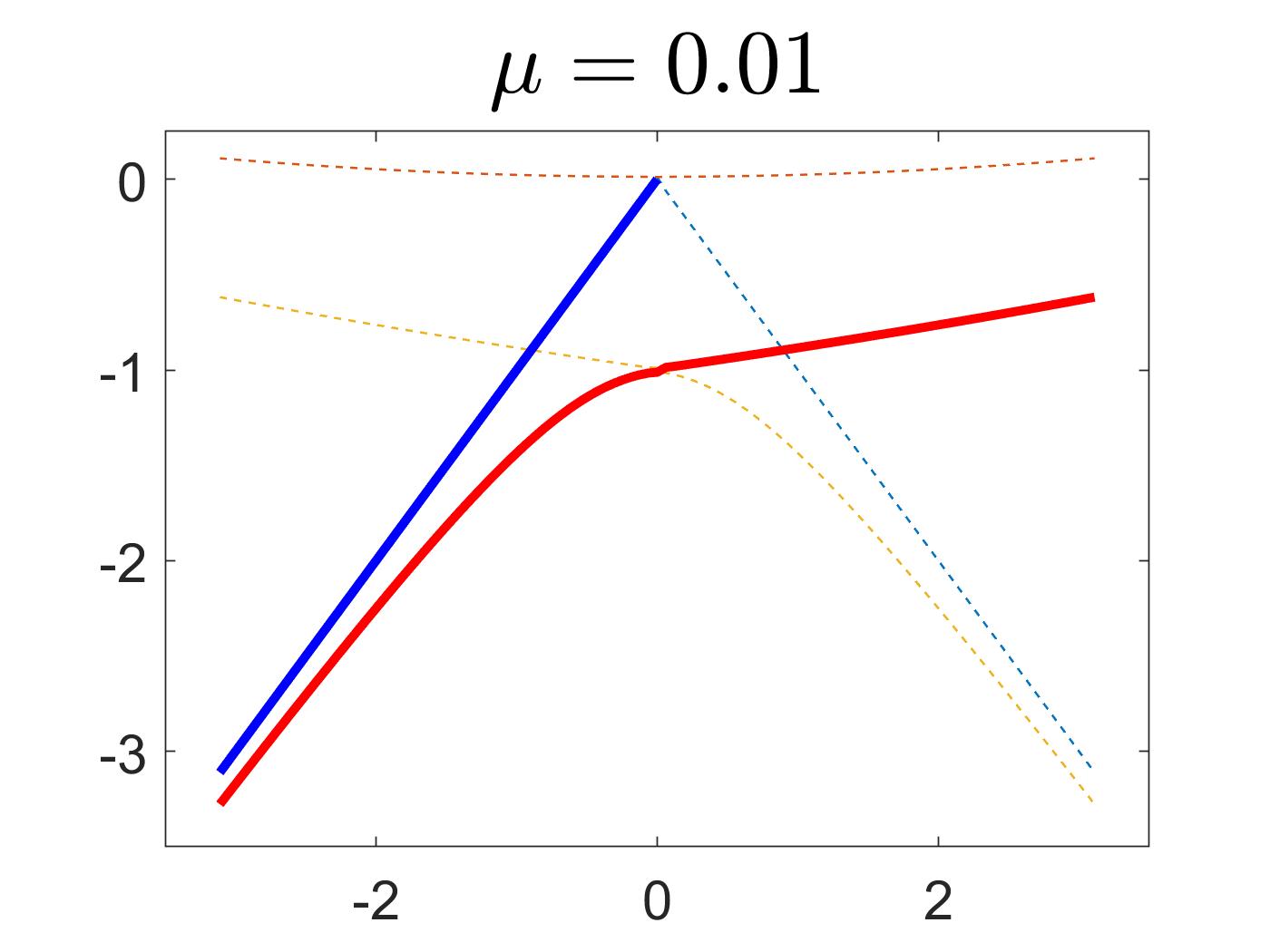}
 \\
 \includegraphics[width=1.98in]{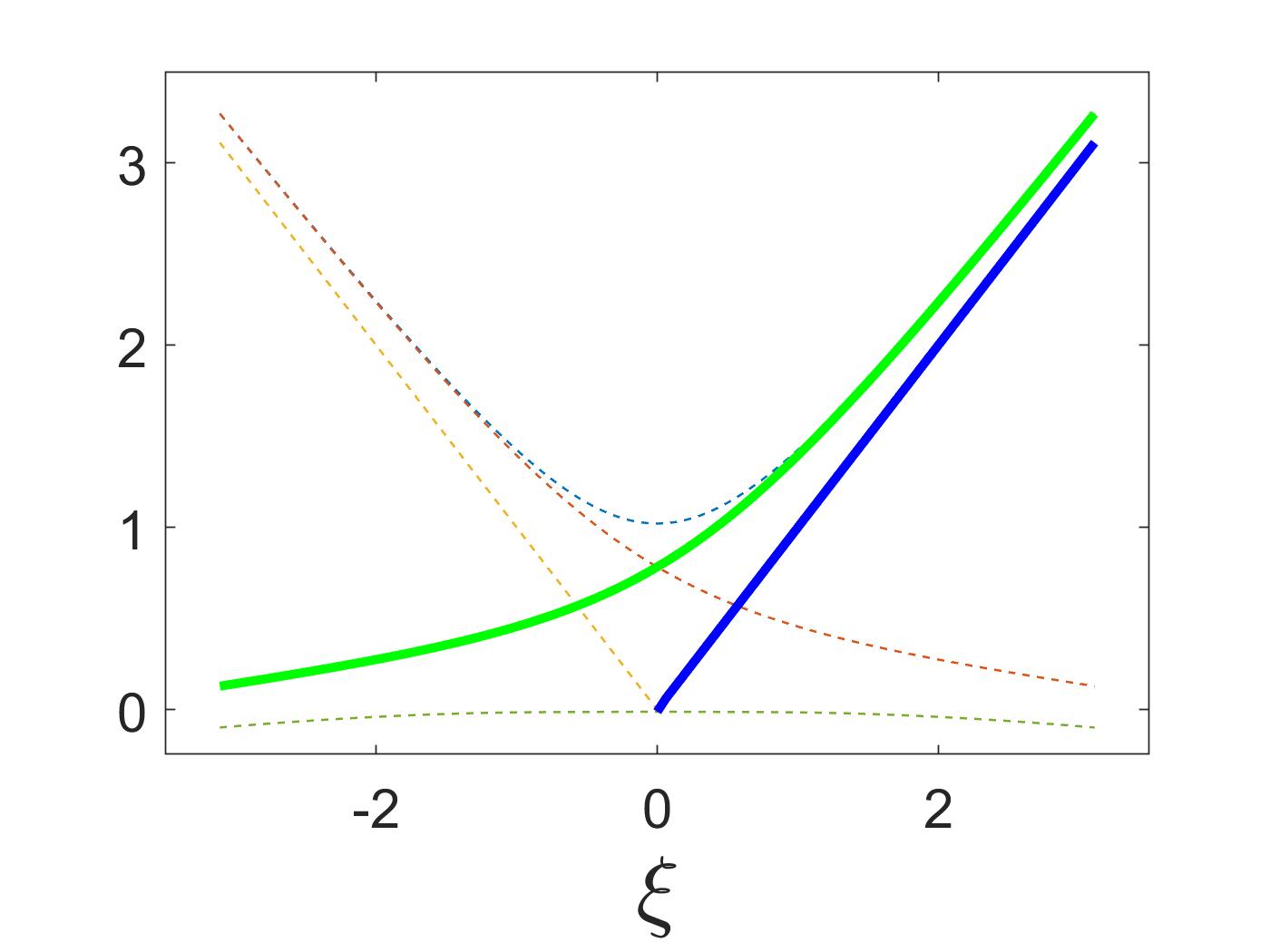}
 &
 \includegraphics[width=1.98in]{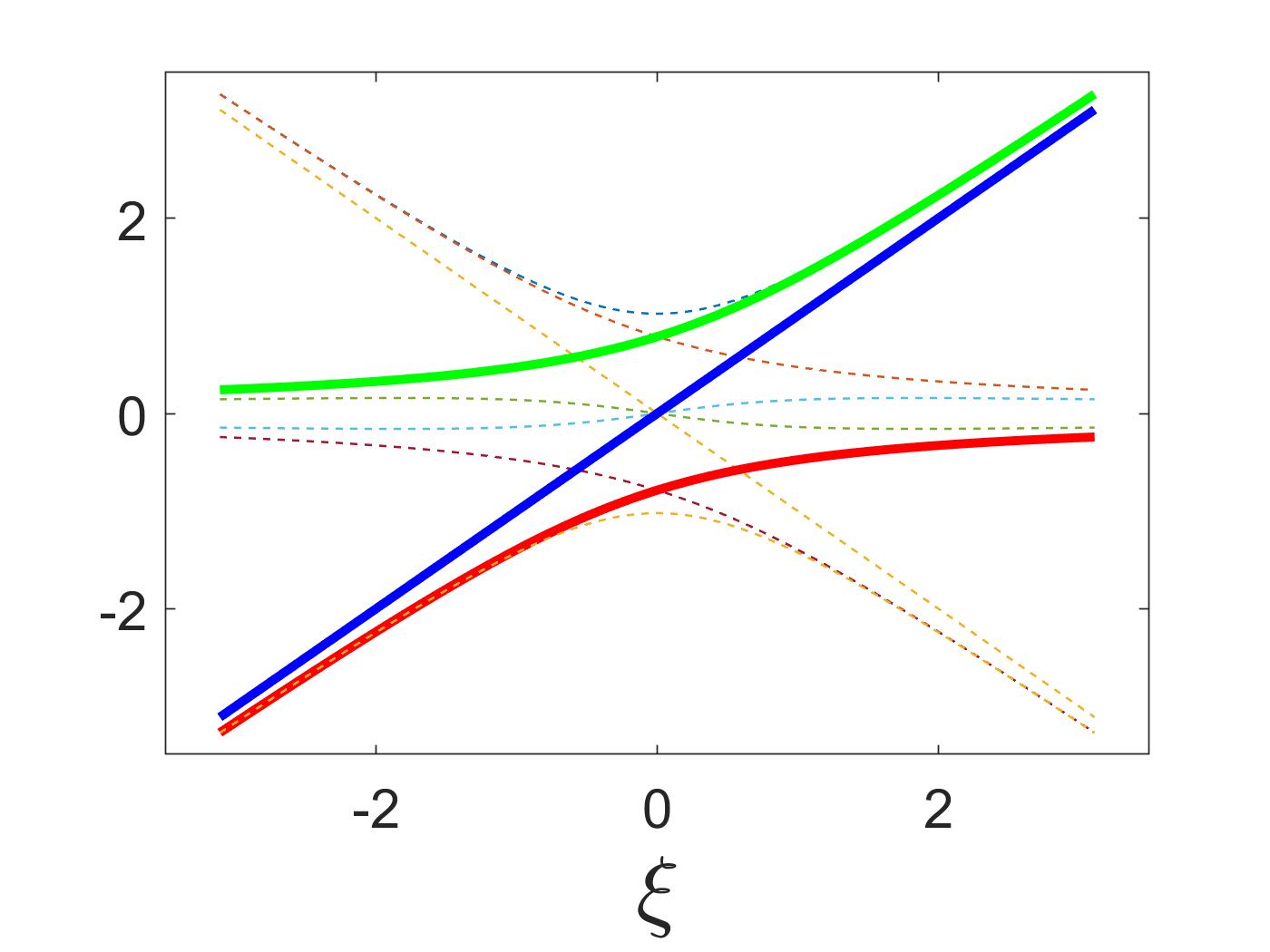}
 &
 \includegraphics[width=1.98in]{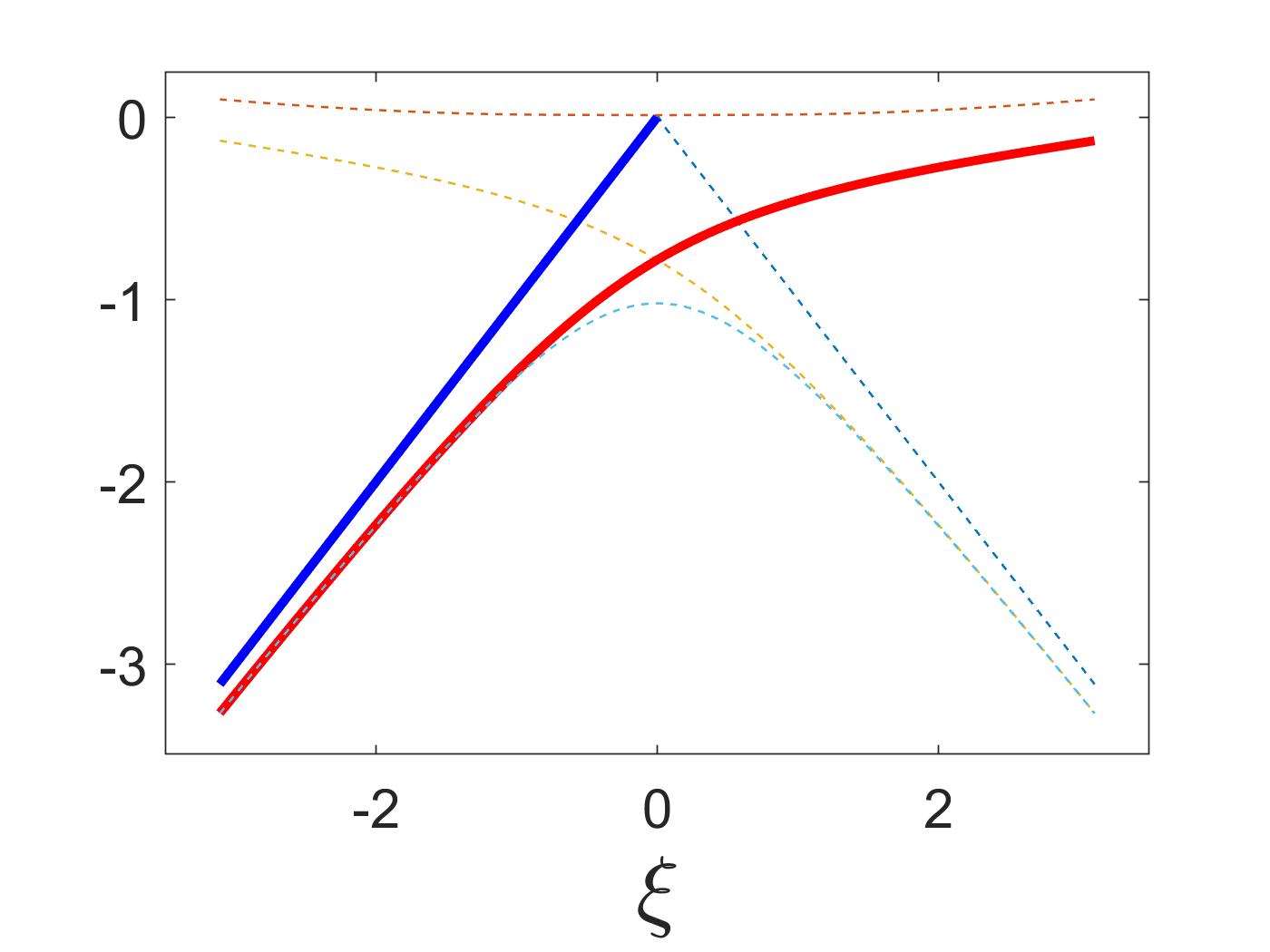}
 \\
  };
 \path (fig-2-1.west)  -- (fig-1-1.west) node[midway,above,sloped]{$E(\xi)$};
 \end{tikzpicture}
\caption{Branches of continuous spectrum for the $3 \times 3$ equatorial wave Hamiltonian, for different values of regularization parameter $\mu$. The top line corresponds to $f(y) = \sgn (y)$ while the bottom line corresponds to $f(y) = \tanh (\beta y)$.
The solid curves represent the nontrivial increasing branches
(and the two flat bands at $\pm 1$ when $\mu = 0$ and $f(y) = \sgn (y)$).
For $\mu = 0$, we omit many eigenvalues approximately equal to $0$, as they correspond to essential spectrum for the continuous problem. When $\mu > 0$ ($\mu < 0$), these eigenvalues populate the region $\{E(\xi) > 0\}$ ($\{E(\xi) < 0\}$), making it difficult to identify branches of spectrum there.
}
\label{tbt}
\end{figure}

Given that $H$ is translationally invariant in $x$, we can approximate its continuous branches of spectrum by computing the eigenvalues of 
$
    \hat{H} [\xi] := (\xi, D_y, -f(y)) \cdot \Gamma
$
as a function of 
\begin{align*}
\xi \in \Big \{\frac{2 \pi j}{N_x}: j \in \{-\frac{N_x}{2}, -\frac{N_x}{2} + 1, \dots, \frac{N_x}{2} - 1\} \Big\},
\end{align*}
see Figure \ref{tbt} ($\mu = 0$).
Consistent with theory \cite{3}, the number of branches passing through $E = \pm 1/2$ depends on the profile of $f$. Namely,
we obtain two nondecreasing branches of continuous spectrum passing through $E = 1/2$ when using $f(y) = \tanh(\beta y)$ and only one with $f(y) = \sgn (y)$.
(The qualitative behavior of the branches is independent of the above perturbations under which $\tilde{\sigma_I}$ is stable.)
Here, $\beta > 0$ is sufficiently small so that the transition of $f(y)$ from values near $-1$ to values near $1$ occurs over at least several grid points. Of course, $f(y)$ is periodically wrapped, so that it in fact equals $-\tanh (\beta (y \mp L_y/2))$ in the vicinity of $y = \pm L_y/2$.

Given \cite[Theorem 2.1 and Appendix B]{3} and the fact that $\sigma_I (H) := \Tr i [H,P] \varphi'(H) = 0$ for all matrices $H$ and $P$, we would expect the number of \emph{signed} crossings of $E = 1/2$ to be $0$. That is,
every nondecreasing branch of spectrum passing through $E = 1/2$ should be accompanied by a nonincreasing branch that also passes through $E = 1/2$, as we observe. But these nonincreasing branches correspond to eigenfunctions that localize at $y = \pm L_y/2$ and thus 
become insignificant when we apply $Q$ to compute $\tilde{\sigma}_I$.

The plots in Figure \ref{tbt} with $\mu \ne 0$ correspond to the regularized model from section \ref{sec:applications}.
As expected, we find two nondecreasing branches of spectrum passing through the energy interval of interest when $f(y)$ is smooth or $\mu \ne 0$.
Again, the qualitative behavior of the branches is robust to perturbations of the form \eqref{stableV}.
We see that the branches corresponding to $f(y) = \sgn (y)$, particularly $E(\xi) = \pm 1$ for $\mu = 0$, are more sensitive to variations in $\mu$ than those corresponding to smooth $f$.
The flat bands are eliminated when $\mu \ne 0$, resulting in two non-trivial increasing branches of spectrum.

\section{Proofs of the edge current stability and the BEC} \label{sec:pfbec}
\subsection{Proofs of Lemma \ref{lemma:tc} and Theorem \ref{thm:stabilityall}}\label{appendix:pfs}
This section is dedicated to proving the results from section \ref{sec:stability}, which state that the the interface current observable $\sigma_I$ is well defined and stable with respect to perturbations of $H$, $P$ and $\varphi$. Recall the definition of $\sigma_I$ given by \eqref{eq:sigmaI}, and the hypotheses \hone\ and \hnot\ from section \ref{sec:stability}.

We begin by showing that any operator $H$ satisfying \hone\ must also satisfy \hnot. We use freely the notation recalled in Appendix \ref{sectionPreliminaries}.
\begin{proposition} \label{trclass}
Suppose $H = \Op (\sym)$ satisfies \hone\ and $\Phi \in \mathcal{C}^\infty_c (E_1, E_2)$. Define $H_h := \Op_h (\sym)$ for $h \in (0,1]$. Then $\Phi (H_h) \in \Op_h (S (\aver{y,\xi,\zeta}^{-\infty}))$.
\end{proposition}
\begin{proof}
We observe that 
$H_h = \Op _1 (\sym(x,y, h\xi, h\zeta))$
satisfies \hone\ for all $0 < h \le 1$,
with
$\sym_\pm (h\xi,h\zeta)$ replacing $\sym_\pm (\xi,\zeta)$.
This means
$H_h$ is self-adjoint with domain 
$\mathcal{H}^m$
that is dense in $\mathcal{H}$,
and
$(i + H_h)^{-1} \in \Op _h (S(\aver{\xi,\zeta}^{-m}))$ as recalled in Proposition \ref{prop:resh} below.
Writing $\Phi (H_h) = (i+H_h)^{-N} \Phi_N (H_h)$ for $\Phi_N$ still compactly supported, we obtain that $\Phi (H_h) \in \Op_h (S (\aver{\xi,\zeta}^{-\infty}))$.

We now prove decay in $y$.
Since $\sym_\pm$ are independent of $(x,y)$ and have a spectral gap in $(E_1, E_2)$, it follows that $H_{\pm,h} := \Op_h (\sym_\pm)$ also have a spectral gap in $(E_1, E_2)$. Hence $\Phi (H_{\pm,h}) = 0$, meaning that
\begin{align*}
    \Phi(H_h) &= 
    \phi (y) (\Phi(H_h) - \Phi(H_{+,h})) + (1-\phi(y))(\Phi(H_h) - \Phi(H_{-,h})),
\end{align*}
where we assume $\phi \in \fs (0,1)$.
By the Helffer-Sj\"ostrand formula \eqref{HSformula}, we see that
\begin{align*}
    \Phi(H_h) - \Phi(H_{+,h}) =
    -\frac{1}{\pi} \int_\mathbb{C} \bar{\partial} \tilde{\Phi} (z)
    (z-H_h)^{-1} (H_h - H_{+,h}) (z-H_{+,h})^{-1} d^2z,
\end{align*}
with 
$H_h-H_{+,h} = \Op_h (\sym-\sym_+)$ and $\sym-\sym_+$ vanishing for $y>0$ sufficiently large.
By Propositions \ref{prop:sharpfinite} and \ref{prop:resh} (and the rapid decay of $\bar{\partial} \tilde{\Phi}$ near the real axis), 
it follows that $\Phi(H_h) - \Phi(H_{+,h}) \in \Op_h (S (\aver{y_+}^{-\infty}))$.
Since $\phi (y) =\phi \in \Op_h (S (\aver{y_-}^{-\infty}))$, the composition calculus implies that
$\phi (y) (\Phi(H_h) - \Phi(H_{+,h})) \in \Op_h (\aver{y}^{-\infty})$.
A parallel argument proves that $(1-\phi(y))(\Phi(H_h) - \Phi(H_{-,h}))\in \Op_h (\aver{y}^{-\infty})$, hence $\Phi(H_h) \in \Op_h (\aver{y}^{-\infty})$.
The result then follows from interpolation.
\end{proof}

Appealing to Proposition \ref{trclass}, we can now work exclusively with operators $H$ that satisfy our more general hypothesis \hnot.

By Proposition \ref{prop:resh} below, any $H$ satisfying \hnot\ is self-adjoint with domain of definition $\mathcal{D}(H) = \mathcal{H}^m$. Moreover, 
$(z-H)^{-1} \in \Op (S(\aver{\xi,\zeta}^{-m}))$ with symbolic bounds that blow up algebraically as $\Im z \rightarrow 0$. 
For any $\psi \in \mathcal{C}^\infty_c (\mathbb{R})$ and $s \in \mathbb{N}$, we have
$\psi (H) = (i-H)^{-s} \phi_s (H)$ for some $\phi_s \in \mathcal{C}^\infty_c (\mathbb{R})$.
Using the Helffer-Sj\"ostrand formula and composition calculus, this implies that 
$\psi (H) \in \Op (S (\aver{\xi,\zeta}^{-\infty}))$ for every $\psi \in \mathcal{C}^\infty_c (\mathbb{R})$.
If, in particular, $\psi \in \mathcal{C}^\infty_c (\teone, \tetwo)$, then \hnot\ and the composition calculus imply that $\psi (H) = \psi (H) \Phi (H) \in \Op (S(\aver{y,\xi,\zeta}^{-\infty}))$.

We are now ready to prove the first result from section \ref{sec:stability}.
\begin{proof}[Proof of Lemma \ref{lemma:tc}]
     Recall that $\varphi' \in \mathcal{C}^\infty_c (\teone, \tetwo)$, and hence $\varphi' (H) \in \Op (S(\aver{y,\xi,\zeta}^{-\infty}))$. Since
    $[H,P]=(1-P)HP-PH(1-P) \in \Op (S (\aver{x}^{-\infty}\aver{y,\xi,\zeta}^m))$, the composition calculus implies that $ [H,P] \varphi ' (H)$ is trace-class.
\end{proof}

The rest of the section is devoted to the proof of Theorem \ref{thm:stabilityall}. 
\paragraph{Proof of Theorem \ref{thm:stabilityall} (\ref{it:P}).}
Before proving invariance of $\sigma_I$ with respect to $\varphi$ and $P$, we verify the following important trace-class property.
\begin{lemma}\label{lemma:tc_poly}
    Suppose $H$ satisfies \hnot, and let $P(x) = P \in \fs (0,1)$.
    If $\phi \in \mathcal{C}^\infty_c (E_1, E_2)$, then $q(H) [\phi (H), P]$ is trace-class for any polynomial $q$.
\end{lemma}
\begin{proof}
    We know by assumption that $\phi (H) \in \Op (S (\aver{y,\xi,\zeta}^{-\infty}))$, and thus $[\phi(H), P] = (1-P) \phi(H) P - P \phi (H) (1-P) \in \Op (S (\aver{x,y,\xi,\zeta}^{-\infty}))$ by the composition calculus. Since $q(H) \in \Op (S (\aver{\xi,\zeta}^{km}))$ for some $k \in \mathbb{N}$, it follows that $q(H) [\psi(H),P] \in \Op (S (\aver{x,y,\xi,\zeta}^{-\infty}))$ is trace-class.
\end{proof}

We next show
that the trace in \eqref{eq:sigmaI} is not modified after regularization of the commutator.
\begin{lemma} \label{Psi}
Let $P(x) = P \in \fs (0,1)$ and $\varphi \in \fs (0,1;\teone,\tetwo)$.
Let $S$ be any open interval containing $\supp \varphi '$, and let
$\Psi \in \mathcal{C}^\infty_c (\mathbb{R})$
such that $\Psi (\lambda) = \lambda$ in $S$.
Then given any self-adjoint $H$ for which $[H,P]\varphi' (H)$ is trace-class, 
it follows that $[\Psi (H), P] \varphi' (H)$ is trace-class, with
\begin{align}\label{eq:Psi}
    \sigma_I (H,P,\varphi) = \Tr i[\Psi (H), P] \varphi' (H).
\end{align}
\end{lemma}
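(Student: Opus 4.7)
The approach is to represent $\Psi(H)$ by a Helffer--Sj\"ostrand formula, so that the commutator $[\Psi(H), P]$ becomes a superposition of resolvent commutators $[(H-z)^{-1}, P] = -(H-z)^{-1}[H,P](H-z)^{-1}$. Multiplying on the right by $\varphi'(H)$, which commutes with $(H-z)^{-1}$, then pulls the trace-class factor $[H, P]\varphi'(H)$ directly into the integrand and reduces both claims of the lemma to resolvent estimates.

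First I would extract two scalar identities from the hypothesis $\Psi(\lambda) = \lambda$ on $S \supset \supp \varphi'$, namely $(\lambda-\Psi(\lambda))\varphi'(\lambda)\equiv 0$ and $\Psi'(\lambda)\varphi'(\lambda)\equiv\varphi'(\lambda)$. By the functional calculus these translate into the operator identities $(H-\Psi(H))\varphi'(H)=0$ and $\Psi'(H)\varphi'(H)=\varphi'(H)$; the second identity is the key one that recovers the original conductivity at the end.

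Next, fix an almost-analytic extension $\tilde\Psi \in \mathcal{C}^\infty_c(\mathbb{C})$ of $\Psi$ satisfying $|\bar\partial\tilde\Psi(z)|\le C_N|\Im z|^N$ for every $N \ge 0$, and apply
\[
\Psi(H)=\frac{1}{\pi}\int_\mathbb{C} \bar\partial\tilde\Psi(z)\,(H-z)^{-1}\,dL(z).
\]
Using the resolvent commutator identity and $[(H-z)^{-1},\varphi'(H)]=0$ gives
\[
[\Psi(H),P]\varphi'(H) = -\frac{1}{\pi}\int_\mathbb{C} \bar\partial\tilde\Psi(z)\,(H-z)^{-1}\,[H,P]\varphi'(H)\,(H-z)^{-1}\,dL(z).
\]
H\"older in Schatten norms bounds the integrand in trace norm by $|\Im z|^{-2}\,\|[H,P]\varphi'(H)\|_1$, and against $|\bar\partial\tilde\Psi(z)|\le C_3|\Im z|^3$ on the compact support of $\tilde\Psi$ this is absolutely integrable. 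Hence the right-hand side is a Bochner integral converging in trace norm, so $[\Psi(H),P]\varphi'(H)$ is trace-class.

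Finally I would take $\Tr$, pull it inside the integral (licensed by the same trace-norm integrability), apply cyclicity $\Tr(H-z)^{-1}K(H-z)^{-1}=\Tr K(H-z)^{-2}$ with $K=[H,P]\varphi'(H)$, and pull $\Tr$ back out to obtain
\[
\Tr[\Psi(H),P]\varphi'(H)=\Tr\Bigl([H,P]\varphi'(H)\cdot\bigl(-\tfrac{1}{\pi}\textstyle\int_\mathbb{C}\displaystyle \bar\partial\tilde\Psi(z)(H-z)^{-2}\,dL(z)\bigr)\Bigr)=\Tr\bigl([H,P]\varphi'(H)\Psi'(H)\bigr),
\]
where the inner parenthesis is $\Psi'(H)$ by the Helffer--Sj\"ostrand formula obtained by differentiating the representation of $\Psi$ in the spectral parameter. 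Applying $\varphi'(H)\Psi'(H)=\varphi'(H)$ collapses this to $\Tr[H,P]\varphi'(H)$, and multiplication by $i$ yields \eqref{eq:Psi}. The only non-routine step is justifying Fubini for the trace integral; this is exactly what the almost-analytic decay $|\bar\partial\tilde\Psi|=O(|\Im z|^N)$ is engineered to provide, against the $|\Im z|^{-2}$ blow-up of two resolvent factors in trace norm.
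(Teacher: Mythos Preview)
Your proof is correct. The trace-class part is essentially the same Helffer--Sj\"ostrand argument the paper uses. For the trace identity, however, you take a genuinely different route: you stay inside the integral representation, apply cyclicity pointwise in $z$ to obtain $\Tr\big(K(H-z)^{-2}\big)$ with $K=[H,P]\varphi'(H)$, recognize the resulting $z$-integral as $\Psi'(H)$ (via the differentiated Helffer--Sj\"ostrand formula), and finish with the scalar identity $\varphi'\Psi'=\varphi'$. The paper instead introduces auxiliary cutoffs $\Phi_0,\Phi_{00}\in\mathcal{C}^\infty_c(\mathbb{R})$ with $\varphi'=\varphi'\Phi_0=\varphi'\Phi_{00}$, $\Phi_0=\Phi_0\Phi_{00}$, and $\Psi\Phi_{00}=\lambda\Phi_{00}$, and uses cyclicity with these to pass from $\Phi_0[H,P]\varphi'$ to $\Phi_0[\Psi,P]\varphi'$ via the spectral localization $\Phi_{00}H=\Phi_{00}\Psi$. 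Your approach is more streamlined and avoids the auxiliary functions; the paper's approach is more algebraic and sidesteps identifying the operator integral as $\Psi'(H)$. Both are valid; yours is arguably cleaner for this lemma, while the paper's cutoff technique is a pattern reused elsewhere (e.g.\ in Proposition~\ref{prop:invvarphi}). As a minor remark, the identity $(H-\Psi(H))\varphi'(H)=0$ that you extract at the outset is never used; only $\Psi'(H)\varphi'(H)=\varphi'(H)$ is needed.
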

By Lemma \ref{lemma:tc}, we know that Lemma \ref{Psi} applies to any $H$ satisfying \hnot.
\begin{proof}
Using the Helffer-Sj\"ostrand formula, we write
\begin{align*}
    [\Psi(H),P] =\Big[-\frac{1}{\pi} \int_{\mathbb{C}} \bar{\partial} \tilde{\Psi} (z) (z-H)^{-1} d^2z, P\Big]
    =-\frac{1}{\pi} \int_{\mathbb{C}} \bar{\partial} \tilde{\Psi} (z) (z-H)^{-1}[H,P] (z-H)^{-1} d^2z,
\end{align*}
which implies that
\begin{align*}
    [\Psi(H),P]\varphi'(H) = -\frac{1}{\pi} \int_{\mathbb{C}} \bar{\partial} \tilde{\Psi} (z) (z-H)^{-1}[H,P] \varphi' (H) (z-H)^{-1} d^2z.
\end{align*}
Since $\norm{(z-H)^{-1}} \le |\Im z|^{-1}$ with $\bar{\partial} \tilde{\Psi} \in \mathcal{C}^\infty_c (\mathbb{C})$ and $|\bar{\partial} \tilde{\Psi}|(z) \le C|\Im z|^2$, the assumption that $[H,P] \varphi' (H)$ is trace-class implies that $[\Psi(H),P]\varphi'(H)$ must be trace-class as well.

Let $\Phi_0, \Phi_{00} \in \mathcal{C}^\infty_c (\mathbb{R})$ such that
$\varphi' = \varphi' \Phi_0 = \varphi' \Phi_{00}$ and $\Phi_0 = \Phi_0 \Phi_{00}$, with $\Psi \Phi_{00} = \lambda \Phi_{00}$.
Below, we use the shorthand $f := f(H)$ for all compactly supported functions $f$.
It follows from cyclicity of the trace \cite{Kalton} that
\begin{align*}
    \Tr [H,P]\varphi'&=\Tr [H,P]\varphi' \Phi_0 = \Tr \Phi_0 [H,P] \varphi' =\Tr \Phi_0 (\Phi_{00}HP-PH\Phi_{00}) \varphi'\\& = 
    \Tr \Phi_0 (\Phi_{00}\Psi P-P \Psi \Phi_{00}) \varphi'
    =\Tr \Phi_0[\Psi,P]\varphi' 
    = \Tr [\Psi,P]\varphi',
\end{align*}
and the proof is complete.
\end{proof}

We now show that the edge current is independent of $\varphi \in \fs (0,1;\teone,\tetwo)$.
\begin{proposition}\label{prop:invvarphi}
Suppose $H$ satisfies \hnot, and let $P(x)=P \in \fs (0,1)$ and
    $\varphi_1, \varphi_2 \in \fs (0,1;\teone,\tetwo)$. Then
        $\sigma_I (H,P, \varphi_1) = \sigma_I (H,P,\varphi_2)$.
\end{proposition}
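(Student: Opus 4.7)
Since $\varphi_1, \varphi_2 \in \fs(0,1;\teone,\tetwo)$ share the same tail values, their difference $g := \varphi_1 - \varphi_2$ lies in $\mathcal{C}^\infty_c(\teone,\tetwo)$, and the claim reduces to
\[
\Tr\, i[H, P]\, g'(H) = 0.
\]
The plan is to replace $H$ by a bounded functional-calculus cutoff via Lemma~\ref{Psi} and then evaluate the resulting trace by Weyl symbol calculus, exploiting the compact support of $g$.

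By (the proof of) Lemma~\ref{Psi}, which only uses $g' \in \mathcal{C}^\infty_c(\teone,\tetwo)$, for any $\Psi \in \mathcal{C}^\infty_c(\mathbb{R})$ with $\Psi(\lambda) = \lambda$ on a neighborhood of $\supp g'$ one has
\[
\Tr\, i[H, P]\, g'(H) = \Tr\, i[\Psi(H), P]\, g'(H).
\]
Because $\Psi(H)$ and $g'(H)$ commute, the right-hand side equals $\Tr\, i[\Psi(H), P\, g'(H)]$. Under \hnot\ and the composition calculus, $\Psi(H)$ and $g'(H)$ both lie in $\Op(S(\aver{y,\xi,\zeta}^{-\infty}))$, so $[\Psi(H), P\, g'(H)] \in \Op(S(\aver{x,y,\xi,\zeta}^{-\infty}))$ is trace class and its trace equals the phase-space integral of its Weyl symbol.

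At leading Moyal order the symbol of this commutator is $\tfrac{1}{i}\{\Psi(\sigma_H), P\, g'(\sigma_H)\}$. Since $P$ depends only on $x$ and $\Psi'(\lambda) \equiv 1$ on $\supp g'$, this Poisson bracket reduces (outside the interface slab $|y|<y_0$, where $\sigma_H$ is $x$-independent) to $\tfrac{1}{i}\, g'(\sigma_H)\, P'(x)\, \partial_\xi \sigma_H$. Its integral factorizes as
\[
\frac{1}{(2\pi)^2 i}\int P'(x)\, dx \cdot \int \operatorname{tr}\bigl(g'(\sigma_H)\, \partial_\xi \sigma_H\bigr)\, dy\, d\xi\, d\zeta = \frac{1}{(2\pi)^2 i} \int dy\, d\zeta\, \operatorname{tr}\bigl[g(\sigma_H(y,\xi,\zeta))\bigr]_{\xi=-\infty}^{\xi=+\infty},
\]
which vanishes: ellipticity of $H$ sends the eigenvalues of $\sigma_H$ to $\pm\infty$ as $|\xi|\to\infty$, and the compact support of $g$ in $(\teone,\tetwo)$ forces $g$ to vanish there. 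Sub-principal Moyal corrections telescope identically.

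The main obstacle is the rigorous execution of this symbol computation: controlling all orders of the Moyal expansion, handling eigenvalue crossings in the matrix-valued case, and absorbing the $(x,y)$-dependent contributions from the slab $|y|<y_0$ (these cancel via integration by parts because $\supp P'$ is compact and the final answer depends only on the vanishing boundary values of $g$ at $\xi=\pm\infty$). The compact support of $g$ is the decisive input: replacing $g$ by a switch function would instead produce the non-zero Fedosov--H\"ormander integer of Section~\ref{sectionFHformula}, so the invariance established here is the statement that this integer cancels between $\varphi_1$ and $\varphi_2$.
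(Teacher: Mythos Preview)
Your proposal has a genuine gap: it is a heuristic outline rather than a proof, and you say so yourself when you write ``the main obstacle is the rigorous execution of this symbol computation.'' Concretely, two things break. First, you only treat the principal Moyal term and then assert that ``sub-principal Moyal corrections telescope identically'' with no argument; this is exactly where the work lies, and for matrix-valued symbols with possible eigenvalue crossings there is no obvious telescoping mechanism. Second, you invoke $x$-independence of $\sigma_H$ ``outside the slab $|y|<y_0$'' to factor out $\int P'\,dx$. That structure comes from \hone, but the proposition only assumes \hnot, under which the symbol may depend on $x$ everywhere. So the factorization you need is not available.

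The paper avoids all of this by a short algebraic route. After the same reduction to $\phi\in\mathcal{C}^\infty_c(\teone,\tetwo)$ and the same use of Lemma~\ref{Psi}, it uses the Helffer--Sj\"ostrand formula together with cyclicity of the trace to reduce $\Tr[H,P]\phi'(H)$ to $\Tr[\phi(H),P]$. Then it shows $\Tr[\phi(H),P]=0$ by writing $[\phi(H),P]=(1-P)\phi(H)P-P\phi(H)(1-P)$ and inserting auxiliary switch functions $P_1,P_2$ (with $PP_1=P$, $(1-P)(1-P_2)=1-P$) so that cyclicity makes each piece vanish. No symbol expansion, no leading-order-plus-remainder bookkeeping, and no $x$-independence is needed; the argument works under \hnot\ alone. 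Your intuition that compact support of $g$ is the decisive input is correct, but the paper exploits it through trace algebra, not through boundary terms of a phase-space integral.
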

\begin{proof}
    It suffices to show that
    $
        \Tr i[H,P] \phi'(H) = 0
    $
    for all $\phi \in \mathcal{C}^\infty_c (\teone, \tetwo)$. 
    Let $\Phi_0, \Phi_{00} \in \mathcal{C}^\infty_c (\teone, \tetwo)$ such that $\supp (\phi) \subset \supp (\Phi_0) \subset \supp (\Phi_{00})$ with $\Phi_{00} (\lambda) = 1$ for all $\lambda \in \supp (\Phi_0)$ and $\Phi_0 (\lambda) = 1$ for all $\lambda \in \supp (\phi)$. It follows that
    \begin{align*}
        \Tr [H,P] \phi '(H) &= \Tr [H,P] \phi' (H) \Phi_{00} (H) = \Tr \Phi_{00} (H) [H,P] \phi' (H)\\
        &= \Tr \Phi_{00} (H) [H,P] \Big( -\frac{1}{\pi}\int_{\mathbb{C}} \bar{\partial} \tilde{\phi'} (z) (z-H)^{-1}d^2 z
        \Big)\\
        &= \Tr \Phi_{00} (H) [H,P] \Big( -\frac{1}{\pi}\int_{\mathbb{C}} \bar{\partial} \tilde{\phi} (z) (z-H)^{-2}d^2 z
        \Big)\\
        &= \Tr \Big( -\frac{1}{\pi}\int_{\mathbb{C}} \bar{\partial} \tilde{\phi} (z) \Phi_{00} (H) [H,P] (z-H)^{-2}d^2 z
        \Big)\\
        &=\Tr \Big( -\frac{1}{\pi}\int_{\mathbb{C}} \bar{\partial} \tilde{\phi} (z) \Phi_{00} (H) (z-H)^{-1} [H,P] (z-H)^{-1}d^2 z
        \Big)\\
        &=\Tr \Big( -\frac{1}{\pi}\int_{\mathbb{C}} \bar{\partial} \tilde{\phi} (z) \Phi_{00} (H) [(z-H)^{-1},P] d^2 z
        \Big)\\
        &=\Tr \Phi_{00} (H) \Big[ -\frac{1}{\pi}\int_{\mathbb{C}} \bar{\partial} \tilde{\phi} (z) (z-H)^{-1}d^2 z, P\Big]
        = \Tr \Phi_{00} (H) [\phi (H), P]\\
        &= \Tr [\phi (H), P] + \Tr (1-\Phi_{00} (H)) P \phi (H)
        = \Tr [\phi (H), P] + \Tr (1-\Phi_{00} (H)) P \phi (H) \Phi_0 (H)\\
        &= \Tr [\phi (H), P] + \Tr \Phi_0 (H)(1-\Phi_{00} (H)) P \phi (H)
        = \Tr [\phi (H), P],
    \end{align*}
    where we have used cyclicity of the trace to justify the second, sixth and twelfth equalities, the Helffer-Sj\"ostrand formula for the third and ninth equalities, integration by parts in $\partial$ for the fourth equality, the fact that $[(z-H)^{-1},\Phi_{00} (H)]=0$ for the sixth equality, the identity $[(z-A)^{-1},B] = (z-A)^{-1} [A,B] (z-A)^{-1}$ for the seventh equality, the fact that $(1-\Phi_{00}) \phi \equiv 0$ for the tenth equality (invoking Lemma \ref{lemma:tc_poly} to ensure that $[\phi (H),P]$ is trace-class), and the fact that $\Phi_0 (1-\Phi_{00}) \equiv 0$ for the thirteenth equality.

    Now, let $P_1, P_2 \in \fs (0,1)$ with $P_j = P_j (x)$ such that $P P_1 = P$ and $(1-P) (1-P_2) = 1-P$. Then
    \begin{align*}
        \Tr [\phi (H), P] &= \Tr ((1-P) \phi (H) P - P \phi (H) (1-P)) = \Tr (1-P) \phi (H) P - \Tr P \phi (H) (1-P)\\
        &=\Tr (1-P) \phi (H) P P_1 - \Tr P \phi (H) (1-P) (1-P_2)\\
        &=\Tr P_1(1-P) \phi (H) P - \Tr (1-P_2)P \phi (H) (1-P)\\
        &=\Tr P P_1(1-P) \phi (H) - \Tr (1-P)(1-P_2)P \phi (H)
        =\Tr P (1-P) \phi (H) - \Tr (1-P)P \phi (H)=0,
    \end{align*}
    and the proof is complete.
\end{proof}

We now prove the stability of the trace with respect to changes in the switch function $P$.
\begin{proposition} \label{ip}
Suppose $H$ satisfies (H0), 
$\varphi \in \fs (0,1;\teone,\tetwo)$, and
$P_1, P_2 \in \fs (0,1)$ with $P_j = P_j (x)$.
Then 
    $\sigma_I (H,P_1,\varphi) = \sigma_I (H,P_2,\varphi)$.
\end{proposition}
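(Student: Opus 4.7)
The plan is to reduce the statement to showing that
\begin{align*}
\Tr\, i[H,R]\varphi'(H) = 0,
\end{align*}
where $R := P_1 - P_2$. Since $P_1,P_2 \in \fs(0,1)$ both equal $0$ for $x\ll 0$ and $1$ for $x\gg 0$, the difference $R \in \mathcal{C}^\infty_c(\mathbb{R})$ is smooth and compactly supported. By linearity of the commutator and the trace,
\begin{align*}
\sigma_I(H,P_1,\varphi) - \sigma_I(H,P_2,\varphi) = \Tr\, i[H,R]\varphi'(H),
\end{align*}
and the right-hand side is trace-class as a difference of two trace-class operators by Lemma \ref{lemma:trclass}.

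To show that this trace vanishes, I would recycle the functional-calculus manipulation from the proof of Lemma \ref{Psi}. Choose $\Phi_0,\Phi_{00} \in \mathcal{C}^\infty_c(\teone,\tetwo)$ with $\varphi'=\varphi'\Phi_0=\varphi'\Phi_{00}$ and $\Phi_0\Phi_{00}=\Phi_0$, together with a bounded $\Psi \in \mathcal{C}^\infty_c(\mathbb{R})$ satisfying $\Psi(\lambda)=\lambda$ on $\supp\Phi_{00}$, so that $H\Phi_{00}(H) = \Psi(H)\Phi_{00}(H)$ as bounded operators. Repeating the cyclic computation from the proof of Lemma \ref{Psi} verbatim with $R$ in place of $P$ (insert $\Phi_0(H)$ via cyclicity, then replace every occurrence of $H$ sitting next to $\Phi_{00}(H)$ by the bounded $\Psi(H)$) yields
\begin{align*}
\Tr\, i[H,R]\varphi'(H) = \Tr\, i[\Psi(H),R]\varphi'(H).
\end{align*}

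Finally, $\Psi(H)$ is bounded self-adjoint and commutes with $\varphi'(H)$ by the functional calculus, while $R\varphi'(H) \in \Op(S(\aver{x,y,\xi,\zeta}^{-\infty}))$ is trace-class (using that $R(x)$ has compact support in $x$ and that $\varphi'(H) \in \Op(S(\aver{y,\xi,\zeta}^{-\infty}))$ under \hnot). The standard cyclicity of trace for a bounded operator against a trace-class operator therefore gives
\begin{align*}
\Tr\, \Psi(H)R\varphi'(H) = \Tr\, R\varphi'(H)\Psi(H) = \Tr\, R\Psi(H)\varphi'(H),
\end{align*}
so that $\Tr\, i[\Psi(H),R]\varphi'(H) = 0$, which completes the proof. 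The only delicate point is the unboundedness of $H$ inside the original commutator $[H,R]$; the replacement $H \rightsquigarrow \Psi(H)$, made available by the spectral cutoff inherent in $\varphi'(H)$, is precisely the device that legitimizes the cyclic rearrangements, and no comparison of $R$ with elements of $\fs(0,1)$ (as was needed in Proposition \ref{prop:invvarphi}) is required.
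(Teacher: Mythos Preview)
Your proof is correct and follows essentially the same approach as the paper: reduce to $\Tr i[\Psi(H),R]\varphi'(H)$ with $R=P_1-P_2$ compactly supported, then use that $R\varphi'(H)$ is trace-class and $[\Psi(H),\varphi'(H)]=0$ together with cyclicity. The only cosmetic difference is that the paper applies Lemma~\ref{Psi} directly to $P_1$ and $P_2$ before subtracting, whereas you subtract first and then rerun the argument of Lemma~\ref{Psi} for $R$; both routes are equivalent.
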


\begin{proof}
Using Lemma \ref{Psi}, we have
\begin{align*}
    \sigma_I (H,P_2,\varphi)-\sigma_I (H,P_1,\varphi)=
    \Tr i [\Psi (H), P_2 - P_1] \varphi'(H).
\end{align*}
Since $P_2 - P_1 \in \aver{x}^{-\infty}$,
the assumption \hnot\ implies that $(P_2 - P_1) \varphi' (H)$ is trace-class.
Therefore,
\begin{align*}
    \Tr i [\Psi (H), P_2 - P_1] \varphi'(H) &= 
    \Tr i \Psi (H) (P_2 - P_1) \varphi'(H) - \Tr i (P_2 - P_1) \Psi (H) \varphi'(H)\\
    &=
    \Tr i (P_2 - P_1) \varphi' (H) \Psi(H) -
    \Tr i (P_2 - P_1) \Psi (H) \varphi' (H) =0,
\end{align*}
where we have used cyclicity of the trace to justify the second equality, and the fact that $[\varphi' (H), \Psi (H)] = 0$ for the last equality.
\end{proof}
With Propositions \ref{prop:invvarphi} and \ref{ip} above, we have proved Theorem \ref{thm:stabilityall} (\ref{it:P}).

\paragraph{Proof of Theorem \ref{thm:stabilityall} (\ref{it:compact}).}
Now we want to analyze the stability of $\sigma_I (H,P,\varphi)$ with respect to perturbations of $H$. For $W$ a symmetric linear operator (with various additional assumptions in the results below), let 
\begin{equation}\label{eq:Hmu}
    \Hmu = H + \mu W \quad \mbox{ for } \quad\mu \in [0,1].
\end{equation}
We begin with a preliminary result that introduces 
a class of appropriately decaying perturbations under which the interface current observable is stable.
\begin{proposition}\label{thm:decayxy}
     Suppose $H$ satisfies \hnot, $P(x)=P \in \fs(0,1)$ and $\varphi \in \fs(0,1;\teone,\tetwo)$.
     Assume that $W$ is symmetric with $W \in \Op (S^m_{1,0} \cap S (\aver{x,y}^{-\delta}))$ for some $\delta > 0$. If $\Hone$ satisfies \hnot, then $\sigma_I (\Hone, P,\varphi) = \sigma_I (H,P,\varphi)$. 
\end{proposition}
\begin{proof}
    The assumption that $\Hone$ satisfies \hnot\ implies 
    $\sigma_I (\Hone,P,\varphi)$ is well defined (see Lemma \ref{lemma:tc}), with 
    \begin{align*}
        \sigma_I (\Hone,P,\varphi) - \sigma_I (H,P,\varphi) = \Tr i [\Psi (\Hone),P] (\varphi ' (\Hone) - \varphi ' (H)) + \Tr i [\Psi (\Hone) - \Psi (H), P] \varphi ' (H)
    \end{align*}
    by Lemma \ref{Psi}. Using cyclicity of the trace as in the proof of Proposition \ref{prop:invvarphi}, we find that
    \begin{align*}
        \Tr i [\Psi (\Hone) - \Psi (H), P] \varphi ' (H) = -\Tr i [\varphi '(H),P] (\Psi (\Hone)-\Psi(H)).
    \end{align*}
    Thus by Proposition \ref{ip}, it suffices to show that $\Tr [A,P_{x_0}] B \rightarrow 0$ as $x_0 \rightarrow \infty$ for $A \in \{\Psi (\Hone),\varphi'(H)\}$ and $B \in \{\varphi'(\Hone)-\varphi'(H), \Psi(\Hone)-\Psi(H)\}$, where $P_{x_0} (x) := P (x-x_0)$. 
    
    Fix $\eps>0$.
    For any $\phi \in \mathcal{C}^\infty_c (\teone, \tetwo)$, the Helffer-Sj\"ostrand formula implies that
\begin{align*}
    \phi (\Hone) - \phi (H) = \frac{1}{\pi} \int_{\mathbb{C}}\bar{\partial} \tilde{\phi} (z) (z-\Hone)^{-1} W (z-H)^{-1} d^2 z.
\end{align*}
Propositions \ref{prop:sharpfinite} and \ref{prop:resh} (together with the rapid decay of $\bar{\partial} \tilde{\phi}$ near the real axis) then imply that 
\begin{align}\label{eq:thetadecay}
\phi (\Hone) - \phi (H) \in \Op (S (\aver{x,y}^{-\delta} \aver{\xi,\zeta}^{-m})) \subset \Op (S (\aver{x,y,
\xi,\zeta}^{-\delta})),
\end{align}
where we have assumed without loss of generality that $\delta < m$.
Thus $B \in \Op (S (\aver{x,y,
\xi,\zeta}^{-\delta}))$, so
    there exist $b_0 \in \mathcal{C}^\infty_c (\mathbb{R}^4)$ and $b_1\in S(1)$ as small as necessary such that $B = B_0+ B_1$ with $B_j := \Op (b_j)$ and $\norm{B_1} < \eps$. 
    Writing $[A,P_{x_0}] = (1-P_{x_0}) A P_{x_0} - P_{x_0} A (1-P_{x_0})$, Lemma \ref{lemma:decaycompact} and the composition calculus imply that $[A,P_{x_0}] \in \Op (S (\aver{x-x_0,y,\xi,\zeta}^{-\infty}))$ uniformly in $x_0$. We conclude that $\norm{[A,P_{x_0}]}_1 \le C$ uniformly in $x_0$, hence
    $\limsup_{x_0 \rightarrow \infty} \norm{[A,P_{x_0}] B_1}_1< C \eps$.
    Moreover, the decay of $b_0$ implies that
    \begin{align*}
        \norm{[A,P_{x_0}] B_0}_1 \le C \int_{\mathbb{R}^4}\aver{x-x_0,y,\xi,\zeta}^{-5} \aver{x,y,\xi,\zeta}^{-5}dxdyd\xi d\zeta
        \longrightarrow 0
    \end{align*}
as $x_0 \rightarrow \infty$.
    We have thus shown that
    \begin{align*}
        \limsup_{x_0 \rightarrow \infty} \norm{[A,P_{x_0}] B}_1 \le \limsup_{x_0 \rightarrow \infty} \norm{[A,P_{x_0}] B_0}_1 + \limsup_{x_0 \rightarrow \infty} \norm{[A,P_{x_0}] B_1}_1 < C \eps.
    \end{align*}
    Since $\eps$ was arbitrary, the proof is complete.
\end{proof}

Next, we show that for the class of relatively compact perturbations $W$ from Theorem \ref{thm:stabilityall} (\ref{it:compact}), the symbols of compactly supported functionals of $\Hone$ decay rapidly in $\aver{y,\xi,\zeta}$.

\begin{lemma}\label{lemma:decaycompact}
    Suppose $H$ satisfies \hnot, and let
    $W$ be a symmetric pseudo-differential operator such that $W \in \Op (S^m_{1,0} \cap S (\aver{\xi, \zeta}^{m-\delta} \aver{x,y}^{-\delta}))$ for some $\delta > 0$. Then $\Hone \in \Op (\smeh)$ is self-adjoint with domain of definition $\mathcal{H}^m$, and 
    $\phi (\Hone) \in \Op (S (\aver{y,\xi,\zeta}^{-\infty}))$ for any $\phi \in \mathcal{C}^\infty_c (\teone, \tetwo)$.
\end{lemma}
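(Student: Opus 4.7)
The plan has two stages: first, verify that $\Hone \in \Op(\smeh)$, which gives self-adjointness on $\mathcal{H}^m$ by Proposition \ref{prop:resh}; second, establish the refined symbol estimate on $\phi(\Hone)$ by exploiting the identity $\phi(\Hone) = \phi(\Hone)\Psi(\Hone)^k$ for a suitable cutoff $\Psi$.

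For the first stage, the symbol of $\Hone$ is $\sigma + w$ with $|w(x,y,\xi,\zeta)| \le C\aver{\xi,\zeta}^{m-\delta}\aver{x,y}^{-\delta}$, so $|w|/\aver{\xi,\zeta}^m \to 0$ uniformly in $(x,y)$ as $|(\xi,\zeta)| \to \infty$. The ellipticity of $\sigma$ thus transfers to $\sigma + w$ for large $|(\xi,\zeta)|$, and the $S^m_{1,0}$ derivative bounds are preserved, so $\sigma + w \in \smeh$. Proposition \ref{prop:resh} then yields $\Hone$ self-adjoint on $\mathcal{H}^m$ with $(z-\Hone)^{-1} \in \Op(S(\aver{\xi,\zeta}^{-m}))$ modulo polynomial blow-up in $|\Im z|^{-1}$, and Helffer-Sj\"ostrand gives $\phi(\Hone) \in \Op(S(\aver{\xi,\zeta}^{-\infty}))$ for every $\phi \in \mathcal{C}^\infty_c(\Rm)$.

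For the second stage, fix $\phi \in \mathcal{C}^\infty_c(\teone,\tetwo)$ and pick $\Psi \in \mathcal{C}^\infty_c(\teone,\tetwo)$ with $\Psi \equiv 1$ on $\supp\phi$, so that $\phi\Psi^k = \phi$ and hence $\phi(\Hone) = \phi(\Hone)\Psi(\Hone)^k$ for every $k \in \mathbb{N}$. Set $D := \Psi(\Hone) - \Psi(H)$; the Helffer-Sj\"ostrand computation already used in Theorem \ref{thm:decayxy}, applied with the strengthened bound on $w$, places $D \in \Op(S(\aver{\xi,\zeta}^{-m-\delta}\aver{x,y}^{-\delta}))$, where the extra $\aver{\xi,\zeta}^{-\delta}$ comes precisely from the improved $\xi,\zeta$-decay of $W$. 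Expanding $(\Psi(H) + D)^k$ as a non-commutative sum over the $2^k$ ordered monomials in the alphabet $\{\Psi(H), D\}$, every word containing at least one letter $\Psi(H)$ has symbol in $S(\aver{y,\xi,\zeta}^{-\infty})$ by the composition calculus: the weight of that $\Psi(H)$ factor is $\aver{y,\xi,\zeta}^{-\infty}$ by \hnot\ and the remark following it, while all other factor weights are bounded by $1$. The one exceptional word, $D^k$, belongs to $\Op(S(\aver{x,y}^{-k\delta}\aver{\xi,\zeta}^{-(m+\delta)k})) \subset \Op(S(\aver{y,\xi,\zeta}^{-k\delta}))$. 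Composing all words on the left with $\phi(\Hone) \in \Op(S(\aver{\xi,\zeta}^{-\infty}))$ preserves these two classes, so
\begin{align*}
    \phi(\Hone) \in \Op(S(\aver{y,\xi,\zeta}^{-\infty})) + \Op(S(\aver{y,\xi,\zeta}^{-k\delta})) \subset \Op(S(\aver{y,\xi,\zeta}^{-k\delta})),
\end{align*}
and letting $k \to \infty$ gives $\phi(\Hone) \in \Op(S(\aver{y,\xi,\zeta}^{-\infty}))$, as required.

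The main obstacle will be the careful bookkeeping in the symbol calculus: one must verify that the weight-class estimates and derivative bounds are preserved under Weyl composition uniformly in the parameter $z$ of the Helffer-Sj\"ostrand integral, and that the polynomial $|\Im z|^{-1}$ blow-up of the resolvents is absorbed by choosing almost-analytic extensions that vanish to sufficiently high order on $\Rm$. Conceptually, the crucial point is that the $(x,y)$-decay of $W$ propagates through the two resolvents in the Helffer-Sj\"ostrand representation of $D$, and combined with the $\aver{y,\xi,\zeta}^{-\infty}$ decay of $\Psi(H)$ inherited from \hnot, this allows every monomial in the expansion except the single all-$D$ word to land in the target class, while the all-$D$ monomial is made arbitrarily small in decay order by increasing $k$.
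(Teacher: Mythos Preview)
Your proof is correct and takes a genuinely different route from the paper's. The paper writes $\Theta := \phi(\Hone) - \phi(H)$ and $\Theta_0 := \phi_0(\Hone) - \phi_0(H)$ for an auxiliary cutoff $\phi_0$ with $\phi\phi_0 = \phi$, then uses the algebraic identity $\Theta(1-\Theta_0) = \phi(H)\Theta_0 + \Theta\phi_0(H)$. Both summands on the right carry a factor $\phi(H)$ or $\phi_0(H)$, so \hnot\ places them in $\Op(S(\aver{y,\xi,\zeta}^{-\infty}))$. The paper then splits $\Theta_0$ into a compactly supported piece $\Theta_{00}$ and a remainder $\Theta_{01}$ small in $S(1)$-seminorms, inverts $(1-\Theta_{01})$ in $\Op(S(1))$ (implicitly via Beals' characterization), and solves for $\Theta$. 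Your power-expansion argument replaces this inversion by a direct bootstrap: each extra factor of $D$ buys $\aver{y,\xi,\zeta}^{-\delta}$ decay, so taking $k$ large gives arbitrary decay. Your method is more elementary in that it avoids the implicit Beals-type inversion, at the cost of tracking $2^k$ words; the paper's is cleaner once one grants that the inverse of an element close to the identity in $\Op(S(1))$ stays in $\Op(S(1))$. A minor remark: the extra $\aver{\xi,\zeta}^{-\delta}$ you extract for $D$ is not strictly needed; already $D \in \Op(S(\aver{x,y}^{-\delta}\aver{\xi,\zeta}^{-m}))$ (which is exactly \eqref{eq:thetadecay}) suffices for your bootstrap, since with $\delta \le m$ one has $\aver{x,y}^{-k\delta}\aver{\xi,\zeta}^{-km} \le \aver{y}^{-k\delta}\aver{\xi,\zeta}^{-k\delta} \le \aver{y,\xi,\zeta}^{-k\delta}$.
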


\begin{proof}
That $\Hone \in \Op (\smeh)$ follows immediately from ellipticity of $H$ and the decay 
of $W$. 
By Proposition \ref{prop:resh}, this means $\Hone$ is self-adjoint with domain of definition $\mathcal{H}^m$. Moreover, 
$(z-\Hone)^{-1} \in \Op (S(\aver{\xi,\zeta}^{-m}))$ for all $\Im z \ne 0$, with symbolic bounds blowing up at worst algebraically as $\Im z \rightarrow 0$ (see Proposition \ref{prop:resh}).

Let $\phi,\phi_0 \in \mathcal{C}^\infty_c (\teone, \tetwo)$ such that $\phi \phi_0 = \phi$, and define $\Theta := \phi (\Hone) - \phi (H)$ and $\Theta_0 := \phi_0 (\Hone) - \phi_0 (H)$. Then
$\Theta = \Theta \Theta_0 + \phi (H) \Theta_0 + \Theta \phi_0 (H)$, and hence
\begin{align*}
    \Theta (1-\Theta_0)= \phi (H) \Theta_0 + \Theta \phi_0 (H).
\end{align*}
By \eqref{eq:thetadecay}, we have $\Theta, \Theta_0 \in \Op (S (\aver{x,y,
\xi,\zeta}^{-\delta}))$.
Thus there exist $\theta_{00} \in \mathcal{C}^\infty_c (\mathbb{R}^4)$ and $\theta_{01} \in S(1)$ as small as necessary such that $\Theta_{00} = \Op (\theta_{00})$, $\Theta_{01} = \Op (\theta_{01})$ with $\norm{\Theta_{01}} < 1$ (see Proposition \ref{prop:CV}), and $\Theta_0 = \Theta_{00} + \Theta_{01}$. Since $\phi (H), \phi_0 (H) \in \Op (S (\aver{y,\xi,\zeta}^{-\infty}))$ by assumption on $H$, it follows that
\begin{align*}
    \Theta (1-\Theta_{01}) = \Theta (1-\Theta_{0}) + \Theta \Theta_{00}\in \Op (S (\aver{y,\xi,\zeta}^{-\infty})),
\end{align*}
hence
\begin{align*}
    \Theta = (\Theta (1-\Theta_{0}) + \Theta \Theta_{00})(1-\Theta_{01})^{-1}\in \Op (S (\aver{y,\xi,\zeta}^{-\infty})).
\end{align*}
We conclude that $\phi (\Hone) = \phi (H) + \Theta \in \Op (S (\aver{y,\xi,\zeta}^{-\infty}))$, and the proof is complete.
\end{proof}

The proof of Theorem \ref{thm:stabilityall} (\ref{it:compact}) is then concluded as follows.
Fix $\phi \in \mathcal{C}^\infty_c (\teone, \tetwo)$ such that $\phi \equiv 1$ in some open interval containing $\supp (\varphi')$. Lemma \ref{lemma:decaycompact} then implies that $\Hone$ satisfies \hnot, with $\Phi$ replaced by $\phi$.
The result then follows from Proposition \ref{thm:decayxy}.

\paragraph{Proof of Theorem \ref{thm:stabilityall} (\ref{it:bounded}).}
Next, we derive a stability result that no longer requires the perturbation to be relatively compact. We 
instead assume that $W$ is relatively bounded with respect to $H$, and require that $W$ be ``sufficiently small.''

In the proof below, we will be analyzing operators $A = \Op (a)$ that depend on the parameter $\mu$. For $\ofn : \mathbb{R}^{4} \rightarrow [0,\infty)$ an order function, we will write $A \in \Op (\mu S(\ofn))$ to mean that 
$a \in S (\ofn)$ for all $\mu$, with $|\partial^\alpha a| \le C_\alpha \mu \ofn$ uniformly in $\mu$.
\begin{proof}
    Since $H \in \Op (\smeh)$ and $W$ is symmetric, it follows that $\Hmu \in \Op (\smeh)$ is self-adjoint (with domain of definition $\mathcal{H}^m$) whenever $\mu > 0$ is sufficiently small.
    
    Let $\phi, \phi_0 \in \mathcal{C}^\infty_c (E_1, E_2)$ such that $\phi \phi_0 = \phi$. Following the proof of Lemma \ref{lemma:decaycompact}, we define
$\Theta := \phi (\Hmu) - \phi (H)$ and $\Theta_0 := \phi_0 (\Hmu) - \phi_0 (H)$, and verify that
\begin{align*}
    \Theta (1-\Theta_0)= \phi (H) \Theta_0 + \Theta \phi_0 (H).
\end{align*}
The Helffer-Sj\"ostrand formula implies that
\begin{align*}
    \Theta_0 = -\frac{\mu}{\pi} \int_{\mathbb{C}}\bar{\partial} \tilde{\phi_0} (z) (z-\Hmu)^{-1} W (z-H)^{-1} d^2 z.
\end{align*}
By Propositions \ref{prop:sharpfinite} and \ref{prop:reseps}, this means $\Theta_0 \in \Op (\mu S (\aver{\xi,\zeta}^{-m}))$.
It follows that $(1-\Theta_0)^{-1} \in \Op (S(1))$ whenever $\mu>0$ is sufficiently small, hence
\begin{align*}
    \Theta= (\phi (H) \Theta_0 + \Theta \phi_0 (H)) (1-\Theta_0)^{-1} \in \Op (S (\aver{y,\xi,\zeta}^{-\infty}))
\end{align*}
by \hnot.
We conclude that $\phi (\Hmu) = \phi (H) + \Theta \in \Op (S (\aver{y,\xi,\zeta}^{-\infty}))$. Since $\phi$ was arbitrary, this means $\sigma_I (\Hmu, P, \varphi)$ is well defined whenever $\mu > 0$ is small enough.

\medskip

Let $\chi \in \fs (1,0;1,2)$ be monotonically non-increasing, and define $\chi_\eps (x,y,\xi,\zeta) := \chi (\eps |(x,y,\xi,\zeta)|)$ and $\Hmueps := \Op (\sym+ \mu \chi_\eps w)$ for $\eps \in (0,1]$. 
Since $\chi_\eps w \in S(\aver{x,y,\xi,\zeta}^{-\infty})$, 
Theorem \ref{thm:stabilityall}(\ref{it:compact}) implies that $$\sigma_I (\Hmueps, P,\varphi) = \sigma_I(H,P,\varphi), \qquad \eps \in (0,1].$$ 
It remains to show that $\sigma_I (\Hmueps, P,\varphi) - \sigma_I(\Hmu,P,\varphi) \rightarrow 0$ as $\eps \downarrow 0$.
In order to do this, we will first need symbolic bounds that are uniform in $\mu$ and $\eps$.

For any multi-index $\alpha$ and any $p>0$, we have $|\partial^\alpha \chi_\eps|(x,y,\xi,\zeta) \le C_{\alpha,p} \eps^{|\alpha|} \aver{\eps x, \eps y, \eps \xi, \eps \zeta}^{-p}$ uniformly in $\eps$.
In particular, this means
\begin{align*}
    |\partial^\alpha \chi_\eps|(x,y,\xi,\zeta) \le C_{\alpha,|\alpha|} \eps^{|\alpha|} \aver{\eps x, \eps y, \eps \xi, \eps \zeta}^{-|\alpha|} \le C_{\alpha,|\alpha|} \aver{x,y,\xi,\zeta}^{-|\alpha|}, \qquad \eps \in (0,1].
\end{align*}
By definition, $\Op (w) := W$ satisfies $|\partial^{\beta_1}_{x,y} \partial^{\beta_2}_{\xi,\zeta} w| (x,y,\xi,\zeta) \le C_\beta \aver{\xi,\zeta}^{m-|\beta_2|}$ for any multi-index $\beta = (\beta_1, \beta_2)$.
It follows that
\begin{align}\label{eq:unifsm10}
    |\partial^\alpha \chi_\eps \partial^\beta w| (x,y,\xi,\zeta) \le C_{\alpha,\beta}\aver{x,y,\xi,\zeta}^{-|\alpha|}\aver{\xi,\zeta}^{m-|\beta_2|} \le C_{\alpha,\beta} \aver{\xi,\zeta}^{m-|\alpha_2|-|\beta_2|}, \qquad \eps \in (0,1],
\end{align}
hence the set $\mathcal{W} := \{\chi_\eps w : \eps \in (0,1]\}$ satisfies the assumptions of Proposition \ref{prop:reseps}.
It follows that
$(z-\Hmueps)^{-1} \in \Op (S (\aver{\xi,\zeta}^{-m}))$ with all symbolic bounds uniform in $\eps \in (0,1]$ and blowing up algebraically as $\Im z \rightarrow 0$.
Recall that $\Hmu \in \Op (\smeh )$ whenever $\mu>0$ is sufficiently small, thus the same can be said of the ($\eps$-independent) operator $(z-\Hmu)^{-1}$.
With $\Thetaeps := \phi (\Hmueps) - \phi (\Hmu)$ and $W_\eps := \Op ((1-\chi_\eps) w)$, the Helffer-Sj\"ostrand formula and Proposition \ref{prop:sharpfinite} then imply that
\begin{align}\label{eq:thetaeps}
    \Thetaeps = \frac{\mu}{\pi}\int_{\mathbb{C}} \bar{\partial} \tilde{\phi} (z) (z-\Hmueps)^{-1} W_\eps (z-\Hmu)^{-1} d^2 z \in \Op (\mu S(\aver{\xi,\zeta}^{-m})) 
\end{align}
uniformly in $\mu>0$ sufficiently small and $\eps \in (0,1]$.
Since $\phi$ is arbitrary, we also have that $\Thetaepsnot := \phi_0 (\Hmueps) - \phi_0 (\Hmu) \in \Op (\mu S(\aver{\xi,\zeta}^{-m}))$ uniformly in $\mu$ and $\eps$.
Thus if $\mu > 0$ is small enough, then $(1-\Thetaepsnot)^{-1} \in \Op (S (1))$ uniformly in $\eps \in (0,1]$.
Since $\phi (\Hmu) \in \Op (S (\aver{y,\xi,\zeta}^{-\infty}))$, the 
familiar identity
\begin{align*}
    \Thetaeps= (\phi (\Hmu) \Thetaepsnot + \Thetaeps \phi_0 (\Hmu)) (1-\Thetaepsnot)^{-1} 
\end{align*}
implies that $\phi (\Hmueps) = \Thetaeps + \phi (\Hmu) \in \Op (S (\aver{y,\xi,\zeta}^{-\infty}))$ uniformly in $\eps \in (0,1]$.

\medskip

As in the proof of Theorem \ref{thm:stabilityall} (\ref{it:compact}), cyclicity of the trace implies that
\begin{align*}
    \sigma_I (\Hmueps,P,\varphi) - &\sigma_I (\Hmu,P,\varphi) = \\
    &\Tr i [\Psi (\Hmueps),P] (\varphi ' (\Hmueps) - \varphi ' (\Hmu)) - \Tr i [\varphi ' (\Hmu),P] (\Psi (\Hmueps) - \Psi (\Hmu)).
\end{align*}
Thus it suffices to show that $\Tr [A_\eps,P] B_\eps \rightarrow 0$ as $\eps \downarrow 0$, for $A_\eps \in \{\varphi'(\Hmu), \Psi(\Hmueps)\}$ and $B_\eps \in \{\varphi ' (\Hmueps) - \varphi ' (\Hmu), \Psi (\Hmueps) - \Psi (\Hmu)\}$.
In the paragraph above, we showed that $A_\eps \in \Op (S (\aver{y,\xi,\zeta}^{-\infty}))$ uniformly in $\eps \in (0,1]$. The composition calculus then implies that $[A_\eps, P]\in \Op (S (\aver{x,y,\xi,\zeta}^{-\infty}))$ 
uniformly in $\eps \in (0,1]$.
Using \eqref{eq:thetaeps} and the paragraph above it, we see that $B_\eps \in \Op (S (1-\chi_\eps (x,y,\xi,\zeta) + \eps))$ uniformly in $\eps \in (0,1]$.
We thus have $[A_\eps,P] B_\eps \in \Op (S (\aver{x,y,\xi,\zeta}^{-5}(1-\chi_\eps (x,y,\xi,\zeta) + \eps)))$ uniformly in $\eps \in (0,1]$, meaning that (see appendix \ref{sectionPreliminaries})
\begin{align*}
    \norm{[A_\eps,P] B_\eps}_1 &\le C \int_{\mathbb{R}^4} \aver{x,y,\xi,\zeta}^{-5}(1-\chi_\eps (x,y,\xi,\zeta) + \eps) dx dy d\xi d\zeta\\
    &\le
    C \Big(\int_{\{|(x,y,\xi,\zeta)|\ge \eps^{-1}\}} \aver{x,y,\xi,\zeta}^{-5} dx dy d\xi d\zeta + \eps \int_{\mathbb{R}^4} \aver{x,y,\xi,\zeta}^{-5}dx dy d\xi d\zeta \Big) \le C \eps.
\end{align*}
This completes the proof.
\end{proof}
\paragraph{Proof of Theorem \ref{thm:stabilityall} (4).}
We now turn our attention to operators $H$ satisfying \hone.
Recall Proposition \ref{trclass}, which ensures that all the previous results obtained in this section (for $H$ satisfying \hnot) still apply.

\begin{proof}
    Fix $h \in (0,1]$. Since $H_h = \Op (\sym_h)$ with $\sym_h (x,y,\xi,\zeta) = \sym(x,y,h\xi,h\zeta)$, we know that $H_h$ satisfies \hone. Observe also that for any $h' \in (0,1]$, we have $H_h - H_{h'} \in \Op (S^m)$.
    For $h' \in (0,1]$ and $\mu \in [0,1]$, define $\sym_{h,h'}^{(\mu)}:=\sym_h + \mu (\sym_{h'} - \sym_h)$.
    By continuity and ellipticity of $\sym$, we know that whenever $|h'-h|$ is sufficiently small, $\sym_{h,h'}^{(\mu)} \in \smeh$ for all $\mu \in [0,1]$.
    Moreover, \hone\ implies that $\sym_{h,h'}^{(\mu)}(x,y,\xi,\zeta) = \sym_{h,h',\pm}^{(\mu)}(\xi,\zeta)$ whenever $\pm y$ is sufficiently large, where 
    $$\sym_{h,h',\pm}^{(\mu)} (\xi,\zeta) := \sym_\pm (h\xi,h\zeta) + \mu (\sym_\pm (h'\xi h'\zeta)-\sym_\pm (h\xi, h\zeta)).$$
    Since $\sym_\pm$ have a spectral gap in $(E_1,E_2)$, it follows (also from $\sym_{h,h'}^{(\mu)} \in \smeh$ and continuity) that whenever $|h'-h|$ is sufficiently small,
    $\sym_{h,h',\pm}^{(\mu)}$ has a spectral gap in $(E_1 + \Delta, E_2 - \Delta)$ for all $\mu \in [0,1]$, where $\Delta := (E_2-E_1)/4$.
    Thus we have shown that whenever $|h'-h|$ is sufficiently small and $\mu \in [0,1]$, the operator $H^{(\mu)}_{h,h'} := \Op (\sym_{h,h'}^{(\mu)})$ satisfies \hone\ with $(E_1, E_2)$ replaced by $(E_1 + \Delta, E_2 - \Delta)$.
    By Theorem \ref{thm:stabilityall} (\ref{it:bounded}), we conclude that whenever $|h'-h|$ is sufficiently small, 
    $\sigma_I (H^{(\mu)}_{h,h'},P,\varphi_1)$ is independent of $\mu \in [0,1]$, provided $\varphi_1 \in \fs (0,1;E_1+\Delta, E_2-\Delta)$.
    In particular, this means $\sigma_I (H_{h'},P,\varphi_1) = \sigma_I (H_h,P,\varphi_1)$ if $|h'-h|$ is sufficiently small.
    But since $H_{h'}$ satisfies \hone, we know by Proposition \ref{prop:invvarphi} that 
    $\sigma_I (H_{h'},P,\varphi) = \sigma_I (H_h,P,\varphi)$.
    We have thus shown that $\sigma_I (H_{h'},P,\varphi)$ is constant over $h'$ in some small neighborhood of $h$. Since $h \in (0,1]$ was arbitrary, the result is complete.
\end{proof}
\paragraph{Proof of Theorem \ref{thm:stabilityall} (5).}
Finally, we show that the interface edge current is immune to oscillations in the $x$-variable.
\begin{proof}
Let $\chi \in \mathcal{C}^\infty_c (-2,2)$ such that $\chi (x) = 1$ whenever $x \in [-1,1]$. Define $\chi_\eps (x) := \chi (\eps x)$.
For $\eps \in (0,1]$, define $\sigmaeps (x,y,\xi,\zeta) := \sym(x - (x-x_0)\chi_\eps (x-x_0), y, \xi, \zeta)$, so that $\sigmaeps = \sym_{x_0}$ whenever 
$|x-x_0|\le \eps^{-1}$ and $\sigmaeps = \sym$ whenever $|x-x_0| \ge 2 \eps^{-1}$.
Since $\Hepsnot := \Op (\sigmaeps)$ satisfies \hone\ and $\sigmaeps - \sym$ vanishes whenever $\aver{x,y}$ is sufficiently large, 
Proposition \ref{thm:decayxy} implies that $\sigma_I (\Hepsnot, P, \varphi) = \sigma_I (H,P,\varphi)$ for all $\eps \in (0,1]$.

It remains to show that $\sigma_I (\Hepsnot, P, \varphi) - \sigma_I(H_{x_0}, P, \varphi) \rightarrow 0$ as $\eps \downarrow 0$.
To do this, we will emulate the argument used to prove Theorem \ref{thm:stabilityall} (\ref{it:bounded}).
In particular, it suffices to show that
$\Tr [A_\eps,P] B_\eps \rightarrow 0$ as $\eps \downarrow 0$, for $A_\eps \in \{\phi (H_{x_0}), \phi(\Hepsnot)\}$, $B_\eps =\phi (\Hepsnot) - \phi (H_{x_0})$ and $\phi \in \mathcal{C}^\infty_c (E_1, E_2)$.
As before, Proposition \ref{prop:reseps} and the composition calculus imply that $[A_\eps, P] \in \Op (S (\aver{x,y,\xi,\zeta}^{-\infty}))$ uniformly in $\eps \in (0,1]$. By the Helffer-Sj\"ostrand formula,
\begin{align*}
    B_\eps = \frac{1}{\pi}\int_{\mathbb{C}} \bar{\partial} \tilde{\phi} (z) (z-\Hepsnot)^{-1} (H_{x_0} - \Hepsnot) (z-H_{x_0})^{-1} d^2 z. 
\end{align*}
Since $\sym_{x_0} - \sigmaeps$ vanishes whenever $|x-x_0| \le \eps^{-1}$, it follows that $\sym_{x_0} - \sigmaeps \in S ((1-\chi_\eps (x-x_0)+\eps) \aver{\xi,\zeta}^m)$ uniformly in $\eps \in (0,1]$.
By Propositions \ref{prop:sharpfinite} and \ref{prop:reseps} (and the rapid decay of $\bar{\partial} \tilde{\phi}$ near the real axis), we conclude that $B_\eps \in \Op (S (1-\chi_\eps (x-x_0)+\eps))$ uniformly in $\eps \in (0,1]$. Therefore,
\begin{align*}
    \norm{[A_\eps,P] B_\eps}_1 &\le C \int_{\mathbb{R}^4} \aver{x}^{-2} \aver{y,\xi,\zeta}^{-4} (1-\chi_\eps (x-x_0) + \eps) dx dy d\xi d\zeta \le 
    C \int_{\mathbb{R}} \aver{x}^{-2}(1-\chi_\eps (x-x_0) + \eps) dx\\
    &\le
    C \Big(\int_{\{|x-x_0|\ge \eps^{-1}\}} \aver{x}^{-2} dx + \eps \int_{\mathbb{R}} \aver{x}^{-2}dx \Big) \le C \eps,
\end{align*}
and the proof is complete.
\end{proof}

\subsection{Proof of the BEC}
In this section, we prove the analytical formula \eqref{eq:sigmaI1} for the edge current observable. The proof involves a semiclassical expansion and uses the fact that $\sigma_I (H_h)$ is independent of the semiclassical parameter $h$ (Theorem \ref{thm:stabilityall} (4)).
In particular, the resolvent operator $(z-H_h)^{-1}$ will be approximated in the semiclassical limit, thus we will need to understand how bounds on its symbol depend on $h$. We present these useful estimates below in section \ref{subsubsec:res}. Although similar to existing results in, e.g. \cite{Bony,DS,Zworski}, to our knowledge the following propositions cannot be found in the existing literature. 
In section \ref{subsec:main}, We then use these estimates (in particular, Proposition \ref{prop:fc}) to prove Theorem \ref{thm:main}.

\subsubsection{Elliptic operators and resolvent estimates}\label{subsubsec:res}

\begin{proposition}\label{prop:resh}
    Let $\sym\in \smeh$. Then for all $h \in (0,1]$, the operator $H_h := \Op_h (\sym)$ is self-adjoint with domain of definition $\mathcal{H}^m$.
    This means we can define
    $\Op_h (r_{z,h}) := (z-H_h)^{-1}$ 
    whenever $\Im z \ne 0$. 
    Let $Z \subset \mathbb{C}$ be bounded such that $\Im z \ne 0$ for all $z \in Z$. Then 
    there exists $s \in \mathbb{N}$ such that for any 
    $\alpha \in \mathbb{N}^{2d}$,
    \begin{align*}
        |\partial^\alpha r_{z,h}(x,\xi)| \le C_{\alpha} |\Im z|^{-s-|\alpha|} \aver{\xi}^{-m}
    \end{align*}
    uniformly in $z \in Z$ and $h \in (0,1]$.
\end{proposition}
\begin{proof}
    By \cite[Corollary 2 and the paragraph following Theorem 3]{Bony}, we know that $H_h$ is self-adjoint with domain of definition $\mathcal{H}^m$.
    Applying also \cite[the paragraphs between equation (8.11) and Proposition 8.5]{DS}, it follows that
    $(i-H_h)^{-1}\in \Op_h (S(\aver{\xi,\zeta}^{-m}))$ is a bijection of $L^2 (\mathbb{R}^d) \otimes \mathbb{C}^n$ onto $\mathcal{H}^m$.
For $\Im z \ne 0$, we have that $A_{z,h} := 1-(i-z)(i-H_h)^{-1}$ is a bijection of $L^2 (\mathbb{R}^d) \otimes \mathbb{C}^n$ onto itself, with $\norm{A_{z,h}^{-1}} \le C |\Im z|^{-1}$ uniformly in $z \in Z$.
Applying \cite[Proposition 8.4]{DS} to $\Op_h (b_{z,h}) := A^{-1}_{z,h}$, we obtain that
\begin{align*}
    |\partial^\alpha b_{z,h}(x,\xi)| \le C_{\alpha} |\Im z|^{-2d-2-|\alpha|}
\end{align*}
uniformly in $z \in Z$ and $h \in (0,1]$.
The result then follows from Proposition \ref{prop:sharpfinite} (with $N=0$) and the fact that
$(z-H_h)^{-1} = (i-H_h)^{-1} A^{-1}_{z,h}$.
\end{proof}

\begin{proposition}\label{prop:reseps}
    Let $\sym\in \smeh$ and $\mathcal{W} \subset \mathcal{C}^\infty (\mathbb{R}^{2d}; \mathbb{M}_n)$.
    Suppose that for any $(\alpha,\beta) \in \mathbb{N}^{2d}$,
    \begin{align}\label{eq:unifw}
    |\partial^\alpha_x \partial^\beta_\xi w (x,\xi)| \le C_\alpha \aver{\xi}^{m-|\beta|}
    \end{align}
    uniformly in $w \in \mathcal{W}$ (meaning that $\mathcal{W} \subset S^m$).
    For $\mu \in [0,1]$ and $w \in \mathcal{W}$, define $H_{\mu,w} := \Op (\sym+ \mu w)$.
    Then there exists $\mu_0 \in (0,1]$ such that 
    the following conditions hold:
    \begin{enumerate}
        \item If $w \in \mathcal{W}$ and $\mu \in [0,\mu_0]$, then $H_{\mu,w}$ is self-adjoint with domain of definition $\mathcal{H}^m$.
        \item For $\Im z \ne 0$, define $\Op (r_{z,\mu,w}) := (z-H_{\mu,w})^{-1}$.
        Let $Z \subset \mathbb{C}$ be bounded such that $\Im z \ne 0$ for all $z \in Z$. Then there exists $s \in \mathbb{N}$ such that
        for any $\alpha \in \mathbb{N}^{2d}$,
        \begin{align*}
        |\partial^\alpha r_{z,\mu,w}(x,\xi)| \le C_\alpha |\Im z|^{-s-|\alpha|} \aver{\xi}^{-m}
        \end{align*}
        uniformly in $z \in Z$, $w \in \mathcal{W}$ and $\mu \in [0, \mu_0]$.
    \end{enumerate}
\end{proposition}

\begin{proof}
    Define $\sigmamuw := \sym+ \mu w$. For {\it 1.}: The uniform bounds \eqref{eq:unifw} 
        imply that whenever $\mu > 0$ is sufficiently small, $\sigmamuw \in \smeh$ for all $w \in \mathcal{W}$.
        Hence in this case, $H_{\mu,w}$ is self-adjoint with domain of definition $\mathcal{H}^m$.

        For {\it 2.}: We write $(i-H_{\mu,w})^{-1} = (1 + (i-H_{\mu,w})^{-1} \mu W) (i-H)^{-1}$, which implies
        \begin{align*}
            (i-H_{\mu,w})^{-1} (1 - \mu W (i-H)^{-1}) = (i-H)^{-1},
        \end{align*}
        where $W := \Op_h (w)$ and $H := \Op (\sym)$.
        By \eqref{eq:unifw} and Proposition \ref{prop:CV}, we conclude that whenever $\mu > 0$ is sufficiently small, $A_{\mu,w} := 1 - \mu W (i-H)^{-1}$ is a bijection of $L^2 (\mathbb{R}^d) \otimes \mathbb{C}^n$ onto itself, with $A_{\mu,w}^{-1} \in \Op (S (1))$ uniformly in $\mu$ and $w$.
        Since $(i-H)^{-1} \in \Op (S (\aver{\xi}^{-m}))$ by Proposition \ref{prop:resh}, it follows that $(i-H_{\mu,w})^{-1} \in \Op (S (\aver{\xi}^{-m}))$ uniformly in $\mu$ and $w$. The estimates for
        $r_{z,\mu,w}$ then follow from the same argument that was used in the proof of Proposition \ref{prop:resh} (after it was shown that $(i-H_h)^{-1}\in \Op_h (S(\aver{\xi,\zeta}^{-m}))$ there).
\end{proof}

\begin{proposition}\label{prop:fc}
    Let $\sym\in \smeh$ and define $H_h := \Op_h (\sym)$ for $h \in (0,1]$.
    Let $\phi \in \mathcal{C}^\infty_c (E_1, E_2)$ 
    and define $\Op_h (\nu_h) := \phi (H_h)$.
    Let  $\ofn: \mathbb{R}^{2d} \rightarrow [0,\infty)$ be any order function such that $\nu_h \in S (\ofn^{-\infty})$ and all eigenvalues of $\sym$ lie outside the interval $(E_1, E_2)$ whenever $\ofn$ is sufficiently large.
    For $z \in \mathbb{C}$, define $\sym_z := z-\sym$. For $N \in \mathbb{N}_+$, define $q_{z,h,N}$ recursively by
    $q_{z,h,1} := \sym_z^{-1}$ and
    \begin{align*}
        q_{z,h,N} = \sym_z^{-1} \Big(1-\sum_{j=1}^{N-1}
        \Big(\frac{(ih(D_\xi\cdot D_y - D_x \cdot D_\eta)/2)^j}{j!} \sym_z (x,\xi) q_{z,h,N-j} (y,\eta) \Big) |_{y=x,\eta=\xi}\Big), \quad N\ge 2.
    \end{align*}
    Then for all $N \in \mathbb{N}$,
    \begin{align}\label{eq:qzhn}
        \nu_h + \frac{1}{\pi} \int_\mathbb{C} \bar{\partial} \tilde{\phi} (z) q_{z,h,N} {\rm d}^2 z \in S^{-N+1/2} (\ofn^{-\infty}).
    \end{align}
\end{proposition}
\begin{proof}
Note that
$\supp (\tilde{\phi}) \subset (E_1, E_2) \times (-M,M) =: Z$ for some $0<M< \infty$, 
hence the integral over $\mathbb{C}$ in \eqref{eq:qzhn} can be restricted to an integral over $Z$.
    By assumption, for any $N \in \mathbb{N}$, $z \mapsto q_{z,h,N}$ is analytic in $z \in Z$ so long as $\ofn (x,\xi)$ is sufficiently large (independent of $h$). Integrating by parts, 
    we conclude that $\int_Z \bar{\partial} \tilde{\phi} (z) q_{z,h,N} {\rm d}^2 z$ vanishes whenever $\ofn (x,\xi)$ is sufficiently large. This implies
    \begin{align*}
        \nu_h + \frac{1}{\pi} \int_Z \bar{\partial} \tilde{\phi} (z) q_{z,h,N} {\rm d}^2 z \in S(\ofn^{-\infty}).
    \end{align*}
    The result will follow from an $h$-dependent bound of the above left-hand side and interpolation.
    Namely, it suffices to show that
    \begin{align}\label{eq:suff}
        \nu_h + \frac{1}{\pi} \int_Z \bar{\partial} \tilde{\phi} (z) q_{z,h,N} {\rm d}^2 z \in S^{-N+1/4}(1).
    \end{align}
    By the Helffer-Sj\"ostrand formula, the above left-hand side equals
    \begin{align*}
        -\frac{1}{\pi} \int_\mathbb{C} \bar{\partial} \tilde{\phi} (z) r_{z,h} {\rm d}^2 z+\frac{1}{\pi} \int_Z \bar{\partial} \tilde{\phi} (z) q_{z,h,N} {\rm d}^2 z
        =\frac{1}{\pi} \int_Z \bar{\partial} \tilde{\phi} (z) (q_{z,h,N}-r_{z,h}) {\rm d}^2 z,
    \end{align*}
    where $\Op_h (r_{z,h}) := (z-H_h)^{-1}$. 
    Therefore, given the rapid decay of $\bar{\partial} \tilde{\phi}$ near the real axis, \eqref{eq:suff} holds if there exists $s \in \mathbb{N}$
    such that for all $\alpha \in \mathbb{N}^d$ and $N \in \mathbb{N}$,
    \begin{align}\label{eq:qr}
        |\partial^\alpha (q_{z,h,N} - r_{z,h})| \le C_{\alpha, N} |\Im z|^{-(2N+2s+|\alpha|+1)} h^N
    \end{align}
    uniformly in $z \in Z$ and $h \in (0,1]$.
    We will now prove \eqref{eq:qr} by induction.

    Using that $\sym_z \sharp_h r_{z,h} = 1$ for all $h \in (0,1]$, it follows from Proposition \ref{prop:sharpfinite} that $$|\partial^\alpha (1-\sym_z r_{z,h})| \le C_{\alpha,1} \sum_{j=0}^{|\alpha|} \sbd_{2+s+j}(\sym_z, \aver{\xi}^m) \sbd_{2+s+|\alpha|-j}(r_{z,h}, \aver{\xi}^{-m}) h,$$
    hence
    \begin{align*}
        |\partial^\alpha (\sym_z^{-1} - r_{z,h})| &= |\partial^\alpha (\sym_z^{-1} (1-\sym_z r_{z,h}))| \\
        &\le C_{\alpha,1} \sum_{j=0}^{|\alpha|} \sum_{k=0}^{|\alpha|-j} \sbd_{j} (\sym_z^{-1}, \aver{\xi}^{-m}) \sbd_{2+s+k}(\sym_z, \aver{\xi}^m) \sbd_{2+s+|\alpha|-j-k}(r_{z,h}, \aver{\xi}^{-m}) h
    \end{align*}
    uniformly in $z \in Z$ and
    $h \in (0,1]$.
    Observe that for $k \in \mathbb{N}$, 
    $$\sbd_{k} (\sym_z^{-1}, \aver{\xi}^{-m}) \le C |\Im z|^{-1-k}, \qquad \sbd_{k}(\sym_z, \aver{\xi}^m) \le C$$ 
    uniformly in $z \in Z$.
    Since 
    $\sbd_{k}(r_{z,h}, \aver{\xi}^{-m}) \le C |\Im z|^{-s-k}$ 
    by Proposition \ref{prop:resh},
    we have verified \eqref{eq:qr} when $N=1$.
     Now, fix $N \in \mathbb{N}$ and suppose 
     that for all $k \in \{1, \dots, N\}$,
     \begin{align*}
         |\partial^\alpha (q_{z,h,k} - r_{z,h})| \le C_{\alpha, k} |\Im z|^{-(2k+2s+|\alpha|+1)} h^k
     \end{align*}
     uniformly in $z \in Z$ and $h \in (0,1]$.
    Then
    \begin{align*}
        q_{z,h,N+1} - r_{z,h} 
        &= \sym_z^{-1} \Big(1-\sym_z r_{z,h} -\sum_{j=1}^{N}
        \Big(\frac{(ih(D_\xi\cdot D_y - D_x \cdot D_\eta)/2)^j}{j!} \sym_z (x,\xi) q_{z,h,N+1-j} (y,\eta) \Big) |_{y=x,\eta=\xi}\Big)\\
        &=
        \sym_z^{-1} \Big(1-\sum_{j=0}^{N}
        \Big(\frac{(ih(D_\xi\cdot D_y - D_x \cdot D_\eta)/2)^j}{j!} \sym_z (x,\xi) r_{z,h} (y,\eta) \Big) |_{y=x,\eta=\xi}\Big)\\
        &\quad +\sym_z^{-1} \sum_{j=1}^{N}
        \Big(\frac{(ih(D_\xi\cdot D_y - D_x \cdot D_\eta)/2)^j}{j!} \sym_z (x,\xi) (r_{z,h} (y,\eta) -q_{z,h,N+1-j} (y,\eta)) \Big) |_{y=x,\eta=\xi}
    \end{align*}
    which we write as $t_1 + t_2$.
    Proposition \ref{prop:sharpfinite} implies that
    \begin{align*}
    |\partial^\alpha t_1| &\le C_{\alpha,N+1}\sum_{j=0}^{|\alpha|} \sum_{k=0}^{|\alpha|-j}
    \sbd_{j} (\sym_z^{-1}, \aver{\xi}^{-m}) \sbd_{2(N+1)+s+k} (\sym_z, \aver{\xi}^m) \sbd_{2(N+1)+s+|\alpha|-j-k} (r_{z,h}, \aver{\xi}^{-m})  h^{N+1}\\
    &\le C_{\alpha,N+1} |\Im z|^{-(2(N+1)+2s+|\alpha|+1)} h^{N+1}.
    \end{align*}
    Define $t_2 =: \sym_z^{-1} \sum_{j=1}^N t_{2,j}$. By our inductive hypothesis,
    \begin{align*}
        |\partial^\alpha t_{2,j}| &\le C_{\alpha,j} h^j \sum_{i=0}^{|\alpha|+j} \sbd_{i} (\sym_z,\aver{\xi}^m) \sbd_{|\alpha|+j-i} (r_{z,h} - q_{z,h,N+1-j},1)\aver{\xi}^m\\
        &\le C_{\alpha,j} |\Im z|^{-(2(N+1-j)+2s+|\alpha|+j+1)} h^{N+1}\aver{\xi}^m
        =C_{\alpha,j} |\Im z|^{-(2(N+1)+2s+|\alpha|+1-j)} h^{N+1}\aver{\xi}^m\\
        &\le C_{\alpha,j} |\Im z|^{-(2(N+1)+2s+|\alpha|)} h^{N+1}\aver{\xi}^m.
    \end{align*}
    It follows that
        $|\partial^\alpha (\sym_z^{-1} t_{2,j})| \le C_{\alpha,j} |\Im z|^{-(2(N+1)+2s+|\alpha|+1)} h^{N+1}$.
    for all $j \in \{1, \dots, N\}$.
    We have thus 
    verified \eqref{eq:qr} for all $N \in \mathbb{N}$, and the proof is complete.
\end{proof}

\subsubsection{Proof of Theorem \ref{thm:main}}\label{subsec:main}

\begin{proof}
By Proposition \ref{prop:invvarphi}, $\sigma_I (H,P,\varphi)$ is independent of $\varphi \in \fs (0,1;E_1,E_2)$, thus we can take $\varphi ' \in \mathcal{C}^\infty_c (\anot, \alpha)$ for some $\anot > E_1$. 
Moreover, by Theorem \ref{thm:stabilityall} (5), we can without loss of generality assume that  $\tilde{\sym} (x,y,\xi,\zeta) =\sym(y, \xi, \zeta)$ for all $(x,y,\xi,\zeta) \in \mathbb{R}^4$.
Theorem \ref{thm:stabilityall} (4) states that with $H_h := \Op_h (\tilde{\sym})$, $\sigma_I (H_h,P,\varphi)$ is independent of $h \in (0,1]$. Thus we will expand $\sigma_I (H_h, P, \varphi)$ in powers of $h$ and ignore terms that are not $O(1)$.
We will use the shorthand $\sigma_I := \sigma_I (H_h,P,\varphi)$. 

Let $\Op_h \nu_h := \varphi ' (H_h)$.
By Proposition \ref{prop:fc}, we have
\begin{align}\label{eq:nuh}
        \nu_h + \frac{1}{\pi} \int_\mathbb{C} \bar{\partial} \tilde{\varphi '} (\zp) \tilde{q}_{\zp,h} d^2 \zp \in S^{-3/2} (\aver{y,\xi,\zeta}^{-\infty}),
\end{align}
where
\begin{align*}
    \tilde{q}_{\zp,h} = \sym_\zp^{-1} +\frac{ih}{2} \{\sym_\zp^{-1}, \sym_\zp\}_{\zeta,y} \sym_\zp^{-1}, \qquad \{a,b\}_{\zeta,y} := \partial_\zeta a \partial_y b - \partial_y a \partial_\zeta b, 
\end{align*}
and the right-hand side of \eqref{eq:nuh} is defined by \eqref{eq:symbolm}.
With $\Op _h (\kappa_{h}) := [H_h,P]$, we have that
\begin{align*}
    \kappa_{h} + ih \kone - \frac{h^2}{4} \ktwo \in S^{-3} (\aver{x}^{-\infty} \aver{\xi, \zeta}^m), \qquad \kone := \partial_\xi \sym P'(x),
    \qquad \ktwo := \partial_{\xi \xi} \sym P''(x).
\end{align*}
Since $\nu_h \in S (\aver{y,\xi,\zeta}^{-\infty})$ and $\kappa_h \in S^{-1} (\aver{x}^{-\infty} \aver{\xi,\zeta}^m)$,
the composition calculus implies that
\begin{align*}
    \kappa_h \sharp_h \nu_h - \kappa_h \nu_h + \frac{ih}{2} \{\kappa_h, \nu_h\} \in S^{-3} (\aver{x,y,\xi,\zeta}^{-\infty}),
    \qquad \{a,b\} := \partial_\xi a \partial_x b + \partial_\zeta a \partial_y b - \partial_x a \partial_\xi b - \partial_y a \partial_\zeta b,
\end{align*}
with $S^{-3}$ (rather than $S^{-2}$) above because $\kappa_h$ is $O(h)$ in $S(\aver{x}^{-\infty} \aver{\xi,\zeta}^m)$.
Therefore,
\begin{align*}
    \sigma_I &= \frac{i}{(2\pi h)^2} \tr \int_{\mathbb{R}^4}\kappa_h \sharp_h \nu_h d R_4 
    = \frac{i}{(2\pi h)^2} \tr \int_{\mathbb{R}^4}\Big(\kappa_h \nu_h - \frac{ih}{2} \{\kappa_h, \nu_h\}\Big)dR_4+o(1)
    \end{align*}
    as $h \rightarrow 0$, with $dR_4:= d x d y d \xi d \zeta$.
    Since
    \begin{align*}
        \kappa_h \nu_h = (ih \kone -\frac{h^2}{4} \ktwo)\frac{1}{\pi}\int_{\mathbb{C}}\bar{\partial} \tilde{\varphi '} (\zp) \Big(\sym_\zp^{-1} +\frac{ih}{2} \{\sym_\zp^{-1}, \sym_\zp\}_{\zeta,y} \sym_\zp^{-1}\Big) d^2\zp + h^{5/2}a_h, \qquad a_h \in S (\aver{x,y,\xi,\zeta}^{-\infty})
    \end{align*}
    and
    \begin{align*}
        \{\kappa_h, \nu_h\} = \Big\{ ih \kone, \frac{1}{\pi}\int_{\mathbb{C}}\bar{\partial} \tilde{\varphi '} (\zp) \sym_\zp^{-1} d^2 \zp\Big\} + h^2 b_h, \qquad b_h \in S (\aver{x,y,\xi,\zeta}^{-\infty}),
    \end{align*}
    it follows that
    \begin{align*}
    \sigma_I
    =
    \frac{i}{(2\pi h)^2}\frac{1}{\pi} \tr \int_{\mathbb{R}^4}\int_\mathbb{C} \bar{\partial} \tilde{\varphi '} (\zp) \Big( ih \kone \sym_\zp^{-1}
    - \frac{h^2}{2}\kone \{\sym_\zp^{-1}, \sym_\zp\}_{\zeta,y}\sym_\zp^{-1}
    - \frac{h^2}{4} \ktwo \sym_\zp^{-1} +\frac{h^2}{2} \{ \kone, \sym_\zp^{-1} \}\Big) d^2 \zp d R_4 + o(1)
\end{align*}
as $h \rightarrow 0$. 
Since $\sigma_I$ is independent of $h$, it follows that the $O(h^{-1})$ term above vanishes, and thus
\begin{align}\label{eq:sigmaDiv}
    \sigma_I = \frac{i}{(2\pi)^3}\tr \int_{\mathbb{R}^4} \int_{\mathbb{C}}
    \bar{\partial} \tilde{\varphi '} (\zp) \Big( -\kone \sym_\zp^{-1} \{\sym_\zp, \sym_\zp^{-1}\}_{\zeta,y} -\frac{1}{2}k_2 \sym_\zp^{-1}
    +\{ \kone, \sym_\zp^{-1} \}\Big) d^2 \zp d R_4.
\end{align}
Observe that whenever $(y,\xi,\zeta) \notin R$, 
$\zp \mapsto \sym_\zp^{-1}$ is holomorphic and thus the above integral over $\zp$ vanishes (this is verified via an integration by parts in $\bar{\partial}$).
Since $k_1$ and $k_2$ vanish whenever 
$|x|$ is sufficiently large, the integration region $\mathbb{R}^4$ in \eqref{eq:sigmaDiv} can be replaced by the 
volume $I \times R$, with $I \subset \mathbb{R}$ a bounded interval.

Recalling that $\partial_x \sym = 0$, 
it follows that 
$\{ \kone, \sym_\zp^{-1} \} = P'(x)\{ \partial_\xi \sym, \sym_\zp^{-1} \}_{\zeta,y}-P''(x) \partial_\xi \sym\partial_\xi \sym_\zp^{-1}$ in \eqref{eq:sigmaDiv}.
Using that $\int P' = 1$ and $\int P''=0$, we apply Fubini's Theorem and integrate \eqref{eq:sigmaDiv} in $x$ to obtain
\begin{align}\label{eq:x0}
    \sigma_I = \frac{i}{(2\pi)^3}\tr \int_{R} \int_{\mathbb{C}}
    \bar{\partial} \tilde{\varphi '} (\zp) \Big( \partial_\xi \sym_\zp \sym_\zp^{-1} \{\sym_\zp, \sym_\zp^{-1}\}_{\zeta,y}
    -\{ \partial_\xi \sym_\zp, \sym_\zp^{-1} \}_{\zeta,y} \Big) d^2 \zp d R_3,
\end{align}
with 
$dR_3 := dy d\xi d\zeta$.
At this point, we 
use the identity 
$\{a,b\}_{\zeta, y} = \partial_\zeta (a \partial_y b) - \partial_y (a \partial_\zeta b)$ and integration by parts in $(\zeta,y)$ to write
\begin{align}\label{eq:divform}
    \int_{R} \int_{\mathbb{C}}
    \bar{\partial} \tilde{\varphi '} (\zp) \{ \partial_\xi \sym_\zp, \sym_\zp^{-1} \}_{\zeta,y}d^2 \zp d R_3 = 
    \int_{\partial R} \int_{\mathbb{C}}
    \bar{\partial} \tilde{\varphi '} (\zp) \partial_\xi \sym_\zp (\partial_y \sym_\zp^{-1} \nu_\zeta - \partial_\zeta \sym_\zp^{-1} \nu_y)d^2 \zp d \Sigma.
\end{align}
Since $\zp \mapsto \sym_\zp^{-1}$ is holomorphic whenever $(y,\xi,\zeta) \in \partial R$, an integration by parts in $\bar{\partial}$ reveals that the right-hand side of \eqref{eq:divform} vanishes.
Thus we are left with
\begin{align*}
    \sigma_I = \frac{i}{(2\pi)^3}\tr \int_{R} \int_{\mathbb{C}}
    \bar{\partial} \tilde{\varphi '} (\zp)\partial_\xi \sym_\zp \sym_\zp^{-1} \{\sym_\zp, \sym_\zp^{-1}\}_{\zeta,y}
    d^2 \zp d R_3.
\end{align*}
Integrating by parts in $\bar{\partial}$ with $\zp =: \lambda+i\omega$, we see that
\begin{align}\label{eq:final0}
    \sigma_{I} = \frac{1}{2 (2\pi)^3} \tr \int_{R} \int_{\anot}^{\alpha} \varphi ' (\lambda) \partial_\xi \sym_\zp \sym_\zp^{-1} \{\sym_\zp, \sym_\zp^{-1}\}_{\zeta,y} \Big \vert ^{\omega = 0^+}_{\omega=0^-} d\lambda dR_3.
\end{align}
Since $\partial_\xi \sym_\zp \sym_\zp^{-1} \{\sym_\zp, \sym_\zp^{-1}\}_{\zeta, y}\rightarrow 0$ as $|\omega| \rightarrow \infty$, we have
\begin{align*}
    \partial_\xi \sym_\zp \sym_\zp^{-1} \{\sym_\zp, \sym_\zp^{-1}\}_{\zeta, y}\Big \vert^{\omega=0^+}_{\omega =0^-} = -\int_{- \infty}^{+\infty} \partial_\omega (\partial_\xi \sym_\zp \sym_\zp^{-1} \{\sym_\zp, \sym_\zp^{-1}\}_{\zeta, y}) d\omega,
\end{align*}
with the above integral understood as a principal value about the origin.
Cyclicity of the trace and the fact that $\partial_\omega \sym_\zp = i$ imply that
\begin{align}\label{eq:omegaeps}
    \tr \partial_\omega (\partial_\xi \sym_\zp \sym_\zp^{-1} \{\sym_\zp, \sym_\zp^{-1}\}_{\zeta, y}) = -i \tr \eps_{ijk} \partial_k (\sym_\zp^{-1} \partial_i \sym_\zp \sym_\zp^{-1} \partial_j \sym_\zp \sym_\zp^{-1}),
\end{align}
where $\eps_{ijk}$ is the anti-symmetric tensor with $\eps_{123} = 1$, and the variables are identified by $(1,2,3) = (y,\xi,\zeta)$.
Pulling $\partial_k$ out of the integral over $\omega$ and integrating by parts, we get
\begin{align*}
    \tr \int_{R}\partial_\xi \sym_\zp \sym_\zp^{-1} \{\sym_\zp, \sym_\zp^{-1}\}_{\zeta, y} \Big \vert^{\omega=0^+}_{\omega = 0^-} dR_3 =i \int_{\partial R} \int_{-\infty}^{+\infty} \Theta_\zp d\omega d\Sigma, 
\end{align*}
where we 
recall the definition of $\Theta_\zp$ in \eqref{eq:sigmaI1}, and note that the above right-hand side is now defined as a Legesgue integral since $\Theta_\zp$ is defined on $\partial R$ for all $\omega \in \mathbb{R}$.
Thus we have shown that
\begin{align*}
    \sigma_{I} = \frac{i}{16\pi^3} \int_{[\anot,\alpha]} \varphi ' (\lambda)\int_{\partial R} \int_{-\infty}^{+\infty} \Theta_\zp d\omega d\Sigma d\lambda.
\end{align*}
Integrating by parts in $\lambda$, we obtain
\begin{align*}
    \sigma_{I} = \frac{i}{16\pi^3} \int_{\partial R} \int_{-\infty}^{+\infty} \Theta_z d\omega d\Sigma,
\end{align*}
with now $z = \alpha + i\omega$ in the above integrand. 
The fact that only the boundary term survives follows from
analyticity of $\Theta_\zp$ in $\zp$ over the region of integration (so that $\partial_\lambda \Theta_\zp = -i \partial_\omega \Theta_\zp$). 
This completes the proof.
\end{proof}

\section{Proofs of main results on a periodic domain} \label{sec:pfs_periodic}
Let us introduce the following Hilbert spaces
\begin{align*}
    \mathcal{H}^{j}(X) := 
    \{\Psi \in L^2 (X) \otimes \mathbb{C}^n \quad | \quad \partial_\alpha \Psi \in L^2 (X) \otimes \mathbb{C}^n \quad \forall \ |\alpha| \le j\},
\end{align*}
where 
$j \in \mathbb{N}$ and
$X \in\{\mathbb{R}^d, \mathbb{T}^d\}$ with $d \in \{1,2\}$.
For $X = \mathbb{T}^d$ and a parameter $\kp \in (0,1]$, we define on $\mathcal{H}^{j}(X)$ the inner products and norms
\begin{align*}
    \aver{f,g}_{\kp,j} :=\sum_{|\alpha| \le j} \aver{\kp^{|\alpha|} D^\alpha f, \kp^{|\alpha|} D^\alpha g}, \qquad \norm{f}_{\kp,j} := \aver{f,f}_{\kp,j}^{1/2},
\end{align*}
with $\aver{\cdot, \cdot}$ the standard inner product in $L^2 (\mathbb{T}^d) \otimes \mathbb{C}^n$ .
For $X = \mathbb{R}^d$, 
the corresponding inner product and norm are
\begin{align*}
    (f,g)_{j} := \sum_{|\alpha| \le j} (D^\alpha f, D^\alpha g), \qquad 
    \vertiii{f}_j := (f,f)_j^{1/2},
\end{align*}
with $(\cdot, \cdot)$ the standard inner product in $L^2 (\mathbb{R}^d) \otimes \mathbb{C}^n$.
We will use the shorthand
\begin{align*}
    \mathcal{H} (X) := \mathcal{H}^0 (X), \qquad \mathcal{H} := \mathcal{H}(\mathbb{R}^2), \qquad \aver{\cdot, \cdot}_j := \aver{\cdot, \cdot}_{1,j}, \qquad \aver{\cdot, \cdot} := \aver{\cdot, \cdot}_0, \qquad (\cdot, \cdot) := (\cdot, \cdot)_0.
\end{align*}

We will repeatedly need to rescale functions with $\kp$, embed functions on $\mathbb{R}$ in the torus, and extend functions on $\mathbb{T}$ as functions on $\mathbb{R}$.
We define these operations as follows.
For $\kp \in (0, 1]$ and $y_1 \in \mathbb{R}$, define the unitary map $\ul_{\kp,y_1} : \mathcal{H}(\mathbb{R}) \rightarrow \mathcal{H}(\mathbb{R})$ by
\begin{align*}
    (\ul_{\kp, y_1} (u))(y) := \kp^{1/2} u(\kp(y-y_1) + y_1), \qquad y \in \mathbb{R}.
\end{align*}
Observe that $\ul_{\kp,y_1}^{-1} = \ul_{\kp^{-1}, y_1}$.

Let $L^\infty_{c, 2\pi} (\mathbb{R})$ denote the space of bounded functions $f \in L^\infty_c (\mathbb{R})$ such that
$\supp (f)\in (\tau,\tau+2\pi)$ for $\tau\in\Rm$. 
We define by $f_\sharp(y)=\sum_{q\in\Zm}f(2\pi q+y)$ their periodization (which is smooth when $f$ is smooth) and then  $\mathcal{P}: L^\infty_{c,2\pi}(\mathbb{R}) \rightarrow L^\infty(\mathbb{T})$ 
by
\begin{align*} 
    \mathcal{P}u (y) = 
    u_\sharp(y) = \sum_{q\in\Zm} u (2\pi q+y).
\end{align*}
Thus, $\mathcal{P}$ is an embedding of functions on $\mathbb{R}$ with sufficiently small support to functions on $\mathbb{T}$.


%
We 
define $\tilde \mP$ as the operator mapping a function in $u\in L^\infty(\mathbb{T})$ to $L^\infty_c(\Rm)$ by $$\tilde \mP u(y)=\chi_{[-\pi,\pi)}(y) u((y+\pi)\ {\rm mod} \ 2\pi -\pi)$$ with $\chi_I$ the indicatrix function of $I\subset\Rm$.
%
%
For $u$ vector-valued, $\mathcal{P} u$  and $\tilde{\mathcal{P}} u$ are defined as above component-wise.

Recall the definition of $\hat{H} (\xi)$ 
in \eqref{eq:hatH}.
We will show with Proposition \ref{acProp} below that the spectrum of $\hat{H} (\xi)$ in the interval $[0,E^2)$ consists entirely of eigenvalues. The same holds for the operator $\hat{H}'(\xi)$ (also defined in section \ref{subsec:construction}).
We will henceforth denote the \emph{combined} eigenvalues of $\hat{H} (\xi)$ and $\hat{H}'(\xi)$ in the interval $[0,E^2)$ by $\mu_1 (\xi) \le \mu_2 (\xi) \le \dots$; 
that is, each $\mu_j (\xi)$ is either an eigenvalue of $\hat{H} (\xi)$ or $\hat{H}'(\xi)$, with a given eigenvalue counted twice if it is an eigenvalue of both $\hat{H} (\xi)$ and $\hat{H}'(\xi)$. The corresponding eigenfunctions will be denoted by $\psi_{j,\xi}$, and chosen such that all eigenfunctions corresponding to the same operator are orthonormal.
Similarly, $\mu_{\kp,1} (\xi), \mu_{\kp,2} (\xi), \dots$ will denote the eigenvalues of $\hat{H}_\kp (\xi)$, with $\theta_{\kp,j,\xi}$ the corresponding orthonormalized eigenfunctions.

\subsection{Proof of Theorem \ref{periodicApprox}}\label{ref:subsec_pf_periodicApprox}
Recall that Theorem \ref{periodicApprox} assumes that the Hamiltonian $H$ satisfies hypothesis \hone\ and is a differential operator with structure given by \eqref{H2}.
Since the Hamiltonian $H - \frac{E_1+ E_2}{2}$ satisfies the same assumptions as $H$, we can take $-E_1 = E_2 =: E>0$ without loss of generality.
To simplify the 
proof of Theorem \ref{periodicApprox}, we will assume that $\varphi '$ is even, so we can
let $\Upsilon \in \mathcal{C}^\infty_c (-1, E^2-\delta_0)$ for some $\delta_0 > 0$, such that
$\varphi ' (x) = \Upsilon (x^2)$ for all $x \in \mathbb{R}$.
Recall that by 
Theorem \ref{thm:stabilityall} (\ref{it:P}), the infinite-space edge current $\sigma_I (H)$ is independent of $\varphi \in \fs (0,1; -E,E)$.

Before proving Theorem \ref{periodicApprox}, we will need to establish several results regarding elliptic partial differential operators on the torus. We show with Proposition \ref{sa} that the periodic operator $H_\kp$ is self-adjoint. Relevant spectral properties of the periodic and infinite-space operators $H_\kp$ and $H$ are then stated in Propositions \ref{evalsInfty} and \ref{acProp}. Proposition \ref{propEvalApprox} and Lemma \ref{switchBasis} then establish the convergence of the periodic eigenelements to their infinite-space analogues. Using these results, we prove Theorem \ref{periodicApprox} at the end of this section. 

\medskip

We begin by showing that $H_\kp$ is self-adjoint \cite{RS}. We will use the self-adjointness of $H_\kp$ to approximate its spectrum by the spectrum of $H$, which will be fundamental to our proof of Theorem \ref{periodicApprox}. The self-adjointness of $H_\kp$ will also ensure the implicit assertion that $\varphi' (H_\kp)$ is well defined.
We prove that $H_\kp$ is self-adjoint by creating a continuous path between $H_\kp$ and a corresponding differential operator with constant coefficients. To this end, define
\begin{align*}
    H_{\kp,\mu} := \kp^m \Big(M_0 D^m_y + M_m D^m_x\Big ) + \mu \Big(\kp^m \sum_{j=0}^m \cp_{\kp,j} (y) D^j_x D^{m-j}_y
    + \sum_{i+j \le m-1} \kp^{i+j} \cp_{\kp,ij} (y) D^i_x D^j_y\Big)
\end{align*}
for $\mu \in [0,1]$.
Note that $H_{\kp, \mu}$ is symmetric since $H_\kp$ is.

\begin{proposition} \label{periodicEllipticity}
There exist positive constants $C_1$ and $C_2$ such that
for all $f \in \mathcal{H}^{m}(\mathbb{T}^2)$,
\begin{align*}
    \norm{H_{\kp,\mu} f}^2 \ge C_1 \norm{f}^2_{\kp,m} - C_2 \norm{f}^2
\end{align*}
uniformly in $\kp \in (0,\kp_0]$ and $\mu \in [0,1]$. 
\end{proposition}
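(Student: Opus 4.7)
The plan is to exploit the symmetry of $H_{\lambda,\mu}$ so that $\|H_{\lambda,\mu}f\|^2 = \langle H_{\lambda,\mu}^2 f,f\rangle$, and then derive the inequality from a semiclassical G{\aa}rding bound on $H_{\lambda,\mu}^2$. Viewed as a semiclassical differential operator on $\mathbb{T}^2$ with small parameter $\lambda$, the principal symbol of $H_{\lambda,\mu}$ is the order-$m$ Hermitian-matrix-valued polynomial
\[
h_{\lambda,\mu}^{(m)}(y,\xi,\zeta) = M_0 \zeta^m + M_m \xi^m + \mu \sum_{j=0}^m a_{\lambda,j}(y)\,\xi^j\zeta^{m-j},
\]
while the subprincipal remainder $R_{\lambda,\mu} := \mu\sum_{i+j\leq m-1}\lambda^{i+j}a_{\lambda,ij}(y)D_x^i D_y^j$ has semiclassical order at most $m-1$.

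The key algebraic step is to expand $(h_{\lambda,\mu}^{(m)})^2$ into six pieces: the squares $M_0^2 \zeta^{2m}$ and $M_m^2 \xi^{2m}$; the anti-commutator cross terms $\{M_0,M_m\}\xi^m\zeta^m$, $\mu\{M_0,a_{\lambda,j}\}\xi^j\zeta^{2m-j}$ and $\mu\{M_m,a_{\lambda,j}\}\xi^{m+j}\zeta^{m-j}$; and the full square $\mu^2\bigl(\sum_j a_{\lambda,j}(y)\xi^j\zeta^{m-j}\bigr)^2$. The parity conditions in \eqref{anticommutation} force each anti-commutator cross term to be a non-negative Hermitian matrix multiplied by a monomial whose value is non-negative (the anti-commutator vanishes precisely when the associated monomial could change sign); the last piece is non-negative as the square of a Hermitian matrix; and the first two give $M_0^2\zeta^{2m}+M_m^2\xi^{2m} \geq c_0(\xi^{2m}+\zeta^{2m})\,I$ because $M_0,M_m$ are invertible Hermitian. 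Hence $(h_{\lambda,\mu}^{(m)})^2 \geq c_0(\xi^{2m}+\zeta^{2m})\,I$ uniformly in $\lambda$, $\mu$, and $y$.

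Passing to the operator level, a composition expansion produces $H_{\lambda,\mu}^2 = \Op((h_{\lambda,\mu}^{(m)})^2) + E_{\lambda,\mu}$, where $E_{\lambda,\mu}$ is a differential operator of semiclassical order $\leq 2m-1$ collecting the commutator corrections (from passing $D_y$-powers through the $y$-dependent coefficients) and the cross products involving $R_{\lambda,\mu}$. The semiclassical calculus applies uniformly in $\lambda\in(0,\lambda_0]$ and $\mu\in[0,1]$ because by construction $\lambda^{|\alpha|}(\|\partial^\alpha a_{\lambda,j}\|_\infty + \|\partial^\alpha a_{\lambda,ij}\|_\infty)\leq C_\alpha$. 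Applying the (sharp) G{\aa}rding inequality on $\mathbb{T}^2$ to the non-negative symbol $(h_{\lambda,\mu}^{(m)})^2 - c_0(\xi^{2m}+\zeta^{2m})\,I$ then delivers $\langle \Op((h_{\lambda,\mu}^{(m)})^2) f,f\rangle \geq c_0\lambda^{2m}\bigl(\|D_x^m f\|^2+\|D_y^m f\|^2\bigr) - C\|f\|_{\lambda,m-1}^2$. Both the G{\aa}rding remainder and $\langle E_{\lambda,\mu} f,f\rangle$ are then absorbed by interpolation: for $|\alpha|=k\leq m-1$ and any $\epsilon>0$, Fourier analysis on $(-\pi,\pi)^2$ yields $\lambda^{2k}\|D^\alpha f\|^2 \leq \epsilon\lambda^{2m}\sum_{|\beta|=m}\|D^\beta f\|^2 + C_\epsilon\|f\|^2$. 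Picking $\epsilon$ small enough closes the estimate with constants independent of $\lambda$ and $\mu$.

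The main obstacle is securing uniformity in $\lambda$. The rescaled coefficients $a_{\lambda,j}$ and $a_{\lambda,ij}$ have pointwise derivatives that blow up like $\lambda^{-|\alpha|}$, but they always enter $H_{\lambda,\mu}$ paired with $\lambda^{|\alpha|}$-scaled derivatives, placing them in a $\lambda$-uniform semiclassical symbol class; once this uniformity is in place, the purely algebraic positivity encoded in \eqref{anticommutation} does the rest.
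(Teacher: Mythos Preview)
Your plan shares the same algebraic core as the paper's proof: both use the parity conditions \eqref{anticommutation} to show that the order-$2m$ part of $H_{\lambda,\mu}^2$ dominates $c_0\lambda^{2m}(D_x^{2m}+D_y^{2m})$, and then absorb an order-$(2m-1)$ remainder by interpolation. Your symbol computation $(h_{\lambda,\mu}^{(m)})^2\ge c_0(\xi^{2m}+\zeta^{2m})I$ is correct and is exactly the pointwise shadow of what the paper does.

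The difference is in how the positivity is lifted from symbol to operator. The paper never invokes G\aa rding: it expands $H_{\lambda,\mu}^2$ directly as an \emph{operator} and groups the cross terms into the manifestly non-negative form
\[
A_{\lambda,\mu}\ =\ \lambda^{2m}\Big(\{M_0,M_m\}D_y^m D_x^m+\mu\!\sum_{j}D_x^{2j}\,D_y^{m-j}\{M_0,a_{\lambda,2j}\}D_y^{m-j}+\cdots\Big)+\Big(\mu\sum_j a_{\lambda,j}D_x^jD_y^{m-j}\Big)^2,
\]
each piece being of the type $D^*MD$ with $M\ge0$ (or a square), so $\langle A_{\lambda,\mu}f,f\rangle\ge0$ \emph{exactly}, with no semiclassical loss. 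The only error term is a genuine order-$(2m-1)$ differential operator $B_{\lambda,\mu}$, controlled by Gagliardo--Nirenberg.

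Your route instead appeals to ``the sharp G\aa rding inequality on $\mathbb{T}^2$'' for the non-negative symbol $(h_{\lambda,\mu}^{(m)})^2-c_0(\xi^{2m}+\zeta^{2m})I$. This is the one real gap: the sharp G\aa rding inequality you have available (e.g.\ \cite[Theorem~7.12]{DS}, also used elsewhere in the paper) is stated for matrix symbols in $S(1)$, not for unbounded degree-$2m$ polynomial symbols. To apply it here you would have to conjugate by an elliptic weight of order $m$ and check that the resulting bounded symbol is still non-negative modulo acceptable errors --- and carrying that out, for these particular cross terms, essentially reproduces the paper's sum-of-squares decomposition at the operator level. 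So the approaches converge, but the paper's direct argument is both more elementary and avoids the step you left unjustified. Your Fourier interpolation in the last step is fine and equivalent to the paper's use of \eqref{GN}.
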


In the following proof, we use the Gagliardo-Nirenberg inequality on the torus which states that for any non-negative integers $i$ and $j$ satisfying $i+j \le m$,
\begin{align} \label{GN}
    \norm{\partial^i_x \partial^j_y f} \le \norm{\partial^m_x f}^{\frac{i}{m}} \norm{\partial^m_y f}^{\frac{j}{m}} \norm{f}^{1 - \frac{i+j}{m}}
\end{align}
for all $f \in \mathcal{H}^m(\mathbb{T}^2)$.

\begin{proof}
From the anti-commutation properties \eqref{anticommutation}, 
we see that $H_{\kp,\mu}^2 = \kp^{2m} \Big(M_0^2 D^{2m}_y + M_m^2 D^{2m}_x\Big ) + A_{\kp,\mu} + B_{\kp,\mu}$, where 
\begin{align*}
A_{\kp,\mu} = \kp^{2m}\Bigg( \{M_0, M_m\} D^m_y D^m_x + \mu \sum_{j =0}^{\lfloor \frac{m}{2} \rfloor} \Big( D^{2j}_x D^{m-j}_y \{M_0, \cp_{\kp,2j}\} D^{m-j}_y &+ D^{2(m-j)}_x D^{j}_y \{M_m, \cp_{\kp,m-2j}\} D^{j}_y
\Big) \\
&+ \Big(\mu \sum_{j=0}^m \cp_{\kp,j} (y) D^j_x D^{m-j}_y\Big)^2
\Bigg)
\end{align*}
is non-negative and 
$$B_{\kp,\mu} = \sum_{i+j \le 2m-1} \kp^{i+j} \tilde{\cp}_{\kp,\mu,ij} (y) D^i_x D^j_y$$ 
is of order $2m-1$, with $\kp^{|\alpha|}\sum_{i,j} \norm{\partial^\alpha \tilde{\cp}_{\kp,\mu,ij}}_{L^\infty} \le C_\alpha$ uniformly in $\kp$ and $\mu$ for all $\alpha \in \mathbb{N}^2$.
Using that $M_0$ and $M_m$ are non-singular and Hermitian, we have $$\norm{H_{\kp,\mu} f}^2 \ge \kp^{2m} c (\norm{D^{m}_y f}^2 + \norm{D^{m}_x f}^2) + (f, B_{\kp,\mu} f)$$
for some $c > 0$.
The result then follows from \eqref{GN} and the fact that
$|(f, B_{\kp,\mu} f)| \le C \norm{f}_{\kp,m} \norm{f}_{\kp,m-1}$.
\end{proof}

Proposition \ref{periodicEllipticity} implies the existence of a positive constant $C$ such that 
$\norm{f}_{\kp,m} \le C \norm{g}$
for all $f \in \mathcal{H}^m(\mathbb{T}^2)$ and $g \in \mathcal{H}(\mathbb{T}^2)$ satisfying
$(i - H_{\kp,\mu}) f = g$
uniformly in $\kp$ and $\mu$.
Indeed, redefining the constants $C_1$ and $C_2$, we have
\begin{align} \label{eq:alphaBdd}
    \norm{f}_{\kp,m}^2 \le
    C_1 \norm{H_{\kp,\mu} f}^2 + C_2 \norm{f}^2 =
    C_1 \norm{g - if}^2 + C_2 \norm{f}^2 \le C^2 \norm{g}^2,
\end{align}
where the last inequality 
follows from the fact that
$
    \norm{f}^2 = \Im (f, (i-H_{\kp,\mu}) f) = \Im (f,g) \le \norm{f}\norm{g}.
$
Clearly the same bound \eqref{eq:alphaBdd} holds if $i-H_{\kp,\mu}$ is replaced by $i+H_{\kp,\mu}$.

\begin{proposition} \label{sa}
For all $\kp \in (0,\kp_0]$, $H_\kp$ is self-adjoint on $\mathcal{H}(\mathbb{T}^2)$ with domain of definition 
$\mathcal{H}^m(\mathbb{T}^2)$.
\end{proposition}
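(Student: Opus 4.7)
The plan is to use a continuous deformation along the family $H_{\lambda,\mu}$, $\mu\in[0,1]$, anchored at the constant–coefficient operator $H_{\lambda,0}$ and terminating at $H_\lambda=H_{\lambda,1}$. Since every $H_{\lambda,\mu}$ is symmetric with common dense domain $\mathcal{H}^m(\mathbb{T}^2)$, self-adjointness on that domain is equivalent to $\mathrm{Ran}(i\pm H_{\lambda,\mu})=\mathcal{H}(\mathbb{T}^2)$. The set
\[
\mathcal{S}:=\bigl\{\mu\in[0,1]\ :\ \mathrm{Ran}(i-H_{\lambda,\mu})=\mathrm{Ran}(i+H_{\lambda,\mu})=\mathcal{H}(\mathbb{T}^2)\bigr\}
\]
is what I want to prove equal to $[0,1]$; the key tool is the \emph{uniform} a priori estimate \eqref{eq:alphaBdd}, which makes the step size of the deformation independent of $\mu$.

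First I would handle the base case $\mu=0$, where $H_{\lambda,0}=\lambda^m(M_0 D_y^m+M_m D_x^m)$ has constant coefficients. On the torus Fourier basis $e^{i(k_1 x+k_2 y)}\otimes\Cm^n$, $H_{\lambda,0}$ acts as the Hermitian matrix $\lambda^m(M_m k_1^m+M_0 k_2^m)$. The anti-commutation condition \eqref{anticommutation} yields $(M_m k_1^m+M_0 k_2^m)^2\geq M_m^2 k_1^{2m}+M_0^2 k_2^{2m}\gtrsim |k|^{2m}$, so $H_{\lambda,0}$ is unitarily equivalent to a self-adjoint multiplication operator whose maximal domain is exactly $\mathcal{H}^m(\mathbb{T}^2)$. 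In particular $0\in\mathcal{S}$ and $(i\pm H_{\lambda,0})^{-1}\colon\mathcal{H}(\mathbb{T}^2)\to\mathcal{H}^m(\mathbb{T}^2)$ is bounded.

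Next I would run the openness step uniformly. If $\mu_0\in\mathcal{S}$, then by \eqref{eq:alphaBdd} the bounded inverse $(i\pm H_{\lambda,\mu_0})^{-1}\colon\mathcal{H}(\mathbb{T}^2)\to\mathcal{H}^m(\mathbb{T}^2)$ has norm $\leq C$ with $C$ independent of $\mu_0$. Writing
\[
i\pm H_{\lambda,\mu}=(i\pm H_{\lambda,\mu_0})\bigl[I\mp(i\pm H_{\lambda,\mu_0})^{-1}(H_{\lambda,\mu}-H_{\lambda,\mu_0})\bigr],
\]
and observing that $H_{\lambda,\mu}-H_{\lambda,\mu_0}=(\mu-\mu_0)W_\lambda$ with $W_\lambda\colon\mathcal{H}^m(\mathbb{T}^2)\to\mathcal{H}(\mathbb{T}^2)$ bounded by some constant $K_\lambda$ depending only on $\lambda$, the bracketed operator on $\mathcal{H}^m(\mathbb{T}^2)$ differs from the identity by one of norm $\leq CK_\lambda|\mu-\mu_0|$. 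A Neumann series then shows $i\pm H_{\lambda,\mu}$ is a bijection $\mathcal{H}^m(\mathbb{T}^2)\to\mathcal{H}(\mathbb{T}^2)$ whenever $|\mu-\mu_0|<\delta:=(2CK_\lambda)^{-1}$, so $\mathcal{S}$ contains the interval $(\mu_0-\delta,\mu_0+\delta)\cap[0,1]$.

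The crucial feature is that $\delta$ does not depend on $\mu_0$, because the constant $C$ in \eqref{eq:alphaBdd} is uniform in $\mu$. Starting from $0\in\mathcal{S}$ and iterating, $\mathcal{S}$ covers $[0,\delta]$, then $[0,2\delta]$, and after finitely many steps all of $[0,1]$. Therefore $\mathrm{Ran}(i\pm H_\lambda)=\mathcal{H}(\mathbb{T}^2)$, and since $H_\lambda$ is symmetric it is self-adjoint on $\mathcal{H}^m(\mathbb{T}^2)$; the equality of domains also follows from \eqref{eq:alphaBdd}, which shows the graph norm of $H_\lambda$ is equivalent to $\|\cdot\|_{\lambda,m}$. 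The only mildly delicate step is ensuring that the perturbation bound $K_\lambda$ is finite, but this is immediate from the uniform $L^\infty$ bounds $\lambda^{|\alpha|}(\sum_j\|\partial^\alpha a_{\lambda,j}\|_{L^\infty}+\sum_{i,j}\|\partial^\alpha a_{\lambda,ij}\|_{L^\infty})\leq C_\alpha$ noted just after \eqref{Hlambda2D} together with the Gagliardo–Nirenberg inequality \eqref{GN}.
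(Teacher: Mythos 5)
Your proof is correct and follows essentially the same strategy as the paper's: a continuous deformation from the constant-coefficient operator $H_{\lambda,0}$, whose self-adjointness on $\mathcal{H}^m(\mathbb{T}^2)$ is checked directly, combined with a Neumann-series perturbation step whose size is uniform in $\mu$ thanks to \eqref{eq:alphaBdd}; the paper simply takes $N$ explicit steps of size $1/N$ while you phrase this as an open-and-closed (or covering) argument, and the paper factors $(i-H_{\lambda,\mu_0})$ to the right rather than the left. One small typo: in your factorization the inner sign should be $\pm$, not $\mp$, i.e.\ $i\pm H_{\lambda,\mu}=(i\pm H_{\lambda,\mu_0})\bigl[I\pm(i\pm H_{\lambda,\mu_0})^{-1}(H_{\lambda,\mu}-H_{\lambda,\mu_0})\bigr]$; this does not affect convergence of the Neumann series or the conclusion.
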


\begin{proof}
By \eqref{eq:alphaBdd} and regularity of the coefficients of $H_{\kp,\mu}$, we can choose $N \in \mathbb{N}$ sufficiently large such that $\norm{(H_{\kp,(k+1)/N} - H_{\kp, k/N})f} \le \norm{g}/2$ for all $g \in \mathcal{H}(\mathbb{T}^2)$, $f \in \mathcal{H}^m(\mathbb{T}^2)$ satisfying $(i-H_{\kp,k/N}) f = g$, and
$k \in \{0,1, \dots, N-1\}$.
Since $H_{\kp, 0}$ is symmetric with constant coefficients, 
we know (or easily verify) that $i-H_{\kp, 0}$ is a bijection $\mathcal{H}^m(\mathbb{T}^2) \rightarrow \mathcal{H}(\mathbb{T}^2)$.
Now, suppose $i-H_{\kp, k/N}$ is a bijection $\mathcal{H}^m(\mathbb{T}^2) \rightarrow \mathcal{H}(\mathbb{T}^2)$ for some $k \in \{0,1, \dots, N-1\}$, meaning that $\norm{(H_{\kp,(k+1)/N} - H_{\kp, k/N})(i-H_{\kp,k/N})^{-1}} \le 1/2$. Then
\begin{align*}
    i-H_{\kp, (k+1)/N} = \Big(1 - (H_{\kp, (k+1)/N} - H_{\kp, k/N})(i-H_{\kp,k/N})^{-1}\Big) (i-H_{\kp, k/N})
\end{align*}
is a bijection $\mathcal{H}^m(\mathbb{T}^2) \rightarrow \mathcal{H}(\mathbb{T}^2)$, as the first factor on the above right-hand side can be inverted using the Neuman series.
We conclude by induction that $i-H_{\kp,1} = i-H_{\kp}$ is
a bijection $\mathcal{H}^m(\mathbb{T}^2) \rightarrow \mathcal{H}(\mathbb{T}^2)$.
The same reasoning implies that $i+H_\kp$ is a bijection $\mathcal{H}^m(\mathbb{T}^2) \rightarrow \mathcal{H}(\mathbb{T}^2)$.
Since $H_\kp$ is symmetric, it follows from
\cite[Theorem VIII.3]{RS} that $H_{\kp}$ is self-adjoint with domain of definition $\mathcal{H}^m(\mathbb{T}^2)$. This completes the proof.
\end{proof}

For $\xi \in \mathbb{R}$, define the differential operator $\hat{H}(\xi)$ on $\mathcal{H}(\mathbb{R})$ by
\begin{align*}
    \hat{H} (\xi) := M_0 D_y^m + M_m \xi^m +
    \sum_{j=0}^m \cp_j (y) \xi^j D_y^{m-j} + \sum_{i+j \le m-1} \cp_{ij} (y) \xi^i D_y^j.
\end{align*}
Since the coefficients of $H$ are independent of $x$, we have the decomposition
\begin{align*}
    H = \mathcal{F}^{-1}_{\xi \rightarrow x} \int_{\mathbb{R}}^{\oplus} \hat{H} (\xi) d\xi \mathcal{F}_{x \rightarrow \xi},
\end{align*}
with $\mathcal{F}$ the one dimensional Fourier transform in the $x$-variable.

For $\kp \in (0, \kp_0]$ and \tcbn{$\xi \in \mathbb{R}$}, define the differential operator $\hat{H}_\kp (\xi)$ on $\mathcal{H}(\mathbb{T})$ by
\begin{align} \label{Hlambda}
    \hat{H}_\kp (\xi) := \kp^m M_0 D^m_y + M_m \xi^m +
    \sum_{j=0}^m \kp^{m-j} \cp_{\kp,j} (y) \xi^j D^{m-j}_y + \sum_{i+j \le m-1} \kp^{j} \cp_{\kp,ij} (y) \xi^i D^j_y.
\end{align}
Noting that the coefficients of $H_\kp$ are independent of $x$, we see that for any $\psi \in \mathcal{H}^m(\mathbb{T})$ and $\xi \in \mathbb{Z}$,
\begin{align*}
    (\hat{H}_\kp (\kp \xi) \psi) (y) = e^{-i\xi x} (H_\kp \psi_\xi) (x,y)
\end{align*}
for all $(x,y) \in \mathbb{T}^2$, where $\psi_\xi \in\mathcal{H}^m(\mathbb{T}^2)$ is given by $\psi_\xi (x,y) = e^{i\xi x} \psi (y)$.


One can easily apply the logic from the proofs of Propositions 
\ref{periodicEllipticity} and \ref{sa} 
to obtain analogous results for $\hat{H}_\kp (\xi)$.
Namely, 
there exist positive constants $C_1$ and $C_2$ such that for all $f \in \mathcal{H}^m(\mathbb{T})$,
\begin{align}\label{periodicEllipticity1D}
    \norm{\hat{H}_\kp (\xi) f}^2 \ge (C_1 \xi^{2m} - C_2) \norm{f}^2 + C_1 \norm{\kp^m D^m_y f}^2
\end{align}
uniformly in $\kp \in (0,\kp_0]$ and $\xi \in \mathbb{R}$.
As above, this implies that $\hat{H}_\kp (\xi)$ is self-adjoint with domain of definition $\mathcal{H}^m(\mathbb{T})$. 
In addition, we obtain
\begin{proposition} \label{evalsInfty}
The operator $\hat{H}_\kp ^2 (\xi)$ is self-adjoint with domain of definition $\mathcal{H}^{2m}(\mathbb{T})$ and its spectrum consists only of eigenvalues that go to $+\infty$.
\end{proposition}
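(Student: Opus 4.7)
The plan is to prove the three assertions in sequence: $\hat{H}_\lambda^2(\xi)$ is self-adjoint on the claimed domain, it has compact resolvent, and hence its spectrum is a discrete sequence of eigenvalues diverging to $+\infty$. Throughout, abbreviate $\hat{H}:=\hat{H}_\lambda(\xi)$ and note that the one-dimensional analogue of Proposition \ref{sa} (obtained by the same continuation-in-$\mu$ argument, using \eqref{periodicEllipticity1D} in place of Proposition \ref{periodicEllipticity}) already gives that $\hat{H}$ is self-adjoint on $\mathcal{H}^m(\mathbb{T})$.

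\textbf{Self-adjointness on $\mathcal{H}^{2m}(\mathbb{T})$.} Since $\hat{H}$ is self-adjoint, the Borel functional calculus yields that $\hat{H}^2$ is self-adjoint on the natural domain
\begin{equation*}
D(\hat{H}^2)=\{f\in\mathcal{H}^m(\mathbb{T}):\hat{H}f\in\mathcal{H}^m(\mathbb{T})\}.
\end{equation*}
I would identify this domain with $\mathcal{H}^{2m}(\mathbb{T})$ by elliptic regularity on the closed manifold $\mathbb{T}$. The inclusion $\mathcal{H}^{2m}(\mathbb{T})\subseteq D(\hat{H}^2)$ is immediate because $\hat{H}$ is a differential operator of order $m$ with smooth (and thus bounded) coefficients. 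The reverse inclusion uses that the principal symbol of $\hat{H}$ in $y$ is $\lambda^m M_0\eta^m$, which is invertible since $M_0$ is a nonsingular Hermitian matrix; thus $\hat{H}$ is elliptic of order $m$, and $\hat{H}f\in\mathcal{H}^m(\mathbb{T})$ with $f\in\mathcal{H}(\mathbb{T})$ forces $f\in\mathcal{H}^{2m}(\mathbb{T})$ by the standard local elliptic a priori estimate, glued via a partition of unity on the compact $\mathbb{T}$.

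\textbf{Compact resolvent.} From \eqref{periodicEllipticity1D} applied twice (to $\hat{H}f$ and then to $f$), or equivalently from the elliptic estimate for $\hat H^2$ (which is elliptic of order $2m$ since $\hat H$ is elliptic of order $m$), I obtain a global a priori bound
\begin{equation*}
\norm{f}_{2m}\le C_{\lambda,\xi}\bigl(\norm{\hat{H}^2 f}+\norm{f}\bigr),\qquad f\in\mathcal{H}^{2m}(\mathbb{T}).
\end{equation*}
Because $\hat{H}^2\ge 0$, the resolvent $(I+\hat{H}^2)^{-1}$ is a bounded operator on $\mathcal{H}(\mathbb{T})$, and the above estimate shows that it maps $\mathcal{H}(\mathbb{T})$ continuously into $\mathcal{H}^{2m}(\mathbb{T})$. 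The Rellich--Kondrachov embedding $\mathcal{H}^{2m}(\mathbb{T})\hookrightarrow\mathcal{H}(\mathbb{T})$ is compact on the closed manifold $\mathbb{T}$, so $(I+\hat{H}^2)^{-1}$ is compact on $\mathcal{H}(\mathbb{T})$.

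\textbf{Discrete divergent spectrum.} A non-negative self-adjoint operator with compact resolvent has purely discrete spectrum in $[0,\infty)$, consisting of eigenvalues of finite multiplicity with no finite accumulation point. Since $\hat{H}^2$ is unbounded on the infinite-dimensional space $\mathcal{H}(\mathbb{T})$ (this follows at once from \eqref{periodicEllipticity1D}, which forces $\norm{\hat H f}\to\infty$ along any sequence with $\norm{D_y^m f}\to\infty$), the spectrum cannot be finite, so the eigenvalues must accumulate at $+\infty$. The only step that requires any real work is the elliptic-regularity identification of $D(\hat{H}^2)$; once that is in hand, the remaining claims are routine consequences of ellipticity on a compact manifold together with the a priori bound already recorded in \eqref{periodicEllipticity1D}.
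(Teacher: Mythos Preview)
Your proof is correct and complete, but it follows a different route from the paper's.

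For self-adjointness, the paper avoids invoking elliptic regularity directly and instead uses the algebraic factorization
\[
i+\hat{H}_\lambda^2(\xi)=\Big(\tfrac{1-i}{\sqrt{2}}+\hat{H}_\lambda(\xi)\Big)\Big(\tfrac{-1+i}{\sqrt{2}}+\hat{H}_\lambda(\xi)\Big),
\]
each factor being a bijection $\mathcal{H}^{k+m}(\mathbb{T})\to\mathcal{H}^k(\mathbb{T})$ since its imaginary part is nonzero and $\hat{H}_\lambda(\xi)$ is self-adjoint and elliptic; this gives surjectivity of $i\pm\hat{H}_\lambda^2(\xi)$ on $\mathcal{H}^{2m}(\mathbb{T})$ directly. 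Your approach via the functional calculus followed by elliptic regularity to identify $D(\hat{H}^2)$ with $\mathcal{H}^{2m}(\mathbb{T})$ is equally valid and perhaps more conceptual, though the paper's factorization is a slick shortcut that stays within the toolbox already assembled for Proposition~\ref{sa}.

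For the spectral claim, the paper does not appeal to Rellich--Kondrachov; instead it writes
\[
(1+\hat{H}_\lambda^2(\xi))^{-1}=(1+\hat{H}_\lambda^2(\xi))^{-1}(1+\hat{H}_{\lambda,0}^2(\xi))\,(1+\hat{H}_{\lambda,0}^2(\xi))^{-1},
\]
with $\hat{H}_{\lambda,0}(\xi)=M_0\lambda^m D_y^m+M_m\xi^m$ the constant-coefficient part, observes that $(1+\hat{H}_{\lambda,0}^2(\xi))^{-1}$ is Hilbert--Schmidt by explicit Fourier-series computation, and that the first factor is bounded. This yields the stronger conclusion that the resolvent is Hilbert--Schmidt, not merely compact. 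Your Rellich--Kondrachov argument is cleaner and entirely sufficient for what the proposition asserts; the paper's comparison to the constant-coefficient model buys a slightly sharper Schatten-class statement at no extra cost within the framework already in place.
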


\begin{proof}
Clearly $\hat{H}_\kp^2 (\xi)$ is symmetric.
Moreover, we see that $i+\hat{H}_\kp^2 (\xi) = (\frac{1 - i}{\sqrt{2}} + \hat{H}_\kp (\xi)) (\frac{-1 + i}{\sqrt{2}} + \hat{H}_\kp (\xi))$, with each factor on the right-hand side a bijection $\mathcal{H}^{k+2m}(\mathbb{T}) \rightarrow \mathcal{H}^k(\mathbb{T})$ for any $k \in \mathbb{N}$.
Thus $i+\hat{H}_\kp^2 (\xi)$ is a bijection $\mathcal{H}^{2m}(\mathbb{T}) \rightarrow \mathcal{H}(\mathbb{T})$. The same holds for $i-\hat{H}_\kp^2 (\xi)$, meaning that
$\hat{H}_\kp^2 (\xi)$ is self-adjoint with domain $\mathcal{H}^{2m}(\mathbb{T})$.

To prove the spectral property, we write $(1 + \hat{H}_\kp^2 (\xi))^{-1} = (1 + \hat{H}_\kp^2 (\xi))^{-1} (1 + \hat{H}_{\kp,0}^2 (\xi)) (1 + \hat{H}_{\kp,0}^2 (\xi))^{-1}$ for $\hat{H}_{\kp,0} (\xi) := M_0 \kp^m D^m_y + M_m \xi^m$.
Since $\hat{H}_{\kp,0} (\xi)$ has constant coefficients, it is clear that
$(1 + \hat{H}_{\kp,0}^2 (\xi))^{-1}$ is Hilbert-Schmidt.
Since $(1 + \hat{H}_\kp^2 (\xi))^{-1} (1 + \hat{H}_{\kp,0}^2 (\xi))$ is bounded, we conclude that
$(1 + \hat{H}_\kp^2 (\xi))^{-1}$ is Hilbert-Schmidt, hence compact.
Therefore the spectrum of $(1 + \hat{H}_\kp^2 (\xi))^{-1}$ consists entirely of eigenvalues that converge to $0$, meaning that the spectrum of $\hat{H}_\kp^2 (\xi)$ consists only of eigenvalues tending to $+\infty$.
\end{proof}

We recall the following Courant-Fischer min-max theorem characterizing the (discrete) spectrum of self-adjoint operators \cite{RS4, Teschl}:
\begin{theorem} \label{maxmin}
Let $A:\mathcal{H}_0 \rightarrow \mathcal{H}_0$ be self-adjoint with $\mathcal{H}_0$ a separable Hilbert space, and let $E_1 \le E_2 \le E_3 \dots$ be the eigenvalues of $A$ below the essential spectrum (counted with multiplicity), respectively, the infimum of the essential spectrum, once there are no more eigenvalues left. Let $v_j$ denote the eigenfunction corresponding to $E_j$ (when the latter is an eigenvalue), such that $\aver{v_i, v_j} = \delta_{ij}$ for all $i$ and $j$. Then for any $\ell \in \mathbb{N}_+$ and $u \in \s (v_1, v_2, \dots, v_{\ell-1})^{\perp}$, $\aver{u, Au} \ge E_{\ell}\norm{u}^2$.
Moreover, we have
\begin{align*}
    E_\ell &= 
    \max_{\phi_1, \dots, \phi_{\ell-1}} \min \{ \aver{\phi, A \phi}: \phi \in \s (\phi_1, \dots, \phi_{\ell-1})^\perp, \;\norm{\phi} = 1\}\\
    &= \min_{\phi_1, \dots, \phi_\ell} \max \{ \aver{\phi, A \phi} : \phi \in \s (\phi_1, \dots, \phi_\ell), \; \norm{\phi} = 1\}.
\end{align*}
\end{theorem}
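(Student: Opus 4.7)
The proof rests on the spectral theorem. Writing $A = \int \lambda\, d\mathcal{E}(\lambda)$ for a projection-valued spectral measure $\mathcal{E}$, the initial monotonicity claim is almost immediate: let $V_{\ell-1} := \s(v_1,\dots,v_{\ell-1})$ (interpreted as $\{0\}$ when no eigenvalues below $E_\ell$ exist). Since the $v_j$ are eigenvectors, both $V_{\ell-1}$ and $V_{\ell-1}^\perp$ are invariant under $A$, and the restriction $A|_{V_{\ell-1}^\perp}$ is self-adjoint with spectrum contained in $[E_\ell,\infty)$, by the very definition of $E_\ell$ as either the next eigenvalue or the infimum of $\sigma_{\ess}(A)$. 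The spectral calculus then yields $\aver{u,Au} = \int \lambda\, d\norm{\mathcal{E}(\lambda)u}^2 \ge E_\ell\norm{u}^2$ for every $u \in V_{\ell-1}^\perp \cap \mathcal{D}(A)$.

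For the max-min formula I would argue the two inequalities separately. The admissible choice $\phi_j = v_j$ combined with the first claim yields $\inf\{\aver{\phi,A\phi}: \phi \in V_{\ell-1}^\perp,\ \norm{\phi}=1\}\ge E_\ell$, with equality attained by $\phi=v_\ell$ when $E_\ell$ is an eigenvalue and by a Weyl sequence otherwise. For the reverse inequality, given arbitrary test functions $\phi_1,\dots,\phi_{\ell-1}$, the orthogonal projection onto $\s(\phi_1,\dots,\phi_{\ell-1})$ sends the $\ell$-dimensional subspace $\s(v_1,\dots,v_\ell)$ into an $(\ell-1)$-dimensional range, so by the rank-nullity theorem there exists a nonzero $\phi = \sum_{j=1}^\ell c_j v_j$ orthogonal to every $\phi_k$; then $\aver{\phi,A\phi}=\sum_j E_j|c_j|^2\le E_\ell\sum_j|c_j|^2=E_\ell\norm{\phi}^2$, so the inner minimum is at most $E_\ell$.

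The min-max formula is the dual statement. Taking $\phi_j = v_j$ for $j=1,\dots,\ell$ shows that $\max\{\aver{\phi,A\phi}: \phi\in\s(v_1,\dots,v_\ell),\ \norm{\phi}=1\}=E_\ell$, giving the outer minimum is at most $E_\ell$. Conversely, for any $\phi_1,\dots,\phi_\ell$, the orthogonal projection onto $V_{\ell-1}$ restricted to $\s(\phi_1,\dots,\phi_\ell)$ has $(\ell-1)$-dimensional range and hence a nontrivial kernel, producing a nonzero $\phi \in \s(\phi_1,\dots,\phi_\ell)\cap V_{\ell-1}^\perp$; the first claim then forces the inner maximum over this span to be at least $E_\ell$.

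The main obstacle is the degenerate case $E_\ell = \inf\sigma_{\ess}(A)$, where $v_\ell$ does not literally exist and the arguments above that rely on $\s(v_1,\dots,v_\ell)$ — specifically, the upper bounds in both max-min and min-max — must be modified. The fix is to replace $v_\ell$ by an approximate eigenvector $w_\eps = \mathcal{E}([E_\ell,E_\ell+\eps])\,h$, chosen with $\norm{w_\eps}=1$ and $w_\eps \perp V_{\ell-1}$ (which is possible since $\mathcal{E}([E_\ell,E_\ell+\eps])$ has infinite-dimensional range by the definition of essential spectrum), and to pass to the limit $\eps\downarrow 0$. One checks that $\aver{w_\eps,Aw_\eps}\le E_\ell+\eps$, so the estimate $\sum_j E_j|c_j|^2 \le E_\ell\norm{\phi}^2$ in both upper-bound arguments survives up to an $O(\eps)$ error that vanishes in the limit; everything else in the proof goes through unchanged.
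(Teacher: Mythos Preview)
The paper does not give its own proof of this statement: it presents Theorem~\ref{maxmin} as a well-known result and simply cites \cite{RS4, Teschl}. Your proposal supplies the standard textbook argument, and it is essentially correct. One small point: in the degenerate case you treat only $v_\ell$ as missing, but if $E_\ell=\inf\sigma_{\ess}(A)$ then several of $v_{k+1},\dots,v_\ell$ may fail to exist (where $k$ is the number of actual eigenvalues below the essential spectrum); your fix extends immediately by choosing $\ell-k$ orthonormal vectors from the infinite-dimensional range of $\mathcal{E}([E_\ell,E_\ell+\eps])$, so this is a cosmetic gap rather than a real one.
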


We now gather useful properties of the infinite-space operator $\hat{H} (\xi)$ and its spectrum, which will use Lemma \ref{GardingUnbdd} below.

\begin{proposition} \label{acProp}
Take $H = \Op (\sym)$ as above, and fix $\delta > 0$.
Then for each $\xi \in \mathbb{R}$, $\hat{H} (\xi)$ and $\hat{H}^2 (\xi)$ are self-adjoint with respective domains of definition $\mathcal{H}^m (\mathbb{R})$ and $\mathcal{H}^{2m} (\mathbb{R})$. The spectrum of $\hat{H}^2 (\xi)$ in the interval $[0, E^2 - \delta)$ consists of a finite number of eigenvalues, each with finite multiplicity. These eigenvalues and corresponding eigenfunctions can be chosen so that they are analytic in $\xi$.
Moreover,
the rank of $\hat{H}^2 (\xi)$
in $[0, E^2 - \delta)$ is uniformly bounded in $\xi \in \mathbb{R}$
and vanishes whenever $|\xi|$ is sufficiently large.
\end{proposition}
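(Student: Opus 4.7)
The plan is to split the proposition into four claims and verify each in turn: (a) self-adjointness of $\hat{H}(\xi)$ on $\mathcal{H}^m(\mathbb{R})$ and of $\hat{H}^2(\xi)$ on $\mathcal{H}^{2m}(\mathbb{R})$; (b) the containment $\sigma_{\mathrm{ess}}(\hat{H}^2(\xi)) \subset [E^2,\infty)$, which gives the discreteness and finite multiplicity of the spectrum in $[0,E^2-\delta)$; (c) analytic dependence of the low-lying eigenvalues and eigenvectors on $\xi$; and (d) the uniform rank bound together with the vanishing of the rank for $|\xi|$ large.

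For (a) I would transcribe the arguments of Propositions \ref{periodicEllipticity} and \ref{sa} from the torus to the line. The anti-commutation relations \eqref{anticommutation} still yield, for the homotopy
\[
\hat{H}_\mu(\xi) := M_0 D_y^m + M_m \xi^m + \mu\Big(\sum_{j=0}^m a_j(y)\xi^j D_y^{m-j} + \sum_{i+j\le m-1} a_{ij}(y)\xi^i D_y^j\Big),
\]
a decomposition $\hat{H}_\mu^2(\xi) = M_0^2 D_y^{2m} + M_m^2 \xi^{2m} + A_\mu(\xi) + B_\mu(\xi)$ with $A_\mu \ge 0$ and $B_\mu$ of order $2m-1$ in $D_y$. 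Combined with the Gagliardo-Nirenberg inequality on $\mathbb{R}$, this provides the G\aa{}rding estimate $\|\hat{H}_\mu(\xi)f\|^2 \ge C_1 \|f\|_{\mathcal{H}^m(\mathbb{R})}^2 - C_2 \|f\|^2$ uniform in $\mu\in[0,1]$. Starting from the explicit invertibility of $i\pm\hat{H}_0(\xi)$ on $\mathcal{H}^m(\mathbb{R})$ via Fourier transform, the $N$-step Neumann bootstrap of Proposition \ref{sa} delivers bijectivity of $i\pm\hat{H}(\xi) : \mathcal{H}^m(\mathbb{R}) \to \mathcal{H}(\mathbb{R})$, and symmetry together with \cite[Theorem VIII.3]{RS} gives self-adjointness of $\hat{H}(\xi)$ on $\mathcal{H}^m(\mathbb{R})$. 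Factoring $\pm i + \hat{H}^2(\xi) = (e^{\pm i\pi/4}+\hat{H}(\xi))(-e^{\pm i\pi/4}+\hat{H}(\xi))$ exactly as in Proposition \ref{evalsInfty} then yields self-adjointness of $\hat{H}^2(\xi)$ on $\mathcal{H}^{2m}(\mathbb{R})$.

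For (b), define the frozen operators $\hat{H}_\pm(\xi)$ by replacing $a_j(y)$ and $a_{ij}(y)$ by their constant limits $a_j^\pm, a_{ij}^\pm$ valid for $\pm y \ge y_0$. Having constant coefficients in $y$, $\hat{H}_\pm(\xi)$ is, after Fourier transform in $y$, multiplication by $\sigma_\pm(\xi,\zeta)$, whose spectrum is disjoint from $(-E,E)$ by (H1); hence $\sigma(\hat{H}_\pm(\xi)) \subset \mathbb{R}\setminus(-E,E)$. Choosing a smooth partition of unity $1 = \chi_-(y) + \chi_0(y) + \chi_+(y)$ with $\chi_\pm$ supported in $\pm y \ge y_0$ and $\chi_0 \in \mathcal{C}^\infty_c(\mathbb{R})$, I would decompose the difference $(\hat{H}(\xi)+i)^{-1} - \chi_+(\hat{H}_+(\xi)+i)^{-1}\chi_+ - \chi_-(\hat{H}_-(\xi)+i)^{-1}\chi_-$ into a finite sum of terms each containing a compactly supported factor (coming from $\chi_0$ or from commutators with $\chi_\pm'$) composed with resolvents. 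Elliptic regularity from step (a) combined with Rellich's embedding $\mathcal{H}^m_{\mathrm{loc}}(\mathbb{R}) \hookrightarrow \mathcal{H}_{\mathrm{loc}}(\mathbb{R})$ makes each such term compact on $\mathcal{H}(\mathbb{R})$. Weyl's theorem then gives $\sigma_{\mathrm{ess}}(\hat{H}(\xi)) \subset \sigma(\hat{H}_+(\xi)) \cup \sigma(\hat{H}_-(\xi)) \subset \mathbb{R} \setminus (-E,E)$, whence $\sigma_{\mathrm{ess}}(\hat{H}^2(\xi)) \subset [E^2,\infty)$.

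Items (c) and (d) are then comparatively short. The map $\xi \mapsto \hat{H}(\xi)$ is polynomial in $\xi$ with $\xi$-independent domain $\mathcal{H}^m(\mathbb{R})$, hence a self-adjoint analytic family of type (A) in Kato's sense; his analytic perturbation theory then provides the desired analyticity of the isolated eigenvalues of finite multiplicity and of their eigenvectors, away from crossings (which suffices for the subsequent use). For (d), the G\aa{}rding estimate of (a) yields $\hat{H}^2(\xi) \ge c\xi^{2m} - C$ as a quadratic form on $\mathcal{H}^m(\mathbb{R})$, so for $|\xi| \ge \xi_0$ large enough, all spectrum of $\hat{H}^2(\xi)$ lies above $E^2-\delta$ and the rank vanishes; on the compact set $|\xi| \le \xi_0$, norm-resolvent continuity of $\hat{H}^2(\xi)$ in $\xi$ together with the min-max principle of Theorem \ref{maxmin} gives upper semi-continuity of the rank, hence a uniform local bound. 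The main obstacle is step (b): since $\hat{H}(\xi)$ is differential of order $m \ge 1$ rather than a lower-order perturbation of $-\Delta$, no off-the-shelf Weyl theorem applies, and one must carefully build compactness of $\chi_0(y)(\hat{H}(\xi)+i)^{-1}$ from the elliptic estimate of (a) and Rellich before the partition-of-unity decomposition and Weyl's theorem become bookkeeping.
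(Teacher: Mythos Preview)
Your overall architecture is sound and you reach the right conclusions, but your route differs from the paper's in steps (a) and (b), and your treatment of (c) and (d) is lighter than what the paper actually carries out.

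For (a), the paper does not redo the homotopy argument on $\mathbb{R}$; it simply observes that $\hat{H}(\xi)=\Op(\sigma_\xi)$ and $\hat{H}^2(\xi)=\Op(\tau_\xi)$ have elliptic symbols in $S^m_{1,0}$ and $S^{2m}_{1,0}$ and invokes \cite{Bony,Hormander} directly. Your bootstrap works too but is longer. For (b), the paper avoids the partition-of-unity geometry entirely: it builds a \emph{single} comparison operator $T_{\eps,\xi}=\sin(\chi_\eps)T_{+,\xi}\sin(\chi_\eps)+\cos(\chi_\eps)T_{-,\xi}\cos(\chi_\eps)$, uses a G\aa rding-type lemma (Lemma~\ref{GardingUnbdd}) to show $T_{\eps,\xi}$ has no spectrum below $E^2-\delta/2$, and then checks that $(i-T_{\eps,\xi})^{-1}-(i-\hat{H}^2(\xi))^{-1}$ is Hilbert--Schmidt, so Weyl's theorem applies verbatim. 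Your route is the classical geometric one, but note that $\chi_+R_+\chi_++\chi_-R_-\chi_-$ is not a resolvent, so ``Weyl's theorem'' does not apply directly; you need either a parametrix argument showing $z-\hat{H}(\xi)$ is Fredholm for $z\notin\sigma(\hat{H}_+)\cup\sigma(\hat{H}_-)$, or a Weyl-sequence localization argument. Either is routine, but should be said.

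For (d), your phrase ``upper semi-continuity of the rank'' is not correct as stated: the count of eigenvalues in $[0,E^2-\delta)$ can jump up when an eigenvalue crosses $E^2-\delta$ downward. The paper handles this by first proving \emph{uniform equicontinuity} of the ordered eigenvalues $\nu_\ell(\xi)$ (via the min-max principle and a Gagliardo--Nirenberg bound on $(\hat{H}^2(\xi+\rho)-\hat{H}^2(\xi))$) and then running a contradiction using a buffer interval $[0,E^2-\delta/4)$. The easy fix to your sketch is the same buffer trick: at $\xi_*$ count eigenvalues in $[0,E^2-\delta/2]$ and use continuity of $\mu_{M+1}$ to bound $N(\xi)$ nearby. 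For (c), the paper does not stop at ``type (A) plus Kato'': it uses the uniform equicontinuity to exhibit, for each $\xi$, a definite spectral gap of size $\ge\delta/(4\tilde N_0)$ above the relevant eigenvalue cluster, covers $[-\bar\xi,\bar\xi]$ by finitely many intervals on which this gap persists, and applies \cite[VII.1.7, VII.3.1]{kato2013perturbation} on each. This extra care is what guarantees the analytic choice of eigenpairs \emph{throughout} $[0,E^2-\delta)$, not merely locally away from crossings; since the downstream argument (proof of Theorem~\ref{periodicApprox}) needs smoothness of $\xi\mapsto(\psi_{j,\xi},\hat G(\xi)\psi_{j,\xi})\Upsilon(\mu_j(\xi))$ for the Riemann-sum step, you should either supply this gap-and-cover argument or cite the Rellich--Kato global analyticity theorem for self-adjoint holomorphic families more precisely.
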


\begin{proof}
Fix $\xi \in \mathbb{R}$. 
Since $\hat{H} (\xi) = \Op (\sym_\xi)$ with $\sym_\xi \in S^m$ elliptic, it follows from \cite{Bony, Hormander} that $\hat{H} (\xi)$ is self-adjoint with domain of definition $\mathcal{H}^m(\mathbb{R})$.
We know that $\hat{H}^2 (\xi) = \Op (\tau_\xi)$, where $\tau_\xi (y, \zeta) = (\sym\sharp \sym) (y,\xi,\zeta)$.
Applying the same argument to the elliptic symbol $\tau_\xi \in S^{2m}$, we see that $\hat{H}^2 (\xi)$ is self-adjoint
with domain of definition $\mathcal{H}^{2m}(\mathbb{R})$.
Define $\sym_{\pm,\xi} (\zeta) := \sym_\pm (\xi,\zeta)$ and $\tau_{\pm, \xi} := \sym_{\pm,\xi}^2$, and observe that
$\tau_\xi =\tau_{+,\xi}$ whenever $y \ge y_0$, and $\tau_\xi = \tau_{-,\xi}$ whenever $y \le -y_0$.
Since the $\tau_{\pm,\xi}$ are independent of $y$,
$$T_{\pm,\xi} := \Op (\tau_{\pm,\xi}) = (\Op (\sym_{\pm,\xi}))^2 \ge 0.$$
Let $\chi \in \fs (0,\pi/2,-1,1)$, and define $\chi_\eps(y) := \chi (\eps y)$ and
$$\Op (\tau_{\eps, \xi}) =T_{\eps,\xi} := \sin (\chi_\eps (y)) T_{+,\xi} \sin(\chi_\eps (y)) + \cos(\chi_\eps(y)) T_{-,\xi} \cos (\chi_\eps (y))$$
for $\eps \in (0,1]$.
We see that $\tau_{\eps,\xi} =\sin^2(\chi_\eps (y)) \tau_{+,\xi} + \cos^2(\chi_\eps(y)) \tau_{-,\xi} + \eps r_{\eps,\xi}$, with $r_{\eps,\xi} \in S(\aver{\zeta}^{2m})$ uniformly in $\eps$, and $\tau_{\pm,\xi} \ge \max\{E^2, c \aver{\zeta}^{2m}\}$.
It follows that
$\tau_{\eps,\xi} \ge \max \{E^2, c \aver{\zeta}^{2m}\} - c' \eps \aver{\zeta}^{2m}$ for some positive constant $c'$, hence
$\tau_{\eps, \xi} \ge E^2 - \delta/4$ for $\eps$ sufficiently small.
Using the fact that $T_{\eps, \xi} \ge 0$, we apply
Lemma \ref{GardingUnbdd} below (with a unitary transformation) to
conclude that $T_{\eps, \xi}$
has no spectrum in the interval $(-\infty, E^2 - \delta/2)$ for $\eps$ sufficiently small.

We now observe that
\begin{align*}
    (i-T_{\eps,\xi})^{-1} - (i-\hat{H}^2 (\xi))^{-1} =
    (i-T_{\eps,\xi})^{-1} (\hat{H}^2 (\xi) - T_{\eps,\xi}) (i-\hat{H}^2 (\xi))^{-1} \in \Op (S (\aver{y}^{-\infty} \aver{\zeta}^{-2m})),
\end{align*}
as $\tau_\xi - \tau_{\eps, \xi} \in S(\aver{y}^{-\infty} \aver{\zeta}^{2m})$ and both resolvents have symbol in $S (\aver{\zeta}^{-2m})$.
Since $\aver{\zeta}^{-2m} \in L^2 (\mathbb{R})$, it follows that $(i-T_{\eps,\xi})^{-1} - (i-\hat{H}^2 (\xi))^{-1}$ is Hilbert-Schmidt, hence compact.
By \cite[Theorem 6.18]{Teschl}, we conclude that the essential spectra of $T_{\eps, \xi}$ and $\hat{H}^2 (\xi)$ are the same, meaning that
the spectrum of $\hat{H}^2 (\xi)$ in $[0, E^2 - \delta/2)$ consists of eigenvalues, each with finite multiplicity.
The number of these eigenvalues must be finite, as we can verify from the proof of Proposition \ref{trclass} that $\Phi_0 (\hat{H}^2 (\xi))$ is trace-class for all $\Phi_0 \in \mathcal{C}^\infty_c (-1, E^2)$. 

Fix $\xi \in \mathbb{R}$.
Let $\psi \in \mathcal{C}^\infty_c (\mathbb{R}) \otimes \mathbb{C}^n$ such that $\vertiii{\psi} = 1$.
Define 
$\tilde{\psi}:= \ul^{-1}_{\kp_1,0} \psi$, where
$\kp_1 \in (0, \kp_0]$
is sufficiently small so that $\supp (\tilde{\psi}) \subset (-\pi/2, \pi/2)$.
Then by \eqref{periodicEllipticity1D}, $$\vertiii{\hat{H} (\xi) \psi} = \norm{\hat{H}_{\kp_1} (\xi) \mathcal{P} \tilde{\psi}} \ge C_1 \xi^m - C_2.$$
Since $\mathcal{C}^\infty_c (\mathbb{R})$ is dense in $\mathcal{H}^m (\mathbb{R})$ with respect to the $\mathcal{H}^m$-norm,
we have proved that $\hat{H}^2 (\xi)$ 
has no spectrum in $[0, E^2)$ whenever $|\xi|$ is sufficiently large.

Let 
$\nu_1 (\xi)\le \nu_2 (\xi)\le \dots\le \nu_{\tilde{N}(\xi)} (\xi)$ denote the eigenvalues of $\hat{H}^2 (\xi)$ in the interval $[0,E^2 - \delta/4)$. We showed above that $\tilde{N} (\xi)$ is finite for all $\xi \in \mathbb{R}$. 
Let $N(\xi)$ be the number of eigenvalues of $\hat{H}^2 (\xi)$ in $[0, E^2 -\delta)$.
To show that $N(\xi)$ is bounded uniformly in $\xi$,
we will first prove that the eigenvalues are uniformly equicontinuous in $\xi$.
By ``uniformly equicontinuous'', we mean that for all $\eps > 0$, there exists $\eta > 0$ such that 
for any $|\xi_1 - \xi_2| < \eta$ and any $j$ satisfying $j \le \tilde{N}(\xi_1)$, 
it follows that $|\nu_j (\xi_2) - \nu_j (\xi_1)| < \eps$.

Fix $C_0 > 0$. Let $\bar{\xi} > 0$ such that $\hat{H}^2 (\xi)$
has no spectrum in $[0, E^2 - \delta/4)$ whenever $|\xi| > \bar{\xi}$.
Since $\hat{H}^2(\xi+\rho) - \hat{H}^2(\xi)$ is a differential operator of order $2m-1$ with coefficients bounded by $C |\rho|$, the Gagliardo-Nirenberg inequality \eqref{GN} on $\mathbb{R}^2$ implies that
\begin{align} \label{eq:ue}
    |(u, (\hat{H}^2(\xi+\rho)- \hat{H}^2(\xi)) u)| \le C |\rho| \vertiii{u}^{\frac{2m-1}{m}}_m \vertiii{u}^{\frac{1}{m}} \le C C_0^{\frac{2m-1}{m}} |\rho| \vertiii{u}^2; \qquad \quad \rho \in [-1,1], \quad \xi \in [-\bar{\xi}, \bar{\xi}],
\end{align}
for all $u \in \mathcal{H}^m (\mathbb{R})$ satisfying 
$\vertiii{u}_{m} \le C_0 \vertiii{u}$.
By ellipticity of $\hat{H} (\xi)$, we can choose $C_0$ sufficiently large so that
$\vertiii{\hat{H}(\xi) u} \ge E\vertiii{u}$ whenever $\vertiii{u}_{m} > C_0 \vertiii{u}$ 
and $\xi \in \mathbb{R}$.
Hence given any 
$\ell$-dimensional subspace $S_\ell \subset \mathcal{H}^m (\mathbb{R})$, 
the bound \eqref{eq:ue} holds uniformly in $u \in S_\ell$ 
and $\xi \in [-\bar{\xi}, \bar{\xi}]$ satisfying $\vertiii{\hat{H}(\xi) u} < E \vertiii{u}$. 
It follows that
\begin{align*}
    \Big |\max_{u \in S_\ell, \vertiii{u} =1} (u, \hat{H}^2(\xi+\rho) u) - \max_{u \in S_\ell,\vertiii{u}=1} (u,\hat{H}^2(\xi) u)\Big | \le C C_0^{\frac{2m-1}{2m}} |\rho|; \qquad \quad \rho \in [-1,1] 
\end{align*}
uniformly in $\xi \in [-\bar{\xi}, \bar{\xi}]$ satisfying $\max_{u \in S_\ell,\vertiii{u}=1} \vertiii{\hat{H}(\xi) u} < E$.
Therefore by Theorem \ref{maxmin},
\begin{align*}
|\nu_\ell (\xi+\rho) - \nu_\ell (\xi)| =
    \Big |\min_{S_\ell} \max_{u \in S_\ell,\vertiii{u}=1} (u, \hat{H}^2(\xi+\rho) u)- \min_{S_\ell} \max_{u \in S_\ell,\vertiii{u}=1} (u,\hat{H}^2(\xi) u)\Big | \le C C_0^{\frac{2m-1}{2m}} |\rho| 
\end{align*}
uniformly in $\rho \in [-1,1]$ and $\ell \le \tilde{N} (\xi)$.
Thus we have proven uniform equicontinuity of the eigenvalues.


Suppose by contradiction that $N(\xi)$ is not bounded uniformly in $\xi$.
Since $N(\xi)$ vanishes outside $[-\bar{\xi}, \bar{\xi}]$, there exists a sequence $(\xi_k) \subset [-\bar{\xi}, \bar{\xi}]$ and a number $\xi_* \in [-\bar{\xi}, \bar{\xi}]$ such that $\xi_k \rightarrow \xi_*$ and $N(\xi_k) \rightarrow \infty$ as $k \rightarrow \infty$.
Uniform equicontinuity of the 
eigenvalues implies that for all $k$ sufficiently large, 
all eigenvalues $\nu_j$ satisfying $\nu_j (\xi_k) < E^2 - \delta$ must also satisfy $\nu_j (\xi_*) < E^2 - \delta/4$.
This implies
\begin{align*}
    \tilde{N} (\xi_*) \ge N(\xi_k) \rightarrow \infty,
\end{align*}
which is a contradiction since we already proved that $\tilde{N} (\xi_*)$ must be finite.

It remains to show that the eigenvalues and eigenfunctions can be chosen analytic in $\xi$ when they lie in $[0, E^2 - \delta)$. 
We have shown that there exists $\tilde{N}_0 < \infty$ such that $\tilde{N} (\xi) \le \tilde{N}_0$ for all $\xi \in \mathbb{R}$.
Thus for all $\xi$ such that $\nu_1 (\xi) < E^2 - \delta/2$, there exists $k(\xi) \in \{1,2, \dots, \tilde{N}(\xi)\}$ such that $\nu_{k(\xi)+1} (\xi) \ge E^2 - \delta/2$ and $\nu_{k(\xi)+1} (\xi) - \nu_{k(\xi)} (\xi) \ge \frac{\delta}{4 \tilde{N}_0}$.
Here, $\nu_{\tilde{N} (\xi)+1} (\xi)$ is defined as the smallest eigenvalue of $\hat{H}^2 (\xi)$ in the interval $[E^2-\delta/4, E^2-\delta/8)$; or if no such eigenvalue exists, then $\nu_{\tilde{N} (\xi)+1} (\xi):= E^2-\delta/8$.

Uniform equicontinuity implies that there exists $\eta > 0$ such that 
$$|\nu_{k(\xi) + 1} (\hat{\xi}) - \nu_{k(\xi)+1} (\xi)| + |\nu_{k(\xi)} (\hat{\xi}) - \nu_{k(\xi)} (\xi)| \le \frac{\delta}{8 \tilde{N}_0} \qquad \text{whenever} \qquad |\hat{\xi} - \xi| < \eta.$$
In particular, this means
$$\nu_{k(\xi) + 1} (\hat{\xi}) - \nu_{k(\xi)} (\hat{\xi}) \ge \frac{\delta}{4\tilde{N}_0} + \nu_{k(\xi) + 1} (\hat{\xi}) - \nu_{k(\xi) + 1} (\xi) + \nu_{k(\xi)} (\xi) - \nu_{k(\xi)} (\hat{\xi}) \ge \frac{\delta}{8 \tilde{N}_0} \quad \text{whenever} \quad |\hat{\xi} - \xi| < \eta.$$
That is, $\nu_{k(\xi)+1} (\hat{\xi})$ never enters the interval $[0,E^2-\delta)$ and is bounded away from $\nu_{k(\xi)} (\hat{\xi})$ so long as $|\hat{\xi} - \xi| < \eta$.
Moreover, $\nu_{k(\xi)} (\hat{\xi})\in [0,E^2-\delta/8)$ for all $|\hat{\xi} - \xi| < \eta$.

Now write $[-\bar{\xi}, \bar{\xi}] \subset \cup_j B_j$, where $\{B_j\}$ is a finite collection of overlapping open intervals ($B_j \cap B_{j+1} \ne \emptyset$) of length $2 \eta$.
Let $\xi_j$ be the midpoint of $B_j$.
Then for every $j$, we can apply \cite[Theorem VII.1.7 and Section VII.3.1]{kato2013perturbation} to the finite system of eigenvalues $\nu_1 (\xi), \nu_2 (\xi), \dots, \nu_{k(\xi_j)} (\xi)$ and corresponding eigenfunctions to conclude that they can be chosen analytic in $\xi \in B_j$.
Since the $B_j$ form an open cover of $[-\bar{\xi}, \bar{\xi}]$, this proves that all eigenvalues (and corresponding eigenfunctions) in the energy interval $[0, E^2- \delta)$ can be chosen analytic in $\xi$.
This completes the result.
\end{proof}

\begin{lemma} \label{GardingUnbdd}
Suppose $A_h = \Op_h (\sg)$ is non-negative for all $h \in (0,1]$, 
where $\sg \in \smeh$. 
Then for any constant $c\ge 0$ satisfying $\sg_{\min} \ge c$, there exist constants $\beta > 0$ and $0 < h_0 \le 1$ such that $\aver{\psi, A_h \psi} \ge c - \beta h$ for all $\norm{\psi} = 1$ and $h \in (0,h_0]$.
\end{lemma}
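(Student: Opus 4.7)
The plan is to apply the matrix-valued semi-classical sharp G\aa rding inequality to $B_h := A_h - c\,\mathrm{Id} = \Op_h(a - cI)$. Its symbol $b := a - cI$ is Hermitian-valued, lies in $S^m_{1,0}$, and by the hypothesis $a_{\min} \ge c$ satisfies $b(x,y,\xi,\zeta) \ge 0$ as a Hermitian matrix at every phase-space point. The semi-classical sharp G\aa rding inequality (Weyl-quantization version of H\"ormander's inequality, extended to matrix-valued symbols; cf.\ \cite{DS, Hormander}) then yields a constant $C_1 > 0$, depending only on finitely many seminorms of $b$, such that
\begin{align*}
    \aver{u, B_h u} \ge -C_1 h\,\norm{u}^2_{H^{(m-1)/2}_h}, \qquad u \in \mathcal{S}, \; h \in (0,1],
\end{align*}
where $\norm{u}_{H^s_h} := \norm{(1-h^2\Delta)^{s/2} u}$ denotes the semi-classical Sobolev norm.

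To convert the right-hand side into a multiple of $\norm{u}^2$, we invoke ellipticity. Since $a$ is Hermitian, elliptic of order $m$, and $a \ge cI \ge 0$, there exist $c_0, C_0 > 0$ such that $a + C_0 \ge c_0 \aver{\xi,\zeta}^m I$ as Hermitian matrices. A second application of sharp G\aa rding, now to $a + C_0 - c_0 \aver{\xi,\zeta}^m I \ge 0$, yields
\begin{align*}
    c_0 \norm{u}^2_{H^{m/2}_h} \le \aver{u, A_h u} + C_0 \norm{u}^2 + C h\,\norm{u}^2_{H^{(m-1)/2}_h}.
\end{align*}
Young's inequality applied to the interpolation $\norm{u}^2_{H^{(m-1)/2}_h} \le \norm{u}^{2(m-1)/m}_{H^{m/2}_h} \norm{u}^{2/m}$ permits the last term to be absorbed into the left-hand side for $h$ small, giving $\norm{u}^2_{H^{m/2}_h} \le C_2(\aver{u, A_h u} + \norm{u}^2)$ and hence $\norm{u}^2_{H^{(m-1)/2}_h} \le C_3(\aver{u, A_h u} + \norm{u}^2)$.

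Combining the two estimates gives $\aver{u, A_h u} - c\norm{u}^2 \ge -C_1 C_3 h (\aver{u, A_h u} + \norm{u}^2)$, which rearranges to $(1 + C_1 C_3 h)\aver{u, A_h u} \ge (c - C_1 C_3 h)\norm{u}^2$. Choosing $h_0 \in (0,1]$ small enough, one obtains $\aver{u, A_h u} \ge (c - \beta h)\norm{u}^2$ for all $h \in (0, h_0]$, with $\beta > 0$ independent of $u$; setting $\norm{u} = 1$ yields the claim. The only non-routine ingredient is the matrix-valued semi-classical sharp G\aa rding inequality --- the rest is standard ellipticity and Sobolev-interpolation bookkeeping. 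The hypothesis $A_h \ge 0$ is consistent with but not used directly by this argument; only the pointwise matrix inequality $a \ge cI$ and the ellipticity of $a$ enter.
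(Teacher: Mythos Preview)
Your argument is correct, but it follows a genuinely different route from the paper's. The paper avoids applying G\aa rding to an unbounded symbol: it uses the hypothesis $A_h \ge 0$ to form the resolvent $(1+A_h)^{-1} = \Op_h(r_h)$ with $r_h = (1+a)^{-1} + O(h) \in S(1)$, applies the \emph{bounded-symbol} sharp G\aa rding inequality \cite[Theorem 7.12]{DS} to get $\aver{\phi,(1+A_h)^{-1}\phi} \le ((1+c)^{-1}+\beta_0 h)\norm{\phi}^2$, and then inverts via the substitution $\phi = (1+A_h)^{1/2}\psi$ to obtain $\aver{\psi,(1+A_h)\psi} \ge 1+c-\beta h$.

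Your approach instead applies the matrix sharp G\aa rding inequality directly to $a-cI \in S^m_{1,0}$, accepts the semi-classical Sobolev error $-Ch\norm{u}^2_{H^{(m-1)/2}_h}$, and closes via an ellipticity estimate and interpolation. This is more direct and, as you note, does not actually use the operator hypothesis $A_h \ge 0$ --- only the symbol bound $a \ge cI$ and ellipticity. The paper's route, by contrast, invokes only the $S(1)$ version of G\aa rding available verbatim in \cite{DS}, at the cost of the extra non-negativity hypothesis to make $(1+A_h)^{-1}$ well defined. Both are valid; yours is slightly more general, the paper's is slightly more self-contained relative to the cited reference.
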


\begin{proof}
We know that $A_h$ is self-adjoint by our assumptions on $\sg$.
Since $A_h \ge 0$, it follows that $(1+A_h)^{-1} = \Op_h (r_h)$, where $S(1) \ni r_h = (1+\sg)^{-1} + O(h)$.
By the semi-classical sharp G\aa rding inequality \cite[Theorem 7.12]{DS}, there exist constants $\beta_0 > 0$ and $0 < h_0 \le 1$ such that
\begin{align} \label{Garding}
0 < \aver{\phi, (1+A_h)^{-1}\phi} \le \Big((1+c)^{-1} +\beta_0 h\Big) \norm{\phi}^2
\end{align}
for all nonzero $\phi \in \mathcal{H}$ and $h \in (0,h_0]$.
Let $\psi \in \mathcal{D} (A_h)$ such that $\norm{\psi} =1$, and define $\phi = (1+A_h)^{1/2} \psi$.
Then
\begin{align*}
    \aver{\psi, (1+A_h) \psi} = \norm{\phi}^2 = \frac{\norm{\phi}^2}{\norm{\psi}^2} = \frac{\norm{\phi}^2}{\aver{\phi, (1+A_h)^{-1} \phi}} \ge 1+c - \beta h, \qquad h \in (0, h_0]
\end{align*}
for some fixed constant $\beta$. This completes the proof.
\end{proof}

\paragraph{Spectral approximations.}
Recall that $H' = \Op (\sym')$, where for every $y' \in \mathbb{R}$, there exists $y \in \mathbb{R}$ such that $\sym' (y', \xi, \zeta) = \sym(y,\xi,\zeta)$. Since $\sym$ satisfies \hone, it follows that $\sym'$ is elliptic: $\sym'_{\min} \ge c \aver{\xi,\zeta}^m - 1$.
Hence the fact that $\sym' = \sym_+$ whenever $y \le \pi - y_0$ and $\sym' = \sym_-$ whenever $y \ge \pi + y_0$ implies that $\sym'$ also satisfies \hone, with the roles of $\sym_+$ and $\sym_-$ reversed.
We conclude that Proposition \ref{acProp} 
still holds if $H$ is replaced by $H'$, where we would instead define
$\tilde{\psi} := \ul^{-1}_{\kp_1, \pi} \psi$
in the third paragraph of the proof.

The approximation result of Theorem \ref{periodicApprox} requires a detailed analysis of the spectrum of $H_\kp$ constructed in \eqref{Hlambda2D}, which it inherits from that of the operators $H$ and $H'$.  For $\xi \in \mathbb{R}$,
let $\mu_1 (\xi) \le \mu_2 (\xi) \le \mu_3 (\xi) \dots$ 
denote the {\em combined} eigenvalues of $\hat{H}^2 (\xi)$ and $\hat{H}'^2 (\xi)$
below $\Sigma_{\ess} (\hat{H}^2 (\xi)) \cup \Sigma_{\ess} (\hat{H}'^2 (\xi))$, respectively, the infimum of $\Sigma_{\ess} (\hat{H}^2 (\xi)) \cup \Sigma_{\ess} (\hat{H}'^2 (\xi))$, once there are no more eigenvalues left.
Let $R_\xi$ (resp. $R'_\xi$) be the set consisting of the indices $j$ for which $\mu_j (\xi)$ is an eigenvalue of $\hat{H}^2 (\xi)$ (resp. $\hat{H}'^2 (\xi)$).
Thus $R_\xi$ and $R'_\xi$ form a partition of the indices below
$\Sigma_{\ess} (\hat{H}^2 (\xi)) \cup \Sigma_{\ess} (\hat{H}'^2 (\xi))$.
When $\mu_j (\xi) \in R_\xi \cup R'_\xi$, we denote the corresponding normalized eigenfunction by $\psi_{j,\xi}(y)$.
We choose the eigenfunctions so that $(\psi_{i,\xi}, \psi_{j,\xi}) = \delta_{ij}$ whenever $i,j \in R_\xi$ or $i,j \in R'_\xi$.

Let $\tilde{s} (\xi)$ be the total number of eigenvalues of $\hat{H}^2$ and $\hat{H}'^2$ lying in $[0, E^2 - \frac{\delta_0}{2})$.
By Proposition \ref{acProp} 
(and the above paragraphs), we know that 
$\tilde{s}(\xi)$ is indeed finite and that
the eigenvalues $\mu_1 (\xi) \dots \mu_{\tilde{s} (\xi)} (\xi)$
make up the entire spectrum of $\hat{H}^2 (\xi)$ and $\hat{H}'^2 (\xi)$ in $[0, E^2 - \frac{\delta_0}{2})$. \tcbn{Moreover, there exists $s_0 \in \mathbb{N}$ such that $\tilde{s}(\xi) \le s_0$ for all $\xi \in \mathbb{R}$, and there exists a compact interval $I \subset \mathbb{R}$ such that $\tilde{s}(\xi) = 0$ for all $\xi \notin I$.}

Let $\mu_{\kp,1}(\xi) \le \mu_{\kp,2}(\xi) \le \mu_{\kp,3}(\xi) \dots$ denote the eigenvalues of the operator $\hat{H}^2_{\kp} (\xi)$ defined on the torus, which we know are well defined and go to infinity
by Proposition \ref{evalsInfty}.
Let $\theta_{\kp,1,\xi}, \theta_{\kp,2,\xi},\theta_{\kp,3,\xi}, \dots$ be the corresponding (orthonormalized) eigenfunctions.

We now prove 
a result of (collective) exponential decay of the eigenfunctions of an operator with constant coefficients outside a bounded interval.
This verifies that the low-energy eigenfunctions are localized to the vicinity of the domain walls, in both the periodic and infinite-space settings.
\begin{proposition} \label{expDecay}
For any $\alpha \in \mathbb{N}$,
there are positive constants $C$ and $r$ such that $|\partial^\alpha_y \psi_{j,\xi}| (y) \le C e^{-r |y|}$ uniformly in $j \in \{1,2, \dots, \tilde{s}(\xi)\}$ and $\xi \in I$. 
Similarly, given any closed interval $T \subset \mathbb{T}$ not containing $0$ or $\pi$, $\norm{\kp^\alpha \partial^\alpha_y \theta_{\kp,j,\xi}}_{T} \le C e^{-r/\kp}$ uniformly in 
$j$ and $\xi$ satisfying $\mu_{\kp,j} (\xi) < E^2 - \delta_0/2$.
\end{proposition}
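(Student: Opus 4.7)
The plan is to establish both estimates by a Combes-Thomas weighted-conjugation argument, using ellipticity together with the spectral gap $\mu_j(\xi) < E^2 - \delta_0/2$ to obtain exponential $L^2$-decay, and then to promote this to pointwise decay of derivatives via elliptic regularity and one-dimensional Sobolev embedding.

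For the infinite-space eigenfunctions $\psi_{j,\xi}$, I would begin by observing that on $\{\pm y \ge y_0\}$ the coefficients of $\hat H(\xi)$ agree with those of $\hat H_\pm(\xi) := \Op(\sigma_\pm(\xi,\cdot))$, so the eigenvalue equation reduces there to the constant-coefficient identity $(\hat H_\pm^2(\xi) - \mu_j(\xi))\psi_{j,\xi} = 0$. Fix a smooth weight $\rho \in \mathcal{C}^\infty(\mathbb{R})$ with $\rho \equiv 0$ on $|y| \le y_0$, $\rho(y) = |y|$ on $|y| \ge 2y_0$, and $|\rho'|\le 1$. For small $r > 0$, the conjugation $\hat H^r(\xi) := e^{r\rho} \hat H(\xi) e^{-r\rho}$ replaces each $D_y$ by $D_y - ir\rho'$, hence differs from $\hat H(\xi)$ by a differential operator of order at most $m-1$ with coefficients of size $O(r)$. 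Squaring, the perturbation of $\hat H_\pm^2(\xi) - \mu_j(\xi)$ has relative bound $O(r)$, and since $\hat H_\pm^2(\xi) - \mu_j(\xi) \ge \delta_0/2$ uniformly in $\xi \in I$ and $j \le \tilde s(\xi)$, the conjugated operator remains boundedly invertible for $r$ small enough and uniform in $j,\xi$. Rewriting the eigenvalue equation as $(\hat H_\pm^2(\xi) - \mu_j(\xi))\psi_{j,\xi} = (\hat H_\pm^2(\xi) - \hat H^2(\xi))\psi_{j,\xi}$, with the right-hand side compactly supported in $\{|y| \le y_0\}$, then applying $e^{r\rho}$ and inverting, yields $\norm{e^{r\rho}\psi_{j,\xi}} \le C$ uniformly. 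Iterating elliptic regularity on $\hat H^2(\xi) \psi_{j,\xi} = \mu_j(\xi)\psi_{j,\xi}$ upgrades this to uniform control of $\norm{e^{r\rho}\partial_y^\alpha \psi_{j,\xi}}$, and one-dimensional Sobolev embedding converts the weighted $L^2$ bound into the stated pointwise estimate.

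For the periodic assertion, I would transport the problem to a large bounded interval. The unitary rescaling $\Lambda_{\lambda,0}$ identifies $\hat H_\lambda(\xi)$ restricted to $|y| \le \pi/2$ with $\hat H(\xi)$ restricted to $|y'|\le \pi/(2\lambda)$, while $\Lambda_{\lambda,\pi}$ identifies its restriction to $|y-\pi|\le \pi/2$ with $\hat H'(\xi)$ on a similar interval. Applying the Combes-Thomas argument above on each half-period (the essential spectra of $\hat H^2(\xi)$ and of its primed analogue coincide, so the uniform spectral gap from Proposition \ref{acProp} persists) gives exponential $L^2$-decay of the rescaled eigenfunctions at rate $e^{-r|y'|}$ away from $y'=0$ and $y'=\pi/\lambda$. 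Undoing the rescaling turns $e^{-r|y'|}$ into $e^{-r|y|/\lambda}$ on $\mathbb{T}$, and the chain rule introduces the factor $\lambda^\alpha$ into each $y$-derivative. Restricting to $T$ bounded away from $0$ and $\pi$ forces $|y|\ge c > 0$ there, giving the claimed $e^{-r/\lambda}$ bound.

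The main obstacle will be arranging the Combes-Thomas rate $r$ to be uniform in $j \in \{1,\ldots,\tilde s(\xi)\}$ and $\xi \in I$. This requires that $(\hat H^r_\pm(\xi)^2 - \mu_j(\xi))^{-1}$ stay bounded via a Neumann series whose radius is independent of these parameters, which in turn rests on the uniform spectral gap $E^2 - \mu_j(\xi) \ge \delta_0/2$, the uniform bound $\tilde s(\xi) \le s_0$, and the compactness of $I$ together with the analyticity of eigenelements in $\xi$ provided by Proposition \ref{acProp}. A secondary subtlety in the periodic case is that low-energy eigenfunctions may be delocalized between the two domain walls at $0$ and $\pi$; this causes no difficulty, since $T$ excludes neighborhoods of both walls and the weighted estimate applies uniformly on the union.
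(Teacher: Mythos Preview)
Your Combes--Thomas approach is a valid alternative, but the paper takes a more direct route that exploits the \emph{exact} constancy of the coefficients on each half-line. On $\{y > y_0\}$ the eigenvalue equation becomes the constant-coefficient ODE $(\sigma_{+,\xi}^2(D_y)-\mu)\psi=0$, which the paper rewrites as a first-order system $u'=Au$ and reduces to Jordan form. Every characteristic exponent $\nu$ then satisfies $\det(\sigma_{+,\xi}^2(-i\nu)-\mu)=0$, and the spectral-gap assumption combined with ellipticity forces $|\Re\nu|\ge\delta>0$ uniformly in $\xi\in I$ and $\mu\in[0,E^2-\delta_0/2]$; since $\psi\in L^2$, only the decaying modes survive. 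The periodic case is handled by the identical ODE on the rescaled interval $\lambda^{-1}T_1$: bounded $L^2$-norm on an interval of length $\sim\lambda^{-1}$ forces the exponentially growing coefficients to be $O(e^{-r/\lambda})$. This is more elementary and requires no weighted-operator machinery, though it would not survive if the coefficients were merely asymptotically constant.

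Your argument needs one correction. The claim that $(\hat H_\pm^2(\xi)-\hat H^2(\xi))\psi_{j,\xi}$ is compactly supported in $\{|y|\le y_0\}$ is false: using $\hat H_+$, the difference equals $\hat H_+^2-\hat H_-^2\ne 0$ for all $y\le -y_0$. The standard repair is to treat the two half-lines separately with a cutoff: take $\chi$ supported in $\{y\ge y_0\}$ and equal to $1$ on $\{y\ge 2y_0\}$, so that $(\hat H_+^2-\mu)(\chi\psi)=[\hat H_+^2,\chi]\psi$ has support in $\{y_0\le y\le 2y_0\}$, then conjugate by $e^{r\rho}$ with $\rho$ supported on the right half-line. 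The periodic case needs the same fix, and your invocation of essential spectra there is a red herring (the torus operator has pure point spectrum): what you actually do is cut off inside the constant-coefficient region $(\lambda y_0,\pi-\lambda y_0)$, rescale, and invert the conjugated $\hat H_+^2-\mu$ on $L^2(\mathbb{R})$ with a weight that grows toward the interior, so that $e^{r\rho}\sim e^{cr/\lambda}$ on the rescaled $T$. With these cutoffs in place your scheme goes through and yields the same uniform constants as the paper's ODE argument.
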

Here and below, we use the shorthand $\norm{\cdot}_{T} := \norm{\cdot}_{L^2 (T) \otimes \mathbb{C}^n}$ if $T \subset \mathbb{T}$.
\begin{proof}
We first prove the claim regarding the infinite-space eigenfunctions and assume $y>0$ for concreteness. 
Restricted to the region $y > y_0$, we have
$
    \hat{H} (\xi) = \Op (\sym_{+,\xi} (\zeta)),
$
where $\sym_{+,\xi} (\zeta) = \sym_+ (\xi,\zeta)$.
Our spectral gap assumption in \hone\ is that $\sym^2_+(\xi,\zeta)$ does not have eigenvalues inside $[0,E^2)$ while here by construction, $\mu_{j}\in [0,E_0^2]$ 
with $E_0 := \sqrt{E^2 - \frac{\delta_0}{2}} < E$.
By construction of $\psi_{j,\xi}$, the statement thus follows if we can prove the bound for any solution $\psi$ of 
\begin{align}\label{eq:Decay}
  (\sym^2_{+,\xi}(D_y) -\mu) \psi(y) =0,\qquad y > y_0 
\end{align}
for an eigenvalue $\mu\in [0,E_0^2]$. 
%
We replace the above system by the first-order system 
\[
  \frac{d}{dy} u = Au
\]
for $A = A_\xi$ a $N\times N$ matrix with $N=2mn$ and $u=(\psi,\psi',\ldots,\psi^{(2m-1)})^t$. We then write $A=PJP^{-1}$ with $J$ in Jordan form and $u=Pv$. 
Solving $J v = \frac{d}{dy}v$, we see that each eigenvalue $\nu$ of $J$ must correspond to an eigenfunction of the form
$v=e^{\nu y}v_0$ so that $u=e^{\nu y} Pv_0$. 
This means that for every such $\nu$, there exists a solution $\psi$ 
of \eqref{eq:Decay} that is of the form $\psi (y) = e^{\nu y} \bar{\psi}_0$, where $\bar{\psi}_0 \in \mathbb{C}^n$ is independent of $y$.
Since $\psi \ne 0$ by definition, this shows that
\begin{align}\label{eq:detCondition}
\det (\sym^2_{+,\xi} (-i\nu) - \mu) = 0; 
\end{align}
that is, $\sym^2_{+,\xi} (-i\nu)$ must have an eigenvalue that is equal to $\mu$.
We will now show that $\nu$ must be bounded away from the imaginary axis, uniformly in all variables.
First, observe that since $\sym_{+,\xi} (\zeta)$ is bounded above by $C\aver{\xi, \zeta}^m$ and elliptic in $\zeta$, there exist positive constants $\nu_1$ and $\delta_1$ such that whenever $|\Im \nu| > \nu_1$ and $|\Re \nu| < \delta_1$, all eigenvalues of $\sym_{+,\xi} (-i\nu)$ lie outside $(-E, E)$.
Since $\sym_{+,\xi} (\zeta)$ is continuous in $\zeta$ and $\sym_{+,\xi}(-i\nu)$ does not have spectrum in $(-E, E)$ when $\Re \nu = 0$, there exists $\delta_2 > 0$ such that whenever $|\Im \nu| \le \nu_1$ and $|\Re \nu| < \delta_2$, all eigenvalues of $\sym_{+,\xi} (-i\nu)$ lie outside $[-E_0, E_0]$.
Thus, taking $\tilde{\delta} := \min\{\delta_1, \delta_2\}$, we have shown that all $\nu$ satisfying \eqref{eq:detCondition} must also satisfy $|\Re \nu| \ge \tilde{\delta}$.

It is clear that $\tilde{\delta} = \tilde{\delta} (\xi,\mu)$ is continuous in both variables.
Since
$\sym_{+,\xi} (\zeta)$ is bounded above by $C\aver{\xi, \zeta}^m$ and elliptic in $\xi$, there exist positive constants
$\xi_0$ and $\delta_0$ such that for all $|\xi| \ge \xi_0$, any $\nu$ solving \eqref{eq:detCondition} must also satisfy $|\Re \nu| \ge \delta_0$.
Thus, setting $K := [-\xi_0, \xi_0] \times [-E_0, E_0]$ and $\delta := \min\{\delta_0, \min_{(\xi,\mu) \in K} \tilde{\delta} (\xi, \mu)\} > 0$,
we have shown that all solutions $\nu$ of \eqref{eq:detCondition} satisfy $|\Re \nu| \ge \delta > 0$ uniformly in $\xi \in \mathbb{R}$ and $\mu\in [0, E_0^2]$.

Note that every solution $\psi$ of \eqref{eq:Decay} is a finite linear combination of terms of the form $y^k e^{\nu y} \bar{\psi}_k$, with 
$\nu$ satisfying \eqref{eq:detCondition}.
Since $\psi$ is square integrable in $y$ (and thus cannot be exponentially increasing), 
this shows that $\psi$ and all of its derivatives are exponentially decreasing as $y \rightarrow \infty$, uniformly in $\xi$ and $\mu$.
Thus we have proven the desired decay property for eigenfunctions of $\hat{H} (\xi)$ as $y \rightarrow \infty$. The argument for exponential decay as $y \rightarrow -\infty$ is identical, as is the proof for eigenfunctions of $\hat{H}' (\xi)$.

The same ideas are used for the periodic eigenfunctions, only now we must handle the $\kp$-dependence.
By definition of $T$, there exists a closed interval $T_1 \subset \mathbb{T}$, also not containing $0$ or $\pi$, such that $T \subset T_1^\circ$. 
Suppose that $T_1 \subset (0, \pi)$ for concreteness.
The defining equation for the eigenfunctions $\theta_{\kp,j,\xi}$ in $T_1$ is
$$
\sym_+^2(\xi,\kp D_y) \theta_{\kp, j,\xi}(y) = \mu_{\kp, j}(\xi)\theta_{\kp, j,\xi}(y).
$$
This is equivalent to solving an equation of the form
\begin{align}\label{eq:DecayPeriodic}
  (\sym_+^2(D_y) -\mu) f(y) =0,\qquad f(y)= \theta_\kp(\kp y),
\end{align}
where $y \in T_{\kp,1} := \kp^{-1} T_1$ 
and $0\le\mu\leq E_0^2<E^2$.
The differential equation in \eqref{eq:DecayPeriodic}
is identical to \eqref{eq:Decay}, with $f$ replacing $\psi$. 
Hence
$\theta_\kp (y) = f(y/\kp)$ is exponentially increasing or decreasing in $T_{1}$ with a rate that is proportional to $\kp^{-1}$.
Since $\theta_\kp (y)$ has bounded $L^2-$norm on $T_{1}$, 
its $L^2-$norm on $T$ is bounded by $C e^{-r/\kp}$ for $r>0$.
Derivatives of $\theta_\kp$ also satisfy such an estimate. This completes the proof.
\end{proof}

What remains before proving Theorem \ref{periodicApprox} is to show that the
combined spectrum (including eigenspaces) of the infinite-space Hamiltonians is a good approximation of the spectrum of the periodic Hamiltonian.
We will truncate the infinite-space eigenfunctions $\psi_{j,\xi}$ so that they have compact support and can be embedded in the torus.
Due to the rapid decay (in $y$) of these eigenfunctions and their derivatives, the truncated versions \emph{almost} solve the eigenproblem, admitting a small residual term that goes to zero with $\kp$.
The fact that the eigenvalues 
for the periodic problem are themselves well approximated
is the statement of Proposition \ref{propEvalApprox}, and
follows from Lemma \ref{lemmaOrthogonal} and Theorem \ref{maxmin}.
Once the error in eigenvalues is controlled, we use
Lemma 
\ref{switchBasis}
to show that the periodic eigenfunctions are well approximated by their infinite-space analogues, which will allow us to
directly bound the error $|\tilde{\sigma}_I (H_\kp) - \sigma_I (H)|$.


\tcbn{
For the following results, $\xi \in \mathbb{R}$ is assumed to be arbitrary and is dropped from the notation for brevity. It will be clear that the below estimates are all uniform in $\xi$, given the uniform bounds from Proposition \ref{expDecay}.}

\medskip

\paragraph{Eigenvalue approximation.}

Let $s$ be the total number of eigenvalues of $\hat{H}^2$ and $\hat{H}'^2$ lying in $[0, E^2 - \delta_0)$.
For $\ell \in \{0,1, \dots, s\}$, define
$\tilde{R}_{\ell}:= \{1,2, \dots, \ell\} \cap R$ and $\tilde{R}'_{\ell}:= \{1,2, \dots, \ell\} \cap R'$. \footnote{The sets  $R = R_\xi$ and $R'= R'_\xi$ are defined between Propositions \ref{acProp} and \ref{expDecay}.}
We will need to approximate the eigenfunctions of the infinite-space Hamiltonian by functions that live on the torus.
Let $\hat{\chi}_0 \in \mathcal{C}^\infty_c (-\pi/4, \pi/4)$ such that $0 \le \hat{\chi}_0 \le 1$ on $\mathbb{R}$ and
$\hat{\chi}_0 = 1$ in $[-\pi/8, \pi/8]$.
Similarly, let
$\hat{\chi}_\pi \in \mathcal{C}^\infty_c (\pi-\pi/4, \pi+\pi/4)$
such that $0 \le \hat{\chi}_\pi \le 1$ on $\mathbb{R}$ and
$\hat{\chi}_\pi = 1$ in $[\pi-\pi/8, \pi+\pi/8]$.
Define
$\psi_{\kp, j}^0 := \hat{\chi}_{t_j} \ul^{-1}_{\kp,t_j} \psi_{j}$, where $t_j = 0$ if $j \in R$ and $t_j = \pi$ if $j \in R'$.
Set $\psi_{\kp,1} := N_{\kp,1}\psi_{\kp,1}^0$,
and for all $j > 1$ define
\begin{align*}
    \psi_{\kp,j} = N_{\kp,j} (\psi_{\kp,j}^0 - \sum_{i=1}^{j-1} \aver{\psi_{\kp,i}, \psi_{\kp,j}^0} \psi_{\kp,i}),
\end{align*}
with $N_{\kp, j} > 0$ such that $\norm{\psi_{\kp,j}} = 1$.
Thus the $\psi_{\kp, j}$ form an orthonormal set of functions on $\mathbb{R}^2$.
Let $\tilde{\psi}_{\kp,j} := \mathcal{P} \psi_{\kp,j}$, which is well defined since the support of each $\psi_{\kp,j}$ is contained in an interval of length $\pi/2$.

In words, what we have done is embed the infinite-space eigenfunctions on the torus by rescaling in $\kp$ as necessary, truncating them to have compact support and orthonormalizing at the end.
By the rapid decay 
of the $\psi_{j,\xi}$ (Proposition \ref{expDecay}), the error we get from truncating and orthonormalizing goes to zero 
exponentially in $\kp$.
Namely, we have that for any 
$\alpha \in \mathbb{N}$, there exist positive constants $C$ and $r$ such that
\begin{align}\label{decayInfiniteApprox}
    \vertiii{\kp^{|\alpha|}\partial_y^\alpha (\psi_{\kp,j} - \ul^{-1}_{\kp,t_j} \psi_{j})} \le C e^{-r/\kp}
\end{align}
uniformly in $\kp \in (0, \kp_0]$ 
for all $j \in \{1,2, \dots, s\}$.

\begin{lemma}\label{lemmaOrthogonal}
There exist positive constants $C$ and $r$ such that for all $\kp \in (0, \kp_0], \ell \in \{0,1, \dots, s\}$, and 
$u_0 \in ( \s \{\tilde{\psi}_{\kp,j}\}_{j =1}^\ell)^\perp$ 
such that either $\supp (u_0) \subset (-\frac{3\pi}{4}, \frac{3\pi}{4})$ or $\supp (u_0) \subset (\frac{\pi}{4}, \frac{7\pi}{4})$, 
\begin{align} \label{eqOrthogonal}
\aver{u_0, \hat{H}_\kp^2 u_0} \ge \Big(\min\{\mu_{\ell+1}, E^2 + 1\} - C e^{-r/\kp}\Big) \norm{u_0}^2.
\end{align}
\end{lemma}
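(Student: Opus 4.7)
The plan is to treat the case $\supp(u_0) \subset (-3\pi/4, 3\pi/4)$; the case $\supp(u_0) \subset (\pi/4, 7\pi/4)$ follows by the symmetric argument using $\hat{H}'$ and unfolding about $y = \pi$ instead of $y = 0$. I would first unfold $u_0$ by setting $v := \Lambda_{\lambda,0} \tilde{\mathcal{P}} u_0 \in \mathcal{H}(\mathbb{R})$, noting $\vertiii{v} = \norm{u_0}$ by unitarity and $\supp(v) \subset (-3\pi/(4\lambda), 3\pi/(4\lambda))$. The key operator identity is
\begin{align*}
\hat{H}(\xi) v(y) = \lambda^{1/2} (\hat{H}_\lambda u_0)(\lambda y) \qquad \forall \, y \in \mathbb{R},
\end{align*}
which follows because the coefficients $a_{\lambda, j}(\lambda y)$ of $\hat{H}_\lambda$ agree with the coefficients $a_j(y)$ of $\hat{H}(\xi)$ throughout $\supp(v)$: for $|\lambda y| \le \pi/2$ this holds by construction ($a_{\lambda,j}(\lambda y)=a_j(y)$), and for $\pi/2 < |\lambda y| \le 3\pi/4$ both equal the constant limit $a_{j,\pm}$ (using $|y| > y_0$ for $\lambda$ small enough). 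Taking norms squared yields the exact identity $\aver{u_0, \hat{H}_\lambda^2 u_0}_{\mathbb{T}} = \aver{v, \hat{H}^2(\xi) v}_{\mathbb{R}}$.

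Next, I would transfer the orthogonality relations from the torus to $\mathbb{R}$. Using $\psi_{\lambda, j}^0 = \hat{\chi}_0 \Lambda^{-1}_{\lambda, 0} \psi_j$ and the unitarity of $\Lambda_{\lambda, 0}$, one computes $\aver{v, \psi_j}_{\mathbb{R}} = \aver{u_0, \psi_{\lambda,j}^0}_{\mathbb{T}} + O(e^{-r/\lambda}\norm{u_0})$ for $j\in R$, where the error comes from the tail of $\psi_j$ outside the cutoff $\hat{\chi}_0$, controlled by Proposition~\ref{expDecay}. A separate exponentially small error accounts for the Gram-Schmidt correction from $\psi_{\lambda,j}^0$ to $\psi_{\lambda,j}$: the overlaps $\aver{\psi_{\lambda, j_1}^0, \psi_{\lambda, j_2}^0}$ differ from $\delta_{j_1 j_2}$ by $O(e^{-r/\lambda})$ (and the cross terms between $j \in R$ and $j \in R'$ vanish exactly, thanks to the disjoint supports of $\hat{\chi}_0$ and $\hat{\chi}_\pi$). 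Since $u_0 \perp \tilde{\psi}_{\lambda, j}$ on $\mathbb{T}$ for $j \le \ell$, and since for $j \in \tilde{R}'_\ell$ the function $\tilde{\psi}_{\lambda,j}$ is supported disjointly from $u_0$, I conclude $|\aver{v, \psi_j}_{\mathbb{R}}| \le Ce^{-r/\lambda} \norm{u_0}$ for every $j \in \tilde{R}_\ell$.

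Finally, setting $\hat{v} := v - \sum_{j \in \tilde{R}_\ell} \aver{v, \psi_j} \psi_j$, we have $\hat{v} \perp \psi_j$ for all $j \in \tilde{R}_\ell$, with $\vertiii{v - \hat{v}} \le Cs_0 e^{-r/\lambda}\norm{u_0}$, so $\vertiii{\hat{v}}^2 \ge (1 - Ce^{-r/\lambda})\norm{u_0}^2$. Theorem~\ref{maxmin} applied to $\hat{H}^2(\xi)$ yields $\aver{\hat{v}, \hat{H}^2(\xi) \hat{v}} \ge E_{|\tilde{R}_\ell|+1}^{(\hat{H})} \vertiii{\hat{v}}^2$, where $E_k^{(\hat{H})}$ denotes either the $k$-th eigenvalue of $\hat{H}^2(\xi)$ below its essential spectrum or the infimum of that essential spectrum. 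A short case analysis --- using that the $(|\tilde{R}_\ell|+1)$-st eigenvalue of $\hat{H}^2(\xi)$ below its essential spectrum (if it exists) is some $\mu_j$ with $j \ge \ell+1$, and that the essential spectrum of $\hat{H}^2(\xi)$ lies in $[E^2, \infty)$ by (H1) --- establishes $E_{|\tilde{R}_\ell|+1}^{(\hat{H})} \ge \min\{\mu_{\ell+1}, E^2 + 1\}$. Combining with the identity from Step~1 and bounding $|\aver{v, \hat{H}^2 v} - \aver{\hat{v}, \hat{H}^2 \hat{v}}|$ by $C\bar{\mu}\, \vertiii{v-\hat{v}}\, \vertiii{v}$ --- where $\bar{\mu}$ is a uniform upper bound on $\mu_j$ for $j \le s_0$ --- yields the claim. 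The main obstacle is uniformly tracking the exponentially small errors in $\ell \le s$ and $\xi$, so that the Gram-Schmidt stability, the cutoff tail estimates, and the min-max bound all combine correctly.
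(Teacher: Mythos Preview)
Your proof is correct and follows essentially the same approach as the paper: unfold $u_0$ to $u = \Lambda_{\lambda,0}\tilde{\mathcal{P}}u_0$ on $\mathbb{R}$, establish $|(\psi_j,u)| \le Ce^{-r/\lambda}$ for $j\in\tilde{R}_\ell$ from the hypothesis $u_0\perp\tilde{\psi}_{\lambda,j}$, project off these eigencomponents to get $\hat v$, and invoke the min--max principle for $\hat{H}^2$. The paper is slightly more streamlined in two places: it cites \eqref{decayInfiniteApprox} directly for the orthogonality transfer (bundling your cutoff and Gram--Schmidt errors into the single estimate $\vertiii{\Lambda_{\lambda,0}\psi_{\lambda,j}-\psi_j}\le Ce^{-r/\lambda}$), and it uses the exact expansion $(u,\hat{H}^2 u) = (\hat v,\hat{H}^2\hat v) + \sum_{j\in\tilde{R}_\ell}\mu_j|(\psi_j,u)|^2$ (the cross term vanishes since $\hat{H}^2(u-\hat v)$ lies in $\s\{\psi_j:j\in\tilde{R}_\ell\}$ and $\hat v$ is orthogonal to that span), which immediately gives $(u,\hat{H}^2 u)\ge(\hat v,\hat{H}^2\hat v)$ without bounding any error term.
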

\begin{proof}
Suppose $\supp (u_0) \subset (-\frac{3\pi}{4}, \frac{3\pi}{4})$ 
for concreteness and $\norm{u_0} = 1$ without loss of generality.
Then for $u := \ul_{\kp,0}\tilde{\mathcal{P}} u_0$, we have
$\aver{u_0, \hat{H}_\kp^2 u_0} = (u, \hat{H}^2 u)$.
Define $\tilde{u} := u - \sum_{j\in \tilde{R}_{\ell}}(\psi_{j}, u) \psi_{j}$, so that $(\psi_{j}, \tilde{u}) = 0$ for all $j \in R_{\ell}$, and hence $$(\tilde{u}, \hat{H}^2 \tilde{u}) \ge \vertiii{\tilde{u}}^2\min\{ \mu_j : j \in (R \cup \{s_0 + 1\}) \cap \{\ell + 1, \ell + 2, \dots, s_0 + 1\}\} \ge \mu_{\ell+1} \vertiii{\tilde{u}}^2$$ 
by Theorem \ref{maxmin}.
Since $\vertiii{\tilde{u}}^2 = 1 - \sum_{j\in \tilde{R}_{\ell}} |(\psi_{j}, u)|^2$,
it follows that
\begin{align*}
    (u, \hat{H}^2 u) = (\tilde{u}, \hat{H}^2 \tilde{u}) + \sum_{j\in \tilde{R}_{\ell}}\mu_j |(\psi_{j}, u)|^2
    \ge \mu_{\ell+1} - \sum_{j\in \tilde{R}_{\ell}}(\mu_{\ell+1} -\mu_j) |(\psi_{j}, u)|^2.
\end{align*}
Using that $(\ul_{\kp,0} \psi_{\kp,j}, u) = 
0$, we see by \eqref{decayInfiniteApprox} that
$$|(\psi_{j}, u)| = |(\ul_{\kp,0} \psi_{\kp,j} - \psi_{j}, u)| \le \vertiii{\ul_{\kp,0} \psi_{\kp,j} - \psi_{j}} \le C e^{-r/\kp}$$ for all $j \in \tilde{R}_{\ell}$.
We conclude that
\begin{align*}
    (u, \hat{H}^2 u) \ge
    \mu_{\ell + 1} (1 - C e^{-r/\kp}) 
    \ge \min\{\mu_{\ell+1}, E^2 + 1\} - C e^{-r/\kp},
\end{align*}
and the result is complete.
\end{proof}

We will now show that the eigenvalues for the periodic problem are well approximated by those for the infinite-space problem. 
The lower bounds for $\mu_{\kp,j} (\xi)$ will be obtained using Theorem \ref{maxmin} and Lemma \ref{lemmaOrthogonal}.

\begin{proposition} \label{propEvalApprox}
There exist positive constants $C$ and $r$ such that $$\mu_{\ell} - C e^{-r/\kp} \le \mu_{\kp,\ell}\le \mu_{\ell} + C e^{-r/\kp}$$
uniformly in $\kp \in (0, \kp_0]$ 
for all $\ell \in \{1,2, \dots, s\}$.
Moreover, 
$\mu_{\kp, s + 1} \notin \supp (\Upsilon)$
for all $\kp$ sufficiently small. 
\end{proposition}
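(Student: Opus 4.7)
The plan is to combine three ingredients: (i) a residual estimate showing that the truncated, rescaled, orthonormalized functions $\tilde{\psi}_{\lambda,j}$ are exponentially accurate approximate eigenvectors of $\hat{H}_\lambda^2$ with approximate eigenvalue $\mu_j$; (ii) a coarse polynomial-in-$\lambda$ lower bound on $\mu_{\lambda,s+1}$ obtained from Lemma~\ref{lemmaOrthogonal} together with an IMS (Ismagilov--Morgan--Simon) localization, which separates the bottom $s$ eigenvalues from the rest of the spectrum and directly yields the ``moreover'' claim; and (iii) a Hermitian matrix perturbation argument that converts the exponential residual into the sharp two-sided bound $\mu_{\lambda,\ell}=\mu_\ell+O(e^{-r/\lambda})$.

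The first step establishes $\norm{(\hat{H}_\lambda^2-\mu_j)\tilde{\psi}_{\lambda,j}}\le Ce^{-r/\lambda}$ for each $j\le s$. On the set where $\hat{\chi}_{t_j}=1$, the rescaling by $\Lambda_{\lambda,t_j}$ makes $\hat{H}_\lambda^2\tilde{\psi}_{\lambda,j}$ equal to $\mu_j\tilde{\psi}_{\lambda,j}$ exactly, using $\hat{H}^2\psi_j=\mu_j\psi_j$ and its analogue for $\hat{H}'$; elsewhere, the commutator $[\hat{H}_\lambda^2,\hat{\chi}_{t_j}]$ is supported in a region where Proposition~\ref{expDecay} and \eqref{decayInfiniteApprox} provide exponentially small Sobolev-type bounds. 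The upper bound $\mu_{\lambda,\ell}\le\mu_\ell+Ce^{-r/\lambda}$ then follows by applying Theorem~\ref{maxmin} with trial space $\s\{\tilde{\psi}_{\lambda,j}\}_{j=1}^\ell$, expanding $\aver{v,\hat{H}_\lambda^2 v}$ for $v=\sum c_j\tilde{\psi}_{\lambda,j}$, and using orthonormality of the $\tilde{\psi}_{\lambda,j}$.

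For the gap and the ``moreover'' claim, choose a smooth partition $\chi_0^2+\chi_\pi^2=1$ on $\mathbb{T}$ with $\chi_0\equiv 1$ on $(-\pi/4,\pi/4)$, $\chi_\pi\equiv 1$ on $(3\pi/4,5\pi/4)$, $\supp\chi_0\subset(-3\pi/4,3\pi/4)$, and $\supp\chi_\pi\subset(\pi/4,7\pi/4)$. Since the cross inner products $(\psi_{\lambda,i}^0,\psi_{\lambda,j}^0)$ vanish when $t_i\ne t_j$ (disjoint supports), the Gram--Schmidt process does not mix the two groups, so $\supp\tilde{\psi}_{\lambda,j}\subset\supp\hat{\chi}_{t_j}$; consequently, for any $\phi\in(\s\{\tilde{\psi}_{\lambda,j}\}_{j=1}^s)^\perp$ both $\chi_0\phi$ and $\chi_\pi\phi$ remain in that orthogonal complement. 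The IMS identity
\[
  \hat{H}_\lambda^2 \;=\; \sum_{j\in\{0,\pi\}} \chi_j\hat{H}_\lambda^2\chi_j \;+\; \tfrac{1}{2}\sum_{j\in\{0,\pi\}} [\chi_j,[\chi_j,\hat{H}_\lambda^2]]
\]
together with Lemma~\ref{lemmaOrthogonal} (with $\ell=s$) applied separately to $\chi_0\phi$ and $\chi_\pi\phi$ then gives
\[
  \mu_{\lambda,s+1} \;\ge\; \min\{\mu_{s+1},E^2+1\} - C\lambda^2 \;\ge\; E^2-\delta_0-C\lambda^2,
\]
which is enough for $\mu_{\lambda,s+1}\notin\supp\Upsilon$ since $\supp\Upsilon\subset(-1,E^2-\delta_0)$.

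The sharp lower bound for $\ell\le s$ is then recovered by spectral matching. Let $P:=\mathbf{1}_{[0,E^2-\delta_0/2]}(\hat{H}_\lambda^2)$. Expanding $\tilde{\psi}_{\lambda,j}$ in the eigenbasis of $\hat{H}_\lambda^2$ and invoking Step~(i), the Bessel inequality gives $\norm{(1-P)\tilde{\psi}_{\lambda,j}}\le Ce^{-r/\lambda}$ (the gap $E^2-\delta_0/2-\mu_j\ge\delta_0/2$ is uniform), so $\{P\tilde{\psi}_{\lambda,j}\}_{j=1}^s$ is nearly orthonormal in $\Ran P$; combined with the gap from Step~(ii), this forces $\dim\Ran P=s$. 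In a Gram--Schmidt orthonormalization of $\{P\tilde{\psi}_{\lambda,j}\}$ the matrix of $\hat{H}_\lambda^2|_{\Ran P}$ is within operator norm $Ce^{-r/\lambda}$ of $\diag(\mu_1,\ldots,\mu_s)$, and Weyl's inequality for Hermitian matrices then yields $|\mu_{\lambda,\ell}-\mu_\ell|\le Ce^{-r/\lambda}$. The main obstacle will be the IMS commutator estimate: the double commutator $[\chi_j,[\chi_j,\hat{H}_\lambda^2]]$ must be controlled by $C\lambda^2(\aver{\phi,\hat{H}_\lambda^2\phi}+\norm{\phi}^2)$, which requires careful bookkeeping in the $\lambda$-weighted Sobolev norm $\norm{\cdot}_{\lambda,m}$, since each lost $y$-derivative must be compensated by a factor of $\lambda$. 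Only a polynomial bound is available there, but it suffices because the sharp exponential rate is recovered at the end by spectral matching, which fully exploits the exponential residual from Step~(i).
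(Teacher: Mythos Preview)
Your upper bound (Step~(i)) and the IMS localization for the ``moreover'' claim (Step~(ii)) are both correct; in fact the IMS argument with a fixed partition $\chi_0^2+\chi_\pi^2=1$ is cleaner than the paper's device of choosing a cut-off adapted to each test vector $u$. The problem is Step~(iii). You take $P=\mathbf{1}_{[0,\,E^2-\delta_0/2]}(\hat H_\lambda^2)$ and assert that ``the gap from Step~(ii) forces $\dim\Ran P=s$''. But Step~(ii) only gives $\mu_{\lambda,s+1}\ge \min\{\mu_{s+1},E^2+1\}-C\lambda^2\ge E^2-\delta_0-C\lambda^2$, whereas $\dim\Ran P\le s$ would require $\mu_{\lambda,s+1}>E^2-\delta_0/2$. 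Since $\mu_{s+1}(\xi)$ may lie anywhere in $[E^2-\delta_0,\,E^2-\delta_0/2)$ (this happens precisely when $\tilde s(\xi)>s(\xi)$), the two thresholds do not match and $\dim\Ran P$ can exceed $s$. If instead you take $P$ to be the projector onto the first $s$ periodic eigenstates, then you need $\mu_{\lambda,s+1}-\mu_j$ bounded below uniformly in $\xi$, and the same obstruction appears: $\mu_s(\xi)$ can approach $E^2-\delta_0$ from below while $\mu_{s+1}(\xi)$ sits at $E^2-\delta_0$, so the gap $\mu_{s+1}-\mu_s$ is not $\xi$-uniform. In short, the IMS commutator costs $C\lambda^2$, and without a uniform spectral gap at level $s$ you cannot recover the exponential rate by matching.

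The paper circumvents this by a different localization that produces the exponential lower bound \emph{directly}. It introduces a nonnegative auxiliary operator $B_\lambda=\tilde\chi R_m\tilde\chi$ supported in the region $T_0$ between the two domain walls. By Proposition~\ref{expDecay} the \emph{periodic} eigenfunctions $\theta_{\lambda,j}$ are exponentially small on $T_0$, so $\langle B_\lambda\psi,\psi\rangle\le e^{-r/\lambda}$ on $\s(\theta_{\lambda,1},\dots,\theta_{\lambda,s})$; hence the eigenvalues $\nu_{\lambda,\ell}$ of $\hat H_\lambda^2+B_\lambda$ satisfy $\nu_{\lambda,\ell}\le \mu_{\lambda,\ell}+e^{-r/\lambda}$. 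For the max--min direction one writes $u=\chi u+(1-\chi)u$ and applies Lemma~\ref{lemmaOrthogonal} to each piece; the cross term $2\Re\langle(1-\chi)u,\hat H_\lambda^2\chi u\rangle$ is \emph{not} bounded by a commutator but is absorbed by $\langle u,B_\lambda u\rangle$ via \eqref{xTerms3}, yielding $\nu_{\lambda,\ell}\ge \mu_\ell-Ce^{-r/\lambda}$ with no loss. This is the idea you are missing: the auxiliary $B_\lambda$ trades the localization error for a quantity that is already exponentially small on the true low-energy space, so neither a uniform spectral gap at level~$s$ nor a subsequent matching step is required.
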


\begin{proof}
We recall that $\varphi ' (x) = \Upsilon (x^2)$. We first prove the upper bound.
Fix $\kp \in (0,\kp_0]$ 
and $\ell \in \{1, \dots, s\}$.
Let $u = \sum_{j=1}^{\ell} c_j \tilde{\psi}_{\kp, j}$ with the $c_j \in \mathbb{C}$ such that $\sum_{j=1}^{\ell} |c_j|^2 = 1$.
Then
\begin{align*}
    \aver{u, \hat{H}_\kp^2 u} =
    \sum_{j= 1}^{\ell} c_j\aver{u,\hat{H}_\kp^2\tilde{\psi}_{\kp,j}}
    =
    \sum_{j=1}^{\ell} |c_j|^2 \mu_{j} + 
    \sum_{j= 1}^{\ell} c_j\aver{u ,(\hat{H}_\kp^2-\mu_{j}) \tilde{\psi}_{\kp,j}},
\end{align*}
with the first term on the right-hand side bounded above by $\mu_{\ell}$.
To control the second term, observe that
\begin{align*}
    \norm{(\hat{H}_\kp^2-\mu_{j})\tilde{\psi}_{\kp,j}} =
    \vertiii{(A_j-\mu_{j})\ul_{\kp,t_j} \psi_{\kp,j}},
\end{align*}
with $(A_j, t_j) = (\hat{H}^2, 0)$ if $j \in R$ and $(A_j, t_j) = (\hat{H}'^2, \pi)$ if $j \in R'$. 
Using that $(A_j - \mu_j) \psi_{j} = 0$, we conclude that
\begin{align*}
    \aver{u, \hat{H}^2_\kp u} \le \mu_\ell + \sum_{j=1}^\ell \vertiii{(A_j-\mu_{j})(\ul_{\kp,t_j} \psi_{\kp,j} - \psi_{j})} \le
    \mu_\ell + C e^{-r/\kp},
\end{align*}
where the last inequality follows from \eqref{decayInfiniteApprox}.
Since $u$ was an arbitrary function in a subspace of dimension $\ell$, Theorem \ref{maxmin} implies that
$\mu_{\kp,\ell}\le \mu_{\ell}+C e^{-r/\kp}$.

We will now 
prove the significantly more challenging lower bound.
Let $T_0 = (-\frac{3 \pi}{4}, -\frac{\pi}{4}) \cup (\frac{\pi}{4}, \frac{3\pi}{4})$ and $T_1 = (-\frac{7 \pi}{8}, -\frac{\pi}{8}) \cup (\frac{\pi}{8}, \frac{7\pi}{8})$ be subsets of the torus $\mathbb{T}$, so that
$T_0\ssubset T_1\ssubset \mathbb{T}$.
Fix $\chi_0 >0$, and let $\tilde{\chi} : \mathbb{T} \rightarrow [0, \chi_0]$ be a smooth function supported in $T_1$ such that $\tilde{\chi} = \chi_0$ in $\bar{T}_0$.
Let 
\[
  B_\kp = \tilde{\chi} R_m \tilde{\chi}
\]
with $R_m$ an elliptic operator 
$1 + \kp^{2m} D^{2m}_y$.
This is an operator that is large on $T_0$ as well as non-negative.

Let 
$\chi: \mathbb{T} \rightarrow [0,1]$ be a smooth function supported in $(-\frac{3\pi}{4}, \frac{3\pi}{4})$
such that 
$\chi = 1$ in $[-\frac{\pi}{4}, \frac{\pi}{4}]$.
Note that $\supp (\chi ') \subset T_0$.
Let $u$ be an arbitrary function in $\mathcal{H}^m (\mathbb{T})$. Then,
\begin{align} \label{xTerms}
    \aver{(1-\chi) u, \hat{H}^2_\kp \chi u} = \aver{(1-\chi) \hat{H}_\kp u, \chi \hat{H}_\kp u} + \aver{(1-\chi) \hat{H}_\kp u, [\hat{H}_\kp, \chi] u} -
    \aver{[\hat{H}_\kp, \chi] u, \hat{H}_\kp \chi u}.
\end{align}
The first term on the above right-hand side is non-negative since both $\chi$ and $1-\chi$ are.
We now control the remaining terms. 
Observe that
all operators involved are differential operators, and the coefficients of $[\hat{H}_\kp, \chi]$ vanish wherever $\chi'$ does.
Thus
\begin{align*}
    \aver{(1-\chi) \hat{H}_\kp u, [\hat{H}_\kp, \chi] u} = 
    \frac{1}{\chi_0^2}
    \aver{(1-\chi) \hat{H}_\kp \tilde{\chi} u, [\hat{H}_\kp, \chi] \tilde{\chi} u}.
\end{align*}
We can similarly insert factors of $\tilde{\chi}/\chi_0$ in the third term of \eqref{xTerms}.
Since $\hat{H}_\kp$ is a differential operator of order $m$ and the commutators $[\hat{H}_\kp, \chi]$ introduce an extra factor of $\kp$,
there exists a constant $C_0 > 0$ such that
\begin{align} \label{xTerms2}
    \Re \aver{(1-\chi) u, \hat{H}^2_\kp \chi u} \ge
    - \frac{C_0 \kp}{\chi_0^2} (\norm{\kp^{m} D^{m}_y \tilde{\chi} u}^2 + \norm{\tilde{\chi} u}^2)
\end{align}
uniformly in $u \in \mathcal{H}^m (\mathbb{T})$.
Since
$
    \aver{u, B_\kp u} = \norm{\tilde{\chi} u}^2 + \norm{\kp^m D^m_y \tilde{\chi} u}^2,
$
we can choose $\chi_0$ sufficiently large so that
\begin{align} \label{xTerms3}
    \aver{u, B_\kp u} + 2 \Re \aver{(1-\chi) u, \hat{H}^2_\kp \chi u} \ge \frac{C}{\chi_0^2} (\norm{\kp^{m} D^{m}_y \tilde{\chi} u}^2 + \norm{\tilde{\chi} u}^2) \ge C \norm{u}_{T_0}^2
\end{align}
for $C > 0$ as large as necessary. Recall that $\norm{\cdot}_{T_0} := \norm{\cdot}_{L^2 (T_0) \otimes \mathbb{C}^n}$.

By Proposition \ref{expDecay}, we know that
the low-energy spectrum of $\hat{H}^2_\kp$ satisfies
\begin{equation}\label{eq:expbound}
  \max_{\psi\in \tilde S_s;\ \|\psi\|=1} \aver{B_\kp \psi,\psi} =\eps \leq e^{-r/\kp}
\end{equation}
for some $c>0$, with $\tilde S_s={\rm span}(\theta_{\kp, 1},\ldots \theta_{\kp, s})$. 
Recall from the definition of $s$ that
$\mu_{\kp, \ell}<E^2$ remains in the (infinite domain) spectral gap for all $\ell \le s$.

Let us consider the operator $\hat{H}^2_\kp+B_\kp$, also self-adjoint (this is proved as for $\hat{H}^2_\kp$), with eigenvalues $\nu_{\kp, j}$. 
For $1\leq \ell\leq s$, we have by the min-max principle,
\[
 \nu_{\kp, \ell} = \min_{S_\ell} \max_{\psi\in S_\ell,\ \|\psi\|=1} \aver{\hat{H}^2_\kp\psi,\psi} + \aver{B_\kp \psi,\psi} \leq \max_{\psi\in \tilde S_\ell,\ \|\psi\|=1} \aver{\hat{H}^2_\kp\psi,\psi} + \aver{B_\kp \psi,\psi} \leq \mu_{\kp, \ell} + \eps.
\]
We may therefore obtain a lower bound for $\nu_{\kp, \ell}$ now thanks to the regularization $B_\kp$. We have
\[
  \nu_{\kp,\ell}\geq \min_{u\in S_{\ell-1}^\perp,\ \|\psi\|=1} \aver{(\hat{H}^2_\kp+B_\kp)u,u}
\]
with now $S_{\ell-1}$ the span of $\tilde{\psi}_{\kp, j}$ for $1\leq j\leq \ell-1$ using both families of infinite-domain eigenfunctions properly embedded in the torus. 

It follows from Lemma \ref{lemmaOrthogonal} and \eqref{xTerms3} that
for $u \in S_{\ell - 1}^{\perp}$,
\begin{align*}
    \aver{u, (\hat{H}^2_\kp + B_\kp) u} &=
    \aver{\hat{H}^2_\kp \chi u,\chi u} + \aver{\hat{H}^2_\kp (1-\chi) u, (1-\chi) u} + 2 \Re \aver{\hat{H}^2_\kp\chi u,(1-\chi)u} + \aver{B_\kp u,u}\\
    &\ge
    (\mu_\ell - \eta) \|\chi u\|^2 + (\mu_\ell-\eta) \|(1-\chi)u\|^2 + C \|u\|^2_{T_0},
\end{align*}
with 
$\eta \le C e^{-r/\kp}$.
So, with $C$ large, we have
$\aver{u, (\hat{H}^2_\kp + B_\kp) u} \ge (\mu_\ell - \eta) \norm{u}^2$ and thus get
\[
  \mu_{\kp, \ell} \geq \nu_{\kp, \ell}-\eps \geq \mu_\ell-\eps -\eta \geq \mu_\ell-C e^{-r/\kp}
\]
for all $\kp$ sufficiently small. 

It remains to prove that $\mu_{\kp, s+1}$ lies above the support of $\Upsilon$ when $\kp$ is small enough.
By definition of $\Upsilon$, there exists $\delta '> 0$ such that $\Upsilon \in \mathcal{C}^\infty_c (-1, E^2 - \delta_0 - \delta')$.
Let $u \in \mathcal{H}^m (\mathbb{T})$ 
such that 
$u \in S_{s}^\perp$ and
$\norm{u} = 1$.
Take $\chi$ as above, with the additional requirement that $\supp (\chi ') \subset T_0 '$ for some $T_0 ' \subset T_0$ satisfying $\norm{u}_{T_0'}^2 \le \frac{\delta '}{E^2+1}$.
By \eqref{eq:alphaBdd}, which can easily be shown to hold in this one-dimensional setting (see \eqref{periodicEllipticity1D}),
it follows that $$\norm{[\hat{H}_\kp, \chi] (i-\hat{H}_\kp)^{-1}}\le C \kp, \qquad \kp \in (0, \kp_0].$$ 
Writing $[\hat{H}_\kp, \chi] = [\hat{H}_\kp, \chi] (i-\hat{H}_\kp)^{-1} (i-\hat{H}_\kp)$ and $\hat{H}_\kp \chi = [\hat{H}_\kp, \chi] + \chi \hat{H}_\kp$, we use \eqref{xTerms} to conclude that
\begin{align*}
    \aver{(1-\chi)u, \hat{H}_\kp^2 \chi u} \ge -C \kp \Big(\norm{\hat{H}_\kp u}^2 + \norm{u}^2\Big) = -C \kp \Big(\norm{\hat{H}_\kp u}^2 + 1\Big).
\end{align*}
Hence by Lemma \ref{lemmaOrthogonal}, we have
\begin{align*}
    \aver{u, \hat{H}_\kp^2 u} &= \aver{\chi u, \hat{H}_\kp^2 \chi u} + \aver{(1-\chi)u, \hat{H}_\kp^2 (1-\chi)u} + 2 \Re \aver{(1-\chi)u, \hat{H}_\kp^2 \chi u}\\
    &\ge (\check{\mu}_{s+1} - C e^{-r/\kp}) \norm{\chi u}^2 + (\check{\mu}_{s+1} - C e^{-r/\kp}) \norm{(1-\chi) u}^2 -C \kp \Big(\norm{\hat{H}_\kp u}^2 + 1\Big),
\end{align*}
where $\check{\mu}_{s+1} := \min\{\mu_{s +1}, E^2 + 1\}$.
It follows that $\aver{u, \hat{H}_\kp^2 u} \ge \check{\mu}_{s+1} (\norm{\chi u}^2 + \norm{(1-\chi) u}^2)-C\kp$.
Since
\begin{align*}
    1 =\norm{u}^2 = \norm{\chi u}^2 + \norm{(1-\chi) u}^2 + 2\aver{(1-\chi)u, \chi u} \le \norm{\chi u}^2 + \norm{(1-\chi) u}^2 + \frac{1}{2} \norm{u}_{T_0'}^2,
\end{align*}
we have shown that
\begin{align*}
    \aver{u, \hat{H}_\kp^2 u} \ge \check{\mu}_{s+1} - \frac{1}{2} \check{\mu}_{s+1}\norm{u}_{T_0'}^2 -C\kp \ge 
    \check{\mu}_{s+1} - \frac{1}{2} \delta ' -C\kp.
\end{align*}
Since $u$ was arbitrary, Theorem \ref{maxmin} implies that $\mu_{\kp,s+1} \ge \check{\mu}_{s+1} - \delta '$ for all $\kp$ sufficiently small.
We know that $\mu_{s+1} \ge E^2 - \delta_0$ by definition, hence
$\mu_{\kp,s+1} \ge E^2 - \delta_0 - \delta '$. 
This completes the proof.
\end{proof}

\paragraph{Eigenspace approximation.} 
For the following lemma,
fix $\kp \in (0,1]$ and 
let $\{\phi_1, \phi_2, \dots, \phi_{s}\} \subset \mathcal{H}^m (\mathbb{T})$ 
such that $\aver{\phi_i, \phi_j} = \delta_{ij}$.
For any integers $j,k$, and $\ell$ satisfying $1 \le j \le k \le \ell \le s$, define
\begin{equation}\label{eq:alphas}
\alpha_{j,k,\ell} := \Big( 1 - \sum_{i=j}^\ell |\aver{\theta_{\kp,i}, \phi_{k}}|^2\Big) ^{1/2}, \qquad
   \alpha_{k,\ell} := \alpha_{1,k,\ell},\qquad 
    r_j:= (\hat{H}^2_\kp - \mu_{\kp,j}) \phi_{j} .
\end{equation}
Note that
if we denote by $\Pi_{j,\ell}$ the orthogonal projector onto the span of $\{\theta_{\kp,j},\ldots,\theta_{\kp,\ell}\}$, then we have
\begin{align} \label{jlPi}
    \alpha_{j,k,\ell}=\|(I-\Pi_{j,\ell})\phi_k\|.
\end{align}

\begin{lemma} \label{lemmaAlpha2}
Let $j,k,\ell$ be integers satisfying $1 \le j \le k \le \ell \le s$.
Then
\begin{align} \label{eqAlpha2}
\alpha_{j,k,\ell}^2 \le \norm{r_k}^2(\mu_{\kp,\ell+1} - \mu_{\kp,k})^{-2} + \sum_{i=1}^{j-1} \norm{r_{i}}^2 (\mu_{\kp,j} - \mu_{\kp,i})^{-2}.
\end{align}
\end{lemma}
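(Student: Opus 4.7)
The plan is to use completeness of the eigenbasis $\{\theta_{\lambda,i}\}_{i \ge 1}$ (orthonormal and complete since $\hat{H}^2_\lambda$ has compact resolvent by Proposition \ref{evalsInfty}) and to split the complement of $\Pi_{j,\ell}$ into its ``low'' part $P^-$, the orthogonal projector onto $\s\{\theta_{\lambda,1},\ldots,\theta_{\lambda,j-1}\}$, and its ``high'' part $P^+$, the orthogonal projector onto $\s\{\theta_{\lambda,i} : i > \ell\}$. By \eqref{jlPi},
\begin{align*}
\alpha_{j,k,\ell}^2 = \norm{P^- \phi_k}^2 + \norm{P^+ \phi_k}^2,
\end{align*}
so it suffices to bound each summand by one of the terms on the right-hand side of \eqref{eqAlpha2}.

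For the high part, $P^+$ commutes with $\hat{H}^2_\lambda$, so applying $P^+$ to $r_k = (\hat{H}^2_\lambda - \mu_{\lambda,k})\phi_k$ yields $(\hat{H}^2_\lambda - \mu_{\lambda,k})P^+\phi_k = P^+ r_k$. Since $k \le \ell$, on $\text{Range}(P^+)$ the operator $\hat{H}^2_\lambda - \mu_{\lambda,k}$ has spectrum in $[\mu_{\lambda,\ell+1} - \mu_{\lambda,k},\infty)$ and is therefore invertible with inverse bounded by $(\mu_{\lambda,\ell+1}-\mu_{\lambda,k})^{-1}$. This produces the first term of the bound via $\norm{P^+\phi_k} \le \norm{P^+ r_k}/(\mu_{\lambda,\ell+1}-\mu_{\lambda,k}) \le \norm{r_k}/(\mu_{\lambda,\ell+1}-\mu_{\lambda,k})$.

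The low part is the crux. Because $j \le k$, orthonormality of $\{\phi_n\}$ gives $\phi_k \perp V_{j-1} := \s\{\phi_1,\ldots,\phi_{j-1}\}$, hence $P_\phi \phi_k = 0$, where $P_\phi$ is the orthogonal projector onto $V_{j-1}$. Consequently $P^-\phi_k = P^-(I-P_\phi)\phi_k$ and $\norm{P^-\phi_k} \le \norm{P^-(I-P_\phi)}$. The subspaces $\text{Range}(P^-)$ and $\text{Range}(P_\phi)$ both have dimension $j-1$, so by the classical symmetry of canonical angles between equal-dimensional subspaces,
\begin{align*}
\norm{P^-(I-P_\phi)} = \norm{P_\phi (I-P^-)}.
\end{align*}
For any unit vector $u \in \mathcal{H}(\mathbb{T})$, a termwise Cauchy--Schwarz estimate gives
\begin{align*}
\norm{P_\phi(I-P^-)u}^2 = \sum_{n=1}^{j-1} |\aver{(I-P^-)\phi_n, u}|^2 \le \sum_{n=1}^{j-1} \norm{(I-P^-)\phi_n}^2.
\end{align*}
For each $n < j$, the same resolvent argument as in the high part (now with $\hat{H}^2_\lambda - \mu_{\lambda,n}$ restricted to $\text{Range}(I-P^-)$, whose spectrum lies in $[\mu_{\lambda,j}-\mu_{\lambda,n},\infty)$) yields $\norm{(I-P^-)\phi_n} \le \norm{r_n}/(\mu_{\lambda,j}-\mu_{\lambda,n})$. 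Summing the squared bounds on $\norm{P^-\phi_k}$ and $\norm{P^+\phi_k}$ then gives \eqref{eqAlpha2}.

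The main obstacle is justifying the identity $\norm{P(I-Q)} = \norm{Q(I-P)}$ for finite-rank orthogonal projectors of equal rank. This is classical: the CS (cosine--sine) decomposition shows that the non-zero singular values of both operators equal the sines of the canonical angles between $\text{Range}(P)$ and $\text{Range}(Q)$ and thus coincide. A self-contained alternative would be induction on $j$ based on $\alpha_{j,k,\ell}^2 = \alpha_{j-1,k,\ell}^2 + |\aver{\theta_{\lambda,j-1},\phi_k}|^2$, but the bookkeeping there is significantly heavier than this direct Davis--Kahan-style argument.
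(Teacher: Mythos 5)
Your proof is correct, and it reaches the same intermediate estimates as the paper, but the route to the ``low-part'' bound is genuinely different. Both arguments decompose $\alpha_{j,k,\ell}^2 = \norm{(I-\Pi_{j,\ell})\phi_k}^2$ into the contribution from modes above $\ell$ (which is $\alpha_{k,\ell}^2$ in the paper's notation) and the contribution from modes below $j$ (which is $\sum_{i<j}|\aver{\theta_{\lambda,i},\phi_k}|^2$). For the high part, your resolvent estimate $\norm{P^+\phi_k} \le \norm{r_k}(\mu_{\lambda,\ell+1}-\mu_{\lambda,k})^{-1}$ and the paper's min--max argument applied to $\tilde{\phi}_k := (I-\Pi_{1,\ell})\phi_k$ are two phrasings of the same estimate. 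The real divergence is in how you establish $\sum_{i<j}|\aver{\theta_{\lambda,i},\phi_k}|^2 \le \sum_{n<j}\alpha_{n,j-1}^2$: the paper uses an elementary two-step argument --- apply Bessel's inequality to each $\theta_{\lambda,i}$ against the orthonormal family $\{\phi_1,\ldots,\phi_{j-1},\phi_k\}$, then swap the order of the double sum --- whereas you invoke the canonical-angle symmetry $\norm{P^-(I-P_\phi)} = \norm{P_\phi(I-P^-)}$ for equal-rank orthogonal projectors (a nontrivial consequence of the CS decomposition) followed by a Cauchy--Schwarz step. You are right that the equal-rank hypothesis is essential for that identity, and it is satisfied here because $j \le k \le \ell \le s$ forces both projectors to have rank $j-1$. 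The paper's version has the advantage of being entirely self-contained and elementary; yours makes the geometric content (a Davis--Kahan-type comparison of eigenspaces) more visible, at the cost of importing an external lemma. Both are valid, and after unwinding they arrive at exactly the same chain of inequalities.
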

\eqref{jlPi} means 
$\phi_k$ lives approximately in 
$\s (\theta_{\kp,j}, \theta_{\kp,j+1}, \dots, \theta_{\kp,\ell})$
when the right-hand side of \eqref{eqAlpha2} is small.
\begin{proof}
For brevity, set $\theta_i := \theta_{\kp,i}$.
We see that
$
    \alpha_{j,k,\ell}^2 = \alpha_{k,\ell}^2 + \sum_{i=1}^{j-1}|\aver{\theta_{i}, \phi_k}|^2,
$
with 
\begin{align*}
    \sum_{i=1}^{j-1}|\aver{\theta_{i}, \phi_k}|^2
    &\le
    \sum_{i=1}^{j-1}\Big(1 - \sum_{h=1}^{j-1}|\aver{\theta_{i}, \phi_{h}}|^2\Big)
    =\sum_{h=1}^{j-1}\Big(1 - \sum_{i=1}^{j-1}|\aver{\theta_{i}, \phi_{h}}|^2\Big)
    = \sum_{h=1}^{j-1} \alpha_{h, j-1}^2.
\end{align*}
It remains to bound $\alpha_{k,\ell}^2$ and the $\alpha^2_{h,j-1}$.
Define $\tilde{\phi}_k := \phi_k - \sum_{i=1}^\ell \aver{\theta_i, \phi_k} \theta_i$, so that $\aver{\theta_i, \tilde{\phi}_k}=0$ for all $i \in \{1,\dots, \ell\}$, and thus
\begin{align} \label{ge}
\aver{\tilde{\phi}_k, \hat{H}^2_\kp \tilde{\phi}_k} \ge
\mu_{\kp,\ell+1} \norm{\tilde{\phi}_j}^2 = \mu_{\kp, \ell+1} \alpha_{k,\ell}^2
\end{align}
by Theorem \ref{maxmin}.
We also see that
$
    \hat{H}^2_\kp \tilde{\phi}_k = \mu_{\kp, k} \phi_k - \sum_{i=1}^\ell \mu_{\kp,i} \aver{\theta_i, \phi_k} \theta_i + r_k,
$
and hence
\begin{align} \label{le}
    \aver{\tilde{\phi}_k,\hat{H}^2_\kp \tilde{\phi}_k} &=
    \mu_{\kp, k} - \mu_{\kp, k} \sum_{i=1}^\ell |\aver{\theta_i, \phi_k}|^2 +
    \aver{\tilde{\phi}_k,r_k} 
    =
    \mu_{\kp, k} \alpha_{k,\ell}^2 + \aver{\tilde{\phi_k},r_k}
    \le \mu_{\kp, k} \alpha_{k,\ell}^2 + \norm{r_k} \alpha_{k,\ell}.
\end{align}
Combining \eqref{ge} and \eqref{le}, we obtain that $\alpha_{k,\ell} \le \norm{r_k} (\mu_{\kp,\ell+1} - \mu_{\kp,k})^{-1}$.
The same argument proves that $\alpha_{h, j-1} \le \norm{r_h} (\mu_{\kp,j} - \mu_{\kp,h})^{-1}$ for all $h \in \{1,2, \dots, j-1\}$, and the result is complete.
\end{proof}



We will apply Lemma \ref{lemmaAlpha2} to the functions $\phi_j = \tilde{\psi}_{\kp,j}$.
We write
$r_j = (\hat{H}^2_\kp - \mu_j) \tilde{\psi}_{\kp,j} + (\mu_j - \mu_{\kp,j}) \tilde{\psi}_{\kp,j}$, meaning that
$\norm{r_j} \le \norm{(\hat{H}^2_\kp - \mu_j) \tilde{\psi}_{\kp,j}} + |\mu_j - \mu_{\kp,j}| \le C e^{-r/\kp}$, where the last inequality follows from Proposition \ref{propEvalApprox} and its proof.
Thus if $\min \{\mu_{\kp,\ell+1} - \mu_{\kp,k}, \mu_{\kp,j}- \mu_{\kp,j-1} \} \ge \delta$ for some $\delta > 0$, then $\alpha^2_{j,k,\ell} \le C \delta^{-2} e^{-r/\kp}$.
In addition, we have
\begin{align} \label{betaBd}
\begin{split}
    \beta^2_{j,k,\ell} :&= \norm{\theta_{\kp,k} - \sum_{i=j}^\ell \aver{\tilde{\psi}_{\kp,i}, \theta_{\kp,k}} \tilde{\psi}_{\kp,i}}^2 =
    1 - \sum_{i=j}^\ell |\aver{\tilde{\psi}_{\kp,i}, \theta_{\kp,k}}|^2 \le
    \sum_{i' = j}^\ell (1 - \sum_{i=j}^\ell |\aver{\tilde{\psi}_{\kp,i}, \theta_{\kp,i'}}|^2)\\
    &=
    \sum_{i = j}^\ell (1 - \sum_{i'=j}^\ell |\aver{\tilde{\psi}_{\kp,i}, \theta_{\kp,i'}}|^2) = \sum_{i = j}^\ell \alpha^2_{j,i,\ell} \le C (\ell + 1 - j) \delta^{-2} e^{-r/\kp}.
    \end{split}
\end{align}

\begin{lemma}\label{switchBasis}
Let $A_\kp := \sum_{i=0}^{m-1} \cg_{\kp,i} (y) \kp^i D^i_y$ be an operator on $\mathcal{H}(\mathbb{T})$, for some smooth coefficients $\cg_{\kp,i}$ satisfying $\sum_{i=0}^{m-1} \norm{\kp^k D^k_y \cg_{\kp,i}}_{L^\infty (\mathbb{T})} \le C'$ uniformly in $\kp$ for all $k \in \{0,1, \dots, m\}$.
Let $j \le \ell$ be positive integers such that $\ell \le s$ and define
$\delta := \min\{ \mu_{\kp,\ell+1} - \mu_{\kp,\ell}, \mu_{\kp,j} - \mu_{\kp,j-1} \}$, 
where the second argument is ignored if $j=1$.
Then 
there exist positive constants $C$ and $r$ such that
\begin{align*}
    \Big |\sum_{i=j}^\ell \Big (\aver{\theta_{\kp,i}, A_\kp \theta_{\kp,i}} - \aver{\tilde{\psi}_{\kp,i}, A_\kp \tilde{\psi}_{\kp,i}} \Big) \Big| \le
    C \delta^{-1} e^{-r/\kp}.
\end{align*}
\end{lemma}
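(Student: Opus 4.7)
The plan is to rewrite the difference of the two sums as the trace of $A_\lambda$ against the finite-rank operator $\tilde\Pi-\Pi$, where $\Pi:=\sum_{k=j}^\ell|\theta_{\lambda,k}\rangle\langle\theta_{\lambda,k}|$ and $\tilde\Pi:=\sum_{i=j}^\ell|\tilde\psi_{\lambda,i}\rangle\langle\tilde\psi_{\lambda,i}|$ are orthogonal projectors onto the corresponding spans. Since both families are orthonormal, cyclicity gives
\begin{align*}
    \sum_{i=j}^\ell\aver{\tilde\psi_{\lambda,i},A_\lambda\tilde\psi_{\lambda,i}}-\sum_{k=j}^\ell\aver{\theta_{\lambda,k},A_\lambda\theta_{\lambda,k}}=\tr\bigl(A_\lambda(\tilde\Pi-\Pi)\bigr).
\end{align*}
Note this trace is well-defined because $A_\lambda\tilde\Pi$ and $A_\lambda\Pi$ are finite-rank; the rank does not exceed $2s_0$, a constant independent of $\lambda$.

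The first step is a uniform a priori bound $\|A_\lambda\theta_{\lambda,k}\|+\|A_\lambda\tilde\psi_{\lambda,k}\|\le C$ for $k\in\{j,\ldots,\ell\}$, uniformly in $\lambda$. For the $\theta_{\lambda,k}$ this follows from the periodic ellipticity \eqref{periodicEllipticity1D}, the bound $\|\hat H_\lambda\theta_{\lambda,k}\|=\sqrt{\mu_{\lambda,k}}\le E$, and the 1D Gagliardo--Nirenberg inequality (proved exactly as \eqref{GN}), combined with the assumption $\sum_i\|a_{\lambda,i}\|_{L^\infty}\le C'$; this yields $\|\lambda^iD_y^i\theta_{\lambda,k}\|\le C$ for $0\le i\le m$, hence the bound on $\|A_\lambda\theta_{\lambda,k}\|$. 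For the $\tilde\psi_{\lambda,k}$ the same control follows from \eqref{decayInfiniteApprox} together with unitarity of $\Lambda_{\lambda,t_k}^{-1}$ and the corresponding bound $\vertiii{\psi_k}_m\le C$ inherited from the infinite-space ellipticity of $\hat H^2(\xi)$ and $\mu_k\le E^2$.

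The second step exploits the approximate unitarity of the change-of-basis matrix $U_{ki}:=\aver{\theta_{\lambda,k},\tilde\psi_{\lambda,i}}$. Write $\tilde\psi_{\lambda,i}=\sum_{k=j}^\ell U_{ki}\theta_{\lambda,k}+r_i$, where $r_i=(I-\Pi)\tilde\psi_{\lambda,i}$ satisfies $\|r_i\|=\alpha_{j,i,\ell}$. Expanding $\aver{\tilde\psi_{\lambda,i},A_\lambda\tilde\psi_{\lambda,i}}$, the cross terms $2\Re\sum_{i,k}\overline{U_{ki}}\aver{\theta_{\lambda,k},A_\lambda r_i}$ and $\sum_i\aver{r_i,A_\lambda r_i}$ are each bounded, by Cauchy--Schwarz and the step-one estimates together with $|U_{ki}|\le 1$ and $\|A_\lambda r_i\|\le C$, by $C\sum_i\alpha_{j,i,\ell}\le Cs_0\cdot\delta^{-1}e^{-r/\lambda}$ thanks to Lemma \ref{lemmaAlpha2} and Proposition \ref{propEvalApprox} (the residuals $\|r_j\|$ of the eigen-equation are $O(e^{-r/\lambda})$ and the denominators are $\ge\delta$). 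The ``diagonal in $k$'' part equals $\sum_{k,k'=j}^\ell M_{kk'}\aver{\theta_{\lambda,k},A_\lambda\theta_{\lambda,k'}}$ with $M_{kk'}=\sum_i\overline{U_{ki}}U_{k'i}=\aver{\theta_{\lambda,k'},\tilde\Pi\theta_{\lambda,k}}$, so $M_{kk'}-\delta_{kk'}=-\aver{\theta_{\lambda,k'},(I-\tilde\Pi)\theta_{\lambda,k}}$ for $k,k'\in\{j,\ldots,\ell\}$. Cauchy--Schwarz and \eqref{betaBd} yield $|M_{kk'}-\delta_{kk'}|\le \beta_{j,k,\ell}\le C\delta^{-1}e^{-r/\lambda}$. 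Combining with the step-one bound $|\aver{\theta_{\lambda,k},A_\lambda\theta_{\lambda,k'}}|\le C$ and summing over the at most $s_0^2$ index pairs produces the same order of error, and absorbing all constants proves the claimed estimate.

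The main technical obstacle is the uniform control of $\|A_\lambda\theta_{\lambda,k}\|$ in step one: the coefficients of $A_\lambda$ come with the correct powers of $\lambda$ but the derivatives $D_y^i\theta_{\lambda,k}$ grow in $\lambda$, and only the cancellation in the weighted ellipticity \eqref{periodicEllipticity1D} (together with $\lambda$-weighted Gagliardo--Nirenberg interpolation) recovers boundedness. Once this is in place, all remaining manipulations are algebraic, and the gap hypothesis $\delta$ enters only through the projection error estimates $\alpha_{j,i,\ell},\beta_{j,k,\ell}=O(\delta^{-1}e^{-r/\lambda})$ already established.
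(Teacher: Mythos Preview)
Your argument is correct and follows essentially the same strategy as the paper's: both rest on (i) the uniform bound $\norm{\lambda^kD_y^k\nu_{\lambda,i}}\le C$ for $\nu=\theta,\tilde\psi$ via periodic ellipticity and Gagliardo--Nirenberg (your step one is exactly the paper's display \eqref{derivativesBdd}), and (ii) the projection-error estimates $\alpha_{j,i,\ell},\beta_{j,k,\ell}=O(\delta^{-1}e^{-r/\lambda})$ from Lemma~\ref{lemmaAlpha2} and \eqref{betaBd}. The only organizational difference is that the paper bypasses your change-of-basis matrix $M_{kk'}$ with the algebraic identity
\[
\sum_{i=j}^\ell\Big(\aver{\theta_{\lambda,i},A_\lambda\theta_{\lambda,i}}-\aver{\tilde\psi_{\lambda,i},A_\lambda\tilde\psi_{\lambda,i}}\Big)
=\sum_{i=j}^\ell\Big(\aver{(I-\tilde\Pi)\theta_{\lambda,i},A_\lambda\theta_{\lambda,i}}-\aver{\tilde\psi_{\lambda,i},A_\lambda(I-\Pi)\tilde\psi_{\lambda,i}}\Big),
\]
which it bounds in one line by $\sum_i\big(\beta_{j,i,\ell}\norm{A_\lambda\theta_{\lambda,i}}+\alpha_{j,i,\ell}\norm{A_\lambda^*\tilde\psi_{\lambda,i}}\big)$.

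Two minor slips in your write-up, both easily repaired. First, $A_\lambda$ is not assumed self-adjoint, so the two cross terms are not complex conjugates; simply bound each separately. Second, for the cross term $\aver{\Pi\tilde\psi_{\lambda,i},A_\lambda r_i}$ the stated control $\norm{A_\lambda r_i}\le C$ with Cauchy--Schwarz gives only $O(1)$; instead move $A_\lambda$ across as $|\aver{A_\lambda^*\Pi\tilde\psi_{\lambda,i},r_i}|\le\norm{A_\lambda^*\Pi\tilde\psi_{\lambda,i}}\,\alpha_{j,i,\ell}$, and your step-one argument applied to the formal adjoint $A_\lambda^*$ (which is again an order-$(m{-}1)$ operator with coefficients controlled by the hypothesis on $\lambda^kD_y^ka_{\lambda,i}$) gives the required uniform bound. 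This is precisely why the paper invokes $\norm{A_\lambda^*\tilde\psi_{\lambda,i}}$ rather than $\norm{A_\lambda r_i}$.
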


\begin{proof}
First, we observe that for all
$i \in \{1, \dots, s\}$ and $k \in \{0,1, \dots, m\}$,
\begin{align} \label{derivativesBdd}
    \norm{\kp^{k} D^{k}_y \nu_{\kp, i}}^2 \le \norm{\kp^m D^m_y \nu_{\kp,i}}^{\frac{2k}{m}} \le  C_0 \Big(\norm{\hat{H}_\kp \nu_{\kp,i}}^2 + C_1 \Big)^{\frac{k}{m}} \le
    C_0 (E^2 + C_1)^{\frac{k}{m}}, \qquad \nu = \theta, \tilde{\psi}
\end{align}
uniformly in $\kp$,
with the second inequality following from ellipticity of $\hat{H}_\kp$.
We write
\begin{align*}
\sum_{i=j}^\ell \Big(
    \aver{\theta_{\kp,i}, A_\kp \theta_{\kp,i}} &- \aver{\tilde{\psi}_{\kp,i}, A_\kp \tilde{\psi}_{\kp,i}}\Big) =\\
    &\sum_{i=j}^\ell \Big(\aver{\theta_{\kp,i} - \sum_{i' = j}^\ell \aver{\tilde{\psi}_{\kp,i'},\theta_{\kp,i}} \tilde{\psi}_{\kp,i'}, A_\kp \theta_{\kp,i}} - \aver{\tilde{\psi}_{\kp,i}, A_\kp (\tilde{\psi}_{\kp,i} - \sum_{i' = j}^\ell \aver{\theta_{\kp,i'}, \tilde{\psi}_{\kp,i}} \theta_{\kp,i'})}\Big),
\end{align*}
which 
implies that
\begin{align*}
    \Big |\sum_{i=j}^\ell \Big(\aver{\theta_{\kp,i}, A_\kp \theta_{\kp,i}} - \aver{\tilde{\psi}_{\kp,i}, A_\kp \tilde{\psi}_{\kp,i}}\Big) \Big| \le
  \sum_{i=j}^\ell \Big(
  \beta_{j,i,\ell} \norm{A_\kp \theta_{\kp,i}} + \alpha_{j,i,\ell} \norm{A_\kp^* \tilde{\psi}_{\kp,i}} \Big),
\end{align*}
where $A_\kp^* := \sum_{i=0}^{m-1} \kp^i D^i_y \cg^*_{\kp,i} (y)$ is the formal adjoint of $A_\kp$.
By \eqref{derivativesBdd}, we know that the norms of $A_\kp \nu_{\kp,i}$ and $A^*_\kp \nu_{\kp,i}$ are bounded uniformly in $\kp$, for $\nu = \theta, \tilde{\psi}$.
Using 
\eqref{betaBd} and the corresponding bound for $\alpha_{j,i,\ell}^2$, we conclude that
$
    \Big |\sum_{i=j}^\ell \Big(\aver{\theta_{\kp,i}, A_\kp \theta_{\kp,i}} - \aver{\tilde{\psi}_{\kp,i}, A_\kp \tilde{\psi}_{\kp,i}} \Big)\Big|
    \le C \delta^{-1} e^{-r/\kp}.
$ 
This completes the result.
\end{proof}

\paragraph{Edge current approximation.} 
We are now ready to prove Theorem \ref{periodicApprox}. We will write $\tilde{\sigma}_I$ as a sum of inner products over $\xi \in \kp \mathbb{Z}$, thus it is helpful to reintroduce the $\xi$-dependent notation.

\begin{proof}[Proof of Theorem \ref{periodicApprox}]
Since the coefficients of $H_\kp$ are independent of $x$, we have
\begin{align}\label{eq:sigmaI_sum}
    2\pi\tilde{\sigma}_I 
=
2\pi\Tr i Q [H_\kp, P] \Upsilon (H^2_\kp)
=
\sum_{\xi \in \kp \mathbb{Z}} \sum_{j\in \mathbb{N}} \aver{\theta_{\kp,j,\xi}, i\kp Q_Y G_\kp (\xi) \Upsilon(\hat{H}_\kp^2 (\xi))\theta_{\kp,j,\xi}},
\end{align}
where $G_\kp (\xi) := m M_m \xi^{m-1} + \sum_{j=1}^m j \kp^{m-j} \cp_{\kp, j} (y)\xi^{j-1}D_y^{m-j} + \sum_{i+j \le m-1} i \kp^j \cp_{\kp,ij} (y) \xi^{i-1} D^j_y$.
To obtain $G_\kp$, we used that the contributions from $Q_X [H_\kp ,P]$ of all second and higher order derivatives of $P$ must vanish, as $\int_\mathbb{T} Q_X P^{(q)} (x) dx = P^{(q-1)} (\pi/2) - P^{(q-1)} (-\pi/2) = 0$ for all $q > 1$.
Using that the $\theta_{\kp,j,\xi}$ are eigenfunctions of $\hat{H}^2_\kp (\xi)$, we obtain that
$$2\pi\tilde{\sigma}_I = \sum_{\xi \in \kp \mathbb{Z} \cap I} \sum_{j=1}^{s_0} \aver{\theta_{\kp,j,\xi}, i\kp Q_Y G_\kp (\xi) \theta_{\kp,j,\xi}} \Upsilon(\mu_{\kp,j}(\xi)).$$


The next step is to partition the eigenvalues into clusters so that
eigenvalues that are nearby belong to the same cluster and
the separation between clusters is controlled (bounded from below).
We will choose the clusters with diameter sufficiently small so that we can approximate all eigenvalues in a given cluster by the smallest eigenvalue in the cluster, with negligible error.
Once $\Upsilon$ is constant over each cluster, we will 
approximate $\tilde{\sigma}_I$ by a corresponding sum involving the $\tilde{\psi}_{\kp,j}$ and apply
Lemma \ref{switchBasis} to control the error.

By Proposition \ref{propEvalApprox}, the number of eigenvalues of $\hat{H}^2_\kp (\xi)$ in $\supp (\Upsilon)$ is bounded by $s(\xi)$
uniformly in
$\kp > 0$ sufficiently small and $\xi \in \kp \mathbb{Z}$.
Recall that $s(\xi)$ is the total number of eigenvalues of $\hat{H}^2 (\xi)$ and $\hat{H}'^2 (\xi)$ lying in $[0, E^2 - \delta_0)$, which is itself bounded by $s_0$ uniformly in $\xi$ and vanishes whenever $\xi \notin I$, with $I \subset \mathbb{R}$ a compact interval; see Proposition \ref{acProp}.
Let $\xi \in \kp \mathbb{Z} \cap I$ 
and 
\tcbn{$\delta := \delta (\kp)$}.
Define $k_0 (\xi) := 1$, and for all $i \ge 1$ 
define $k_i (\xi) := \inf \{ \ell > k_{i-1}(\xi) : \mu_{\kp, \ell}(\xi) - \mu_{\kp, \ell - 1}(\xi) \ge \delta \quad \text{or} \quad \ell > s(\xi) \}$.
We see that the $k_i (\xi)$ form an increasing sequence of integers, with $k_{J(\xi)} (\xi) = s(\xi) + 1$ for some $J(\xi) \in \mathbb{N}$.
Define
$S_{\xi,j} := \{k_j, \dots, k_{j+1} - 1\}$ for all $j \in \{0,1,\dots, J(\xi)-1\}$,
so that
$S_{\xi,0}, S_{\xi,1}, \dots , S_{\xi, J(\xi)-1}$ forms a partition of $\{1,2,\dots, s (\xi)\}$.
Thus we have
\begin{align*}
    2\pi\tilde{\sigma}_I =\sum_{\xi \in \kp \mathbb{Z} \cap I} \sum_{i=0}^{J(\xi) - 1} \sum_{j \in S_{\xi,i}} \aver{\theta_{\kp,j,\xi}, i\kp Q_Y G_\kp (\xi) \theta_{\kp,j,\xi}} \Upsilon(\mu_{\kp,j}(\xi)).
\end{align*}
By \eqref{derivativesBdd}, we see that
$|\aver{\theta_{\kp,j,\xi}, G_\kp (\xi) \theta_{\kp,j,\xi}}| \le C$ uniformly in $\kp$ and $\xi$. 
Defining
\begin{align*}
    2\pi\tilde{\sigma}_{I,1} :=\sum_{\xi \in \kp \mathbb{Z} \cap I} \sum_{i=0}^{J(\xi) - 1} \sum_{j \in S_{\xi,i}} \aver{\theta_{\kp,j,\xi}, i\kp Q_Y G_\kp (\xi) \theta_{\kp,j,\xi}} \Upsilon(\mu_{\kp,k_i}(\xi)),
\end{align*}
which replaces each eigenvalue by the smallest eigenvalue in its cluster,
we thus obtain that
\begin{align} \label{LipBd}
    |\tilde{\sigma}_{I,1} - \tilde{\sigma}_{I}| \le s_0 (|I|+1) C
    \sup_{|x-y| < \delta} |\Upsilon (x) - \Upsilon (y)| \le 
    C \delta 
\end{align}
by regularity of $\Upsilon$, with $|I|$ the length of the interval $I$. 
We will now control the error from replacing the periodic eigenfunctions $\theta_{\kp,j,\xi}$ in $\tilde{\sigma}_{I,1}$ by the truncations 
$\tilde{\psi}_{\kp,j,\xi}$.
Namely, it follows from 
Lemma \ref{switchBasis} that
\begin{align*}
    2\pi\tilde{\sigma}_{I,2} :=\sum_{\xi \in \kp \mathbb{Z} \cap I} \sum_{i=0}^{J(\xi) - 1} \sum_{j \in S_{\xi,i}} \aver{\tilde{\psi}_{\kp,j,\xi}, i\kp Q_Y G_\kp (\xi) \tilde{\psi}_{\kp,j,\xi}} \Upsilon(\mu_{\kp,k_i}(\xi))
\end{align*}
satisfies
$
    |\tilde{\sigma}_{I,2} - \tilde{\sigma}_{I,1}| \le C \delta^{-1} e^{-r/\kp}.
$
Here, we have used the extra factor of $\kp$ that appears in the inner product to cancel the $\kp^{-1}$ scaling of the number of terms in the sum over $\xi$.
We can simplify $\tilde{\sigma}_{I,2}$ as
\begin{align*}
    2\pi\tilde{\sigma}_{I,2}= \sum_{\xi \in \kp \mathbb{Z} \cap I} \sum_{i=0}^{J(\xi) - 1} \sum_{j \in S'_{\xi,i}} \aver{\tilde{\psi}_{\kp,j,\xi}, i\kp G_\kp (\xi) \tilde{\psi}_{\kp,j,\xi}} \Upsilon (\mu_{\kp,k_i}(\xi)),
\end{align*}
with $S'_{\xi, i}$ containing only the indices $j$ in $S_{\xi, i}$ such that $\supp (\tilde{\psi}_{j,\xi,\kp}) \cap \supp Q_Y \ne \emptyset$.
That is, $S'_{\xi, i} \subseteq S_{\xi, i}$ and $\bigcup_{i=0}^{J(\xi)-1} S'_{\xi, i} = R_\xi$.
Here, we used the fact that 
$Q_Y = 1$ on $\supp (\tilde{\psi}_{\kp,j,\xi})$ for all $j \in S'_{\xi,i}$.

Note that for all $\xi \in \kp \mathbb{Z} \cap I$, $i \in \{1, \dots, J(\xi)-1\}$ and $j \in S'_{\xi, i}$,
we have
$$|\mu_j (\xi) - \mu_{\kp, k_i} (\xi)| 
\le|\mu_j (\xi) - \mu_{\kp, j} (\xi)| + |\mu_{\kp,j} (\xi) - \mu_{\kp, k_i} (\xi)| \le C (e^{-r/\kp} + \delta)$$
by Proposition \ref{propEvalApprox}.
Thus, by the same logic used to justify 
\eqref{LipBd}, 
we know that
\begin{align*}
    2\pi\tilde{\sigma}_{I,3} := \sum_{\xi \in \kp \mathbb{Z} \cap I} \sum_{i=0}^{J(\xi) - 1} \sum_{j \in S'_{\xi,i}} \aver{\tilde{\psi}_{\kp,j,\xi}, i\kp G_\kp (\xi) \tilde{\psi}_{\kp,j,\xi}} \Upsilon (\mu_{j}(\xi))
\end{align*}
satisfies $|\tilde{\sigma}_{I,3} - \tilde{\sigma}_{I,2}| \le C (e^{-r/\kp} + \delta)$.

We now express $\tilde{\sigma}_{I,3}$ in terms of functions in 
$\mathcal{H}(\mathbb{R})$ as
\begin{align*}
    2\pi\tilde{\sigma}_{I,3} = \kp \sum_{\xi \in \kp \mathbb{Z} \cap I} \sum_{i=0}^{J(\xi) - 1} \sum_{j \in S'_{\xi,i}} (\ul_{\kp,0} \psi_{\kp,j,\xi}, \hat{G}(\xi) \ul_{\kp,0} \psi_{\kp,j,\xi}) \Upsilon(\mu_{j}(\xi)),
\end{align*}
where $\hat{G} (\xi) = \Op (\tau_\xi)$ with $\tau_\xi (y, \zeta) := \partial_\xi \sym(y,\xi,\zeta)$ Hermitian-valued.
Defining
\begin{align*}
    2\pi\tilde{\sigma}_{I,4} :=\kp \sum_{\xi \in \kp \mathbb{Z} \cap I} \sum_{i=0}^{J(\xi) - 1} \sum_{j \in S'_{\xi,i}} (\psi_{j,\xi}, \hat{G} (\xi) \psi_{j,\xi})\Upsilon(\mu_{j}(\xi)),
\end{align*}
it follows that $2\pi(\tilde{\sigma}_{I,4} - \tilde{\sigma}_{I,3})$ is equal to
\begin{align*}
    &\kp \sum_{\xi \in \kp \mathbb{Z} \cap I} \sum_{i=0}^{J(\xi) - 1} \sum_{j \in S'_{\xi,i}}\Big( 
    (\psi_{j,\xi} - \ul_{\kp,0} \psi_{\kp,j,\xi}, \hat{G} (\xi) \psi_{j,\xi})
    +
    (\ul_{\kp,0} \psi_{\kp,j,\xi}, \hat{G} (\xi) (\psi_{j,\xi} - \ul_{\kp,0} \psi_{\kp,j,\xi}))
    \Big)
    \Upsilon(\mu_{j}(\xi)),
\end{align*}
so that $2\pi|\tilde{\sigma}_{I,4} - \tilde{\sigma}_{I,3}|$ is bounded above by
\begin{align*}
    &\kp \sum_{\xi \in \kp \mathbb{Z} \cap I} \sum_{i=0}^{J(\xi) - 1} \sum_{j \in S'_{\xi,i}}\Big( 
    \vertiii{\psi_{j,\xi} - \ul_{\kp,0} \psi_{\kp,j,\xi}} \vertiii{\hat{G} (\xi) \psi_{j,\xi}} 
    +
    \vertiii{\hat{G} (\xi) \ul_{\kp,0} \psi_{\kp,j,\xi}} \vertiii{\psi_{j,\xi} - \ul_{\kp,0} \psi_{\kp,j,\xi}} 
    \Big)
    \norm{\Upsilon}_\infty. 
\end{align*}
Note that $\hat{G} (\xi) \in \Op (S^m)$ and $(i-\hat{H} (\xi))^{-1} \in \Op (S^{-m})$ by \hone, 
hence $\hat{G} (\xi) (i-\hat{H} (\xi))^{-1}$ is bounded for all $\xi \in I$.
The norm is continuous as a function of $\xi$ in $I$ a compact interval, meaning that $\norm{\hat{G} (\xi) (i-\hat{H} (\xi))^{-1}}$ is bounded uniformly in $\xi \in I$.
Writing 
\begin{align*}
    \hat{G} (\xi) \psi_{j,\xi} = \hat{G} (\xi) (1 + \hat{H}^2(\xi))^{-1}(1 + \hat{H}^2 (\xi))\psi_{j,\xi}
\end{align*}
with $\vertiii{(1 + \hat{H}^2 (\xi))\psi_{j,\xi}} \le 1 + E^2$ for all $j \le s(\xi)$,
it follows from \eqref{decayInfiniteApprox} that
$|\tilde{\sigma}_{I,4} - \tilde{\sigma}_{I,3}| \le C e^{-r/\kp}$.

So far, we have shown that $|\tilde{\sigma}_{I,4} - \tilde{\sigma}_I| \le C (\delta^{-1} e^{-r/\kp} + e^{-r/\kp} + \delta)$, for some positive constants $C$ and $r$. Therefore, setting $\delta := e^{-r/2\kp}$, we obtain that $|\tilde{\sigma}_{I,4} - \tilde{\sigma}_I| \le C e^{-r/2\kp}$.

Finally,
the translation invariance of $H$ in the $x$-direction implies that
\begin{align}\label{eq:SFsigmaI}
    2\pi\sigma_I = 
    \Tr \int_{\mathbb{R}} \hat{G} (\xi) \Upsilon(\hat{H}^2 (\xi)) d\xi =
    \int_\mathbb{R} \sum_{i=0}^{J(\xi) - 1} \sum_{j \in S'_{\xi,i}} (\psi_{j,\xi}, \hat{G} (\xi) \psi_{j,\xi}) \Upsilon(\mu_{j}(\xi)) d\xi,
\end{align}
with the above integrand
a smooth, compactly supported function of $\xi$ by Proposition \ref{acProp}.
It is well known (see, e.g. \cite{KT}) that such integrals
can be approximated by sampling over a uniform grid of size $\kp$, with convergence faster than any power of $\kp$.
We conclude that
$|\tilde{\sigma}_{I,4} - \sigma_I| \le C \kp^p$, 
and the proof is complete.
\end{proof}

\tcbn{Observe that $|\tilde{\sigma}_{I,4} - \sigma_I|$ is the only error term that does not converge exponentially in $\kp$.
Thus if it were not for the approximation of the integral over $\xi$ by a discrete sum, we would get exponential convergence of the periodic edge current to its corresponding infinite-space value.}

\subsection{Proof of Theorem \ref{stabilityPeriodic}}
The aim of this section is to prove stability of $\tilde{\sigma}_I (H_\kp)$ under perturbations in the limit $\kp \rightarrow 0$, with $H_\kp$ given by \eqref{Hlambda2D} and $\tilde{\sigma}_I$ by \eqref{eq:sigmaIQ}.
For $\eps \in \{0,1\}$, define $H_{\kp,\eps} := H_\kp + \eps V_\kp$.\footnote{Not to be confused with the operators $H_{\kp,\mu}$ from section \ref{sec:periodic}.}
The arguments from section \ref{sec:periodic} can easily be adapted to prove that the $H_{\kp,\eps}$ are self-adjoint for all 
$\kp$, with
\begin{align} \label{muRes}
\norm{H_{\kp,\eps} f}^2 \ge C_1 \norm{f}^2_{\kp,m} - C_2 \norm{f}^2, \qquad \qquad
    \norm{\kp^{|\alpha|} D^\alpha (i-H_{\kp,\eps})^{-1}} \le C
\end{align}
for all $f \in \mathcal{H}^m (\mathbb{T}^2)$ and $|\alpha| \le m$
uniformly in $\kp \in (0,\kp_0]$. 

We first show that $\tilde{\sigma}_I$ is unchanged for $Q$ replaced by $Q-1$, using $1=Q+1-Q$ and the following result:
\begin{lemma} \label{topologicallytrivial}
Let $H$ be a self-adjoint linear operator and $\Phi \in \mathcal{C}^\infty_c (\mathbb{R})$ such that $[H,P] \Phi (H)$, $H \Phi (H)$, and $\Phi (H)$ are trace-class.
Then $\Tr i [H,P] \Phi (H) = 0$.
\end{lemma}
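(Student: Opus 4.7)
The plan is to rewrite $i[H,P]\Phi(H)$ as a single commutator with a trace-class operator and then invoke cyclicity of the trace, exactly as was done to justify \eqref{eq:Psi} in Lemma \ref{Psi}. By the functional calculus, $H$ and $\Phi(H)$ commute, so
\begin{align*}
[H,P]\Phi(H) = HP\Phi(H) - PH\Phi(H) = HP\Phi(H) - P\Phi(H)H = [H,T], \qquad T:=P\Phi(H).
\end{align*}
Since $\Phi(H)$ is trace-class by hypothesis and $P$ is bounded, $T$ is trace-class.

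Next I would check that both $HT$ and $TH$ are individually trace-class; this is the exact point at which all three assumptions get used. For $TH$: $TH = P\Phi(H)H = P\cdot (H\Phi(H))$ is the bounded operator $P$ composed with the trace-class $H\Phi(H)$, hence lies in $\mathcal{L}^1$. For $HT$: $HT = HP\Phi(H) = [H,P]\Phi(H) + PH\Phi(H)$, where the first summand is trace-class by hypothesis and the second by the same reasoning as for $TH$.

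With both $HT,TH\in\mathcal{L}^1$, cyclicity of the trace (\cite{Kalton}, as already used in Lemma \ref{Psi}) yields $\Tr HT = \Tr TH$, i.e. $\Tr[H,T]=0$, and hence $\Tr i[H,P]\Phi(H)=0$. The only delicate step is the cyclicity itself, since $H$ is unbounded: this is precisely why the stronger assumption ``$H\Phi(H)$ trace-class'' (beyond $\Phi(H)\in\mathcal{L}^1$) is needed, as it ensures that the potentially problematic product $P\Phi(H)H$ actually lies in $\mathcal{L}^1$ and licenses the cyclic rearrangement.
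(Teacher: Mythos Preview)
Your argument is correct and takes a somewhat more direct route than the paper. You rewrite $[H,P]\Phi(H)=[H,T]$ with $T=P\Phi(H)$ and then appeal to cyclicity once $HT$ and $TH$ are both shown to be trace-class. The paper instead first invokes Lemma~\ref{Psi} to replace $H$ by a \emph{bounded} operator $\Psi(H)$ (with $\Psi\in\mathcal C_c^\infty$ and $\Psi(\lambda)=\lambda$ on an interval containing $\supp\Phi$), after which only standard bounded-operator cyclicity is required:
\[
\Tr i[\Psi(H),P]\Phi(H)=\Tr i\Psi(H)P\Phi(H)-\Tr iP\Psi(H)\Phi(H)=\Tr iP\Phi(H)\Psi(H)-\Tr iP\Psi(H)\Phi(H)=0.
\]

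Your shortcut avoids that detour, but the citation of \cite{Kalton} does not quite license the cyclicity step as written: Kalton's result assumes both factors are bounded, whereas your $H$ is not. The identity $\Tr HT=\Tr TH$ is nonetheless true under your hypotheses and is easily recovered by spectral truncation: with $E_n:=\mathbb 1_{[-n,n]}(H)$ the operator $E_nH=HE_n$ is bounded, so $\Tr (E_nH)T=\Tr T(E_nH)=\Tr (TH)E_n$ by ordinary cyclicity, and both ends converge to $\Tr HT$, $\Tr TH$ respectively since $HT,TH\in\mathcal L^1$ and $E_n\to I$ strongly. With this one-line patch your proof stands; the paper's version simply trades that patch for the call to Lemma~\ref{Psi}.
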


\begin{proof}
Since $P$ is bounded, $P H \Phi (H)$ is trace class, and thus so is
$H P \Phi (H) = [H,P] \Phi (H) + P H \Phi (H)$.
Using Lemma \ref{Psi} (which still applies in the periodic setting) and cyclicity of the trace, we have
\begin{align*}
    \Tr i [H,P] \Phi (H) =
    \Tr i [\Psi(H),P] \Phi (H) &=
    \Tr i \Psi (H) P \Phi (H) - \Tr i P \Psi (H) \Phi (H)\\
    &=
    \Tr i P \Phi (H) \Psi (H) - \Tr i P \Psi (H) \Phi (H) = 0,
\end{align*}
as desired.
\end{proof}

Clearly, and thankfully, the assumptions of Lemma \ref{topologicallytrivial} were not satisfied in the infinite-space setting. However, due to compactness of the torus, these assumptions hold in the periodic setting for the Hamiltonians we consider
as we show below.
The filter $Q$ that appears in $\tilde{\sigma}_I$ thus allows for non-vanishing edge currents.

We now bound the trace-norm of appropriate functionals of the $H_{\kp,\eps}$.
\begin{proposition}\label{prop:1normBd}
For all $\Phi \in \mathcal{C}^\infty_c (\mathbb{R})$ and $\eps \in \{0,1\}$, there is a positive constant $C$ such that $\norm{\Phi (H_{\kp,\eps})}_1 \le C \kp^{-2}$ uniformly in $\kp \in (0,\kp_0]$.
\end{proposition}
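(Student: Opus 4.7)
The plan is to bound $\norm{\Phi(H_{\lambda,\eps})}_1$ by the number of eigenvalues of $H_{\lambda,\eps}$ falling in $\supp \Phi$ (times $\norm{\Phi}_\infty$), and then to control this count by combining the uniform elliptic estimate \eqref{muRes} with the min-max principle of Theorem \ref{maxmin}. The result is essentially a $2$D Weyl-type bound, where the factor $\lambda^{-2}$ matches the semiclassical volume of phase space at energies in $\supp \Phi$.

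First, I will observe that by \eqref{muRes} and the compact Sobolev embedding $\mathcal{H}^m(\mathbb{T}^2) \hookrightarrow \mathcal{H}(\mathbb{T}^2)$, the resolvent $(i - H_{\lambda,\eps})^{-1}$ is compact for each fixed $\lambda$, so $H_{\lambda,\eps}$ has purely discrete spectrum $\{\kappa_j\}_{j \ge 1}$ of finite multiplicity. Choosing $M > 0$ with $\supp \Phi \subset [-M, M]$, the spectral theorem gives
\begin{align*}
    \norm{\Phi(H_{\lambda,\eps})}_1 = \sum_j |\Phi(\kappa_j)| \le \norm{\Phi}_\infty \cdot N(M),
\end{align*}
where $N(M) := \#\{j : |\kappa_j| \le M\}$. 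It thus suffices to show $N(M) \le C \lambda^{-2}$ uniformly in $\lambda \in (0,\lambda_0]$ and $\eps \in \{0,1\}$.

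Second, I will introduce the ``low-frequency'' subspace
\begin{align*}
    V_\delta := \mathrm{span}\bigl\{\, e^{i(px+qy)}\,e_k \,:\, (p,q) \in \mathbb{Z}^2,\ 1 + \lambda^{2m}(p^{2m}+q^{2m}) \le \delta,\ 1 \le k \le n \,\bigr\},
\end{align*}
where $\delta > 0$ will be chosen below and $\{e_k\}_{k=1}^n$ is the standard basis of $\mathbb{C}^n$. By Parseval on $\mathbb{T}^2$, every $f \in V_\delta^\perp$ satisfies $\norm{f}_{\lambda,m}^2 \ge \delta \norm{f}^2$; plugging this into \eqref{muRes} gives $\norm{H_{\lambda,\eps} f}^2 \ge (C_1 \delta - C_2) \norm{f}^2$. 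Fixing $\delta := (M^2 + C_2 + 1)/C_1$ makes the right-hand side strictly exceed $M^2 \norm{f}^2$ for every nonzero $f \in V_\delta^\perp$, and Theorem \ref{maxmin} applied to the self-adjoint operator $H_{\lambda,\eps}^2$ then forces $N(M) \le \dim V_\delta$.

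Third, $\dim V_\delta$ is an elementary lattice count: the number of $(p,q) \in \mathbb{Z}^2$ with $\lambda^{2m}(p^{2m}+q^{2m}) \le \delta$ is bounded by $(2\delta^{1/(2m)}/\lambda + 1)^2 \le C \delta^{1/m} \lambda^{-2}$, so $\dim V_\delta \le C' \lambda^{-2}$ with a constant depending on $M$ through $\delta$ and on the fiber dimension $n$. This yields $N(M) \le C \lambda^{-2}$ and completes the bound. I do not anticipate any substantive obstacle: the only point requiring care is the uniformity of \eqref{muRes} in both $\lambda$ and $\eps \in \{0,1\}$, but the paper already indicates that this follows by adapting the proof of Proposition \ref{periodicEllipticity}, exploiting that $V_\lambda$ is of order at most $m-1$ with coefficients bounded uniformly in $\lambda$.
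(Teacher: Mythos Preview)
Your proposal is correct and follows essentially the same approach as the paper: both arguments combine the uniform elliptic estimate \eqref{muRes} with the min-max principle (Theorem \ref{maxmin}) applied to $H_{\lambda,\eps}^2$, using a Fourier low-frequency subspace whose dimension is bounded by a lattice count of order $\lambda^{-2}$. The only cosmetic difference is that the paper takes the low-frequency set to be a box $(\mathbb{Z}\cap[-k+1,k-1])^2$ and bounds $\norm{u}_{\lambda,m}^2 \ge (\lambda k)^{2m}\norm{u}^2$ on its orthogonal complement, whereas you use the level set $1+\lambda^{2m}(p^{2m}+q^{2m})\le\delta$; the resulting counts and conclusions are the same.
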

\begin{proof}
Let $M>0$ such that $\Phi \in \mathcal{C}^\infty_c (-M,M)$.
For $(\xi, \zeta) \in \mathbb{Z}^2$ and $j \in \{1,2, \dots, n\}$, define $\phi_{\xi,\zeta,j}(x,y) = \frac{1}{2\pi}e^{i(\xi x+\zeta y)} v_j$, where $v_j$ is the $j$th column of the $n\times n$ identity matrix. Thus the $\phi_{\xi,\zeta,j}$ form an orthonormal basis for $\mathcal{H} (\mathbb{T}^2)$.
Let $k \in \mathbb{N}$ and 
define 
$S_k := (\mathbb{Z} \cap [-k+1, k-1])^2 \times (\mathbb{N} \cap [1,n])$ and $T_k := (\mathbb{Z}^2 \times (\mathbb{N} \cap [1,n])) \setminus S_k$.
Let
$u = \sum_{(\xi,\zeta,j) \in T_{k}} c_{\xi,\zeta,j} \phi_{\xi,\zeta,j}$,
with $(c_{\xi,\zeta,j}) \subset \mathbb{C}$ chosen so that $u \in \mathcal{H}^m (\mathbb{T}^2)$.
It follows that
\begin{align*}
    \norm{u}_{\kp,m}^2 = \sum_{|\alpha| \le m} \: \sum_{(\xi,\zeta,j) \in T_{k}} |c_{\xi,\zeta,j}|^2 (\kp \xi)^{2\alpha_1} (\kp \zeta)^{2\alpha_2} \ge 
    (\kp k)^{2m}\sum_{(\xi,\zeta,j) \in T_{k}} |c_{\xi,\zeta,j}|^2= (\kp k)^{2m} \norm{u}^2.
\end{align*}
To justify the inequality, we 
sum over $\alpha \in \{(m,0), (0,m)\}$ and
use the fact that  for all $(\xi, \zeta, j) \in T_k$, $\xi \ge k$ or $\zeta \ge k$.
Thus if $(\kp k)^{2m} \ge C_1^{-1} (M^2 + C_2)$, then \eqref{muRes} implies that $\norm{H_{\kp,\eps} u}^2 \ge M^2 \norm{u}^2$.
Since $u$ is an arbitrary function in $\s (\phi_{\xi,\zeta,j}: (\xi,\zeta,j) \in S_k)^\perp$,
Theorem \ref{maxmin} implies that the spectrum of $H_{\kp,\eps}^2$ in $(-\infty,M^2)$ consists entirely of eigenvalues, and the number of these eigenvalues is bounded by $C \kp^{-2}$ for some $C>0$. Hence the number of eigenvalues of $H_{\kp,\eps}$ in $(-M,M)$ is also bounded by $C \kp^{-2}$, and the result follows.
\end{proof}

We are now ready to prove the main result of this section. Below we bound the difference of edge currents $|\tilde{\sigma}_I (H_{\kp,1}) - \tilde{\sigma}_I (H_{\kp,0})|$ by the product of a trace-norm and operator-norm.
Proposition \ref{prop:1normBd} 
provides a ($\kp-$dependent) bound on the involved trace-norm.
Combined with \eqref{muRes}, this bound verifies the assumptions of Lemma \ref{topologicallytrivial}.
We conclude by showing that the operator norm goes to zero faster than any power of $\kp$.

\begin{proof}[Proof of Theorem \ref{stabilityPeriodic}]
Let $W = W(x,y)$ be a smooth point-wise multiplication operator proportional to the identity matrix.
By \eqref{muRes}, we have
\begin{align} \label{adBound10}
    \norm{[H_{\kp,\eps'}, W] (z-H_{\kp,\eps})^{-1}}
     &=\norm{[V_\kp, W](i-H_{\kp,\eps})^{-1}(i-H_{\kp,\eps}) (z-H_{\kp,\eps})^{-1}} \nonumber\\
     &\le \norm{[V_\kp, W](i-H_{\kp,\eps})^{-1}} \big(1 + |i-z| \norm{(z-H_{\kp,\eps})^{-1}}\big)\le C |\Im z|^{-1} \kp
\end{align}
for all $(\eps, \eps') \in \{0,1\}^2$
uniformly in $\kp \in (0,\kp_0]$ 
and $z \in Z \setminus \{\Im z = 0\}$, where $Z := [E_1,E_2] \times [-2,2]\subset \mathbb{C}$.
Similarly,
\begin{align} \label{VBound10}
    \norm{V_\kp (z-H_{\kp, \eps})^{-1}} \le C |\Im z|^{-1}
\end{align}
uniformly in $\kp \in (0,\kp_0]$ and $z \in Z\setminus \{\Im z = 0\}$.
Let $\Phi \in \mathcal{C}^\infty_c (E_1, E_2)$ such that $\Phi \varphi' = \varphi '$.
Applying Proposition \ref{prop:1normBd} to the compactly supported function $x \mapsto (i-x) \Phi (x)$, we obtain by \eqref{adBound10} that
\begin{align} \label{1normBd0}
\norm{[H_{\kp,\eps}, W] \Phi (H_{\kp,\eps})}_1 \le \norm{[H_{\kp,\eps}, W](i-H_{\kp,\eps})^{-1}} \norm{(i-H_{\kp,\eps})\Phi (H_{\kp,\eps})}_1 \le C\kp^{-1}
\end{align}
uniformly in $\kp$.

Fix $N \in \mathbb{N}$ and define $\tilde{Q} := Q-1$.
With Proposition \ref{prop:1normBd} and \eqref{1normBd0}, we have
verified that $H_{\kp, \eps}$ satisfies the assumptions of Lemma \ref{topologicallytrivial} for $\eps \in \{0,1\}$, hence
$
    \tilde{\sigma}_I (H_{\kp, \eps}) =
    \Tr i \tilde{Q} [H_{\kp, \eps},P] \varphi ' (H_{\kp, \eps}).
$
By assumption,
$\tilde{Q} [V_\kp, P] = 0$, so that $\tilde{\sigma}_I (H_{\kp, \eps}) = \Tr i \tilde{Q} [H_{\kp, 0}, P] \varphi ' (H_{\kp, \eps})$.
Moreover, there exists a ($\kp$-independent) smooth point-wise multiplication operator $R$ proportional to the identity matrix, such that $R = 1$ on $\supp (\tilde{Q} P')$ 
and $\supp (R) \cap S_1 = \emptyset$. 
It follows that
\begin{align*}
    \tilde{\sigma}_I (H_{\kp, 0}) -\tilde{\sigma}_I (H_{\kp, 1}) &=  \Tr i R\tilde{Q} [H_{\kp, 0},P] (\varphi ' (H_{\kp, 0}) -\varphi ' (H_{\kp, 1}))\\
    &=
    \Tr i \tilde{Q} [H_{\kp, 0},P] \Phi (H_{\kp,0})(\varphi ' (H_{\kp, 0}) -\varphi ' (H_{\kp, 1})) R\\
    &\qquad +
    \Tr i \tilde{Q} [H_{\kp, 0},P] R (\Phi (H_{\kp,0}) -\Phi (H_{\kp, 1}))\varphi ' (H_{\kp,1})
     =: \Delta_0 + \Delta_1,
\end{align*}
where we have used 
cyclicity of the trace to move $R$ to the right-most position on the second line.
By the Helffer-Sj\"ostrand formula, we see that
\begin{align*}
    (\varphi ' (H_{\kp, 0}) -\varphi ' (H_{\kp, 1})) R =
    \frac{1}{\pi} \int_Z \bar{\partial} \tilde{\varphi '}(z)
    (z - H_{\kp, 1})^{-1} V_\kp (z-H_{\kp,0})^{-1} R d^2 z.
\end{align*}
Since $\supp (R)$ and $S_1$ are closed and disjoint sets, there exists a ($\kp$-independent)
collection $\{W_0, W_1, \dots, W_{N}\}$
of smooth, point-wise multiplication operators proportional to the identity matrix such that $W_j V_\kp = 0$ and $W_{j+1} W_j = W_j$ for all $j$ and $\kp$,
with $W_0 := R$.
Using the fact that 
$[(z-H_{\kp, 0})^{-1}, W_j] =
(z-H_{\kp, 0})^{-1} [H_{\kp, 0}, W_j] (z-H_{\kp, 0})^{-1}$, we obtain that
\begin{align*}
    V_\kp (z-H_{\kp,0})^{-1} R &=
    V_\kp [(z-H_{\kp,0})^{-1},W_0]
    =
    V_\kp (z-H_{\kp, 0})^{-1} [H_{\kp, 0}, W_0] (z-H_{\kp, 0})^{-1}\\
    &=
    V_\kp (z-H_{\kp, 0})^{-1} W_1 [H_{\kp, 0}, W_0] (z-H_{\kp, 0})^{-1}\\
    &=
    V_\kp (z-H_{\kp, 0})^{-1} [H_{\kp,0},W_1] (z-H_{\kp,0})^{-1} [H_{\kp, 0}, W_0] (z-H_{\kp, 0})^{-1}\\
    &=
    V_\kp (z-H_{\kp, 0})^{-1} W_2 [H_{\kp,0},W_1] (z-H_{\kp,0})^{-1} [H_{\kp, 0}, W_0] (z-H_{\kp, 0})^{-1}
    \quad = \ \  \dots \\
    &=
    V_\kp (z-H_{\kp, 0})^{-1} [H_{\kp,0},W_{N}] (z-H_{\kp, 0})^{-1}[H_{\kp,0},W_{N-1}] (z-H_{\kp, 0})^{-1}\\
    &\qquad \qquad \qquad \dots (z-H_{\kp,0})^{-1} [H_{\kp, 0}, W_0] (z-H_{\kp, 0})^{-1}.
\end{align*}
By \eqref{adBound10}, $\norm{[H_{\kp,0}, W_j] (z-H_{\kp,0})^{-1}} \le C_j |\Im z|^{-1} \kp$ for every $j$.
Thus by \eqref{VBound10} and
the rapid decay of $\bar{\partial} \tilde{\varphi '}$ near the real axis,
it follows that
$\norm{(\varphi ' (H_{\kp, 0}) -\varphi ' (H_{\kp, 1})) R} \le C \kp^{N+1}$.
Using \eqref{1normBd0}, we conclude that
\begin{align*}
    |\Delta_0| \le
    \norm{\tilde{Q} [H_{\kp, 0},P] \Phi (H_{\kp,0})}_1
    \norm{(\varphi ' (H_{\kp, 0}) -\varphi ' (H_{\kp, 1})) R}
    \le C \kp^N.
\end{align*}
By cyclicity of the trace,
$\Delta_1 = \Tr i\varphi ' (H_{\kp,1}) \tilde{Q} [H_{\kp, 0},P] R (\Phi (H_{\kp,0}) -\Phi (H_{\kp, 1})),$
so that 
the above argument can be repeated to obtain the same bound for $|\Delta_1|$.
We conclude that
$|\tilde{\sigma}_I (H_{\kp, 0}) -\tilde{\sigma}_I (H_{\kp, 1})| \le C \kp^{N}$, and the proof is complete.
\end{proof}

\section*{Acknowledgment}
This research was partially supported by the National Science Foundation, Grants DMS-1908736 and EFMA-1641100.

\appendix
\numberwithin{equation}{section}

\section{Pseudo-differential and semiclassical calculus} \label{sectionPreliminaries} 
\paragraph{Notation and functional setting.} 
Given a bounded linear operator $A: \mathcal{H} \rightarrow \mathcal{H}$ for $\mathcal{H}$ a Hilbert space, we denote by $A^*$ its adjoint and $\|A\|$ its operator norm. If, in addition, $A^* A$ is compact, then  by the spectral theorem, $A^* A$ admits a countable collection of eigenvalues $\{\lambda_j\} \subset [0, \infty)$ converging to $0$. The operator $A$ is Hilbert-Schmidt if $\norm{A}_2:=\sum_j \lambda_j < \infty$ and trace-class if $\norm{A}_1 := \sum_j \sqrt{\lambda_j} < \infty$. If $A$ is trace-class, we define the trace of $A$ by 
\begin{align*}
    \Tr A := \sum_{j\in \mathbb{N}} (\psi_j, A \psi_j),
\end{align*}
where $\{\psi_j\}_{j \in \mathbb{N}}$ is any Hilbert basis of $\mathcal{H}$ (the trace is independent of the chosen Hilbert basis).

\medskip
\noindent{\bf Weyl Quantization}. See \cite[Chapter 7]{DS}.
Let $\mathcal{S} (\mathbb{R}^d) \otimes \mathbb{C}^n$ be the Schwartz space of vector-valued functions and 
$\mathcal{S}' (\mathbb{R}^d) \otimes \mathbb{C}^n$ its dual.
Let $\mathbb{M}_n$ denote the space of Hermitian $n \times n$ matrices.
Given a parameter $h \in (0,1]$ and a symbol $a(x,\xi;h) \in \mathcal{S}' (\mathbb{R}^d \times \mathbb{R}^d) \otimes \mathbb{M}_n$, we define 
the Weyl quantization of $a$ as the operator 
\begin{align}\label{eq:weylquanth}
    \Op _h (a) \psi (x) :=
    \frac{1}{(2\pi h)^d} \int_{\mathbb{R}^{2d}}
    e^{i(x-y)\cdot \xi/h}
    a(\frac{x+y}{2}, \xi;h) \psi (y) dy d\xi,
    \qquad
    \psi \in \mathcal{S} (\mathbb{R}^d) \otimes \mathbb{C}^n.
\end{align}
When $a$ is 
polynomial in $\xi$, it follows that $\Op_h (a)$ is a differential operator. We denote by $\Op a=\Op_1 a$ for $h=1$. We note that for general $a \in \mathcal{S}'$, the integral in \eqref{eq:weylquanth} is defined in the distributional sense. In this paper, we deal exclusively with \emph{smooth} symbols $a$ (still in $\mathcal{S}'$) so that the Lebesgue integrals are well defined.

\medskip
\noindent{\bf Order functions and symbol classes}.
See \cite{Bony} and \cite[Chapter 7]{DS}.
For $(x, \xi) = X\in \mathbb{R}^{2d}$, we define 
$
    \aver{X} := \sqrt{1 + |X|^2}.
$
A function $\fm: \mathbb{R}^{2d} \rightarrow [0,\infty)$ is called an order function if there exist constants $C_0 > 0$, $N_0 > 0$ such that
$\fm(X) \le C_0 \aver{X-Y}^{N_0} \fm(Y)$ for all $X,Y \in \mathbb{R}^{2d}$.
Note that $\aver{X}^p$ and $\aver{X_\pm}$ are order functions for all $p \in \mathbb{R}$, where $X_+ := \max\{X,0\}$ (with the $\max$ defined element-wise) and $X_- := -(-X)_+$.
Moreover, if $\fm_1$ and $\fm_2$ are order functions, then so are $\fm_1^{-1}$ and $\fm_1 \fm_2$.

We say that $a \in S(\fm)$ if
for every $\alpha \in \mathbb{N}^{2d}$, there exists $C_\alpha > 0$ such that $|\partial^\alpha a (X;h)| \le C_\alpha \fm(X)$
for all $X \in \mathbb{R}^{2d}$ and $h \in (0,1]$.
We write $S(\fm^{-\infty})$ to denote the intersection over $s \in \mathbb{N}$ of $S(\fm^{-s})$.
For $\delta \in [0,1]$ and $k \in \mathbb{R}$, we say that $a(X;h) \in S^k_\delta (\fm)$ if for every $\alpha \in \mathbb{N}^{2d}$, there exists $C_\alpha > 0$ such that
\begin{align}\label{eq:symbolm}
    |\partial^\alpha a (X;h)| \le 
    C_\alpha \fm(X) h^{-\delta |\alpha| - k},
\end{align}
uniformly in $X \in \mathbb{R}^{2d}$ and $h \in (0,1]$.
If either $k$ or $\delta$  are omitted, they are assumed to be zero.
We will always write the order function $\fm$ when using these symbol classes.

By \cite[Chapter 7]{DS}, we know that if $a \in S(\fm_1)$ and $b \in S(\fm_2)$, then $\Op_h (c) := \Op_h (a) \Op_h(b)$ is a pseudo-differential operator, with
\begin{align*}
    c (x,\xi;h) =
    (a \sharp_h b) (x,\xi;h) :=
    \Big( e^{i\frac{h}{2} (\partial_x \cdot \partial_\zeta - \partial_y \cdot \partial_\xi)}
    a(x,\xi;h) b(y,\zeta;h) \Big)
    \Big \vert_{y=x,\zeta=\xi}
\end{align*}
and $c \in S(\fm_1 \fm_2)$.
See also Proposition \ref{prop:sharpfinite} for explicit bounds on $c$.


For $m \in \mathbb{Z}$, define the standard Hilbert spaces
\begin{align}\label{eq:Hm}
   \mathcal{H}^m := \{\Psi \in \mathcal{S}' (\mathbb{R}^2) \otimes \mathbb{C}^n \quad | \quad \partial_\alpha \Psi \in L^2(\mathbb{R}^2) \otimes \mathbb{C}^n \quad \forall \ |\alpha| \le m\}.
\end{align}
Following \cite{Bony,Hormander1979,Lerner}, we define 
the H\"ormander class $\sm$ to be the space of symbols $a(x,\xi)$ that satisfy
\begin{align}\label{eq:symbol10}
    |(\partial^\alpha_\xi \partial^\beta_x a) (x,\xi)|
    \le 
    C_{\alpha, \beta} \aver{\xi}^{m - |\alpha|};
    \qquad
    \alpha,\beta \in \mathbb{N}^{d}.
\end{align}
We define $\smeh$ to be the space of \emph{Hermitian-valued} symbols $a(x,\xi)$ that satisfy \eqref{eq:symbol10} and 
\begin{align}\label{eq:elliptic}
a_{\min} (x,\xi) \ge c \aver{\xi}^m - 1
\end{align}
for some $c>0$, where $a_{\min}$ is the smallest singular value of $a$. 

If $\mathcal{A}$ is a symbol class (e.g. $S(\ofn),\sm, \smeh$), we write $A \in \Op (\mathcal{A})$ to mean that $A = \Op (a)$ for some $a \in \mathcal{A}$.
In the case $\mathcal{A} = S (\ofn)$, the notation $A \in \Op_h (S (\ofn))$ means that $A = \Op_h (a)$ for some $a \in S (\ofn)$.

\medskip\noindent{\bf Helffer-Sj\"ostrand formula.}
See  \cite[Chapter 8]{DS}.
Given $f \in \mathcal{C}^\infty_c (\mathbb{R})$, there exists an almost analytic extension $\tilde{f} \in \mathcal{C}^\infty_c (\mathbb{C})$ that satisfies
\begin{align} \label{aae}
    |\bar{\partial} \tilde{f}| \le C_N |\omega|^N, &\quad 
    N \in \{0,1,2,\dots\}; \qquad
    \tilde{f} (\lambda) = f(\lambda), \quad \lambda \in \mathbb{R}.
\end{align}
Here, $z =: \lambda + i\omega$ and $\bar{\partial} := \frac{1}{2} \partial_\lambda + \frac{i}{2} \partial_\omega$.
We now recall \cite[Theorem 8.1]{DS}.
If $H$ is a self-adjoint operator on a Hilbert space, then
\begin{align} \label{HSformula}
    f (H) =
    -\frac{1}{\pi} \int_\mathbb{C} \bar{\partial} \tilde{f} (z) (z-H)^{-1} d^2 z,
\end{align}
where $d^2 z$ is the Lebesgue measure on $\mathbb{C}$.
(\ref{HSformula}) is known as the Helffer-Sj\"ostrand formula,
and we will use it repeatedly in this paper.

\medskip
\noindent{\bf Trace-class operators.}
See  \cite[Chapter 9]{DS}.
Suppose $\fm \in L^1 (\mathbb{R}^{2d})$, and 
$|\partial^\alpha a(x,\xi;h)| \le C_\alpha \fm(x,\xi)$ for all $\alpha \in \mathbb{N}^{2d}$ and $h \in (0,1]$ (meaning that $a \in S(\fm)$).
Then $\Op_h (a)$ is trace-class with
$
    \norm{\Op_h (a)}_1 \le C \max_{|\alpha| \le 2d+1} C_\alpha \norm{\fm}_{L^1}
$
and
\begin{align}\label{eq:ophtrace}
    \Tr \Op_h (a) =
    \frac{1}{(2\pi h)^d}
    \int_{\mathbb{R}^{2d}} \tr a(x,\xi;h) dx d\xi,
\end{align}
where $C$ depends only on $d$ {and $\tr$ is the standard matrix trace}.
To obtain the above equality, we use \cite[Theorem 9.4]{DS} to write
\begin{align*}
    \Tr \Op_h (a(x,\xi;h)) =
    \Tr \Op_1 (a(x,h\xi;h)) 
    =
    \frac{1}{(2\pi)^d}
    \int_{\mathbb{R}^{2d}} \tr a(x,h\xi;h) dx d\xi
    =
    \frac{1}{(2\pi h)^d}
    \int_{\mathbb{R}^{2d}} \tr a(x,\xi;h) dx d\xi.
\end{align*}


The results from \cite{Bony,DS} that were used in this section are stated for scalar symbols. They extend to the matrix-valued case; see \cite{BG}.

\paragraph{Composition calculus}
For $\ofn : \mathbb{R}^{2d} \rightarrow [0,\infty)$ an order function, $u \in S (\ofn)$ and $N \in \mathbb{N}$, define $$\sbd_N(u,\ofn) := \sum_{|\alpha| \le N} \inf \{C > 0 : |\partial^\alpha u| \le C \ofn \}.$$
For $u_1, u_2 \in \mathcal{C}^\infty (\mathbb{R}^{2d}; \mathbb{C}^{n\times n})$ and $N \in \mathbb{N}$, define 
$$\shp_N (u_1, u_2)=(\shp_N (u_1, u_2))(x,\xi) := \sum_{j=0}^{N-1} \Big(\frac{(ih (D_\xi\cdot D_y - D_x \cdot D_\eta)/2)^j}{j!} u_1 (x,\xi) u_2 (y,\eta)\Big) |_{y=x,\eta=\xi}.$$
\begin{proposition}\label{prop:sharpfinite}
Let $\ofn_1, \ofn_2 :\mathbb{R}^{2d} \rightarrow [0,\infty)$ be order functions. 
Then there exists $s \in \mathbb{N}$ such that
for every $N \in \mathbb{N}$ and $\alpha \in \mathbb{N}^{2d}$,
\begin{align*}
    |\partial^\alpha (u_1 \sharp_h u_2 - \shp_N) | \le C_{\alpha,N} \sum_{j=0}^{|\alpha|}\sbd_{2N+s+j} (u_1, \ofn_1) \sbd_{2N+s+|\alpha|-j} (u_2, \ofn_2)  h^{N} \ofn_1 \ofn_2 
\end{align*}
uniformly in $u_1 \in S (\ofn_1)$, $u_2 \in S (\ofn_2)$ and $h \in (0,1]$.
\end{proposition}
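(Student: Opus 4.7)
The plan is to start from the standard oscillatory integral representation of the Weyl product, introduce an auxiliary scaling parameter $t\in[0,1]$, and apply Taylor's formula with integral remainder. Concretely, for $t\in[0,1]$, the product $u_1\sharp_{th}u_2$ is given by
\begin{align*}
(u_1\sharp_{th}u_2)(x,\xi) = \frac{1}{(\pi t h)^{2d}}\iint e^{-\frac{2i}{th}\sigma(Y,Z)}\,u_1(x+Y_1,\xi+Y_2)u_2(x+Z_1,\xi+Z_2)\,dY\,dZ,
\end{align*}
where $Y=(Y_1,Y_2)$, $Z=(Z_1,Z_2)\in\mathbb{R}^{2d}$ and $\sigma(Y,Z)=Y_2\cdot Z_1-Y_1\cdot Z_2$. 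A formal asymptotic expansion of this integral in powers of $t$ at $t=0$ (obtained by expanding the exponential factor $e^{(ith/2)(D_{Y}\wedge D_{Z})}$ acting on $u_1(x+Y)u_2(x+Z)$ and evaluating at $Y=Z=0$) reproduces exactly $\shp_N(u_1,u_2)$ through order $t^{N-1}$, so Taylor's formula with integral remainder gives
\begin{align*}
u_1\sharp_h u_2 - \shp_N(u_1,u_2) = \frac{h^N}{(N-1)!}\int_0^1(1-t)^{N-1}\,r_{N,t}(x,\xi)\,dt,
\end{align*}
where $r_{N,t}$ is (up to constants) an $N$-fold application of $D_{Y}\wedge D_{Z}$ to the integrand above, still presented as an oscillatory integral.

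The main step is to estimate $r_{N,t}(x,\xi)$ and its $\partial^\alpha$-derivatives uniformly in $t\in(0,1]$ and $h\in(0,1]$. Differentiating $N$ times in $t$ produces a sum of terms in which $N$ derivatives in $(Y,Z)$ fall on $u_1(x+Y)u_2(x+Z)$, distributed between $u_1$ and $u_2$; any $\partial^\alpha$ applied to the left-hand side can be pushed inside the integral and distributed similarly, producing up to $|\alpha|$ additional derivatives split arbitrarily between the two factors, which accounts for the inner sum over $j\in\{0,\ldots,|\alpha|\}$ in the claimed bound. The phase $-2\sigma(Y,Z)/(th)$ is non-degenerate but the prefactor $(th)^{-2d}$ makes the integral only conditionally convergent. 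The hard part will be the non-stationary-phase argument that trades oscillation for decay: introduce the first-order operator
\begin{align*}
L=\frac{1}{1+|Y|^2+|Z|^2}\bigl(1-i\tfrac{th}{2}(Y\cdot\nabla_Z-Z\cdot\nabla_Y)\bigr),
\end{align*}
which satisfies $L\,e^{-2i\sigma(Y,Z)/(th)}=e^{-2i\sigma(Y,Z)/(th)}$, and integrate by parts $2d+1$ times using $L^t$. Each integration by parts either produces a factor $\aver{(Y,Z)}^{-2}$ or moves one derivative onto the amplitude $u_1(x+Y)u_2(x+Z)$, at the cost of losing the prefactor $(th)$ only when the derivative acts on the amplitude; the point is that this exchange is favourable because the phase factor $th$ is always compensated.

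After sufficiently many such integrations by parts, say $s=s(d)$ additional derivatives on the amplitude, the integral becomes absolutely convergent with integrand bounded, pointwise in $Y,Z$, by
\begin{align*}
C\,\aver{(Y,Z)}^{-(2d+1)}\sum_{j=0}^{|\alpha|}\bigl|\partial^{\beta_1^j}u_1(x+Y)\bigr|\bigl|\partial^{\beta_2^j}u_2(x+Z)\bigr|,
\end{align*}
with $|\beta_1^j|\le 2N+s+j$ and $|\beta_2^j|\le 2N+s+|\alpha|-j$. Finally, using that $\ofn_i$ is an order function, $\ofn_i(x+Y,\xi+W)\le C_0\aver{(Y,W)}^{N_0}\ofn_i(x,\xi)$, one converts the amplitude bound into the seminorms $\sbd_{2N+s+j}(u_1,\ofn_1)$ and $\sbd_{2N+s+|\alpha|-j}(u_2,\ofn_2)$ times $\ofn_1(x,\xi)\ofn_2(x,\xi)$, absorbing the polynomial factors $\aver{Y}^{N_0}\aver{Z}^{N_0}$ into the integrable weight by choosing $s$ large enough. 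Integrating in $t$ trivially yields a uniform constant, and the explicit prefactor $h^N$ appearing in Taylor's remainder gives the stated bound. The value of $s$ is independent of $N$, $\alpha$, $u_1$, $u_2$, and $h$, depending only on $d$ and the exponent $N_0$ in the order function property.
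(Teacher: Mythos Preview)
Your overall strategy---Taylor expansion in an auxiliary parameter $t$ followed by an oscillatory-integral estimate of the remainder---matches what the paper does: it reduces to bounding $e^{ihA(D)}u-\sum_{j<N}(ihA(D))^j u/j!$ for $u=u_1\otimes u_2$, then cites \cite[Theorems~3.14,~4.16]{Zworski} and \cite[(7.19)]{DS} rather than re-deriving the stationary-phase estimate by hand. But two concrete issues in your writeup need fixing.

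First, your operator $L$ does not satisfy $Le^{-2i\sigma(Y,Z)/(th)}=e^{-2i\sigma(Y,Z)/(th)}$: the vector field $Y\cdot\nabla_Z-Z\cdot\nabla_Y$ \emph{annihilates} the phase $\sigma(Y,Z)=Y_2\cdot Z_1-Y_1\cdot Z_2$ (for instance $Y\cdot\nabla_Z\sigma=Y_1\cdot Y_2-Y_2\cdot Y_1=0$, and likewise for the other term). The correct differential part uses the actual gradient, $\nabla_{(Y,Z)}\sigma\cdot\nabla_{(Y,Z)}$, which works because $|\nabla\sigma|^2=|Y|^2+|Z|^2$. Second, even with the right $L$, the scheme as you describe it does not give bounds uniform in $t\in(0,1]$: terms in $(L^t)^k a$ in which fewer than $2d$ derivatives land on the amplitude carry fewer than $2d$ factors of $th$, and hence do not cancel the prefactor $(\pi th)^{-2d}$. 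The standard cure is to rescale $(Y,Z)\mapsto\sqrt{th}\,(Y,Z)$ first, which removes the prefactor entirely and makes the bound manifestly uniform (derivatives of the amplitude then come with harmless factors $\sqrt{th}\le 1$). Equivalently one splits into near and far regions via a cutoff $\chi(Y,Z)$; this is exactly the decomposition $e^{ihA(D)}u=(\chi K_h)*u+((1-\chi)K_h)*u$ that the paper uses before invoking the cited results.
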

\begin{proof}
    We follow arguments from \cite[Chapter 7]{DS} and \cite[Chapters 3 \& 4]{Zworski}.
    Since $$u_1 \sharp_h u_2 = (e^{i\frac{h}{2} (D_\xi\cdot D_y - D_x \cdot D_\eta)} u_1 (x,\xi) u_2 (y,\eta)) |_{y=x,\eta=\xi} =: (e^{ih A(D)} u_1 (x,\xi) u_2 (y,\eta)) |_{y=x,\eta=\xi}$$ and $D^\alpha$ commutes with $e^{ih A(D)}$ (for any $\alpha \in \mathbb{N}^d$), it suffices to show that there exists $s \in \mathbb{N}$ such that for any $N \in \mathbb{N}$ and order function $\ofn : \mathbb{R}^{4d} \rightarrow [0,\infty)$,
    \begin{align*}
        \Big|e^{ih A(D)} u(X)- \sum_{j \le N-1} \frac{(ih A(D))^j}{j!} u(X)\Big| \le C_N \sbd_{2N+s} (u, \ofn) h^{N} \ofn (X)
    \end{align*}
    uniformly in $u \in S (\ofn)$ and $h \in (0,1]$. 
    Here, we use the shorthand $X := (x,y,\xi,\zeta) \in \mathbb{R}^{4d}$.

    Since $A$ is non-degenerate, 
    we can let $A^{-1} (X) = \frac{1}{2} \aver{Q^{-1} X, X}$ be the dual quadratic form on $\mathbb{R}^{4d}$. Then, as stated in \cite[bottom of page 80]{DS}, $e^{ihA(D)} u = K_h * u$, where
        $K_h (x) = C h^{-2d} e^{-i A^{-1} (X)/h}$.
    We then write
    \begin{align*}
        e^{ihA(D)} u = (\chi K_h) * u + ((1-\chi) K_h) * u =: t_1 + t_2, 
    \end{align*}
    where $\chi \in \mathcal{C}^\infty_c (B(0,2))$ is equal to $1$ in $B(0,1)$.
    Set $p:= 2d+1$.
    By \cite[Theorems 3.14 \& 4.16 and their proofs]{Zworski}, we obtain that
    \begin{align*}
        \Big | t_1 (X)-\sum_{j \le N-1} \frac{(ih A(D))^j}{j!} u(X) \Big | \le C_N h^{N} \sum_{|\alpha| \le 2N+p} \sup_{B(X,2)} |\partial^\alpha u| \le C_N \sbd_{2N+p} (u, \ofn) h^{N} \ofn (X). 
    \end{align*}
    For the second term, we directly apply \cite[equation (7.19)]{DS} to obtain that for every $k \in \mathbb{N}$,
    \begin{align*}
        |t_2 (X)| \le C_k h^k \sum_{|\alpha| \le k+p} \norm{\aver{X-\cdot}^{-k-2d} \partial^\alpha u (\cdot)}_{L^1} \le C_k \sbd_{k+p}(u,\ofn) h^{k}\norm{\aver{X-\cdot}^{-k-2d} \ofn (\cdot)}_{L^1}.
    \end{align*}
    Using that $\ofn (Y) \le C \aver{X-Y}^{N_0} \ofn (X)$ for some $N_0 >0$, it follows that
    \begin{align*}
        |t_2 (X)| \le C_k \sbd_{k+p}(u,\ofn) h^{k}\ofn(X) \norm{\aver{\cdot}^{N_0-k-2d}}_{L^1}.
    \end{align*}
    Thus for all $k > N_0 + p$ (so that the above integral is finite),
        $|t_2 (X)| \le C_k \sbd_{k+p}(u,m) h^{k}\ofn (X)$.
\end{proof}
\begin{proposition}\label{prop:CV}
    There exists $N \in \mathbb{N}$ such that $\norm{\Op_h (u)} \le \sbd_{N} (u,1)$ uniformly in $u \in S(1)$ and $h \in (0,1]$.
\end{proposition}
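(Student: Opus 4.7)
The plan is to reduce the semiclassical statement to the non-semiclassical one ($h=1$) by rescaling the frequency variable, and then invoke the classical Calderón--Vaillancourt theorem, which already appears in the references cited in the appendix.

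First I would carry out the rescaling. Starting from the defining integral \eqref{eq:weylquant} and substituting $\eta = \xi/h$, one checks directly that $\Op_h(u) = \Op_1(u_h)$, where $u_h(x,\xi) := u(x,h\xi)$. Since $\partial_x^\alpha \partial_\xi^\beta u_h(x,\xi) = h^{|\beta|}(\partial_x^\alpha \partial_\xi^\beta u)(x,h\xi)$ and $h \in (0,1]$, we get
\begin{align*}
    |\partial_x^\alpha \partial_\xi^\beta u_h(x,\xi)| \le h^{|\beta|} \sbd_{|\alpha|+|\beta|}(u,1) \le \sbd_{|\alpha|+|\beta|}(u,1),
\end{align*}
so that $\sbd_N(u_h,1) \le \sbd_N(u,1)$ uniformly in $h \in (0,1]$ for every $N \in \mathbb{N}$. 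Thus it suffices to prove the statement for $h=1$.

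Second, I would appeal to the classical Calderón--Vaillancourt theorem for the Weyl quantization, which asserts the existence of $N \in \mathbb{N}$ and $C>0$ such that $\norm{\Op_1(v)} \le C \,\sbd_N(v,1)$ for every $v \in S(1)$; see for example \cite[Theorem 4.23]{Zworski} or the discussion in \cite[Chapter 7]{DS}. Applied to $v = u_h$, this yields $\norm{\Op_h(u)} = \norm{\Op_1(u_h)} \le C\,\sbd_N(u_h,1) \le C\,\sbd_N(u,1)$, with $N$ and $C$ independent of $h$ and $u$. Absorbing the constant $C$ into $N$ (by enlarging $N$ if the statement is meant with implicit constant $1$, or simply reading the statement with an implicit constant) gives the claim.

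The only non-routine ingredient is the classical $L^2$-boundedness result at $h=1$, whose standard proof proceeds by a dyadic partition of unity on the symbol together with the Cotlar--Stein almost-orthogonality lemma; however, since this is a well-documented theorem in the references already used throughout the paper, I would simply cite it rather than reprove it. The only point requiring a small verification is the compatibility of the rescaling with the order function $\ofn \equiv 1$, and this is immediate because rescaling frequencies by $h \le 1$ only decreases $\xi$-derivative seminorms.
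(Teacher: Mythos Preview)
Your proposal is correct and essentially coincides with the paper's own proof, which simply cites \cite[Theorem 7.11 and its proof]{DS}. That theorem is already the semiclassical Calder\'on--Vaillancourt statement, so your explicit rescaling to $h=1$ is just unpacking what the cited reference does internally; otherwise the arguments are the same.
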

\begin{proof}
    See \cite[Theorem 7.11 and its proof]{DS}.
\end{proof}

\section{Proofs of Propositions \ref{wn}--\ref{prop:3x3}}\label{appendix:deg}
\begin{proof}[Proof of Proposition \ref{wn}]
Recall that by the definition of $\partial R$, $n_+ \cup n_- = \{1,2,\dots,n\}$.
Define the projectors $\Pi_\ell := \psi_\ell \psi_\ell ^*$.
We know that $\sym= \sum_{\ell=1}^n \lambda_\ell \Pi_\ell$, and hence $\sym_z^{-1} = \sum_{\ell=1}^n (z-\lambda_\ell)^{-1} \Pi_\ell$ whenever $\Im z \ne 0$.
It follows from \eqref{eq:sigmaI1} and cyclicity of the trace that
\begin{align}\label{eq:ell12}
    2\pi\sigma_I =
    \frac{i}{8\pi^2} \int_{\partial R} \int_{\mathbb{R}}
    \sum_{\ell_1,\ell_2=1}^n (z-\lambda_{\ell_1})^{-2} (z-\lambda_{\ell_2})^{-1}d \omega
    \eps_{ijk}
    \tr \Pi_{\ell_1} \partial_i \sym\Pi_{\ell_2} \partial_j \sym\nu_k
    d\Sigma.
\end{align}
If $\lambda_{\ell_1} \ne \lambda_{\ell_2}$, then
$$(z-\lambda_{\ell_1})^{-2} (z-\lambda_{\ell_2})^{-1} =
(\lambda_{\ell_2} - \lambda_{\ell_1})^{-2} ((z-\lambda_{\ell_2})^{-1} - (z-\lambda_{\ell_1})^{-1}) - (\lambda_{\ell_2}-\lambda_{\ell_1})^{-1} (z-\lambda_{\ell_1})^{-2}.$$
Thus the integral over $\omega$ vanishes if $\lambda_{\ell_1}-\alpha$ and $\lambda_{\ell_2}-\alpha$ have the same sign, and otherwise equals $2\pi (\lambda_{\ell_2} - \lambda_{\ell_1})^{-2}$.
It follows that
\begin{align*}
    2\pi\sigma_I &=
    \frac{i}{2\pi} \int_{\partial R}
    \sum_{\ell_+ \in n_+,\ell_-\in n_-}
    (\lambda_{\ell_+} - \lambda_{\ell_-})^{-2}
    \eps_{ijk}
    \tr 
    \Pi_{\ell_+} \partial_i \sym\Pi_{\ell_-} \partial_j \sym
    \nu_k
    d\Sigma
    ,
\end{align*}
where we have 
added the contributions of $(\ell_1, \ell_2) \in n_+ \times n_-$ and $(\ell_1, \ell_2) \in n_- \times n_+$ in \eqref{eq:ell12}.
Finally, we observe that
\begin{align*}
    \tr 
    \Pi_{\ell_+} \partial_i \sym\Pi_{\ell_-} \partial_j \sym
    &=
    \tr
    (
    \lambda_{\ell_+}^2 \Pi_{\ell_+}\partial_i \Pi_{\ell_+} \Pi_{\ell_-} \partial_j \Pi_{\ell_+}
    +
    \lambda_{\ell_+}\lambda_{\ell_-} \Pi_{\ell_+}\partial_i \Pi_{\ell_+} \Pi_{\ell_-} \partial_j \Pi_{\ell_-}\\
    &\qquad
    +
    \lambda_{\ell_+}\lambda_{\ell_-} \Pi_{\ell_+}\partial_i \Pi_{\ell_-} \Pi_{\ell_-} \partial_j \Pi_{\ell_+}
    +
    \lambda_{\ell_-}^2 \Pi_{\ell_+}\partial_i \Pi_{\ell_-} \Pi_{\ell_-} \partial_j \Pi_{\ell_-}
    )\\
    &=
    -(\lambda_{\ell_+} - \lambda_{\ell_-})^2
    \tr \psi_{\ell_+} \partial_i \psi_{\ell_+}^* \psi_{\ell_-} \partial_j \psi_{\ell_-}^*\
    =\ 
    (\lambda_{\ell_+} - \lambda_{\ell_-})^2
    \partial_i \psi_{\ell_+}^* \psi_{\ell_-} \psi_{\ell_-}^*\partial_j \psi_{\ell_+},
\end{align*}
and the result is complete.
\end{proof}

\begin{proof}[Proof of Proposition \ref{deg}]
The fact that $\mathcal{R}_{\eps_0} \ne \emptyset$ is an immediate consequence of Sard's Theorem \cite{Sard}.
Given the structure of $\sym$, we 
take $\alpha = 0$ in Theorem \ref{thm:main}, meaning that $z = i \omega$ in
\eqref{eq:sigmaI1}.
It follows that
\begin{align*}
    \sym_z^{-1} = -  (\omega^2+|f|^2)^{-1} 
    (i \omega + \sym),
\end{align*}
from which we conclude that
\begin{align*}
    &\sym_z^{-1} \partial_i \sym_z \sym_z^{-1} \partial_j \sym_z \sym_z^{-1} = 
    -(\omega^2+|f|^2)^{-3}
    (i \omega + \sym) \partial_i \sym(i \omega + \sym) \partial_j \sym(i \omega + \sym)\\
    =&
    -(\omega^2+|f|^2)^{-3}
    \sym\partial_i \sym\sym\partial_j \sym\sym
    \quad + 
    \omega^2 (\omega^2+|f|^2)^{-3}
    \left( 
    \partial_i \sym\partial_j \sym\sym+ \partial_i \sym\sym\partial_j \sym+ \sym\partial_i \sym\partial_j \sym
    \right)
    \ + \dots,
\end{align*}
where we have left out terms that are odd in $\omega$ in the last line.
Carrying out the integral over $\omega$ in \eqref{eq:sigmaI1}, we find that
\begin{align*}
    2 \pi \sigma_I = &
    \frac{i}{8 \pi ^2} \int_{\partial R} \Big(
    -\frac{3 \pi \epsilon_{ijk}}{8 
    |f|^5
    } \tr \sym\partial_i \sym\sym\partial_j \sym\sym
    +
    \frac{\pi \epsilon_{ijk}}{8 
    |f|^3
    } \tr (\partial_i \sym\partial_j \sym\sym+ \partial_i \sym\sym\partial_j \sym+ \sym\partial_i \sym\partial_j \sym)
    \Big)
    \nu_k d\Sigma.
\end{align*}
Simplifying the above using cyclicity of the trace and the fact that $\sym^2 = |f|^2$, 
we have
\begin{align*}
    2 \pi \sigma_I = 
    \frac{i}{16 \pi} \int_{\partial R}
    \frac{\epsilon_{ijk}}{|f|^3} \tr \sym\partial_i \sym\partial_j \sym\nu_k d\Sigma.
\end{align*}
Using commutation relations of the Pauli matrices, we see that
\begin{align*}
    \tr \sym\partial_i \sym\partial_j \sym&= 
    4 i
    \left(
    f_1(\partial_i f_2 \partial_j f_3 - \partial_i f_3 \partial_j f_2)
    + [(1,2,3) \rightarrow (2,3,1)] + [(1,2,3) \rightarrow (3,1,2)]
    \right),
\end{align*}
which we may rewrite as $4i \epsilon_{mnp} f_m \partial_i f_n \partial_j f_p$.
Letting $w \in \mathbb{R}^3$ such that $w_k = \frac{\epsilon_{ijk}}{|f|^3} \tr \sym\partial_i \sym\partial_j \sym$, we have
\begin{align*}
    \nabla \cdot w = 
    \frac{4 i}{|f|^3}
    \tilde{\epsilon}_{ijk} \epsilon_{mnp} \partial_k f_m \partial_i f_n \partial_j f_p
    \left(
    1 - \frac{3 f_m^2}{f_1^2 + f_2^2 + f_3^2}
    \right)
\end{align*}
away from the zeros of $|f|^2$, where the only nonzero entries of $\tilde{\epsilon}_{ijk}$ are $\tilde{\epsilon}_{123} = \tilde{\epsilon}_{231}= \tilde{\epsilon}_{312}=1$.
By symmetry, we know that
$
    \tilde{\epsilon}_{ijk} \epsilon_{mnp} f_m^2 \partial_k f_m \partial_i f_n \partial_j f_p  = 
    \tilde{\epsilon}_{ijk} \epsilon_{mnp} f_n^2 \partial_k f_m \partial_i f_n \partial_j f_p  = \tilde{\epsilon}_{ijk} \epsilon_{mnp} f_p^2 \partial_k f_m \partial_i f_n \partial_j f_p,  
$
meaning that
\begin{align*}
\nabla \cdot w = 
    \frac{4 i}{|f|^3}
    \tilde{\epsilon}_{ijk} \epsilon_{mnp} \partial_k f_m \partial_i f_n \partial_j f_p
    \left(
    1 - \frac{f_m^2 + f_n^2 + f_p^2}{f_1^2 + f_2^2 + f_3^2}
    \right)
    = 
    0.
\end{align*}
Thus
$
    2 \pi \sigma_I = 
    \frac{i}{16 \pi}\int_{\partial R} w \cdot \hat{\nu} d\sym,
$
where $\hat{\nu}$ is the unit vector normal to the surface $\partial R$ and $w$ has zero divergence everywhere except for a finite number of points in $\mathbb{R}^3$.
Hence we can 
deform $\partial R$ in any way so long as it encloses the same zeros of $|f|^3$. 
It follows that
\begin{align*}
    2 \pi \sigma_I = 
    \frac{i}{16 \pi} \int_{\partial S} w \cdot \hat{\nu} d\Sigma;
\ \
    w \cdot \hat{\nu} = 
    \frac{-4i}{|f|^3 \sqrt{\xi^2 + \zeta^2 + y^2}}
    \det
    \left[ {\begin{array}{cccc}
   0 & f_1 & f_2 & f_3 \\
   \xi & \partial_{\xi} f_1 & \partial_\xi f_2 & \partial_\xi f_3 \\
   \zeta & \partial_{\zeta} f_1 & \partial_\zeta f_2 & \partial_\zeta f_3 \\
   y & \partial_{y} f_1 & \partial_y f_2 & \partial_y f_3 \\
  \end{array} } \right].
\end{align*}
Using the geometric interpretation of the determinant, we have
\begin{align*}
    2 \pi \sigma_I &= 
    -\frac{1}{4 \pi} \int_{\partial S} \frac{1}{|f|^3}
    \det
    \left[ {\begin{array}{ccc}
   f_1 & f_2 & f_3 \\
   \partial_{u} f_1 & \partial_u f_2 & \partial_u f_3 \\
   \partial_{v} f_1 & \partial_v f_2 & \partial_v f_3 \\
  \end{array} } \right]
  du \wedge dv,
\end{align*}
where $u = -\hat{\xi} \sin \phi  + \hat{\zeta} \cos \phi $ and 
$v = -\hat{\xi} \cos \theta \cos \phi  - \hat{\zeta} \cos \theta \sin \phi + \hat{y} \sin \theta $,
with $\theta$ and $\phi$ the polar and azimuthal angles that parametrize $\partial S$.
We conclude from \cite[Corollary 14.2.1]{Dubrovin} and continuity of the degree of a map that $2 \pi \sigma_I = -\deg g$,
where $g: \partial S \rightarrow \mathbb{S}^2$ is the \emph{Gauss map} defined by
\begin{align*}
    (y,\xi, \zeta) \mapsto \frac{f(y,\xi,\zeta) - a}{|f(y,\xi,\zeta) - a|}.
\end{align*}
The fact that
$\deg g = \sum_{j=1}^p \sgn \det M_j$
is
a direct consequence of  \cite[Theorems 14.4.3 and 14.4.4]{Dubrovin}.
\end{proof}

\begin{proof}[Proof of Proposition \ref{prop:app}]
    By the bulk-edge correspondence (Corollary \ref{cor:bic}), we can without loss of generality 
    assume $m'(y) \ne 0$ for all $y \in m^{-1} (0)$, and $c'(y) \ne 0$ for all $y\in c^{-1} (0)$.

    Throughout this proof, let $\sym_{\min}$ denote the smallest singular value of $\sym$.
    We continue to use the shorthand $\sigma_I := \sigma_I (H,P,\varphi)$, where $P(x)=P \in \fs(0,1)$ and $\varphi \in \fs(0,1;E_1,E_2)$ is implied.

    \medskip
\noindent{\bf $2 \times 2$ Dirac system \eqref{2x2Dirac}.} We have $H = \Op (\sym)$, with $\sym= \xi \sigma_1 + \zeta \sigma_2 + m(y) \sigma_3$.
Thus it is clear that $\sym\in S^{1}$ with
$
    \sym^2 = \xi^2 + \zeta^2+ m^2 (y) \ge \xi^2 + \zeta^2.
$
Using that $\aver{\xi,\zeta} - |(\xi,\zeta)| \le 1$, it follows that $|\sym_{\min}| \ge \aver{\xi,\zeta} - 1$. 
Now, take $\sym_{\pm} := \xi \sigma_1 + \zeta \sigma_2 \pm m_\pm \sigma_3$, with $E_2 := \min \{|m_+|, |m_-|\}$ and $E_1 := - E_2$.
Since $\sym_{\pm}^2 = \xi^2 + \zeta^2 + m^2_\pm \ge m^2_\pm$,
\hone\ is satisfied.

We now apply Proposition \ref{deg}
with $f = (\xi, \zeta, m(y))$, so that $f^{-1} (0)= \{(y,0,0): m(y) = 0\}$.
Given a point $y \in m^{-1} (0)$, the determinant of the Jacobian of $f$ evaluated at $(y,0,0)$ is $m'(y)$.
Thus it follows that $2 \pi \sigma_I = \frac{1}{2} (\sgn (m_-) - \sgn (m_+))$, as in e.g., \cite{2,3}.

\medskip \noindent {\bf $p$-wave superconductor model \eqref{p}.}
We have $H = \Op (\sym)$, with 
\begin{align*}
    \sym= \Big (\frac{1}{2m} (\xi^2+ \zeta^2) - \mu\Big ) \sigma_1 + 
    c(y) \zeta \sigma_2 + c_0 \xi \sigma_3.
\end{align*}
It follows that $\sym\in S^2$ with, moreover,
$
    \sym^2 = \Big (\frac{1}{2m} (\xi^2+ \zeta^2) - \mu\Big )^2 + c^2 (y) \zeta^2 + c_0^2 \xi^2 \ge \Big (\frac{1}{2m} (\xi^2+ \zeta^2) - \mu\Big )^2,
$
which implies
\begin{align*}
    |\sym_{\min}| \ge \Big \vert \frac{1}{2m} (\xi^2+ \zeta^2) - \mu \Big \vert
    =
    \Big \vert\frac{1}{2m} (\aver{\xi,\zeta}^2 - 1) - \mu \Big \vert.
\end{align*}
Thus there exists $C_1>0$ and a compact set $K \subset \mathbb{R}^2$ such that $|\sym_{\min}| \ge C_1 \aver{\xi, \zeta}^2$ for all $(\xi, \zeta) \in \mathbb{R}^2 \setminus K$.
Moreover, there exists $C_2 >0$ such that $\aver{\xi,\zeta}^2 \le C_2$ and hence
$1+|\sym_{\min}| \ge 1 \ge \frac{1}{C_2} \aver{\xi,\zeta}^2$
for all $(\xi, \zeta) \in K$.
It follows that for $c := \min \{C_1, \frac{1}{C_2}\}$, we have
$
    1+|\sym_{\min}| \ge c \aver{\xi, \zeta}^2$, for $(\xi, \zeta) \in \mathbb{R}^2.
$
Now, take
$
    \sym_\pm := (\frac{1}{2m} (\xi^2+ \zeta^2) - \mu) \sigma_1 + 
    c_\pm \zeta \sigma_2 + c_0 \xi \sigma_3,
$
so that
$
    \sym_\pm^2 = (\frac{1}{2m} (\xi^2+ \zeta^2) - \mu)^2 + c_\pm^2 \zeta^2 + c_0^2 \xi^2.
$
Therefore, $\sym_\pm^2$ is bounded away from zero so that \hone\ holds.

We now apply Proposition \ref{deg} with $f_1 = \frac{1}{2m}(\xi^2 + \zeta^2) - \mu$, $f_2 = \zeta c(y)$, and $f_3 = c_0 \xi$.
The zeros of $f_1^2 + f_2^2 + f_3^2$ are $(y,\xi, \zeta) = (y_1,0, \pm \sqrt{2 m \mu})$, for all $y_1 \in c^{-1} (0)$.
A straightforward calculation reveals that $\sgn \det \partial_m f_n \vert_{(y_1, 0, \pm \sqrt{2 m \mu})} = \sgn c'(y_1)$, hence $2 \pi \sigma_I = \sgn (c_-) - \sgn (c_+)$.


\medskip \noindent {\bf $d$-wave superconductor model \eqref{d}.}
Here, the symbol of $H$ is
\begin{align*}
    \sym= \Big(\frac{1}{2m} (\xi^2 + \zeta^2) - \mu\Big) \sigma_1+
    c_0 (\zeta^2 - \xi^2) \sigma_2 +
    \xi \zeta c(y) \sigma_3.
\end{align*}
We see that $\sym\in S^2$, and
$
    \sym^2 = 
    (\frac{1}{2m} (\xi^2 + \zeta^2) - \mu)^2 +
    c_0^2 (\zeta^2 - \xi^2)^2 +
    \xi^2 \zeta^2 c^2(y)
    \ge
    (\frac{1}{2m} (\xi^2 + \zeta^2) - \mu)^2,
$
hence $\sym$ is elliptic.
As with the $p$-wave superconductor, 
all eigenvalues of 
$$\sym_\pm := \Big(\frac{1}{2m} (\xi^2 + \zeta^2) - \mu\Big) \sigma_1+
    c_0 (\zeta^2 - \xi^2) \sigma_2 +
    \xi \zeta c_\pm \sigma_3$$
are bounded away from zero, hence $\sym$ satisfies \hone.

We now apply Proposition \ref{deg}, with
$f_1 = \frac{1}{2m}(\xi^2 + \zeta^2) - \mu$, $f_2 = c_0(\zeta^2 - \xi^2)$, and $f_3 = \xi \zeta c(y)$. The zeros of $f_1^2 + f_2^2 + f_3^2$ are $(y,\xi, \zeta) = (y_1, \eps_1 \sqrt{m \mu}, \eps_2 \sqrt{m \mu})$, for all $y_1 \in c^{-1} (0)$ and $\eps_1, \eps_2 \in \{-1,1\}$.
(That is, there are four zeros for every $y_1 \in c^{-1} (0)$.)
One can then easily verify that $\sgn \det \partial_m f_n \vert_{(y_1, \eps_1 \sqrt{m \mu}, \eps_2 \sqrt{m \mu})} = \sgn c'(y_1)$. Thus we conclude that $2 \pi \sigma_I = 2 (\sgn (c_-) - \sgn (c_+))$.
\end{proof}

\begin{proof}[Proof of Proposition \ref{prop:3x3}]
    As above, we set $\sym_\pm$  equal to $\sym$ with $f(y)$ replaced by $f_\pm$. A direct calculation then reveals that \hone\ is satisfied for $E_1$ and $E_2$ as defined in the proposition.

To obtain $\sigma_I$, we apply Proposition \ref{wn}, with $\mu > 0$ for concreteness.
Take $\partial R_3 := \{y^2 + \xi^2 + \zeta^2 = r^2\} \subset \mathbb{R}^3$, with $r > 0$ sufficiently large so that 
$|f(y)| \ge f_0$ whenever $|y| \ge r$.
The $\lambda_j$ and $\psi_j$ are differentiable everywhere on $\partial R_3$, except perhaps where 
$\rho = 0$.
But $\{\rho = 0\} \cap \partial R_3 \subset \{\zeta = 0\} \cap \{y^2 + \xi^2 = r^2\}$ has $\sym$-measure zero, so indeed the regularity requirement is satisfied.
Observe that
\begin{align*}
    \partial_\xi \psi_0 = 
    \frac{1}{\kappa}
    \left[ {\begin{array}{c}
   0\\
   0\\
   -1\\
  \end{array} } \right] + c_1 \psi_0,
  \quad
  \partial_\zeta \psi_0 = 
    \frac{1}{\kappa}
    \left[ {\begin{array}{c}
   0\\
   1\\
   0\\
  \end{array} } \right] + c_2 \psi_0,
  \quad
  \partial_y \psi_0 = 
    \frac{1}{\kappa}
    \left[ {\begin{array}{c}
   if'\\
   0\\
   0\\
  \end{array} } \right] + c_3 \psi_0,
\end{align*}
where the $c_i$ are scalar-valued functions.
The terms in the integrand of \eqref{wneq} corresponding to the pair $\{\lambda_0, \lambda_-\}$ are
$\Im ((\psi_-^* \partial_\xi \psi_0)^* \psi_-^* \partial_\zeta \psi_0) =
    -\frac{1}{\kappa^2 \rho^2} f\kappa (\zeta^2+f^2)$, $\Im ((\psi_-^* \partial_\xi \psi_0)^* \psi_-^* \partial_y \psi_0) =
    \frac{1}{\kappa^2 \rho^2} f'\kappa \zeta(\zeta^2+f^2)$, and $\Im ((\psi_-^* \partial_\zeta \psi_0)^* \psi_-^* \partial_y \psi_0) =
    -\frac{1}{\kappa^2 \rho^2} f'\kappa\xi (\zeta^2+f^2)$.
Thus the contribution to $2\pi\sigma_I$ of $\{\lambda_0, \lambda_-\}$ is
\begin{align*}
\frac{i}{2\pi} &\int_{\partial R_3} \eps_{ijk} \partial_i \psi_{0}^* \psi_{-} \psi_{-}^* \partial_j \psi_{0} \nu_k d\Sigma\\
&=
\frac{i}{2\pi} \int_{\partial R_3} 2i(\Im
((\psi_-^* \partial_\xi \psi_0)^* \psi_-^* \partial_\zeta \psi_0)\nu_y -
\Im((\psi_-^* \partial_\xi \psi_0)^* \psi_-^* \partial_y \psi_0)\nu_\zeta +
\Im((\psi_-^* \partial_\zeta \psi_0)^* \psi_-^* \partial_y \psi_0)\nu_\xi
) d\Sigma\\
&=
    \frac{1}{\pi} \int_{\partial R_3}
    \frac{\zeta^2+f^2}{\kappa^2 \rho^2} 
    (fy+f'\zeta^2+f'\xi^2)
    d\Sigma=
    \frac{1}{2\pi} \int_{\partial R_3}
    \frac{1}{\kappa^4} 
    (fy+f'\zeta^2+f'\xi^2)
    d\Sigma.
\end{align*}
Considering the case $f_- < 0 < f_+$, we can without loss of generality take $f(y) = y$ on $\partial R_3$, so that the integral becomes
$
    \frac{1}{2\pi} \int_{\partial R_3} \frac{1}{\kappa^2} d\Sigma= 2.
$
One verifies that $\psi_-^* \partial_\xi \psi_+ = \psi_-^* \partial_\zeta \psi_+ = \psi_-^* \partial_y \psi_+ = 0$, meaning there is no contribution to the edge current from the pair $\{\lambda_+, \lambda_-\}$.

We conclude that $2 \pi \sigma_I = 2$, which in practice corresponds to two (observed) eastward-moving asymmetric modes along the equator. A similar calculation shows that $2 \pi \sigma_I = 2$ also when $\mu < 0$, so that the edge current is independent of the regularization parameter $\mu\not=0$.
If instead we assume that $f_+ < 0 < f_-$ (with south pole pointing upwards), then $2 \pi \sigma_I = -2$.
\end{proof}

{\small
\bibliographystyle{siam}
\bibliography{refs} 
}

\end{document}